\newcommand{\N}{\mathbb{N}}
\newcommand{\F}{\mathcal{F}}
\newcommand{\V}{\mathcal{V}}
\newcommand{\M}{\mathcal{M}}
\renewcommand{\P}{\mathcal{P}}
\newcommand{\Sorts}{\mathcal{S}}
\newcommand{\Terms}{\mathcal{T}}
\newcommand{\Rules}{\mathcal{R}}
\newcommand{\AlterRules}{S}
\newcommand{\RulesEta}{\Rules^{\mathtt{ext}}} 
\newcommand{\Defineds}{\mathcal{D}}
\newcommand{\FV}{\mathit{FV}}
\newcommand{\FMV}{\mathit{FMV}}
\newcommand{\domain}{\mathtt{dom}}
\newcommand{\DDP}{\mathit{DDP}}
\newcommand{\SDP}{\mathit{SDP}}
\newcommand{\FR}{\mathit{FR}}
\newcommand{\FRA}{\mathit{FA}}
\newcommand{\UR}{\mathit{UR}}
\newcommand{\asort}{\iota}
\newcommand{\bsort}{\kappa}
\newcommand{\atype}{\sigma}
\newcommand{\btype}{\tau}
\newcommand{\ctype}{\pi}
\newcommand{\identifier}[1]{\mathtt{#1}}
\newcommand{\afun}{\identifier{f}}
\newcommand{\bfun}{\identifier{g}}
\newcommand{\cfun}{\identifier{h}}
\newcommand{\adpprob}{M}
\newcommand{\bdpprob}{Q}
\newcommand{\avarormeta}{b}
\newcommand{\no}{\texttt{NO}}
\newcommand{\wanda}{\textsf{WANDA}\xspace}
\newcommand{\csiho}{$\textsf{CSI}^{\textsf{ho}}$\xspace}
\renewcommand{\csiho}{$\textsf{CSI\textasciicircum{}ho}$\xspace}
\newcommand{\acph}{\textsf{ACPH}\xspace}
\newcommand{\app}[2]{#1\ #2}
\newcommand{\apps}[3]{#1\ #2 \cdots #3}
\newcommand{\abs}[2]{\lambda #1.#2}
\newcommand{\meta}[2]{#1[#2]}
\newcommand{\arity}{\mathit{ar}}
\newcommand{\head}{\mathsf{head}}
\newcommand{\symb}[1]{\mathtt{#1}}
\newcommand{\arrtype}{\rightarrow}
\newcommand{\arrdp}{\Rrightarrow}
\newcommand{\arrz}{\Rightarrow}
\newcommand{\arr}[1]{\arrz_{#1}}
\newcommand{\arrr}[1]{\arr{#1}^*}
\newcommand{\supterm}{\rhd}
\newcommand{\suptermeq}{\unrhd}
\newcommand{\bsuptermeq}[1]{\unrhd_{#1}}
\newcommand{\safesup}{\unrhd_{\mathtt{safe}}^{\text{\tiny\cite{suz:kus:bla:11}}}}
\newcommand{\cand}{\mathsf{cand}}
\newcommand{\Proc}{\mathit{Proc}}
\newcommand{\static}{\mathtt{computable}}
\newcommand{\minimal}{\mathtt{minimal}}
\newcommand{\arbitrary}{\mathtt{arbitrary}}
\newcommand{\formative}{\mathtt{formative}}
\newcommand{\nonformative}{\mathtt{all}}
\newcommand{\metafy}{\mathit{metafy}}
\newcommand{\greqsort}{\succeq^{\ext{\Sorts}}}
\newcommand{\leqsort}{\preceq^{\ext{\Sorts}}}
\newcommand{\eqsort}{\approx^{\ext{\Sorts}}}
\newcommand{\grsort}{\succ^{\ext{\Sorts}}}
\newcommand{\gracsortup}{\succeq^{\Sorts}_+}
\newcommand{\gracsortdown}{\succ^{\Sorts}_-}
\newcommand{\gracc}{\unrhd_{\mathtt{acc}}}
\newcommand{\accreduce}[1]{\Rrightarrow_{#1}}
\newcommand{\Acc}{\mathit{Acc}}
\newcommand{\approxp}{\approx_\mac}
\newcommand{\project}{\overline{\nu}}
\newcommand{\ttag}{\mathtt{tag}}
\newcommand{\funs}{\mathtt{funs}}
\newcommand{\unsharp}[1]{#1^\flat}
\newcommand{\etalong}[1]{#1\!\!\uparrow^\eta}
\newcommand{\halfetalong}[1]{\overline{#1}}
\newcommand{\pgt}{\succ} 
\newcommand{\pge}{\succcurlyeq} 
\newcommand{\rge}{\succsim} 
\newcommand{\nul}{\symb{0}}
\newcommand{\one}{\symb{1}}
\newcommand{\nil}{\symb{[]}}
\renewcommand{\nil}{\symb{nil}}
\newcommand{\cons}{\symb{cons}}
\newcommand{\suc}{\symb{s}}
\newcommand{\map}{\symb{map}}
\newcommand{\deriv}{\symb{deriv}}
\newcommand{\Ce}{\mathcal{C}_{\epsilon}}
\newcommand{\nat}{\symb{nat}}
\newcommand{\lijst}{\symb{list}}
\newcommand{\real}{\symb{real}}
\newcommand{\mac}{{\textcolor{red}{e}}}
\renewcommand{\mac}{e}
\newcommand{\mia}{{\textcolor{red}{k}}}
\renewcommand{\mia}{k}
\newcommand{\maa}{{\textcolor{red}{m}}}
\renewcommand{\maa}{m}
\newcommand{\refDef}[1]{Def.~\ref{#1}}
\newcommand{\refEx}[1]{Ex.~\ref{#1}}
\newcommand{\refLemma}[1]{Lemma~\ref{#1}}
\newcommand{\refThm}[1]{Thm.~\ref{#1}}
\newcommand{\refApp}[1]{Appendix~\ref{#1}}
\newcommand{\refSec}[1]{Sec.~\ref{#1}}
\definecolor{cfcolor}{HTML}{FF5000}
\newcommand{\CKold}[1]{}
\newcommand{\CFold}[1]{}
\newcommand{\ext}[1]{\textcolor{red}{#1}}
\renewcommand{\ext}[1]{#1}
\begin{document}


\newtheorem{remark}[theorem]{Remark}


\title{The unified higher-order dependency pair framework}


\acmConference{}{}{}




\author{Carsten Fuhs}
\affiliation{Dept.\ of Computer Science and Information Systems,\\
Birkbeck, University of London, United Kingdom}
\email{carsten@dcs.bbk.ac.uk}
\author{Cynthia Kop}
\affiliation{Dept.\ of Software Science,\\
Radboud University Nijmegen, The Netherlands}
\email{c.kop@cs.ru.nl}








\begin{abstract}
In recent years, two higher-order extensions of the powerful dependency
pair approach for termination analysis of first-order term rewriting
have been defined: the \emph{static} and the \emph{dynamic} approach.
Both approaches offer distinct advantages and disadvantages.
However, a grand unifying theory is thus far missing, and both approaches
lack the modularity present in the dependency pair \emph{framework}
commonly used in first-order rewriting.  Moreover,
neither approach can
be used to prove non-termination.


In this paper, we aim to address this gap by defining a higher-order
dependency pair framework, integrating both approaches into a shared formal setup.
%
The framework has been implemented in the (fully automatic)
higher-order termination tool \wanda.
\end{abstract}


\maketitle


\section{Introduction}

Term rewriting \cite{baa:nip:98,ter:03} is an important area of logic, with
applications in many different areas of computer science
\cite{bac:gan:94, 
der:kap:89, 
fuh:kop:nis:17, 
haf:nip:10, 
hoe:arv:99, 
mea:92, 
ott:bro:ess:gie:10}. 
%
\emph{Higher-order} term rewriting -- which extends the traditional
\emph{first-order} term rewriting with
higher-order
types and binders as in
the $\lambda$-calculus -- offers a formal foundation of functional
programming 
and a tool for equational reasoning in
higher-order logic.
%
%
A key question
in the analysis of both first- and
higher-order term rewriting is \emph{termination}, or strong
normalisation -- both for its own sake, and
as part of confluence and equivalence analysis.

In first-order term rewriting, a highly effective method 
to prove
termination (both manually and automatically) is the
\emph{dependency pair (DP) approach} \cite{art:gie:00}.
This approach has been extended to the \emph{DP framework}
\cite{gie:thi:sch:05:2,gie:thi:sch:fal:06}, a highly
modular
methodology which new techniques for proving termination \emph{and
non-termination} can easily be plugged into
in the form of \emph{processors}.

In higher-order term rewriting, two adaptations of the DP approach have
been defined: \emph{dynamic} \cite{sak:wat:sak:01,kop:raa:12} and
\emph{static} \cite{bla:06,sak:kus:05,kus:iso:sak:bla:09,suz:kus:bla:11}.
Each approach has
distinct costs and benefits;
while
dynamic DPs
are more broadly applicable,
analysis of static DPs is often
easier.
This difference can be problematic 
for defining new
techniques based on the DP approach, as they 
must be
proved correct
for both dynamic and static DPs.
%
%
This problem is exacerbated by the existence of multiple styles of
higher-order rewriting, such as \emph{Algebraic Functional Systems
(AFSs)} \cite{jou:rub:99} (
used in the annual Termination
Competition \cite{termcomp})
and \emph{Higher-order Rewrite Systems (HRSs)} \cite{nip:91,mil:91}
(
used in the annual Confluence Competition \cite{coco}),
which have similar but not fully compatible syntax
and semantics.
%
%
What is more, neither approach offers the modularity and extendability
of the DP framework, nor can they be used to prove non-termination.
Both approaches are less general than they could be.  For example,
most 
versions
of the static approach use a restriction
which does not consider strictly positive inductive
types~\cite{bla:jou:oka:02}.
The dynamic approach is sound for all systems, but only complete for
left-linear ones -- that is, a non-left-linear AFS may have an
infinite dependency chain following \cite{kop:raa:11,kop:raa:12} even
if it is terminating; the static approach is incomplete in this sense
for 
even more systems.

In this paper, we define a \emph{higher-order dependency pair
framework}, which combines the dynamic and static styles, is fully
modular, and can be used for both termination and non-termination
without restrictions.  For broad applicability, we use a new
rewriting formalism, \emph{AFSMs}, designed to capture several flavours
of\linebreak
higher-order rewriting, including AFSs and HRSs with a pattern
restriction.
We have dropped the restriction to left-linear systems for
completeness of dynamic DPs
and 
liberalised both the restrictions to use static DPs and to
obtain a complete analysis if we do.
In addition, we introduce a series of new techniques (``processors'') to
provide key
termination techniques within this framework.
This is a \emph{foundational} paper, focused on defining a general
theoretical framework for higher-order termination analysis rather
than implementation concerns.
%
We have, however, implemented most 
results in the fully
automatic termination 
tool \wanda~\cite{wanda}.

\paragraph{Related Work.}
%
%
There is a vast body of work in the first-order setting regarding the
DP approach \cite{art:gie:00} and framework
\cite{gie:thi:sch:05:2,gie:thi:sch:fal:06,hir:mid:07}.
The approach for context-sensitive rewriting
\cite{ala:gut:luc:10}
is somewhat relevant, as it also admits \emph{collapsing} DPs
(in their case, a DP with a variable as its right-hand side)
and therefore requires some similar
adaptations of common techniques.  However, beyond this, the two
different settings are not really comparable.
%

The static DP approach
is discussed in, e.g.,
\cite{kus:iso:sak:bla:09,sak:kus:05,suz:kus:bla:11,kus:13}. This
approach can be used only for \emph{plain function passing (PFP)}
systems.  The definition of PFP is not fixed, as later papers
sometimes weaken earlier restrictions or transpose
them to a different rewriting formalism, but always concerns the
position of higher-order variables in the left-hand sides of rules.
These works include non-pattern HRSs
\cite{kus:iso:sak:bla:09,suz:kus:bla:11} and polymorphic rewriting
\cite{kus:13}, which we do not consider,
but do not employ formative rules or meta-variable conditions, which
we do.
Importantly, these methods do not consider strictly positive inductive
types, which could be used to significantly broaden the PFP
restriction.  Such types \emph{are} considered in an early paper which
defines a variation of static higher-order dependency
pairs~\cite{bla:06}
based on a computability closure~\cite{bla:jou:oka:02,bla:16}.
However, this work carries different restrictions (e.g., DPs
must be type-preserving and not introduce fresh variables) and
provides only one analysis technique (reduction pairs) on these DPs.
Moreover, although the proof method is based on Tait and  Girard's
notion of computability \cite{tai:67}, the approach thus far does not
exploit this beyond the way the initial set of DPs is obtained.
We will present a variation of PFP for the AFSM formalism that is
strictly more permissive than earlier definitions as applied to AFSMs,
and our framework exploits
the inherent computability by introducing a $\static$
flag
that can be used by the
static subterm criterion processor (\refThm{thm:staticsubtermproc}).
In addition, we allow static DPs to also be used for non-termination
and add features such as formative rules.

Unlike the static approach,
the dynamic approach \cite{aot:yam:05,kop:raa:11,kop:raa:12} is not
restricted, but it allows for collapsing DPs
of the form
$\ell \arrz
\apps{x}{s_1}{s_n}$ with $x$ a variable, which can be
difficult to handle.  Thus far, this approach has been incomplete for
non-left-linear systems due to bound variables that become free in a
dependency pair.
Here, we repair that problem by using a rewriting formalism that
separates variables used for matching from those used as binders.

Both static and dynamic approaches actually lie halfway be\-tween the
original ``DP approach'' of first-order rewriting and a full DP
framework as in \cite{gie:thi:sch:05:2,gie:thi:sch:fal:06} and the
present work.
Most of these works
\cite{kop:raa:11,kop:raa:12,kus:13,kus:iso:sak:bla:09,suz:kus:bla:11}
prove ``non-loopingness'' or ``chain-freeness'' of a set $\P$ of DPs
through a number of theorems.
However, there is no concept of \emph{DP problems}, and the set
$\Rules$ of rules cannot be altered.  They also fix assumptions on
dependency chains -- such as minimality \cite{kus:iso:sak:bla:09} or
being ``tagged'' \cite{kop:raa:12} -- which frustrate extendability
and are more naturally dealt with in a DP framework using flags.

The clear precursor of the present work is \cite{kop:raa:12}, which
provides such a halfway framework for dynamic DPs, 
introduces a notion of
formative rules,
and briefly translates 
a
basic form of static DPs to the same setting.
Our formative \emph{reductions} consider the shape of reductions rather
than 
the rules they use,
and they can be used as a flag in the
framework to
gain additional power in other processors.
Our integration of the two styles also goes deeper, allowing for
static and dynamic DPs to be used in the same proof and giving a
complete method 
using
static DPs for a larger group of systems.
%
%

In addition, we have several completely new features, including
meta-variable conditions (an essential ingredient for a complete
method), new flags to DP problems, and various processors including
ones that modify collapsing DPs.

For a more elaborate discussion of the static and dynamic DP
approaches, we refer to \cite{kop:raa:12,kop:12}.
%

\smallskip
The paper is organised as follows: \refSec{sec:prelim} introduces
higher-order rewriting using AFSMs and recapitulates computability.
In \refSec{sec:dp} we state dynamic and static DPs for AFSMs.
\refSec{sec:framework} formulates the DP framework and a number of DP
processors for existing and new termination proving techniques.
\refSec{sec:conclusions} concludes.
A discussion of the translation of existing static DP approaches to
the AFSM formalism, as well as detailed proofs for all results in
this paper, are available in the appendix.  
In addition, many of the results have been informally
published in the second author's PhD thesis \cite{kop:12}.


\section{Preliminaries}
\label{sec:prelim}

In this section, we first define our notation by introducing the AFSM
formalism.  Although not one of the standards of higher-order rewriting,
AFSMs combine features from various 
forms of
higher-order rewriting and can be
seen as a form of IDTSs~\cite{bla:00} which includes application.
Then we present a definition of \emph{computability}, a technique often
used for higher-order termination. 

\subsection{Higher-order term rewriting using AFSMs}

Unlike first-order term rewriting, there is no single, unified approach
to higher-order term rewriting, but rather a number of
similar but not fully compatible systems 
aiming to combine term rewriting and typed $\lambda$-calculi.
For 
generality, we will use
\emph{Algebraic Functional Systems with Meta-variables}: a formalism
which admits translations from the main formats of higher-order
term rewriting.

\begin{definition}[Simple types]
We fix a set $\Sorts$ of \emph{sorts}.
All sorts are simple types, and if $\atype,\btype$ are simple types,
then so is $\atype \arrtype \btype$.
\end{definition}

We let $\arrtype$ be right-associative.
All types have a unique form
$\atype_1 \arrtype \dots \arrtype \atype_\maa \arrtype \asort$
with $\asort \in \Sorts$.

\begin{definition}[Terms]\label{def:terms}
We fix disjoint sets $\F$ of \emph{function symbols} and $\V$ of
\emph{variables}, each symbol equipped with a type.  We assume that both
$\F$ and $\V$ contain infinitely many symbols of all types.  Terms are
expressions $s$ where $s : \atype$ can be derived for some 
$\atype$ by: 

\begin{tabular}{llll}
(\textsf{V}) & $x : \atype$ & if & $x : \atype \in \V$\\
(\textsf{F}) & $\afun : \atype$ & if & $\afun : \atype \in \F$\\
(\textsf{@}) & $s\ t : \btype$ & if & $s : \atype \arrtype \btype$ and
  $t : \atype$ \\
($\mathsf{\Lambda}$) & $\lambda x.s : \atype \arrtype \btype$ & if &
  $x : \atype \in \V$ and $s : \btype$
\end{tabular}

\noindent
The $\lambda$ binds variables as in the $\lambda$-calculus; unbound
variables are called \emph{free}, and $\FV(s)$ is the set of free
variables in $s$.
A term $s$ is
\emph{closed} if $\FV(s) = \emptyset$.
Terms are considered modulo $\alpha$-conversion.
Application (\textsf{@}) is left-associative;
abstractions ($\mathsf{\Lambda}$) extend as far to the right
as possible.
A term $s$ \emph{has type} $\atype$ if
$s : \atype$; 
it \emph{has base type} if
$\atype \in \Sorts$.
A term $s$ has a \emph{subterm} $t$, notation $s \suptermeq t$, if
(a) $s = t$, (b) $s = \abs{x}{s'}$ and $s' \suptermeq t$, or (c)
$s = \app{s_1}{s_2}$ and $s_1 \suptermeq t$ or $s_2 \suptermeq t$.
Finally, we define $\head(s) = \head(s_1)$
if $s = \app{s_1}{s_2}$, and $\head(s) = s$ otherwise.
\end{definition}

Note that any term $s$ has a form $\apps{t}{s_1}{s_n}$ with $n \geq 0$
and $t = \head(s)$ a variable, function symbol, or abstraction.
Separate from terms, we
use special expressions for matching and rewrite rules:

\begin{definition}[Meta-terms and patterns]\label{def:metaterm}
We fix a set $\M$, disjoint from $\F$ and $\V$, of
\emph{meta-variables}; each meta-variable is equipped with a \emph{type
declaration} $[\atype_1 \times \dots \times \atype_\mia] \arrtype \btype$
(where $\btype$ and all $\atype_i$ are simple types).
Meta-terms are expressions $s$ such that $s : \atype$ can be derived
for some type $\atype$ using 
(\textsf{V}), (\textsf{F}),
(\textsf{@}), ($\mathsf{\Lambda}$), and (\textsf{M}) below:

\begin{tabular}{llll}
(\textsf{M}) & $\meta{Z}{s_1,\dots,s_\mac} : \tau$ & if & $Z :
  [\atype_1 \times \dots \times \atype_\mia] \arrtype \atype_{\mia+1}
  \arrtype$ \\
  &&& \quad $\dots \arrtype \atype_\mac \arrtype \btype \in \M$ and\\
  &&&\quad $s_1 : \atype_1,\dots,s_\mac : \atype_\mac$
\end{tabular}

\noindent
We call $\mia$ the \emph{minimal arity} of $Z$
and write $\mia = \arity(Z)$.  A meta-term is a
\emph{pattern} if it has one of the forms
$\meta{Z}{x_1,\dots,x_\mia}$ with all
$x_i$ distinct variables and $\mia = \arity(Z)$;
$\abs{x}{\ell}$ with $x \in \V$ and $\ell$ a pattern;
or $\apps{a}{\ell_1}{\ell_n}$ with $a \in \F \cup \V$ and all
$\ell_i$ patterns ($n \geq 0$).
$\FMV(s)$ is the set of meta-variables occurring in a
meta-term $s$.
A pattern $\ell$ is \emph{fully extended} if for all occurrences of an
abstraction $\abs{y}{\ell'}$ in $\ell$, the bound variable $y$ is an
argument to all meta-variables in $\FMV(\ell')$.
It is \emph{linear}
if each meta-variable in $\FMV(\ell)$ occurs exactly once.
\end{definition}

Meta-variables are used in early forms of higher-order rewriting
(e.g., \cite{acz:78,klo:oos:raa:93})
and strike a balance between matching
modulo $\beta$ 
and syntactic matching.
%
%
%
Note 
that in earlier applications it is \emph{not} permitted to
give a meta-variable more arguments than its minimal arity.  We
allow
this because 
of applications in the DP framework.
However, in all our examples, meta-variable
applications in the unmodified rules take the expected number of
arguments.

Notationally, we will use $x,y,z$ for variables, $X,Y,Z$ for
meta-variables, $\avarormeta$ for symbols that could be variables or
meta-variables, $\afun,\bfun,\cfun$ or more suggestive notation for
function symbols, and $s,t,u,\linebreak
v,q,w$ for (meta-)terms.  Types are denoted
$\sigma,\tau$, and $\asort,\bsort$ are sorts.
We will regularly
overload notation and write $x \in \V$, $\afun \in \F$
or $Z \in \M$ without stating a type.  For meta-terms $Z[]$ we will
often omit the brackets, writing just $Z$.
In addition, notational conventions and definitions like $\head$ and
\emph{closed} carry over from terms to meta-terms; a meta-term $s$ is
closed if $\FV(s) = \emptyset$, even if $\FMV(s) \neq \emptyset$.


\begin{definition}[Substitution]
A \emph{meta-substitution} is a type-pre\-serving function $\gamma$
from variables and meta-variables to meta-\linebreak
terms;
if $Z : 
[\atype_1 \times \dots \times \atype_\mia] \arrtype
\btype$
then $\gamma(Z)$ has the form
$\abs{y_1 \dots y_\mia}{u} :
\atype_1 \arrtype \dots \arrtype \atype_\mia \arrtype \btype$.
Let $\domain(\gamma) = \{ x \in \V \mid \gamma(x) \neq x \} \cup
\{ Z \in \M \mid\linebreak \gamma(Z) \neq \abs{y_1 \dots y_{\arity(Z)}}{
\meta{Z}{y_1,\dots,y_{\arity(Z)}}} \}$ (the \emph{domain} of $\gamma$).
For\linebreak meta-variables $Z : [\atype_1 \times \dots \times \atype_\mia]
\arrtype \atype_{\mia+1} \arrtype \dots \arrtype \atype_\maa \arrtype
\asort$ with\linebreak $\asort \in \Sorts$ and for
$\mac$ with $\mia \leq \mac \leq \maa$,
we write $\gamma(Z) \approxp \abs{x_1 \dots x_\mac}{s}$ if\linebreak either
$\gamma(Z) = \abs{x_1 \dots x_\mac}{s}$\ext{,} or there is $\mia \leq i < \mac$
such that $\gamma(Z) = \abs{x_1 \dots x_i}{t}$ with $t$ not an
abstraction
and $s = \apps{t}{x_{i+1}}{x_\mac}$.
We let\linebreak $[\avarormeta_1:=s_1,\dots,\avarormeta_n:=s_n]$
be the
meta-substitution $\gamma$ with $\gamma(\avarormeta_i) = s_i$,
$\gamma(z) = z$ for $z \in \V \setminus \{\vec{\avarormeta}
\}$, and
$\gamma(Z) = \abs{y_1\dots
  y_{\arity(Z)}}{\meta{Z}{y_1,\dots, y_{\arity(Z)}}}$\linebreak
for $Z \in \M \setminus \{\vec{\avarormeta}\}$.
We will
also consider meta-substitutions with
infinite domain.
Even if the domain is infinite,
 for all $\avarormeta$ in $\domain(\gamma)$
 we assume
infinitely
many variables $x$ of all types with $x \notin
\FV(\gamma(\avarormeta))$.

A \emph{substitution} is a meta-substitution mapping everything in its
domain to terms.
The result $s\gamma$ of applying a meta-substitution $\gamma$ to a
term $s$ is obtained recursively:

\begin{tabular}{rclll}
$x\gamma$ & $=$ & $\gamma(x)$ & if & $x \in \V$\\
$\afun\gamma$ & $=$ & $\afun$ & if & $\afun \in \F$ \\
$(s\ t)\gamma$ & $=$ & $(s\gamma)\ (t\gamma)$ \\
$(\lambda x.s)\gamma$ & $=$ & $\lambda x.(s\gamma)$ & if & $\gamma(x) = x
  \wedge x \notin \FV(s\gamma)$
\end{tabular}

\noindent
For meta-terms, the result $s\gamma$ is obtained by the clauses above
and:

\noindent
$\meta{Z}{s_1,\dots,s_\mac}\gamma = t[x_1:=s_1\gamma,\dots,x_\mac:=
s_\mac\gamma]$ if $\gamma(Z) \approxp \abs{x_1 \dots x_\mac}{t}$
\end{definition}

Note that for $Z : [\atype_1 \times \dots \times \atype_\mia]
\arrtype \atype_{\mia+1} \arrtype \dots \arrtype \atype_\maa \arrtype
\asort$ with $Z \in \domain(\gamma)$ and $\mia \leq \mac \leq \maa$,
there is always exactly one $t$ such that $\gamma(Z) \approxp
\abs{x_1 \dots x_\mac}{t}$.
The result $s\gamma$ of applying a meta-substitution is
well-defined by induction on the multiset $\{\!\{ s \}\!\} \cup \{\!\{ \gamma(Z) \mid Z \in \FMV(s) \}\!\}$, with meta-terms compared by their sizes.

\smallskip
Essentially, applying a meta-substitution with meta-variables in its
domain combines a substitution with a $\beta$-development.
So
$\symb{d}\ (\abs{x}{\symb{sin}\ (\meta{Z}{x})})[Z
:=\abs{y}{\symb{plus}\ y\ x}]$ equals $\symb{d}\ 
(\abs{z}{\symb{sin}\ (\symb{plus}\ z\ x}))$, and $\meta{X}{\nil,
\symb{0}}[X:=\abs{x}{\symb{plus}\ (\symb{len}\ x)}]$ equals
$\symb{plus}\ (\symb{len}\ \nil)\ \symb{0}$.

\begin{definition}[Rules and rewriting]\label{def:rule}
A \emph{rule} is a pair $\ell \arrz r$ of closed meta-terms of the
same type such that $\ell$ is a pattern of the form
$\apps{\afun}{\ell_1}{\ell_n}$
with $\afun \in \F$ and $\FMV(r) \subseteq \FMV(\ell)$.
A set of rules $\Rules$ defines a rewrite
relation $\arr{\Rules}$ 
as the smallest monotonic relation on terms which includes:

\noindent
\begin{tabular}{l@{\hskip 2pt}c@{\hskip 8pt}l@{\hskip 8pt}cl}
(\textsf{Rule}) &
  $\ell\delta$ & $\arr{\Rules}$ & $r\delta$ & if $\ell \arrz r \in
  \Rules$ and $\delta$ a sub-\\
&&&& stitution on domain $\FMV(\ell)$ \\
(\textsf{Beta}) & $(\abs{x}{s})\ t$ & $\arr{\Rules}$ & $s[x:=t]$ \\
\end{tabular}

\noindent
We say $s \arr{\beta} t$ if $s \arr{\Rules} t$ is derived using a
(\textsf{Beta}) step.
A term $s$ is \emph{terminating} under $\arr{\Rules}$
if there is no
infinite reduction $s = s_0 \arr{\Rules} s_1 \arr{\Rules} \dots$, and $s$
is
$\beta$-normal if there is no $t$ 
with
$s \arr{\beta} t$.
Note that it is allowed to reduce at any position of a term, even
below a $\lambda$.
A relation
$\sqsupset$ is terminating if all terms are terminating
under
$\sqsupset$.
A set $\Rules$ of rules is terminating if $\arr{\Rules}$ is terminating.
%
The set $\Defineds \subseteq \F$ of \emph{defined symbols} consists
of those 
$\afun \in \F$ such that a rule $\apps{\afun}{\ell_1}{\ell_n}
\arrz r$ exists;
all other 
symbols are called \emph{constructors}.
\end{definition}

Note that $\Rules$ is allowed to be infinite -- which is useful for
instance to model polymorphic systems.  Also, right-hand sides of
rules do not have to be in $\beta$-normal form.
While this is rarely used in practical examples, non-$\beta$-normal
rules may arise through transformations, such as the one used in our
\refDef{def:ddp}.

\begin{example}\label{ex:mapintro}
Let $\F \supseteq \{ \nul : \nat,\ \suc : \nat \arrtype \nat,\ \nil :
\lijst,\linebreak \cons : \nat \arrtype \lijst \arrtype \lijst,\ 
\map : (\nat \arrtype \nat) \arrtype \lijst \arrtype \lijst\}$ and
consider the following rules $\Rules$:
\[
\begin{array}{rcl}
\map\ (\abs{x}{\meta{Z}{x}})\ \nil & \arrz & \nil \\
\map\ (\abs{x}{\meta{Z}{x}})\ (\cons\ H\ T) & \arrz &
  \cons\ \meta{Z}{H}\ (\map\ (\abs{x}{\meta{Z}{x}})\ T) \\
\end{array}
\]
Then $\map\ (\abs{y}{\nul})\ (\cons\ (\suc\ \nul)\ \nil) \arr{\Rules} 
\cons\ \nul\ (\map\ (\abs{y}{\nul})\ \nil) \arr{\Rules} \cons\ \nul\ 
\nil$.
Note that the
bound variable $y$ does not need to occur in the body of $\abs{y}{\nul}$
to match $\abs{x}{\meta{Z}{x}}$.
However, note also that
a term like
$\map\ \suc\ (\cons\ \nul\ \nil)$ can\emph{not} be reduced, because
$\suc$ does not instantiate $\abs{x}{\meta{Z}{x}}$. We could
alternatively consider the rules:
\[
\begin{array}{rcl}
\map\ Z\ \nil & \arrz & \nil \\
\map\ Z\ (\cons\ H\ T) & \arrz & \cons\ (Z\ H)\ (\map\ Z\ T) \\
\end{array}
\]
Here, $Z$ has a type declaration $[] \arrtype \nat \arrtype \nat$
instead of $[\nat] \arrtype \nat$, and we use explicit application.
Then $\map\ \suc\ (\cons\ \nul\ \nil) \arr{\Rules} \cons\ (\suc\ 
\nul)\ (\map\ \suc\ \nil)$.  However, we will often need explicit
$\beta$-reductions;
e.g., $\map\ (\abs{y}{\nul})\ (\cons\ (\suc\ \nul)\ \nil) \arr{\Rules}
\cons\ ((\abs{y}{\nul})\ \linebreak
(\suc\ \nul))\ (\map\ (\abs{y}{\nul})\ \nil)
\arr{\beta} \cons\ \nul\ (\map\ (\abs{y}{\nul})\ \nil)$.
\end{example}

For the set of terms to analyse for (non-)termination, it suffices to
consider a minimal number of arguments for each function symbol,
induced by the rewrite rules of the given AFSM. To capture this
minimal number of arguments, we introduce \emph{arity functions}.

\begin{definition}[Arity]\label{def:arity}
An \emph{arity function} is a function $\arity : \F \mapsto \N$ 
with
$0 \leq \arity(\afun) \leq \maa$ for all $\afun : \atype_1 \arrtype
\dots \arrtype \atype_\maa \arrtype \asort \in \F$.  A meta-term $s$
\emph{respects $\arity$} if any $\afun$ occurring in $s$ is applied to
at least $\arity(\afun)$ arguments.
$\Rules$ respects $\arity$ if
$\ell$ and $r$ respect $\arity$ for all $\ell \arrz r \in \Rules$.

For a fixed set of function symbols $\F$ and arity function $\arity$, we
say that the \emph{minimal arity} of $\afun$ in $\F$ is $\arity(\afun)$,
and the \emph{maximal arity} of $\afun$ is the unique number $\maa$ such
that $\afun : \atype_1 \arrtype \dots \arrtype \atype_m \arrtype \asort
\in \F$.
The set of \emph{arity-respecting terms}
is denoted $\Terms(\F,\V,\arity)$.
\end{definition}

An AFSM is
a triple $(\F, \Rules, \arity)$;
types of (meta-)variables can be derived from context.
However, if $\Rules$ res\-pects an arity function $\arity$, then if
there is any non-terminating term, there is one that respects
$\arity$ (see \refApp{app:arity}). 
So for fixed $\Rules$ we can set the arity function to give
the greatest possible values that $\Rules$ respects, and
we do not need to explicitly give $\arity$
(choosing the greatest possible minimal arities is always useful,
as it requires a termination proof for fewer terms).
Thus, we will typically speak of an AFSM $(\F,\Rules)$.

Note that, while we have suggestively used the same notation $\arity$
for the minimal arity of function symbols and meta-variables, the
minimal arity of meta-variables is fixed by their declaration.


\begin{example}[Ordinal recursion]\label{ex:ordrec}
Let $\F \supseteq \{
\nul \!:\! \symb{ord},\ \suc \!:\! \symb{ord} \!\arrtype\! \symb{ord},\linebreak
\symb{lim} \!:\! (\symb{nat} \!\arrtype\! \symb{ord}) \!\arrtype\! \symb{ord},\ 
\symb{rec} \!:\! \symb{ord} \!\arrtype\! \symb{nat} \!\arrtype\! (\symb{ord}
\!\arrtype\! \symb{nat} \!\arrtype\! \symb{nat}) \!\arrtype\linebreak
((\symb{nat} \arrtype
\symb{ord}) \arrtype (\symb{nat} \arrtype \symb{nat}) \arrtype
\symb{nat}) \arrtype \symb{nat} \}$
and $\Rules$ 
given by:
\[
\begin{array}{rcl}
\symb{rec}\ \nul\ K\ F\ G & \arrz & K \\
\symb{rec}\ (\suc\ X)\ K\ F\ G & \arrz & F\ X\ (\symb{rec}\ X\ K\ F\ G) \\
\symb{rec}\ (\symb{lim}\ H)\ K\ F\ G & \arrz & G\ H\ (\abs{m}{
  \symb{rec}\ (H\ m)\ K\ F\ G}) \\
\end{array}
\]
Then we can assume that $\arity(\symb{rec}) = 4$ without explicitly
giving $\arity$.

Observant readers may also notice that by the given constructors, the
type $\nat$ is not inhabited.  However, following \refDef{def:terms},
$\F$ contains infinitely many symbols of all types; thus, constructors
of all sorts (with minimal arity $0$) are implicitly present.
\end{example}

The two most common
formalisms in the context of
termination analysis of higher-order rewriting
are \emph{algebraic functional systems} (AFSs)
and \emph{higher-order rewriting  systems}~\cite{nip:91,mil:91} (HRSs),
often used with a pattern restriction.
AFSs are very similar to our AFSMs, but they use variables
for matching rather than meta-variables; this 
is
trivially
translated to the AFSM format, giving rules where all meta-variables
have minimal arity $0$, like the ``alternative'' rules in \refEx{ex:mapintro}.
%
HRSs use matching
modulo $\beta/\eta$, but the common restriction of \emph{pattern HRSs}
can be directly translated into ASFMs, 
provided
terms are
$\beta$-normalised after every reduction step.
Even without strategy restrictions, termination of the obtained
AFSM still implies termination of the original HRS;
for second-order systems, termination is equivalent.
AFSMs can also naturally encode CRSs~\cite{klo:oos:raa:93}
and several applicative systems
(cf. \cite[Chapter 3]{kop:12}).


\subsection{Computability}

A common technique in higher-order termination is Tait and Gi\-rard's
\emph{computability} notion~\cite{tai:67}.
There are several ways to define computability predicates; here we
follow, e.g., \cite{bla:00,bla:jou:oka:02,bla:jou:rub:15,bla:16} 
in considering \emph{accessible meta-variables} using strictly positive
inductive types.
The definition 
presented
below is adapted from these works, both to
account for the altered formalism and to introduce (and obtain
termination of) a relation $\accreduce{C}$ that we will use in
\refThm{thm:staticsubtermproc}.
This allows for a minimal presentation that avoids the use of ordinals
that would otherwise be needed to obtain $\accreduce{C}$.

\CKold{Observation 2: the original sources are \emph{old}, which makes
  them hard to find and to read: the notion which many papers (in my
  ``articles'' directory) refer to is ``Tait and Girard's notion of
  computability'', but not a single one of them seems to include a
  reference.  This is probably with good reason.  Jaco van de Pol has
  conveniently done some of the lookup work (see
  \texttt{pol\_snproofs.pdf}), which leads to Tait's notion of
  computability in \cite{tai:67} (I also seem to have cited this one in
  my thesis; see \texttt{tait.pdf}), and Girard's version hidden in a
  French-language thesis from before the time people put
  those things online (\url{http://www.worldcat.org/title/interpretation-fonctionnelle-et-elimination-des-coupures-de-larithmetique-dordre-superieur/oclc/65409853?referer=di&ht=edition}).
  Blanqui is also good for sources, although reading his papers I
  think ``a right mess'' is a reasonable summary.  See
  \texttt{blanqui\_tcs16.pdf} for background.  Section 4.6 particularly
  gets interesting, but also before that\dots\ argbargl.}

To define computability, we use the notion of an \emph{RC-set}:

\edef\defRCset{\number\value{theorem}}
\edef\defRCsetSec{\number\value{section}}
\begin{definition}\label{def:RCset}
A \emph{set of reducibility candidates}, or \emph{RC-set}, for
a rewrite relation $\arr{\Rules}$ of an AFSM
is a set $I$ of base-type terms $s : \asort$ such that:
every term in $I$ is terminating under $\arr{\Rules}$;
$I$ is closed under $\arr{\Rules}$ (so if $s \in I$ and $s
  \arr{\Rules} t$ then $t \in I$);
if $s = \apps{x}{s_1}{s_n}$ with $x \in \V$ or $s =
  \apps{(\abs{x}{u})}{s_0}{s_n}$ with $n \geq 0$, and for all $t$ with
  $s \arr{\Rules} t$ we have $t \in I$, then $s \in I$.

We define $I$-computability for an RC-set $I$ by induction on types:
$s : \asort$ is $I$-computable if $s \in I$ ($\asort \in \Sorts$);
 $s : \atype \arrtype \btype$ is $I$-computable if for all
  $t : \atype$ that are $I$-computable, $\app{s}{t}$ is $I$-computable.
\end{definition}

The traditional notion of computability is obtained by taking for $I$
the set of
all terminating base-type terms.
However, we can do better, using the notion of
\emph{accessible arguments},
applied to termination analysis also in the
\emph{General Schema} \cite{bla:jou:oka:02}, the
\emph{Computability Path Ordering}~\cite{bla:jou:rub:15}, and the
\emph{Computability Closure}~\cite{bla:16}.

\begin{definition}[Accessible arguments]\label{def:accArgs}
We fix a quasi-ordering $\greqsort$ on $\Sorts$ \ext{with well-founded}
strict part $\grsort\ :=\ \greqsort \setminus \leqsort$. %
For 
$\atype \equiv\linebreak \atype_1 \!\arrtype\! \dots \!\arrtype\! \atype_\maa
\!\arrtype\! \bsort$ (with $\bsort \in \Sorts$) and sort $\asort$,
let
$\ext{\asort \gracsortup \atype}$ if $\asort \greqsort \bsort$ and\linebreak
$\ext{\asort \gracsortdown \atype_i}$ \ext{for all $i$}, and
let
$\ext{\asort
\gracsortdown \atype}$ if $\asort \grsort \bsort$ and $\ext{\asort
\gracsortup \atype_i}$ \ext{for all $i$}.
\ext{(Here $\ext{\asort \gracsortup \atype}$ corresponds to
  ``$\asort$ occurs only positively in $\atype$'' in
  \cite{bla:00,bla:jou:oka:02,bla:jou:rub:15}.)}

For $\afun : \atype_1 \arrtype \dots \arrtype \atype_\maa \arrtype
\asort \in \F$, let $\Acc(\afun) = \{ i \mid 1 \leq i \leq \maa
\wedge \ext{\asort \gracsortup \atype_i} \}$.
For $x : \atype_1 \arrtype \dots \arrtype \atype_\maa \arrtype \asort
\in \V$, let $\Acc(x) = \{ i \mid 1 \leq i \leq \maa \wedge \atype_i$
has the form $\btype_1 \arrtype \dots \arrtype \btype_n \arrtype
\bsort$ with\linebreak 
$\asort \greqsort \bsort \}$.
We
write $s \gracc t$ if either $s = t$\ext{,} or $s = \abs{x}{s'}$ and $s'
\gracc t$\ext{,} or $s = \apps{a}{s_1}{s_n}$ with $a \in \F \cup \V$ and
$s_i \gracc t$ for some $i\in \Acc(a)$.%
\end{definition}





\begin{example}\label{ex:lim}
Consider a quasi-ordering $\greqsort$ such that
$\symb{ord} \grsort \symb{nat}$. In \refEx{ex:ordrec}, we then have
$\ext{\symb{ord} \: \gracsortup \: \symb{nat} \arrtype \symb{ord}}$.  Therefore,
$1 \in \Acc(\symb{lim})$, which gives $\symb{lim}\ H \gracc H$.
\end{example}

\begin{theorem}\label{thm:defC}
%
Let $\apps{\afun}{s_1}{s_\maa} \accreduce{I} \apps{s_i}{t_1}{t_n}$ if
both sides have base type, $i \in \Acc(\afun)$,
and all $t_j$ are $I$-computable.
There is an RC-set $C$ such that $C = \{ s : \asort \mid \asort \in
\Sorts \wedge s$ is terminating under $\arr{\Rules} \cup \accreduce{C} \ext{\wedge} {}$
if $s \arrr{\Rules} \apps{\afun}{s_1}{s_\maa} : \asort$ then $s_i$ is
$C$-computable for all $i \in \Acc(\afun) \}$.
\end{theorem}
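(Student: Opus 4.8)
The plan is to exhibit $C$ as a fixpoint of the operator $\Phi$ whose value $\Phi(I)$ is the right-hand side of the claimed equation, with $I$ in place of $C$. For any set $I$ of base-type terms, $I$-computability is fully determined, hence so are the relation $\accreduce{I}$ and the set $\Phi(I)$; thus $\Phi$ is a well-defined operator and it remains to find $I$ with $\Phi(I) = I$. The obstruction is that $I$-computability carries a universal quantifier at arrow types, so $\Phi$ is \emph{not} monotone in $I$ and Knaster--Tarski cannot be applied to $\Phi$ directly.

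To repair this, I would stratify the construction along the well-founded strict order $\grsort$ on $\Sorts$, constructing sets $C_\asort$ of terms of type $\asort$ by recursion on $\asort$ and finally setting $C := \bigcup_\asort C_\asort$. Writing $\Phi_\asort$ for the sort-$\asort$ component of $\Phi$, the key point is that at stage $\asort$ every occurrence of the sort-$\asort$ part of the set being built is monotone, so that $\Phi_\asort$ \emph{is} a monotone operator on the complete lattice of sets of terms of type $\asort$ and Knaster--Tarski yields a fixpoint $C_\asort$; this is precisely what lets us avoid ordinals. Monotonicity rests on two consequences of the accessibility bookkeeping. First, for an accessible argument $s_i$ of $\afun : \atype_1 \arrtype \dots \arrtype \atype_\maa \arrtype \asort$ (so $\asort \gracsortup \atype_i$), the argument types of $\atype_i$ satisfy $\asort \gracsortdown(\cdot)$ and hence have head sort strictly below $\asort$; their computability is already fixed at an earlier stage, and $C_\asort$ occurs in ``$s_i$ is $C$-computable'' only through covariant (result-position) membership, i.e.\ positively. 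Second, in an $\accreduce{C}$-step $\apps{\afun}{s_1}{s_\maa} \accreduce{C} \apps{s_i}{t_1}{t_n}$ out of a sort-$\asort$ term, each auxiliary $t_j$ has a type whose head sort is strictly below $\asort$, so its $C$-computability is \emph{purely antitone} in $C_\asort$ (the result sort never touches $C_\asort$, only the domains of the universal quantifiers over arguments do); enlarging $C_\asort$ thus removes $\accreduce{C}$-steps and can only make termination under $\arr{\Rules} \cup \accreduce{C}$ easier. Both conjuncts of $\Phi_\asort$ are therefore monotone in $C_\asort$.

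With $C$ in hand, I would check the three RC-set conditions, all of which follow structurally from $C = \Phi(C)$ and are independent of which fixpoint is chosen. Termination under $\arr{\Rules} \cup \accreduce{C}$ is a conjunct of the defining condition. Closure under $\arr{\Rules}$ holds because if $s \arr{\Rules} t$ then every $\arrr{\Rules}$-reduct of $t$ is an $\arrr{\Rules}$-reduct of $s$, so both conjuncts transfer from $s$ to $t$. For the neutral clause, a term $\apps{x}{s_1}{s_n}$ or $\apps{(\abs{x}{u})}{s_0}{s_n}$ is not headed by a function symbol, so neither a rule nor an $\accreduce{C}$-step applies at its head; hence the first step of any reduction is an $\arr{\Rules}$-step into $C$, and both termination and the accessibility condition follow from the one-step $\arr{\Rules}$-reducts, which lie in $C$. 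A useful supporting observation is that an $\accreduce{C}$-reduct $\apps{s_i}{t_1}{t_n}$ of a term in $C$ is itself $C$-computable: $s_i$ is $C$-computable by the accessibility conjunct (taking the empty reduction), and the $t_j$ are $C$-computable by the side condition of $\accreduce{C}$, whence $\apps{s_i}{t_1}{t_n} \in C$ by the definition of computability at arrow type.

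The main obstacle is the monotonicity argument for $\Phi_\asort$, and within it the antitonicity in $C_\asort$ of the computability of the auxiliary terms $t_j$. This is exactly where the distinction between $\gracsortup$ and $\gracsortdown$ is indispensable: it is the strict decrease $\asort \grsort \bsort$ forced on the head sorts of the relevant argument and result types that confines every contravariant occurrence of $C_\asort$ to the domains of universal quantifiers while keeping all covariant occurrences positive, so that the two competing monotonicity directions of the computability predicate never clash at a single stage. Making this bookkeeping precise -- by an induction on types showing that $C$-computability of a term whose type has head sort strictly below $\asort$ is antitone in $C_\asort$, whereas it is monotone for the accessible arguments $s_i$ -- is the technical heart of the proof; the remaining verifications are routine.
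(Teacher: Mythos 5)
Your overall architecture is exactly the paper's: stratify the construction along the well-founded sort ordering, observe that the candidate sets at each stage form a complete lattice, extract a fixpoint via Knaster--Tarski (this is indeed how the ordinals are avoided), and check the RC-set conditions for the result. Your fixpoint bookkeeping and your post-hoc verification of the three RC-set properties (including the observation that neutral terms admit neither rule steps nor $\accreduce{C}$-steps at the head) are sound and match the paper's appendix proof in substance.

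However, the invariant you yourself call ``the technical heart'' is false as stated, and the induction would break if carried out literally. You claim that $C$-computability at any type whose \emph{head sort} is strictly below $\asort$ is antitone in $C_\asort$, and earlier that computability at the argument types of an accessible $\atype_i$ is ``already fixed at an earlier stage''. Neither holds: take $\asort \grsort \bsort$ and $\atype = (\asort \arrtype \bsort) \arrtype \bsort$. Its head sort $\bsort$ is strictly below $\asort$, yet computability at $\atype$ is \emph{monotone} in $C_\asort$ -- enlarging $C_\asort$ shrinks the set of computable terms of type $\asort \arrtype \bsort$, hence weakens the universal quantification defining computability at $\atype$ -- and for the same reason it is not determined by earlier stages. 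The class ``head sort strictly below $\asort$'' is not closed under the type induction, which is precisely why the paper does not use it: the correct invariant is the mutual one over the inductively defined relations, namely that computability is monotone at types $\atype$ with $\asort \gracsortup \atype$ and antitone at types with $\asort \gracsortdown \atype$ (Lemma~\ref{lem:gracsortinterplay}), where $\gracsortdown$ demands not only a strictly smaller head sort but also, recursively, that all argument types are $\gracsortup$. Every place you actually apply your claim does concern a $\gracsortdown$-type (argument types of accessible positions are $\gracsortdown$ by definition), so your conclusions are correct; but the induction hypothesis must be stated for exactly the classes $\gracsortup$/$\gracsortdown$, not for the larger head-sort class, or the inductive step fails on types such as the one above. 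With that repair, your proposal coincides with the paper's proof.
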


\begin{proof}[Proof sketch]
This follows the proof in, e.g., \cite{bla:jou:oka:02,bla:jou:rub:15},
defining $C$ as the fixpoint of a monotone function operating on RC-sets.
%

The full proof is available in \refApp{app:computability}.
\end{proof}


\section{Higher-order dependency pairs}\label{sec:dp}

In this section we
transpose the definitions of dynamic and static
dependency pairs~\cite{sak:wat:sak:01,kop:raa:12,bla:06,sak:kus:05,kus:iso:sak:bla:09,suz:kus:bla:11}
to AFSMs and thus formulate them in a single unified language.
We
add the new features of \emph{meta-variable conditions},
\emph{formative reductions}, and \emph{computable chains}.

\subsection{Common definitions}

Although we keep
the first-order terminology of dependency \emph{pairs},
the setting with meta-variables makes it better to
use
\emph{triples}.

\begin{definition}[Dependency Pair]\label{def:dp}
A \emph{dependency pair (DP)}
is a
triple $\ell \arrdp p\ (A)$, where $\ell$ is a closed pattern
$\apps{\afun}{\ell_1}{\ell_n}$ with $n \geq \arity(\afun)$, $p$ is
a meta-term, and $A$ is a set of \emph{meta-variable conditions}: pairs
$Z : i$ indicating that $Z$ regards its $i^{\text{th}}$
argument.
A substitution $\gamma$ \emph{respects} a set of
meta-variable conditions $A$ if for all $Z : i$ in $A$ we have
$\gamma(Z) \approxp \abs{x_1 \dots x_\mac}{t}$ with $x_i \in \FV(t)$.
DPs will be used only with substitutions that respect
their meta-variable conditions.  We call the DP \emph{collapsing} if $p$
has the form $\apps{\meta{Z}{s_1,\dots,s_\mac}}{t_1}{t_j}$ with
$\mac,j \geq 0$.
A set of DPs is collapsing if it contains a collapsing DP.
\end{definition}

There are two approaches to generate DPs, originating from
distinct lines of work 
\cite{kop:raa:12,kus:iso:sak:bla:09}.
As in the first-order setting,
both approaches employ \emph{marked symbols}:

\begin{definition}[Marked symbols]\label{def:marked}
Define $\F^\sharp := \F \uplus \{ \afun^\sharp : \atype \mid \afun :
\atype \in \Defineds \}$,
and
$\arity(\afun^\sharp) :=
\arity(\afun)$.
For a meta-term $s$, 
let $s^\sharp = \apps{\afun^\sharp}{s_1}{s_\mia}$
if $s := \apps{\afun}{s_1}{s_\mia}$ with $\mia = \arity(\afun)$ \ext{and $\afun \in \Defineds$;}
$s^\sharp := s$ otherwise.
\end{definition}

Note that $(\apps{\afun}{s_1}{s_n})^\sharp$ is simply
$\apps{\afun}{s_1}{s_n}$ if $n > \arity(\afun)$.

Moreover, we will consider \emph{candidates}.  In the first-order
setting these are subterms of the right-hand sides of rules whose
root symbol is defined.  In the current setting, we have to
consider also meta-variables as well as rules whose right-hand side is not
$\beta$-normal.

\begin{definition}[$\beta$-reduced-sub-meta-term,
  $\bsuptermeq{\beta}$, $\bsuptermeq{A}$]
A meta-term $s$ has a \emph{$\beta$-reduced-sub-meta-term} $t$
(shortly, \emph{BRSMT}),
notation $s \bsuptermeq{\beta} t$, if there exists a set of
meta-variable conditions $A$ such that $s \bsuptermeq{A} t$.
Here $s \bsuptermeq{A} t$ holds if at least one of the following holds:
\begin{itemize}
\item $s = \apps{t}{s_1}{s_n}$ for some $n \geq 0$
\item $s = \abs{x}{u}$ and $u \bsuptermeq{A} t$
\item $s = \apps{(\abs{x}{u})}{s_0}{s_n}$ and
  $\apps{u[x:=s_0]}{s_1}{s_n} \bsuptermeq{A} t$
\item $s = \apps{a}{s_1}{s_n}$ and $s_i \bsuptermeq{A} t$ for some
  $i \in \{1,\dots,n\}$, with $a$ an abstraction,
  meta-variable application, or element of $\V \cup \F$
\item $s = \apps{\meta{Z}{t_1,\dots,t_\mac}}{s_1}{s_n}$ and $t_i
  \bsuptermeq{A} t$ for some $i \in \{1,\dots,\mac\}$ such that
  $(Z : i) \in A$
\end{itemize}
\end{definition}

Essentially, $s \bsuptermeq{A} t$ means that $t$ can be reached from
$s$ by taking $\beta$-reductions at the root and ``subterm''-steps, and
$Z : i$ must be in $A$ whenever we pass into argument $i$ of a
meta-variable $Z$.
We also do not include subterms of $u$ in
$\apps{(\abs{x}{u})}{s_0}{s_n}$. 
We use $\beta$-reduced-sub-meta-terms in the following definition of
\emph{candidates}:

\begin{definition}[Candidates]\label{def:candidates}
For a 
meta-term $s$, the set $\cand(s)$ of
\emph{candidates of $s$} consists of those pairs $t\ (A)$
with
(a) $s \bsuptermeq{A} t$, (b) $t$ has\linebreak either the form
$\apps{\identifier{f}}{s_1}{s_n}$ with $\identifier{f} \in \Defineds$
and $n \geq \arity(\identifier{f})$, or
 the form
$\apps{\meta{Z}{t_1,\dots,t_\mac}}{s_1}{s_n}$,
(c) $t$ does \emph{not} have the form $\meta{Z}{x_1,\dots, x_{
\arity(Z)}}$ with all $x_i$ distinct variables,
and (d) 
there is no 
$A' \subsetneq A$
with
$s \bsuptermeq{A'} t$.
\end{definition}

\begin{example}\label{ex:metanonapplied}
In AFSMs where 
all meta-variables have minimal arity $0$,
the set
$\cand(s)$ for a meta-term $s$ consists 
of the pairs $t\ 
(\emptyset)$ where $t$ is a BRSMT of $s$ that has either the form
$\apps{Z}{s_1}{s_n}$ with $Z$ a meta-variable and $n > 0$, or
$\apps{\afun}{s_1}{s_n}$ with $\afun \in \Defineds$ and $n \geq
\arity(\afun)$.

In the AFSM of \refEx{ex:ordrec}, the set
$\cand(G\ H\ (\abs{m}{\symb{rec}\ (H\ m)\ K\ F\ G}))$
therefore consists of
$G\ H\ (\abs{m}{\symb{rec}\ (H\ m)\ K\ F\ G})\ 
(\emptyset)$ and $G\ H\ (\emptyset)$ and $H\ m\ (\emptyset)$
as well as $\symb{rec}\ (H\ m)\ K\ F\ G\ (\emptyset)$.
\end{example}

If some 
meta-variables \emph{do}
take arguments, the forms considered in \refEx{ex:metanonapplied}
do not suffice: we must also consider candidates such as
$\meta{Z}{s_1,\dots,s_\mac}\ (A)$.  In addition, the meta-variable
conditions
matter:
candidates are pairs $t\ (A)$ where
$A$ contains exactly those pairs $Z : i$ where we pass
through the $i^{\text{th}}$ argument of $Z$ to reach
$t$. 

\begin{example}\label{ex:metaapplied}
Consider an AFSM with the signature from \refEx{ex:ordrec} but a rule
using meta-variables with larger minimal arities:
\[
\begin{array}{c}
\symb{rec}\ (\symb{lim}\ (\abs{n}{\meta{H}{n}}))\ K\ 
  (\abs{xn}{\meta{F}{x,n}})\ (\abs{fg}{\meta{G}{f,g}})\ 
 \arrz \\
  \meta{G}{\abs{n}{\meta{H}{n}},\ 
  \abs{m}{\symb{rec}\ \meta{H}{m}\ K\ (\abs{xn}{\meta{F}{x,n}})\ 
  (\abs{fg}{\meta{G}{f,g}})}}
\end{array}
\]
The candidates of the right-hand side are:
\begin{itemize}
\item
$\symb{rec}\ \meta{H}{m}\ K\ (\abs{xn}{\meta{F}{x,n}})\ 
(\abs{fg}{\meta{G}{f,g}})\ (\{G : 2 \})$ and
\item
$\meta{G}{\abs{n}{\meta{H}{n}},\ 
  \abs{m}{\symb{rec}\ \meta{H}{m}\ K\ (\abs{xn}{\meta{F}{x,n}})\ 
  (\abs{fg}{\meta{G}{f,g}})}}\ (\emptyset)$
\end{itemize}
Note that for instance $\meta{H}{m}$ is \emph{not} the source of a
candidate, as $m$ is a variable and $H$ has minimal arity $1$ (as the
left-hand side of the rule shows).  Note also that $G$ cannot be
partially applied, so there is no counterpart to the candidate $G\ 
H\ (\emptyset)$ in \refEx{ex:metanonapplied}.
\end{example}


\emph{Dynamic} DPs involve also collapsing DPs.
This makes the notion of chains a bit
more complicated than its first-order
analogue.

\begin{definition}[Dependency chain]\label{def:chain}
Let $\P$ be a set of DPs and $\Rules$ a set of rules.
An infinite \emph{$(\P,\Rules)$-dependency chain} (or just
$(\P,\Rules)$-chain) is a sequence
$[(\rho_i,s_i,t_i) \mid i \in \N]$ where each $\rho_i \in \P \cup
\{ \mathtt{beta} \}$ and all $s_i,t_i$ are terms, such that for all
$i$:
\begin{enumerate}
\item\label{depchain:beta}
  if $\rho_i = \mathtt{beta}$, then 
  $s_i =
  \apps{(\abs{x}{u})\ v}{w_1}{w_n}$ and either
  (a) $n > 0$ and $t_i = \apps{u[x:=v]}{w_1}{w_n}$, or
  (b) $n = 0$ and $t_i = q^\sharp[x:=v]$ for some
  $q$ with $u \suptermeq q$ and
  $x \in \FV(q)$ but
  $q \neq x$.
\item\label{depchain:dp}
  if $\rho_i = \ell_i \arrdp p_i\ (A_i) \in \P$ then there exists
   a substitution $\gamma$ on domain $\FMV(\ell_i) \cup \FMV(p_i) \cup
   \FV(p_i)$ such that $\gamma$ maps all variables in $\FV(p_i)$ to
   fresh variables, $s_i = \ell_i\gamma$, and:
   \begin{enumerate}
   \item\label{depchain:dp:instance}
    if $p_i$ is an application or symbol $\afun^\sharp$,
     then $t_i = p_i\gamma$
   \item\label{depchain:dp:meta}
      if $p_i 
      =
      \meta{Z}{u_1,\dots,
      u_\mac}$ and $\gamma(Z) \approxp \abs{x_1 \dots x_\mac}{w}$, then
      $t_i = v^\sharp[x_1:=u_1\gamma,\dots,x_\mac:=u_\mac\gamma]$ for some
      non-variable subterm $v$ of $w$ such that $\{ x_1,\dots,x_\mac \}
      \cap \FV(v) \neq \emptyset$
    \item for all $(Z : j) \in A_i$: if $\gamma(Z) \approxp \abs{x_1
      \dots x_\mac}{u}$ then $x_j \in \FV(u)$
   \end{enumerate}
\item\label{depchain:reduce}
  $t_i = s_{i+1}$ or we can write $t_i = \apps{\afun}{u_1}{u_n
  }$ with $\afun \in \F^\sharp$, $s_{i+1} = \apps{\afun}{w_1}{w_n}$ and
  each $u_j \arrr{\Rules} w_j$
\end{enumerate}
\end{definition}


Both cases~\ref{depchain:beta}
and~\ref{depchain:dp:meta} essentially perform a $\beta$-step and
then mark a \emph{specific} subterm of the result: this subterm
previously occurred
between a $\lambda$-abstraction and an occurrence of its bound variable.
This
often makes it possible to use reduction triples that do not satisfy the
subterm property, as observed in~\cite{kop:raa:12}
and \refThm{thm:tagredpair}.

Dependency chains
can exhibit some
particular properties:

\begin{definition}[Minimal chain, formative chain, formative reduction]\label{def:formative}
A $(\P,\Rules)$-chain \ext{$[(\rho_i,s_i,t_i) \mid i \in \N]$} is
\emph{minimal} if the strict subterms of all
$t_i$ are terminating under $\arr{\Rules}$.
\ext{It} is \emph{formative} if for all $i$ with
$\rho_{i+1}$ having the form $\ell_{i+1} \arrdp r_{i+1}\ (A) \in \P$,
the reduction $t_i \arrr{\Rules} s_{i+1}$ is $\ell_{i+1}$-formative.

Here, for a pattern $\ell$, substitution $\gamma$ and
term $s$, a reduction $s \arrr{\Rules} \ell\gamma$ is
\emph{$\ell$-formative} if 
one of the following statements holds:
\begin{itemize}
\item $\ell$ is not a fully extended linear pattern
\item $\ell$ is a meta-variable application $\meta{Z}{x_1,\dots,x_\mia}$
  and $s = \ell\gamma$
\item $s = \apps{a}{s_1}{s_n}$ and $\ell = \apps{a}{\ell_1}{\ell_n}$
  with $a \in \F^\sharp \cup \V$ and each $s_i \arrr{\Rules} \ell_i
  \gamma$ by an $\ell_i$-formative reduction
\item $s = \abs{x}{s'}$ and $\ell = \abs{x}{\ell'}$ and $s' \arrr{\Rules}
  \ell'\gamma$ by an $\ell'$-formative reduction
\item $s = \apps{(\abs{x}{u})\ v}{w_1}{w_n}$ and
  $\apps{u[x:=v]}{w_1}{w_n} \arrr{\Rules} \ell\gamma$ by an
  $\ell$-formative reduction
\item $\ell$ is not a meta-variable application, and there
  exist 
  $\delta$ and $\ell' \arrz r' \in \Rules$ and meta-variables
  $Z_1 \dots Z_n$ ($n \geq 0$)
  such that $s \arrr{\Rules} (\apps{\ell'}{Z_1}{Z_n})\delta$ by an
  $\ell'$-formative reduction, and $(\apps{r'}{Z_1}{Z_n})\delta
  \arrr{\Rules} \ell\gamma$ by an $\ell$-formative reduction.
\end{itemize}
\end{definition}

Formative reductions are used as a proof technique in
\cite{kop:raa:12} and are formally introduced for the first-order
DP framework in \cite{fuh:kop:14}.
The property will be essential in our Theorems
\ref{thm:tagredpair} and \ref{thm:formativeproc}.

\subsection{Dynamic higher-order dependency pairs}\label{sec:ddp}

With these preparations, we move on to \emph{dynamic} DPs.
Since rules of functional type 
sometimes cause non-termination
only in a cer\-tain applicative context
(e.g., 
if
$\Rules = \{ \afun\ \nul \arrz \abs{x}{\afun\ x\ x} \}$,
then
$\afun\ \nul$ is terminating, but $\afun\ \nul\ \nul$ is
not), 
we use
an
extended set of rules to include applicative contexts, 
using
a
variant of $\eta$-saturation 
\cite{hir:mid:zan:08}.

\begin{definition}[$\DDP$]\label{def:ddp}
Let $\RulesEta :=
\{ \apps{\ell}{Z_1}{Z_i} \arrz \apps{r}{Z_1}{Z_i} \mid \ell \arrz r \in
\Rules
\wedge \ell : \atype_1 \arrtype \dots \arrtype \atype_\maa \arrtype
\asort
\wedge 0 \leq i \leq \maa
\wedge Z_1 : \atype_1,\dots,Z_i : \atype_i
  \text{ fresh meta-variables} \}$.
%
%
Now $\DDP(\Rules) = \{ \ell^\sharp \arrdp p^\sharp\ (A) \mid \ell \arrz
r \in \RulesEta \wedge p\ (A) \in \cand(r) \wedge \neg (\ell
\supterm p) \}$.
\end{definition}

  \begin{remark}\label{rem:removeuseless}
  The corresponding definition in~\cite{kop:12} also excludes all DPs
  $(\apps{\ell}{Z_1}{Z_i})^\sharp \arrdp p^\sharp\ (A)$ if
  $\DDP(\Rules)$ already contains a DP $\ell^\sharp \arrdp p^\sharp\ 
  (A)$ -- so most of those DPs generated from rules in
  $\RulesEta \setminus \Rules$.
  For a simpler definition, we haven chosen not to do \ext{so}.
  \ext{Instead of excluding these DPs immediately, we can remove them}
  afterwards using a processor
(see \refThm{thm:removeuseless} in the appendix).
  \end{remark}

\begin{example}\label{ex:derivdynamic}
Consider an AFSM $(\F,\Rules)$ with $\F \supseteq \{ \symb{sin},
\symb{cos} : \real \arrtype \real,\ \symb{times} :
\real \arrtype \real \arrtype \real,\  \deriv
: (\real \arrtype \real) \arrtype \real \arrtype
\real \}$ and $\Rules = \{
\deriv\ (\abs{x}{\symb{sin}\ \meta{F}{x}}) 
\arrz 
  \abs{y}{\symb{times}\ (\deriv\ (\abs{x}{\meta{F}{x}})\ y)\ 
  (\symb{cos}\ \meta{F}{y})} 
\}
$.
Then $\RulesEta = \Rules \cup \linebreak
\{ \deriv\ (\abs{x}{\symb{sin}\ \meta{F}{x}})\ Y \arrz
(\abs{y}{\symb{times}\ (\deriv\ (\abs{x}{\meta{F}{x}})\ y)\ 
(\symb{cos}\ \linebreak
\meta{F}{y})})\ Y \}$.
Then $\DDP(\Rules)$ consists of:
\[
\begin{array}{rcll}
\deriv^\sharp\ (\abs{x}{\symb{sin}\ \meta{F}{x}}) & \arrdp &
  \deriv\ (\abs{x}{\meta{F}{x}})\ y & (\emptyset) \\
\deriv^\sharp\ (\abs{x}{\symb{sin}\ \meta{F}{x}}) & \arrdp &
  \deriv^\sharp\ (\abs{x}{\meta{F}{x}}) & (\emptyset) \\
\deriv\ (\abs{x}{\symb{sin}\ \meta{F}{x}})\ Y & \arrdp &
  \deriv\ (\abs{x}{\meta{F}{x}})\ Y & (\emptyset) \\
\deriv\ (\abs{x}{\symb{sin}\ \meta{F}{x}})\ Y & \arrdp &
  \deriv^\sharp\ (\abs{x}{\meta{F}{x}}) & (\emptyset) \\
\deriv\ (\abs{x}{\symb{sin}\ \meta{F}{x}})\ Y & \arrdp &
  \meta{F}{Y} & (\emptyset) \\
\end{array}
\]
The first two DPs come from
$\Rules$,
the last three from
$\RulesEta \setminus \Rules$.
\end{example}

As observed before, $\DDP(\Rules)$ may contain collapsing dependency
pairs $\ell \arrdp \apps{\meta{Z}{t_1,\dots,t_\mac}}{s_1}{s_n}\ (A)$,
in contrast to the first-order DP framework.
This is somewhat problematic for extending techniques like the
subterm criterion or the dependency graph processor that rely on
the shape of the right-hand side of DPs.

\begin{example}\label{ex:map:ddp}
For $\Rules$ the first two rules in \refEx{ex:mapintro}, $\DDP(\Rules) =$
\[
\left\{
\begin{array}{rcll}
\symb{map}^\sharp\ (\abs{x}{\meta{Z}{x}})\ (\symb{cons}\ H\ T) & \arrdp &
  \symb{map}^\sharp\ (\abs{x}{\meta{Z}{x}})\ T & (\emptyset) \\
\symb{map}^\sharp\ (\abs{x}{\meta{Z}{x}})\ (\symb{cons}\ H\ T) & \arrdp &
  \meta{Z}{H} & (\emptyset) \\
\end{array}
\right\}
\]
\end{example}

Key to the DP framework is the relationship between
dependency chains and termination: an AFSM with rules $\Rules$ is
terminating if and only if there is no $(\DDP(\Rules),
\Rules)$-dependency chain.
Indeed, we can limit interest to specific chains, following
\refDef{def:formative}.


\edef\thmDdpChain{\number\value{theorem}}
\edef\thmDdpChainSec{\number\value{section}}
\newcommand{\ddpChainTheThm}{
\begin{theorem}[Thm.\ 6.44 in~\cite{kop:12}]\label{thm:chain}
If 
$\arr{\Rules}$
is non-terminating, then there is an infinite
minimal formative $(\DDP(\Rules),\Rules)$-
chain.
If there is an infinite $(\DDP(\Rules),\Rules)$-
chain, then
$\arr{\Rules}$ is non-terminating.
\end{theorem}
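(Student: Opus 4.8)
The plan is to prove the two implications separately, as the theorem is a biconditional (in the contrapositive form for the first half). The \emph{soundness} direction -- that an infinite chain implies non-termination -- is the easier half and serves as a useful warm-up. The \emph{completeness} direction -- that non-termination yields an infinite \emph{minimal formative} chain -- is where the real work lies, and it is where I expect the main obstacle.

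\textbf{Soundness (infinite chain $\Rightarrow$ non-termination).} Given an infinite $(\DDP(\Rules),\Rules)$-chain $[(\rho_i,s_i,t_i) \mid i \in \N]$, I would construct an infinite $\arr{\Rules}$-reduction. The key observation is that each chain step corresponds to genuine rewriting activity on an unsharped term. For a DP step $\rho_i = \ell_i \arrdp p_i\ (A_i)$, the instance $s_i = \ell_i\gamma$ unsharps to an $\Rules$-redex (since $\ell_i = \apps{\afun^\sharp}{\dots}$ comes from a rule in $\RulesEta$, and each rule in $\RulesEta$ is itself derivable from $\Rules$ by adding fresh arguments), which rewrites toward an unsharped version of $t_i$; for a $\mathtt{beta}$ step it is an actual $\beta$-reduction. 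The condition~\ref{depchain:reduce} ($t_i \arrr{\Rules} s_{i+1}$ below the shared head symbol) glues consecutive steps. I would make this precise by defining an ``unsharping'' map $\unsharp{\cdot}$ and showing that $\unsharp{s_i}\ \arr{\Rules}^+\ \unsharp{s_{i+1}}$ holds under a suitable context (using monotonicity of $\arr{\Rules}$ and the fact that marking only renames a defined head symbol). The cases~\ref{depchain:dp:meta} and~\ref{depchain:beta}(b), where a subterm is extracted after a $\beta$-step, require checking that the extracted subterm $q$ (resp.\ $v$) really occurs in the reduct, which follows because $q$ is a subterm of $u$ containing the bound variable and the substitution places $v$ there.

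\textbf{Completeness (non-termination $\Rightarrow$ minimal formative chain).} This is the substantial direction, and it follows the pattern established in \cite{kop:raa:12,kop:12}. I would argue by a \emph{minimal counterexample} construction. Assuming $\arr{\Rules}$ is non-terminating, there is a non-terminating term that respects the arity function (by the arity result of \refApp{app:arity}); among all non-terminating terms one can select a \emph{minimal} one, i.e.\ one all of whose strict subterms are terminating. The standard technique is to trace an infinite reduction from such a minimal term, repeatedly isolating the ``position of activity'': the term has the form $\apps{\afun}{u_1}{u_n}$ with the $u_j$ terminating, so after finitely many head-$\Rules$-steps and $\beta$-steps a rule $\ell \arrz r$ (or its $\RulesEta$-extension) must fire at the head, and the continuing non-termination must live inside an instance of some candidate of $r$. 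This candidate becomes the next DP, and its marked instance is again minimal after the extraction; iterating yields the infinite chain. Ensuring the chain is \emph{minimal} is built into the construction (subterms of each $t_i$ are terminating by choice). Ensuring it is \emph{formative} requires the extra observation that the reduction $t_i \arrr{\Rules} s_{i+1}$ can be taken to be $\ell_{i+1}$-formative -- intuitively, one only performs reduction steps that actually contribute to building the redex pattern $\ell_{i+1}$, discarding ``wasteful'' steps; this is exactly the content of the formative-reduction machinery of \refDef{def:formative}.

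\textbf{The main obstacle} is the completeness construction, specifically handling the \emph{higher-order and collapsing} phenomena that have no first-order analogue. When the active rule has functional type and non-termination only manifests in an applicative context, I must appeal to the $\eta$-saturated rule set $\RulesEta$ rather than $\Rules$ itself -- this is precisely why \refDef{def:ddp} builds $\DDP$ over $\RulesEta$. The delicate point is the \emph{collapsing} case, where the continuing non-termination is located inside a meta-variable instance $\gamma(Z)$ rather than inside a fixed subterm of $r$: here I extract a non-variable subterm $v$ of $\gamma(Z)$'s body (case~\ref{depchain:dp:meta}), and I must verify that the bound-variable/free-variable bookkeeping -- recorded via the meta-variable conditions $A$ and the requirement $\{x_1,\dots,x_\mac\} \cap \FV(v) \neq \emptyset$ -- correctly tracks which argument of $Z$ carries the activity. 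The separation of matching meta-variables from binder variables in the AFSM formalism is what makes this bookkeeping sound (and is exactly the feature the introduction credits with repairing completeness for non-left-linear systems). Since this theorem is quoted as Thm.\ 6.44 of \cite{kop:12}, I would ultimately structure the completeness proof to reuse that argument, contributing primarily the verification that the formative-reduction refinement can be layered on top without disturbing minimality.
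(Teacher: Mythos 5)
Your proposal matches the paper's own proof in both structure and key ideas: the chain-to-non-termination direction is argued exactly as in the paper (each DP or $\mathtt{beta}$ entry unfolds into $(\arr{\Rules} \mathop{\cup} \supterm)$-steps via the $\bsuptermeq{A}$/unsharping analysis, and monotonicity of $\arr{\Rules}$ absorbs the subterm steps into a genuine infinite reduction), while the non-termination-to-chain direction uses the same minimal-non-terminating-term construction over $\RulesEta$, the same candidate selection with meta-variable-condition bookkeeping for the collapsing case, and the same postponement lemma stating that a reduction from a terminating term to a pattern instance can be made formative. The only discrepancy is terminological -- what you call soundness/completeness is what the paper calls completeness/soundness -- which is harmless.
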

}
\ddpChainTheThm

\begin{proof}[Proof sketch]
%
The proof of the first claim follows the proof of
\cite[Thm.~5.7]{kop:raa:12}:
we select a \emph{minimal} non-terminating term (MNT)
$s$ (all whose subterms terminate) and an infinite reduction starting
in $s$. 
Then we stepwise build an infinite minimal $(\DDP(\Rules),\Rules)
$-dependency chain as follows.
If $s = \apps{(\abs{x}{u})}{s_0}{s_n}$ with $n > 0$, then also
$\apps{u[x:=s_0]}{s_1}{s_n} =: s'$ is non-terminating; we continue with
a MNT subterm $w$ of $s'$.
Otherwise $s = \apps{\afun}{s_1}{s_n}$ and there is $\ell \arrz r
\in \RulesEta$ such that $s \arrr{\Rules} \ell\gamma$ by reductions in
the $s_i$, and $r\gamma$ is still non-terminating.  We can identify a
candidate $t\ (A)$ of $r$ such that $\gamma$ respects $A$ and $t\gamma$
is a MNT subterm of $r\gamma$; we continue with
$t\gamma$.
For the \emph{formative} property, we note that if $s \arrr{\Rules}
\ell\gamma$ and $s$ terminates, then $s \arrr{\Rules} \ell\delta$ by
an $\ell$-formative reduction for some $\delta$ where each $\delta(Z)
\arrr{\Rules} \gamma(Z)$; this follows by induction first on $s$ using
$\arr{\Rules} \mathop{\cup} \supterm$, second on the reduction length.

For the second claim, we show by induction on the definition of
$\bsuptermeq{A}$ that $s \bsuptermeq{A} t$ implies $s\gamma\ (\supterm
\mathop{\cup} \arr{\beta})^*\ t\gamma$ for all substitutions which respect
$\gamma$; thus, any infinite $(\DDP(\Rules),\Rules)$-chain induces an
infinite $(\arr{\Rules} \mathop{\cup} \supterm)$-reduction, which
contradicts termination of $\arr{\Rules}$.


The full proof is available in \refApp{app:ddp}.
\end{proof}

\refThm{thm:chain} is similar to \cite[Thm.~5.7]{kop:raa:12}, but
provides progress by considering AFSMs and meta-variable conditions,
and \ext{by} regarding formative chains; also, in~\cite{kop:raa:12} the second
statement holds only if all left-hand sides in $\Rules$ are linear,
as their definition of $\DDP$ replaces fresh variables in the
right-hand sides of DPs by constants.
Here, this is not needed due to the
\ext{distinction between variables and meta-variables}.

\begin{example}[Encoding the untyped $\lambda$-calculus]\label{ex:lambdadynamic}
Consider an\linebreak AFSM with $\F \supseteq \{ \symb{ap} : \symb{o} \arrtype
\symb{o} \arrtype \symb{o},
\ \symb{lm} : (\symb{o} \arrtype \symb{o}) \arrtype \symb{o} \}$ and
$\Rules = \{ \symb{ap}\ (\symb{lm}\ F) \arrz F \}$ (note that the only
rule has type $\symb{o} \arrtype \symb{o}$).  Then
$\RulesEta = \Rules \cup \{ \symb{ap}\ (\symb{lm}\ F)\ X \arrz F\ X \}$,
and $\DDP(\Rules) = \{
\symb{ap}^\sharp\ (\symb{lm}\ F) \arrdp F\ (\emptyset),\ 
\symb{ap}\ (\symb{lm}\ F)\ X \arrdp F\ X\ (\emptyset)
\}$.
There is an infinite dependency chain with, for each \emph{odd}
integer $i$:
$\rho_i = \symb{ap}\ (\symb{lm}\ F)\ X \arrdp F\ X\ 
  (\emptyset)$ and
$\rho_{i+1} = \mathtt{beta}$ and
$s_i = t_{i+1} = \symb{ap}\ (\symb{lm}\ (\abs{y}{\symb{ap}\ y\ 
  y}))\ (\symb{lm}\ (\abs{y}{\symb{ap}\ y\ y}))$ and
$t_i = s_{i+1} = (\abs{y}{\symb{ap}\ y\ y})\ (\symb{lm}\ 
  (\abs{y}{\symb{ap}\ y\ y}))$.
Note that in the ``subterm'' step in the chain, always $v =
\symb{ap}\ y\ y$ with $v$ as in \refDef{def:formative}.
\end{example}

\subsection{Static higher-order dependency pairs}\label{sec:static}

Unlike the dynamic approach, which may be used for all AFSMs, the
static approach can be applied only on systems whose rules are
\emph{accessible function passing (AFP)}.  Intuitively:
meta-variables of a higher type may occur only in ``safe'' places in
the left-hand sides of rules.
Rules like \refEx{ex:lambdadynamic},
where a higher-order meta-variable is lifted out of a base-type term,
are not admitted.

\begin{definition}[Accessible function passing]\label{def:apfp}
An AFSM $(\F,\Rules,\linebreak
\arity)$ is \emph{accessible function passing (AFP)}
if there exists a sort ordering $\greqsort$ following
\refDef{def:accArgs} such that:
\begin{itemize}
\item for all $\afun : \atype_1 \arrtype \dots \arrtype \atype_\maa
  \arrtype \asort$ we have $\arity(\afun) = \maa$;
\item for all $\apps{\afun}{\ell_1}{\ell_\maa} \arrz r \in \Rules$ and all
  $Z \in \FMV(r)$: there are 
  va\-riables $x_1,\dots,x_k$
  and some
  $i$ such that $\ell_i \gracc \meta{Z}{x_1,\dots,x_k}$.
\end{itemize}
\end{definition}

This definition is strictly more liberal than the notions of
\emph{plain function passing} in
\cite{kus:iso:sak:bla:09,suz:kus:bla:11} as adapted to AFSMs; this will
allow us to handle examples like ordinal recursion (\refEx{ex:ordrec})
which are not covered \ext{by} \cite{kus:iso:sak:bla:09,suz:kus:bla:11}.
However, note that \cite{kus:iso:sak:bla:09,suz:kus:bla:11} consider
a different formalism, which does take polymorphism and rules whose
left-hand side is not a pattern into account (which we do not
consider).
Our restriction more closely \ext{resembles} the ``admissible'' rules
in \cite{bla:06} which are defined using a pattern computability
closure \cite{bla:00}.

\begin{example}
The AFSM from \refEx{ex:mapintro} is obviously AFP (for instance by
equating all types under $\greqsort$).
The AFSM from \refEx{ex:ordrec} is AFP if a
sort ordering $\symb{ord} \grsort \symb{nat}$ is chosen (following
\refEx{ex:lim}).
The AFSM from \refEx{ex:lambdadynamic} is not, because $\arity(
\symb{ap}) = 1$.
An AFSM with the same signature but $\Rules = \{ \symb{ap}\ 
(\symb{lm}\ F)\ X \arrz F\ X \}$ (and $\arity(\symb{ap}) = 2$) is
not AFP either, because $\Acc(\symb{lm}) = \emptyset$.
This is good because, as we will see, the set $\SDP(\Rules)$ of static
DPs is empty, which would lead us to falsely conclude termination
without the restriction.
\end{example}

The restriction on arities excludes rules of non-base type and makes
sure that always $(\apps{\afun}{s_1}{s_n})^\sharp =
\apps{\afun^\sharp}{s_1}{s_n}$.  
We can transform any AFSM to satisfy this restriction:

\begin{definition}[$\Rules^\uparrow$]\label{def:etalong}
Given 
rules
$\Rules$, let their \emph{$\eta$-expansion} $\Rules^\uparrow = \{
\etalong{(\apps{\ell}{Z_1}{Z_\maa})}
\ \arrz \etalong{(\apps{r}{Z_1}{Z_\maa})}
\mid \ell \arrz r \in \Rules$ with $r : \atype_1 \arrtype \dots \arrtype
\atype_\maa \arrtype \asort$,
$\asort \in \Sorts$,
and
$Z_1,\dots, Z_\maa$ fresh meta-variables$\}$,
where
\begin{itemize}
\item $\etalong{s} = \abs{x_1 \dots x_\maa}{\apps{\halfetalong{s}}{
  (\etalong{x_1})}{(\etalong{x_\maa})}}$ if $s$ is an application or
  element of $\V \cup \F$, and $\etalong{s} = \halfetalong{s}$
  otherwise;
\item $\halfetalong{\afun} = \afun$ for $\afun \in \F$
  and $\halfetalong{x} = x$ for $x \in \V$, while
  $\halfetalong{\meta{Z}{s_1,\dots,s_\mac}} =
  \meta{Z}{\halfetalong{s_1},\dots,\halfetalong{s_\mac}}$ and
  $\halfetalong{(\abs{x}{s})} = \abs{x}{(\etalong{s})}$ and
  $\halfetalong{s_1\ s_2} = \halfetalong{s_1}\ (\etalong{s_2})$.
\end{itemize}
\end{definition}

Note that $\etalong{\ell}$ is a pattern for patterns $\ell$.  By
\cite[Thm.~2.16]{kop:12}, a relation $\arr{\Rules}$ is
terminating if $\arr{\Rules^\uparrow}$ is terminating.
However,
this transformation can introduce non-termination in some 
cases,
 e.g.,
 the terminating rule
 $\symb{f}\ X \arrz \symb{g}\ \symb{f}$ with $\symb{f} : \symb{o}
 \arrtype \symb{o}$ and $\symb{g} : (\symb{o} \arrtype \symb{o})
 \arrtype \symb{o}$, whose $\eta$-expansion
 $\symb{f}\ X \arrz \symb{g}\ (\abs{x}{(\symb{f}\ x)})$ is
 non-terminating.

\begin{example}\label{ex:derivextended}
The AFSM from \refEx{ex:derivdynamic} is $\eta$-expanded into an AFSM
with the single rule
$\deriv\ (\abs{x}{\symb{sin}\ \meta{F}{x}})\ Y \arrz
(\abs{y}{\symb{times}\ \linebreak
(\deriv\ (\abs{x}{\meta{F}{x}})\ y)\ (\symb{cos}\ \meta{F}{y})})\ 
Y$.
\end{example}

The original static approaches define the set of DPs as the set of
pairs $\ell^\sharp \arrdp p^\sharp$ where $\ell \arrz r$ is a rule
and $p$ a subterm of $r$ of the form $\apps{\afun}{r_1}{r_\maa}$ --
as their rules are built using terms, not meta-terms.  This can allow
variables which are bound in $r$ to become free in $p$.  In the
current setting, we use candidates rather than subterms, and we replace
such variables by meta-variables.
\newpage

\begin{definition}[$\SDP$]\label{def:sdp}
Let $s$ be a meta-term and $(\F,\Rules,\arity)$ an AFSM.
Let $\metafy(s)$ denote $s$ with all free variables replaced by
meta-variables.
Now $\SDP(\Rules) = \{ \ell^\sharp \arrdp \metafy(p^\sharp)\ 
(A) \mid \ell \arrz r \in \Rules \wedge p\ (A) \in \cand(r)
\wedge p$ has the form $\apps{\identifier{f}}{p_1}{p_\maa}$ with
$\maa = \arity(\identifier{f}) \}$.
\end{definition}

Thus, $\SDP(\Rules)$ is not collapsing,
and \ext{(for AFP $\Rules$)} both sides of a static DP have base type due
to the arity restriction.  This
simplifies reasoning and
make\ext{s} it possible to use
more powerful processors,
since cases (\ref{depchain:beta}) and (\ref{depchain:dp:meta})
in \refDef{def:chain}
no longer apply.
However, right-hand sides of static DPs may contain
meta-variables that do not occur on the left, which can be troublesome,
as we will see in \refEx{ex:staticbad}.


\begin{example}\label{ex:derivsdp}
For $\Rules$ as in \refEx{ex:derivextended}, the set $\SDP(\Rules)$
has one element:
$\deriv^\sharp\ (\abs{x}{\symb{sin}\ \meta{F}{x}})\ Y \arrdp
\deriv^\sharp\ (\abs{x}{\meta{F}{x}})\ Y\ (\emptyset)$.
\end{example}

\begin{example}\label{ex:ordrecstatic}
The AFSM from \refEx{ex:ordrec} is
AFP if a
sort ordering $\symb{ord} \grsort \symb{nat}$ is chosen (following
\refEx{ex:lim}).  $\SDP(\Rules)$ is given by:
\[
\begin{array}{rcll}
\symb{rec}^\sharp\ (\suc\ X)\ K\ F\ G & \arrdp & \symb{rec}^\sharp\ 
  X\ K\ F\ G\ (\emptyset) \\
\symb{rec}^\sharp\ (\symb{lim}\ H)\ K\ F\ G & \arrdp &
  \symb{rec}^\sharp\ (H\ M)\ K\ F\ G\ (\emptyset) \\
\end{array}
\]
This
AFSM is not PFP following \cite{suz:kus:bla:11}.
Note that
 the right-hand side of the second DP contains
a meta-variable that does not appear on the left.
As we will see in
\refEx{ex:ordrecdone}, that is not problematic
here.
\end{example}

As with dynamic DPs, static DPs are organised in a chain.

\edef\thmSChain{\number\value{theorem}}
\edef\thmSChainSec{\number\value{section}}
\newcommand{\sChainTheThm}{
\begin{theorem}\label{thm:schain}
If an AFSM $(\F,\Rules)$ is non-terminating and 
AFP,
then there is a minimal formative $(\SDP(\Rules),\Rules)$-dependency
chain.
\end{theorem}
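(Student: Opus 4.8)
The plan is to follow the computability-based route to static dependency pairs (as in \cite{bla:jou:oka:02,kus:iso:sak:bla:09,bla:16}), using the RC-set $C$ and the notion of $C$-computability from \refThm{thm:defC}. First I would record two consequences of $C$ being an RC-set. (i) Every $C$-computable term is terminating under $\arr{\Rules}$: at base type a computable term lies in $C$, whose elements terminate even under $\arr{\Rules} \cup \accreduce{C}$, and at higher type one applies the term to computable variables (neutral terms are computable by the third RC-set clause) to descend to base type. Hence non-termination yields a non-$C$-computable term. (ii) The crux is an \emph{accessibility lemma}: if $s$ is $C$-computable and $s \gracc t$, then $t$ is $C$-computable. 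This is exactly what the defining clause of $C$ in \refThm{thm:defC} is tailored to give, through $\accreduce{C}$ and the positivity conditions of \refDef{def:accArgs} (and the variable-accessibility clause for the $a \in \V$ case of $\gracc$).

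With these facts I would construct the chain from a non-$C$-computable term. Since higher-type non-computability descends to base type by feeding in computable arguments, there is a non-$C$-computable base-type term, and I claim there is then one of the form $\apps{\afun}{s_1}{s_\maa}$ with $\afun \in \Defineds$, $\maa = \arity(\afun)$, and all $s_i$ being $C$-computable. A variable head or a head $\beta$-redex would be computable by the neutral clause of the RC-set, and a constructor head would be computable since its accessible arguments are (by the accessibility lemma) and non-accessible arguments cannot manufacture non-computability; the AFP arity restriction of \refDef{def:apfp} forces base type and $\maa = \arity(\afun)$. Because the arguments are computable (hence terminating), analysing the two ways membership in $C$ can fail shows that after $\arr{\Rules}$-reductions \emph{inside} the (computable-preserving) arguments the head rule fires: $\apps{\afun}{s_1}{s_\maa}$ rewrites to an instance $r\gamma$ of a rule $\apps{\afun}{\ell_1}{\ell_\maa} \arrz r$ with $r\gamma$ again non-$C$-computable.

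Next I would locate inside $r$ a candidate $p\ (A) \in \cand(r)$ whose metafied, marked instance $\metafy(p^\sharp)\gamma$ is non-$C$-computable and whose head is a defined symbol at full arity, yielding the static DP $\ell^\sharp \arrdp \metafy(p^\sharp)\ (A) \in \SDP(\Rules)$ (cf.\ \refDef{def:sdp}) and the next chain step via clause~(\ref{depchain:dp:instance}) of \refDef{def:chain}; the $\beta$-case~(\ref{depchain:beta}) and the collapsing case~(\ref{depchain:dp:meta}) do not arise, as static DPs are non-collapsing with base-type sides. Here AFP is decisive: every $Z \in \FMV(r)$ satisfies $\ell_i \gracc \meta{Z}{x_1,\dots,x_k}$ for some $i$, so since $\ell_i\gamma$ is $C$-computable the accessibility lemma makes $\gamma(Z)$ $C$-computable, while the meta-variables introduced by $\metafy$ stand for variables bound above $p$ and may be instantiated by computable terms. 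Thus the arguments of the new head symbol are again $C$-computable, re-establishing the invariant and letting the construction repeat \emph{ad infinitum}. Minimality is then immediate, since the $C$-computable arguments at each step are terminating, so the strict subterms of every $t_i$ terminate; formativity I would obtain exactly as in \refThm{thm:chain}, rearranging each argument reduction $t_i \arrr{\Rules} s_{i+1}$ into an $\ell_{i+1}$-formative one by induction first on the reduced term under $\arr{\Rules} \cup \supterm$ and then on reduction length, replacing $\gamma$ by a $\delta$ with $\delta(Z) \arrr{\Rules} \gamma(Z)$.

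The main obstacle is the accessibility lemma together with its use in the DP step: one must show precisely that the AFP witness $\ell_i \gracc \meta{Z}{x_1,\dots,x_k}$ transports $C$-computability from the computable argument $\ell_i\gamma$ down to $\gamma(Z)$, including the cases where $Z$ occurs beneath abstractions in $r$ (tracked by the meta-variable conditions $A$) and where $\gamma(Z)$ must be reconstituted from its accessible position under $\gracc$. Getting the interplay of $\gracc$, the variable-accessibility clause of \refDef{def:accArgs}, and the termination of $\accreduce{C}$ exactly right---so that constructors and accessible arguments never create non-computability outside a dependency pair---is where the real work lies; the remaining bookkeeping mirrors the proof of \refThm{thm:chain}. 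The full proof is available in the appendix.
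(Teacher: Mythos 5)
Your proposal is correct and takes essentially the same route as the paper: the paper proves this statement by first establishing the stronger \refThm{thm:sschain} (existence of an $\Rules$-computable formative chain) via precisely the construction you describe---minimal non-computable terms $\apps{\afun}{s_1}{s_\maa}$ with $C$-computable arguments, the AFP/accessibility lemma transporting computability from $\ell_i\gamma$ to $\gamma(Z)$ (\refLemma{lem:pfp}), a minimal-candidate selection lemma, and the formative-reduction rearrangement lemma (\refLemma{lem:formative})---and then obtains minimality from computability via \refLemma{lem:computableminimal}. Your only deviation is organisational: you inline the computable-chain construction and read off minimality directly from the invariant that all arguments along the chain are $C$-computable (hence terminating), rather than passing through the intermediate $\Rules$-computable-chain statement.
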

}\sChainTheThm

\begin{proof}[Proof sketch]
Adaption of
 the proof in~\cite{kus:iso:sak:bla:09}
to the more permissive definition of
\emph{AFP}
over
\emph{PFP},
meta-variable conditions, and formative reductions
(see also \refThm{thm:chain} and \refThm{thm:sschain} below).
\end{proof}

This result transposes the work of~\cite{kus:iso:sak:bla:09} to
AFSMs and extends it by using a more liberal restriction,
by limiting interest to \emph{formative} chains, and by including
meta-variable conditions.
The relation with \cite{bla:06} is less clear: \refThm{thm:schain}
strictly extends its main result (both with formative chains and
meta-variable conditions, and by dropping the restriction that the
right-hand side of each DP has the same type as the left and does not
introduce fresh (meta-)\linebreak variables), but its admissibility restriction
does not require $\arity$ to be maximal.  The restriction
that DPs must be type-preserving
eliminates most systems that might be admissible but not AFP, however.


Note that the reverse result does \emph{not} hold, even with the
addition of meta-variable conditions:
one can have a minimal formative
$(\SDP(\Rules),\Rules)$-dependency chain
even for a terminating AFSM. 

\begin{example}\label{ex:staticbad}
Consider an AFSM
with $\F \supseteq \{
\nul,\one : \nat$,
$\symb{f} : \nat \arrtype \nat$,
$\symb{g} : (\nat \arrtype \nat)
\arrtype \nat\}$ and
$\Rules = \{ \symb{f}\ \nul \arrz \symb{g}\ 
(\abs{x}{\symb{f}\ x}),\linebreak 
\symb{g}\ (\abs{x}{
\meta{F}{x}}) \arrz
\meta{F}{\one} \}$.
It is AFP, with
$\SDP(\Rules) = \{ \symb{f}^\sharp\ \nul \arrdp \symb{g}^\sharp\ (\abs{x}{\symb{f}\ x})\linebreak
(\emptyset), \symb{f}^\sharp\ \nul \arrdp \symb{f}^\sharp\ X\
(\emptyset)\}$
.  Although
$\arr{\Rules}$
is terminating, there is a minimal $(\SDP(\Rules),\Rules)$-
chain $[(\identifier{f}^\sharp\ \nul \arrdp \identifier{f}^\sharp\ X,
\identifier{f}^\sharp\ \nul, \identifier{f}^\sharp\ \nul) \mid i \in
\N]$.
\end{example}


\smallskip
Using the computability inherent in the construction
of dependency chains using $\SDP$, we can strengthen the result of
\refThm{thm:schain}: rather than considering \emph{minimal} chains we
can require that (some of) the subterms of all $t_i$ are \emph{computable}:

\begin{definition}\label{def:Ccomputable}
Let $C_{\AlterRules}$ be an RC-set satisfying the properties of
\refThm{thm:defC} for a rewrite relation $\arr{\AlterRules}$.
A $(\P,\Rules)$-dependency chain $[(\rho_i,s_i,t_i) \mid i \in \N]$ is
\emph{$\AlterRules$-computable} if $\arr{\AlterRules} \mathop{\supseteq} \arr{\Rules}$,
for all $i \in \N$ there exists
a substitution
$\gamma_i$ such that $\rho_i = \ell_i
\arrdp p_i\ (A_i)$ with $s_i = \ell_i\gamma_i$ and $t_i = p_i\gamma_i$,
and $(\abs{x_1 \dots x_n}{v})\gamma_i$ is $C_{\AlterRules}$-computable
for all $v$ and $B$ such that $p_i \bsuptermeq{B} v$, $\gamma_i$ respects
$B$, and $\FV(v) = \{x_1,\dots,x_n\}$.
\end{definition}


\edef\thmSSChain{\number\value{theorem}}
\edef\thmSSChainSec{\number\value{section}}
\newcommand{\ssChainTheThm}{
\begin{theorem}\label{thm:sschain}
If an AFSM $(\F,\Rules)$ is non-terminating and 
AFP, then there is an $\Rules$-computable formative
$(\SDP(\Rules),\Rules)$-
chain.
\end{theorem}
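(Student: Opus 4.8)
My plan is to reuse the computability predicate from \refThm{thm:defC} and mimic the minimal-non-computable-term construction of \cite{kus:iso:sak:bla:09}, strengthened so that the extracted chain satisfies the computability flag of \refDef{def:Ccomputable}. Concretely, I would fix $C := C_\Rules$ and take $\AlterRules := \Rules$, so that $\arr{\Rules} \supseteq \arr{\Rules}$ holds trivially and $C$ is an RC-set for $\arr{\Rules}$. Throughout I would lean on the standard computability facts: $C$-computable terms terminate; terms $\apps{x}{s_1}{s_n}$ with computable arguments are computable; computability is preserved by $\arr{\Rules}$, by abstraction, and by substituting computable terms for free variables; and, crucially, a \emph{marked} term $\apps{\afun^\sharp}{s_1}{s_n}$ is computable as soon as all $s_i$ are, because $\afun^\sharp$ carries no rules. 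The one genuinely new ingredient is an \emph{accessibility lemma} powered by AFP: if $\ell \arrz r \in \Rules$ and $\apps{\afun^\sharp}{s_1}{s_\maa}$ matches $\ell^\sharp$ with all $s_i$ computable, then by \refDef{def:apfp} every $Z \in \FMV(r)$ satisfies $\ell_i \gracc \meta{Z}{x_1,\dots,x_k}$ for some $i$; since accessible subterms of computable terms are computable --- which is exactly what the auxiliary relation $\accreduce{C}$ in \refThm{thm:defC} internalises --- each $\gamma(Z)$ is then computable.

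Assuming $\arr{\Rules}$ non-terminating, some base-type term is non-$C$-computable, and I would build the chain while maintaining a \emph{minimal} non-computable term $u_i = \apps{\afun_i}{a_1}{a_\maa}$ whose arguments are all computable. Minimality forces $\afun_i \in \Defineds$ (a variable, constructor, or $\beta$-redex head over computable arguments would be computable by the RC-set properties and \refThm{thm:defC}), and AFP gives $\arity(\afun_i) = \maa$, so $s_i := u_i^\sharp = \apps{\afun_i^\sharp}{a_1}{a_\maa}$ is computable. Because $u_i$ is non-computable while its arguments are computable, it cannot be stable under argument rewriting, so $u_i \arrr{\Rules} \ell_i\gamma \arr{\Rules} r_i\gamma$ with the first reductions inside the arguments, the last a root step by some rule $\ell_i \arrz r_i$, and $r_i\gamma$ again non-computable; the accessibility lemma makes every $\gamma(Z)$ computable. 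Mirroring these argument reductions on $s_i$ yields $s_i = \ell_i^\sharp\gamma_i$.

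Next I would extract the DP. Inside $r_i\gamma$ I pick a minimal non-computable subterm; since every $\gamma(Z)$ is computable it cannot lie inside a meta-variable application, so it has a defined head and, traced back through $r_i$ (possibly across one root $\beta$-step and under binders), comes from a candidate $p\ (A) \in \cand(r_i)$ of the form $\apps{\identifier{f}}{p_1}{p_\maa}$ with $\maa = \arity(\identifier{f})$; the variables of $r_i$ it captures are precisely those $\metafy$ abstracts, so $\SDP(\Rules)$ contains $\ell_i^\sharp \arrdp \metafy(p^\sharp)\ (A)$. Extending $\gamma$ to $\gamma_i$ by mapping each fresh meta-variable introduced by $\metafy$ to a fresh variable and putting $t_i := \metafy(p^\sharp)\gamma_i$, I get $t_i = \apps{\identifier{f}^\sharp}{\dots}$ with a marked head and computable arguments, hence computable. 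For the flag of \refDef{def:Ccomputable} I must check $(\abs{x_1 \dots x_n}{v})\gamma_i$ computable for every $v$ with $\metafy(p^\sharp) \bsuptermeq{B} v$ and $\gamma_i$ respecting $B$: for $v$ equal to the whole right-hand side this is the computability of $t_i$ just shown, while for strictly smaller $v$ it follows from minimality of the chosen subterm (its proper subterms are computable), from the accessibility lemma for the $v$ that descend through a meta-variable argument, and from closure of computability under $\beta$, abstraction, and substitution of computable terms. Finally, the \emph{unmarked} subterm is the next minimal non-computable term $u_{i+1}$, and reducing its arguments towards the next left-hand side realises the reduction $t_i \arrr{\Rules} s_{i+1}$ of clause~(\ref{depchain:reduce}) of \refDef{def:chain}, so the construction continues indefinitely. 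Formativity I would obtain exactly as in \refThm{thm:chain}, by replacing each $t_i \arrr{\Rules} s_{i+1}$ with an $\ell_{i+1}$-formative reduction.

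The hard part will be the verification of the computability flag --- that \emph{every} $\beta$-reduced-sub-meta-term of $\metafy(p^\sharp)$ instantiates, after abstracting its free variables, to a $C$-computable term. This forces one to combine three separate mechanisms (marking for the top term $t_i$, minimality of the extracted candidate for the strictly smaller subterms, and AFP/accessibility for the subterms reached through a meta-variable argument), and it is delicate precisely because $\bsuptermeq{B}$ may cross a root $\beta$-redex: there one cannot appeal to the literal subterm relation and must instead argue through closure of computability under $\beta$-reduction.
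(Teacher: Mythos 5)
Your proof skeleton is the same as the paper's: the computability set $C_\Rules$ from \refThm{thm:defC}, an AFP-based accessibility lemma (the paper's \refLemma{lem:pfp}), maintenance of minimal non-computable terms, root-step extraction made formative, a minimal candidate that is $\metafy$-ed into a static DP, and the marked-head-plus-minimality argument for the computability flag. The genuine gap is in how you extract the candidate. You pick ``a minimal non-computable subterm of $r_i\gamma$'' and claim it must have a defined head of the SDP shape $\apps{\afun}{p_1}{p_\maa}$ with $\maa = \arity(\afun)$. That claim is false, because non-computability of a term under a binder, or of a term of higher type, is witnessed only after substituting or applying suitable \emph{computable} terms. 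Take \refEx{ex:staticbad} with $\one$ replaced by $\nul$, i.e.\ $\Rules = \{ \symb{f}\ \nul \arrz \symb{g}\ (\abs{x}{\symb{f}\ x}),\ \symb{g}\ (\abs{x}{\meta{F}{x}}) \arrz \meta{F}{\nul}\}$; this AFSM is AFP and non-terminating. After the root step $\symb{f}\ \nul \arr{\Rules} \symb{g}\ (\abs{x}{\symb{f}\ x})$, the non-computable subterms of the reduct are the reduct itself, the abstraction $\abs{x}{\symb{f}\ x}$, and the bare symbol $\symb{f}$, while the body $\symb{f}\ x$ is a perfectly computable term. So the minimal non-computable literal subterm is an under-applied symbol (or, setting those aside, an abstraction) -- neither has the candidate shape, and the non-computability of $\abs{x}{\symb{f}\ x}$ is exposed only by \refLemma{lem:abscomputable} via the witness $[x:=\nul]$. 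This is exactly why the paper performs the descent on the \emph{meta-term} $r_i$ (its \refLemma{lem:scandcomplete}), accumulating a computable witness substitution $\delta$ for each binder crossed; it is that lemma, not literal subterm minimality, that guarantees the shape $\apps{\afun}{p_1}{p_\maa}$.

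The same omission then invalidates your instantiation of the $\metafy$-variables: you map them to \emph{fresh variables}, whereas they must be mapped to the witnesses $\delta(x_j)$ (the paper's substitution $\eta$). Fresh variables are computable, so your $t_i$ is indeed computable, but the unmarked term $u_{i+1}$ need no longer be non-computable, and the chain can simply stop. In the example above, your construction yields the DP $\symb{f}^\sharp\ \nul \arrdp \symb{f}^\sharp\ Z\ (\emptyset)$ with $t_i = \symb{f}^\sharp\ y$ for a fresh variable $y$; since variables never rewrite, $t_i$ reduces to no instance of either left-hand side in $\SDP(\Rules)$, so condition (\ref{depchain:reduce}) of \refDef{def:chain} cannot be met and no infinite chain is obtained, whereas with $Z \mapsto \nul$ the chain continues forever. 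Nor can you dodge the issue by stopping at the base-type term $\symb{g}\ (\abs{x}{\symb{f}\ x})$ itself (whose literal proper base-type subterms are all computable): that choice gives $p_i \bsuptermeq{\emptyset} \symb{f}\ x$, and the flag of \refDef{def:Ccomputable} then demands computability of $\abs{x}{\symb{f}\ x}$, which fails. So the missing idea is precisely that the non-computability witnesses for bound variables must be constructed via \refLemma{lem:abscomputable}, threaded through the candidate extraction, and used to instantiate the fresh meta-variables; with that in place, the rest of your outline goes through essentially as in the paper.
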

}
\ssChainTheThm

\begin{proof}[Proof sketch]
The proof echoes the proof of \refThm{thm:chain}, but considers
\emph{minimal non-computable (MNC)} terms $\apps{\afun}{s_1}{s_\maa}$ (where
all $s_i$ are $C_S$-computable) rather
than minimal non-terminating terms.
We can avoid the $\mathtt{beta}$ step because MNC terms do not have the
right shape for headmost $\beta$-reductions.  By induction on the
definition of $\gracc$ we can show that if $\ell \arrz r$ is an AFP
rule and $\ell\gamma$ is a MNC term, then $\gamma(Z)$
is $C$-computable for all $Z \in \FMV(r)$.  Rather than a minimal
candidate with respect to non-termination, we select a
$\bsuptermeq{\beta}$-minimal candidate $p\ (A)$ such that $\delta$
respects $A$ and $p(\delta \cup \zeta)$ is non-computable for some
substitution $\zeta$ mapping $\FV(p)$ to $C$-computable terms.  As all
$\gamma(Z)$ are computable, $p$ has the right form $\apps{\afun}{
p_1}{p_\maa}$ to give $\ell^\sharp \arrdp p^\sharp\ (A) \in
\SDP(\Rules)$.  By minimality of the choice, the conditions for a
$\Rules$-\emph{computable} chain are satisfied.

The full proof is available in \refApp{app:sdp}.
\end{proof}

\ext{As} it is easily seen that all $C_{\AlterRules}$-computable terms are
$\arr{\AlterRules}$-termi\-nating and therefore $\arr{\Rules}$-terminating,
every $\AlterRules$-computable $(\P,\Rules)$-dependency chain is also
minimal.
\ext{The new flag does not give an inverse of \refThm{thm:sschain}, though:
  the chain in \refEx{ex:staticbad} \emph{is} $\Rules$-computable.}

\section{The higher-order  DP framework}
\label{sec:framework}

Extending an earlier methodology to reason about DPs \cite{kop:raa:12},
the higher-order DP framework follows the ideas of the first-order DP
framework~\cite{gie:thi:sch:05:2}: it is an extendable framework for
proving termination and non-termination,
which new termination methods
can easily be plugged into, in the form of \emph{processors}.


Thus far, we have reduced the problem of termination to the non-existence of
certain chains.  Following the first-order DP\linebreak framework,
we formalise this further in the notion of a \emph{DP problem}:

\begin{definition}[DP problem]\label{def:dpproblem}
A \emph{DP problem} is a tuple $(\P,\Rules,m,f)$ with $\P$ a set of
DPs, $\Rules$ a set of rules, $m \in \{ \minimal,
\arbitrary \} \cup \{ \static_\AlterRules \mid \text{any set of rules}\ 
\AlterRules \}$, and $f \in
\{ \formative, \nonformative \}$.%
\footnote{Our framework is implicitly parametrised by
  the signature $\F^\sharp$ used for term formation,
  the arity function $\arity$ (\refDef{def:arity}), and
  an $\arity$-preserving marking function $()^\sharp$ from
  $\F^\sharp$ to $\F^\sharp$
  following
  \refDef{def:marked} (this function
  is 
  the identity on symbols not in $\Defineds$).
  As
  none of the processors in this paper modify these components,
  we leave them implicit.}

A DP problem $(\P,\Rules,m,f)$ is called \emph{finite} if there exists no
infinite $(\P,\Rules)$-chain that is $\AlterRules$-computable if
$m = \static_\AlterRules$, \ext{is} minimal if $m = \minimal$,
and is formative if $f = \formative$.
It is \emph{infinite} if it is not finite or $\Rules$ is non-terminating.

To capture the different levels of permissiveness in the $m$ flag,
we use a transitive-reflexive relation $\succeq$ generated by
$\static_S\linebreak \succeq \minimal$ and $\minimal \succeq \arbitrary$.
\end{definition}

Thus, the combination of Theorems~\ref{thm:chain} and~\ref{thm:sschain}
can be rephra\-sed as: an AFSM $(\F,\Rules)$ is terminating if and only
if $(\DDP(\Rules),\Rules,\linebreak
\minimal,\formative)$ is finite, or if (but
not only if) it is 
AFP
and $(\SDP(\Rules),\Rules,
\static_\Rules,\formative)$ is finite.

\smallskip
The core idea of the DP framework is to iteratively simplify a set of
DP problems via \emph{processors} until nothing remains to be proved:

\begin{definition}[Processor]\label{def:proc}
A \emph{dependency pair processor} (or just \emph{processor}) is a
function that takes a DP problem and returns either \no\ or
a set of DP problems.
A processor $\Proc$ is \emph{sound} if a DP problem $\adpprob$ is finite
whenever $\Proc(\adpprob) \neq \no$ and all elements of $\Proc(\adpprob)$
are finite.
A processor $\Proc$ 
is \emph{complete} if a DP problem
$\adpprob$ is
infinite whenever $\Proc(\adpprob) = \no$ or contains an infinite
element.
\end{definition}

%


To prove finiteness of a DP problem $\adpprob$ with the DP framework,
we proceed analogously to the first-order DP framework
\cite{gie:thi:sch:fal:06}: we repeatedly apply sound DP processors
starting from $\adpprob$ until none remain.  That is, we execute the
following rough
\ext{procedure}:
  (1) let $A := \{ \adpprob \}$;
  (2) while $A \neq \emptyset$: select a problem $\bdpprob \in A$ and a
  sound processor $\Proc$ with $\Proc(A) \neq \no$\ext{,} and let $A := (A
  \setminus \{ \bdpprob \}) \cup \Proc(\bdpprob)$.
If this
\ext{procedure}
terminates, then $\adpprob$ is a finite DP problem.
%
To prove termination of an AFSM $(\F,\Rules)$,
we would use as initial DP problem
either $(\DDP(\Rules),\Rules,\minimal,\formative)$
(see \refThm{thm:chain}) or alternatively
$(\SDP(\Rules),\Rules,\static_{\Rules},\formative)$
(the latter only if $\Rules$ is 
AFP,
see
\refThm{thm:schain} and \refThm{thm:sschain};
here $\eta$-expansion following \refDef{def:etalong}
may be applied first).
A proof of its finiteness by the DP framework
then implies termination of $\Rules$.

Similarly, we can use the DP framework to prove infiniteness:
  (1) let $A := \{ \adpprob \}$;
  (2) while $A \neq \no$: select a problem $\bdpprob \in A$ and a complete
    processor $\Proc$, and let $A := \no$ if $\Proc(\bdpprob) = \no$, or
    $A := (A \setminus \{ \bdpprob \}) \cup \Proc(\bdpprob)$ otherwise.
For non-termination of  $(\F,\Rules)$\ext{,} the initial DP problem should be
$(\DDP(\Rules),\Rules,\minimal,\formative)$\linebreak (see \refThm{thm:chain}).
Note that the algorithms coincide
while
all processors are
sound \emph{and} complete.
In
a tool,
automation (or the user)
must resolve
the non-determinism
and select
suitable processors.

\smallskip
Below, we will present a number of processors within the framework.
We will typically present processors by writing
``for a DP problem $\adpprob$ satisfying $X$, $Y$, $Z$,
$\Proc(\adpprob) = \dots$''.
In these cases, we let
$\Proc(\adpprob) = \{ \adpprob \}$ for any problem $\adpprob$ not
satisfying the properties.
Many more processors are possible, but we have chosen to present a
selection which touches on all aspects of the DP framework:
\begin{itemize}
\item processors which map a DP problem to $\no$
  (\refThm{def:nontermproc}), a singleton set (most) and
  a non-singleton set (\refThm{def:depgraph});
\item
  manipulating
  meta-variable conditions
  (\refThm{def:depgraph}, \ref{def:addcond:mean});
\item changing not just the set $\P$, but also the set $\Rules$
  (\refThm{thm:formativeproc}, \refThm{thm:usable}) as well as
  the various flags (\refThm{thm:usable});
\item using specific values of the
  $f$
  (\refThm{thm:tagredpair}, \refThm{thm:formativeproc}) and
  $m$ flags (\refThm{thm:subtermproc}, \ref{thm:usable},
  and \refThm{thm:staticsubtermproc} for
  $m =
  \static_\AlterRules$);
\item using term orderings (\refThm{thm:basetriple},
  \ref{thm:tagredpair}), a key part of many termination
  proofs
  in the first- and higher-order settings.
\end{itemize}

All sound- and completeness claims are proved in
\refApp{app:processors}.

%
%
%
%

\subsection{The dependency graph}\label{subsec:graph}

We can leverage reachability information to \emph{decompose} DP
problems. In first-order rewriting, a graph structure
is used to track which DPs can possibly follow one another in a
chain~\cite{art:gie:00}. In our higher-order setting, we define this
\emph{dependency graph} as follows.

\begin{definition}[Dependency graph]\label{def:depgraph}
A DP problem $(\P,\Rules,m,f)$ induces a graph structure $\mathit{DG}$,
called its \emph{dependency graph}, whose nodes are the elements of
$\P$. There is a (directed) edge from
$\ell_1 \arrdp p_1\ (A_1)$ to $\ell_2 \arrdp p_2\ (A_2)$
in $\mathit{DG}$ iff
one of the following holds:
\begin{itemize}
\item $p_1$ has the form $\apps{\meta{Z}{s_1,\dots,s_\mac}}{t_1}{t_n}$
  (with $\mac,n \geq 0$)
\item $p_1$ has the form $\apps{\afun}{s_1}{s_n}$, $\ell_2$ has the
  form $\apps{\afun}{u_1}{u_n}$ (for the same $n$) and there exist
  substitutions $\gamma$ and $\delta$ that respect $A_1$ and $A_2$
  respectively and that map variables to variables, such that each
  $s_i\gamma \arrr{\Rules} u_i\delta$.
\end{itemize}
\end{definition}

\begin{example}\label{ex:graphconditions}
Consider an AFSM with $\Rules = \{ \symb{f}\ (\abs{x}{\meta{F}{x}}) \arrz
\meta{F}{\symb{f}\ (\abs{x}{\nul})} \}$ for $\symb{f} : (\nat \arrtype
\nat) \arrtype \nat$.  Let $\P := \DDP(\Rules) =$
\[
\left\{
\begin{array}{lrcll}
(1) & \symb{f}^\sharp\ (\abs{x}{\meta{F}{x}}) & \arrdp &
  \meta{F}{\symb{f}\ (\abs{x}{\nul})}\ & (\emptyset), \\
(2) & \symb{f}^\sharp\ (\abs{x}{\meta{F}{x}}) & \arrdp &
  \symb{f}^\sharp\ (\abs{x}{\nul}) & (\{ F : 1 \}) \\
\end{array}
\right\}
\]
The dependency graph of $(\P,\Rules,\minimal,\formative)$ is:
\begin{center}
\includegraphics[scale=0.4]{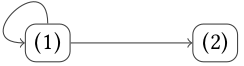}
\end{center}
There is no edge from (2) to itself because there is no
substitution $\gamma$ such that $(\abs{x}{\nul})\gamma$ can be reduced
to a term $(\abs{x}{\meta{F}{x}})\delta$ where $\delta(F)$ regards its
first argument (as $\arrr{\Rules}$ cannot introduce new variables).
\end{example}

In general, the dependency graph for a given DP problem is undecidable,
which is why we consider \emph{approximations}.

\begin{definition}[Dependency graph approximation \cite{kop:raa:12}]
A finite graph $G_\theta$ \emph{approximates} 
$\mathit{DG}$ if $\theta$ is a function that maps the nodes of
$\mathit{DG}$ to the nodes of $G_\theta$ such that, whenever there is
an edge from $\rho_1$ to $\rho_2$ in $\mathit{DG}$, there is an edge from
$\theta(\rho_1)$ to $\theta(\rho_2)$ in $G_\theta$. (There may be edges
in $G_\theta$ that have no corresponding edge in $\mathit{DG}$.)
\end{definition}

Note that this definition allows for an \emph{infinite} graph to be
approximated by a \emph{finite} one; infinite graphs may occur if
$\Rules$ is infinite\linebreak
(e.g., the union of all simply-typed instances of
polymorphic rules).

If $\P$ is finite, we can
take $G := G_{\mathtt{id}}$ with the same nodes as $\mathit{DG}$.
A simple approximation
may
have
an edge whenever $p_1$ is headed by a meta-variable 
or
$\head(p_1) = \head(\ell_2)$.
However,
one can
also take the meta-variable conditions
into account, as
we did
in \refEx{ex:graphconditions}.

\edef\procDPGraph{\number\value{theorem}}
\edef\procDPGraphSec{\number\value{section}}
\newcommand{\DPGraphTheProc}{
\begin{theorem}[Dependency graph processor]\label{def:graphproc}
The processor $\Proc_{G_\theta}$\ that maps a DP problem $\adpprob =
(\P,\Rules,m,f)$
to $\{ (\{ \rho \in \P \mid \theta(\rho) \in C_i \},\Rules,m,f) \mid
1 \leq i \leq n \}$ if $G_\theta$ is an approximation of the
dependency graph of $\adpprob$ and $C_1,\dots,C_n$
are the (nodes of the)
strongly connected components (SCCs) of $G_\theta$,
is both sound and complete.
\end{theorem}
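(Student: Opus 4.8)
The dependency graph processor $\Proc_{G_\theta}$ takes a DP problem $(\P,\Rules,m,f)$ and, using the SCCs $C_1,\dots,C_n$ of an approximation $G_\theta$, returns the set of subproblems $(\{\rho \in \P \mid \theta(\rho) \in C_i\},\Rules,m,f)$. I must show this is both sound and complete.

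**The key idea.** The fundamental observation is that any infinite $(\P,\Rules)$-chain must, from some point on, stay inside a single SCC of the dependency graph. This is the standard graph-theoretic fact: an infinite walk in a finite graph (or in one whose relevant structure is captured by a finite approximation) eventually becomes trapped in a strongly connected component, because it can only finitely often traverse edges that leave an SCC.

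Let me sketch how I would prove this. I will reason in two directions.

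\medskip\noindent\textbf{Plan for soundness.}
Suppose $\Proc_{G_\theta}(\adpprob) \neq \no$ and every returned subproblem $(\P_i,\Rules,m,f)$ with $\P_i = \{\rho \mid \theta(\rho)\in C_i\}$ is finite; I must show $\adpprob = (\P,\Rules,m,f)$ is finite. I argue contrapositively: suppose $\adpprob$ is \emph{not} finite, so there is an infinite $(\P,\Rules)$-chain $[(\rho_i,s_i,t_i)\mid i\in\N]$ satisfying the flags $m$ and $f$. The first step is to verify that consecutive DP steps in a chain correspond to edges in $\mathit{DG}$: if $\rho_i = \ell_i \arrdp p_i\ (A_i)$ and $\rho_{i+1} = \ell_{i+1}\arrdp p_{i+1}\ (A_{i+1})$ are both genuine DPs (not $\mathtt{beta}$), then by condition~(\ref{depchain:reduce}) of \refDef{def:chain} we have $t_i \arrr{\Rules} s_{i+1}$ with $t_i = p_i\gamma$ and $s_{i+1} = \ell_{i+1}\gamma'$; I check this situation matches one of the two edge cases of \refDef{def:depgraph}, so there is an edge $\rho_i \to \rho_{i+1}$ in $\mathit{DG}$, and hence by the approximation property an edge $\theta(\rho_i)\to\theta(\rho_{i+1})$ in $G_\theta$. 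The second step handles the $\mathtt{beta}$ steps: since $\mathtt{beta} \notin \P$, these do not correspond to nodes; I would argue that the chain (possibly after discarding an initial segment) contains infinitely many genuine DP steps, so that restricting to the DP indices yields an infinite walk in $G_\theta$. Because $G_\theta$ is finite, this walk eventually stays within a single SCC $C_i$; the corresponding tail of the original chain is then an infinite $(\P_i,\Rules)$-chain still satisfying $m$ and $f$, contradicting finiteness of $(\P_i,\Rules,m,f)$.

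\medskip\noindent\textbf{Plan for completeness.}
Here I must show that if $\Proc_{G_\theta}(\adpprob) = \no$ or some returned element is infinite, then $\adpprob$ is infinite. The processor never returns $\no$, so only the second case applies: some $(\P_i,\Rules,m,f)$ is infinite. This direction is the easy one, since $\P_i \subseteq \P$: any infinite $(\P_i,\Rules)$-chain with the appropriate flag properties is immediately also an infinite $(\P,\Rules)$-chain with those properties (the flags $m,f$ are preserved verbatim, and the $\static_S$-computability and formativeness conditions are purely local to the chain, not to the DP set). If instead $\Rules$ is non-terminating, then $\adpprob$ is infinite by definition. Hence $\adpprob$ is infinite.

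\medskip\noindent\textbf{Main obstacle.}
The delicate point is the treatment of the $\mathtt{beta}$ pseudo-steps in the soundness argument. I must confirm that an infinite chain cannot consist of only finitely many genuine DP steps followed by infinitely many $\mathtt{beta}$ steps, and that the walk induced on $G_\theta$ is well-defined across interspersed $\mathtt{beta}$ steps. The cleanest route is to observe that a $\mathtt{beta}$ step links $t_i$ to $s_{i+1}$ through the reduction condition in the same way a DP step does; what matters for the graph is only the sequence of \emph{DP-labelled} positions, and I should verify that between two consecutive DP steps the intermediate $\mathtt{beta}$ steps do not break the edge relation --- equivalently, that the $\arrr{\Rules}$ reductions absorbing the $\mathtt{beta}$ activity are already accounted for in the edge definition of \refDef{def:depgraph}. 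I expect this bookkeeping, together with carefully preserving the $m = \static_S$ and $f = \formative$ flags when passing to a tail of the chain, to be the only genuinely non-routine part of the proof; the graph-theoretic pigeonhole argument and the completeness direction are standard.
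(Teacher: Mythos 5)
Your overall architecture coincides with the paper's: soundness via \emph{(i)} edges of the dependency graph between consecutive DP entries of a chain, \emph{(ii)} the pigeonhole argument that an infinite walk in the finite graph $G_\theta$ eventually stays inside one SCC, and \emph{(iii)} passing to the corresponding tail; and completeness exactly as you argue it (the paper isolates this as \refLemma{lem:subsetcomplete}: each returned problem has $\P_i \subseteq \P$ and identical $\Rules$, $m$, $f$, so chains transfer verbatim, and non-termination of $\Rules$ is inherited). The completeness half of your proposal is therefore fine. The problem is that the step you defer as ``bookkeeping'' --- that interspersed $\mathtt{beta}$ entries do not break the edge relation --- is the entire mathematical content of soundness, and the route you sketch for it would not succeed. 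You propose to show that the $\arrr{\Rules}$ reductions ``absorbing the beta activity'' are already accounted for in the edge definition. They are not: a $\mathtt{beta}$ entry of kind (b) in case (\ref{depchain:beta}) of \refDef{def:chain} is a head $\beta$-step followed by extraction and marking of a subterm, which is not an $\arrr{\Rules}$ reduction between chain terms at all; and even kind (a), which \emph{is} an $\arr{\Rules}$ step, destroys precisely the shape demanded by the second clause of \refDef{def:depgraph} (a common head symbol $\afun$, the same number of arguments, and componentwise reductions $s_i\gamma \arrr{\Rules} u_i\delta$). The missing idea is a case split on whether the earlier DP is collapsing. If $\rho_j$ is non-collapsing, then $t_j = p_j\gamma_j$ is headed by a symbol of $\F^\sharp$, and clause (\ref{depchain:reduce}) of \refDef{def:chain} forces $s_{j+1}$ to have that same head; since a $\mathtt{beta}$ entry requires an abstraction at the head of $s_{j+1}$, \emph{no} $\mathtt{beta}$ step can follow a non-collapsing DP, so the next entry is itself a DP, and the chain's own substitutions (which map the free variables of right-hand sides to fresh variables, hence qualify for the edge definition) witness the edge. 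If $\rho_j$ is collapsing, the first clause of \refDef{def:depgraph} gives an edge from $\rho_j$ to \emph{every} node, so whatever $\mathtt{beta}$ steps intervene are harmless. Without this dichotomy your induced ``walk'' in $G_\theta$ is not known to follow edges, and the pigeonhole step cannot start.

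Your other flagged obligation --- that an infinite chain cannot end in a tail consisting only of $\mathtt{beta}$ entries --- is indeed necessary (otherwise soundness fails outright whenever $G_\theta$ is acyclic and the processor returns $\emptyset$), and it is true but not free. The argument: within a run of consecutive $\mathtt{beta}$ entries, clause (\ref{depchain:reduce}) of \refDef{def:chain} forces $t_i = s_{i+1}$ (the alternative would give $s_{i+1}$ a head in $\F^\sharp$, incompatible with a $\mathtt{beta}$ entry), so each step of the run is a head $\beta$-step possibly followed by taking a subterm; this relation is well-founded because $\beta$-reduction of simply-typed terms terminates and subterms of terminating terms terminate. Hence every infinite chain contains infinitely many DP entries. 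To your credit, you explicitly name this obligation, which the paper's own proof of its key lemma passes over in silence (it simply asserts that the chain ``traces an infinite path''); but naming it is not discharging it, and as it stands your proposal leaves both of the only two non-routine points open, with the pointer for the first one aimed in the wrong direction.
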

}
\DPGraphTheProc

\begin{proof}[Proof sketch]
In an infinite $(\P,\Rules)$-chain $[(\rho_i,s_i,t_i) \mid i \in \N]$,
if $\rho_i$ and $\rho_{i+j}$ are both not $\mathtt{beta}$, then there is
a path from $\rho_i$ to $\rho_{i+j}$ in $DG$.  Since $G_\theta$ is
finite, every infinite path in $DG$ eventually remains in a cycle in
$G_\theta$. This cycle is part of an SCC.
\end{proof}


\begin{example}\label{ex:finishgraph}
Let $\Rules$ be the set of rules from \refEx{ex:graphconditions}
\ext{and} $G$ be the graph given there\ext{.}
\ext{Then} $\Proc_{G}(\DDP(\Rules),\Rules,
\minimal,\formative)\linebreak
= \{ (\{ \symb{f}^\sharp\ (\abs{x}{\meta{F}{x}}) 
\arrdp  \meta{F}{\symb{f}\ (\abs{x}{\nul})}\ (\emptyset) \},
\Rules,\minimal,\formative) \}$.
\end{example}

\begin{example}\label{ex:staticbad:dynamicgood}
The AFSM from \refEx{ex:staticbad} has dynamic DPs
(1) $\symb{f}^\sharp\ \nul\linebreak \arrdp \symb{g}^\sharp\ (\abs{x}{\symb{f}\ x})\ 
  (\emptyset)$,
(2) $\symb{f}^\sharp\ \nul \arrdp \symb{f}^\sharp\ x\ (\emptyset)$ and
(3) $\symb{g}^\sharp\ (\abs{x}{\meta{F}{x}}) \arrdp \meta{F}{\one}\ 
  (\emptyset)$.
Since variables cannot be reduced, we can choose an approximation
$G := G_{\mathtt{id}}$ where there is no outgoing edge from (2).  Thus,
$\Proc_G$ maps $(\DDP(\Rules),\Rules,m,f)$ to $\{\ (\{(1),(3)\},
\Rules,m,f)\ \}$.
\end{example}

\subsection{Processors based on reduction triples}\label{subsec:triple}

At the heart of most DP-based approaches to termination proving lie
well-founded orderings to prove that certain DPs (or
rules) can be used only finitely often
. 
For this, we use
\emph{reduction triples}~\cite{hir:mid:07,kop:raa:12}.

\begin{definition}[Reduction triple]\label{def:redpair}
A \emph{reduction triple} $(\rge,\pge,\pgt)$ consists of
two quasi-orderings $\rge$ and $\pge$ and a strict ordering $\pgt$
on meta-terms such that $\rge$ is monotonic,
all of $\rge,\pge,\pgt$ are meta-stable (that is, $\ell \rge r$
implies $\ell\gamma \rge r\gamma$ if $\ell$ is a closed pattern and
$\gamma$ a substitution on domain $\FMV(\ell) \cup \FMV(r)$, and the
same for $\pge$ and $\pgt$),
$\arr{\beta} \mathop{\subseteq} \rge$,
and both $\rge \circ \pgt \mathop{\subseteq} \pgt$
and $\pge \circ \pgt \mathop{\subseteq} \pgt$.
%
\end{definition}


In the first-order framework, the reduction pair processor
\cite{gie:thi:sch:05:2} seeks to orient all rules with $\rge$ and all
DPs with either $\rge$ or $\pgt$;
if this succeeds, those pairs oriented with $\pgt$ may be removed.

For us,
this is not ideal: the left- and right-hand side of
a 
DP
may have different types;
e.g.,
a DP $\symb{f}\ X \arrdp \symb{f}^\sharp\ (\emptyset)$
with $\symb{f}\ X : \nat$ and $\symb{f}^\sharp : \nat \arrtype \nat$.
Orderings like
HORPO
\cite{jou:rub:99}
or polynomial interpretations \cite{fuh:kop:12}
compare only (meta-)terms of the same type
(modulo renaming of sorts).
%
  They cannot deal well with fresh meta-variables or variables in the
  right-hand side either -- and while the former are a likely source of
  non-termination, the latter are
essentially harmless.  Thus, we
adapt \cite[Thm.~5.21]{kop:raa:12}
using alternative ordering
requirements to negate these problems.

\edef\procBasetypeRedpair{\number\value{theorem}}
\edef\procBasetypeRedpairSec{\number\value{section}}
\newcommand{\basetypeRedpairTheProc}{
\begin{theorem}[Reduction triple processor]\label{thm:basetriple}
Let $\mathsf{Bot}$ be a set $ \{ \bot_\atype : \atype \mid$ all types
$\atype \} \subseteq \F^\sharp$ of unused constructors,
$\adpprob = (\P_1 \uplus \P_2,\Rules,m,f)$ a DP problem and
$(\rge,\pge,\pgt)$ a reduction triple such that:
\begin{enumerate}
\item
  for all $\ell \arrdp p\ (A) \in \P_1 \uplus \P_2$ with $\ell :
  \atype_1 \arrtype \dots \arrtype \atype_\maa \arrtype \asort$
  and $p : \btype_1 \arrtype \dots \arrtype \btype_n \arrtype \bsort$
  we have:
  \begin{itemize}
  \item $\apps{\ell}{Z_1}{Z_\maa} \pgt \apps{p[\vec{x}:=\vec{\bot}]}{
    \bot_{\btype_1}}{\bot_{\btype_n}}$ if $\ell \arrdp p\ (A) \in \P_1$
  \item $\apps{\ell}{Z_1}{Z_\maa} \pge \apps{p[\vec{x}:=\vec{\bot}]}{
    \bot_{\btype_1}}{\bot_{\btype_n}}$ if $\ell \arrdp p\ (A) \in \P_2$
  \end{itemize}
  where $Z_1 : \atype_1,\dots,Z_\maa : \atype_\maa$ are fresh
  meta-variables and $[\vec{x}:=\vec{\bot}]$ is the substitution
  mapping all $x : \atype \in \FV(p)$ to $\bot_\atype$;
\item
  for all $\ell \arrz r \in \Rules$, we have $\ell \rge r$;
\item
  if $\P_1 \uplus \P_2$ contains a collapsing DP, then also:
  \begin{itemize}
  \item $\apps{a}{X_1}{X_\maa} \pge \apps{X_i}{\bot_{\btype_1}}{
    \bot_{\btype_n}}$ if $a : \atype_1 \arrtype \dots \arrtype
    \atype_\maa \arrtype \asort \in \M \cup \F^\sharp$,
    $1 \leq i \leq \maa$ and $X_i : \btype_1
    \arrtype \dots \arrtype \btype_n \arrtype \bsort \in \M$
    ($\asort,\bsort \in \Sorts$);
  \item $\apps{\afun}{X_1}{X_{\arity(\afun)}} \pge \apps{\afun^\sharp}{X_1}{
    X_{\arity(\afun)}}$ for all $\afun \in \F$.
  \end{itemize}
\end{enumerate}
Then the processor $\Proc_{(\rge,\pge,\pgt)}$ that maps $M$ to
$\{(\P_2,\Rules,m,f)\}$ is both sound and complete.
\end{theorem}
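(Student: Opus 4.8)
The plan is to prove the two directions separately, with \emph{completeness} being routine and \emph{soundness} carrying all the weight. For completeness, note that $\P_2 \subseteq \P_1 \uplus \P_2$, so any infinite $(\P_2,\Rules)$-chain is literally an infinite $(\P_1\uplus\P_2,\Rules)$-chain with the \emph{same} flags $m,f$, and minimality, $\AlterRules$-computability and the formative property are all conditions on individual positions of the chain, hence preserved; together with the shared case ``$\Rules$ non-terminating'' this shows that if $(\P_2,\Rules,m,f)$ is infinite then so is $(\P_1\uplus\P_2,\Rules,m,f)$. For soundness I argue by contraposition: from an infinite $(\P_1\uplus\P_2,\Rules)$-chain $[(\rho_i,s_i,t_i)\mid i\in\N]$ respecting the properties required by $m$ and $f$, I produce an infinite chain using only $\P_2$. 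The central device is a grounding map $\Phi$ sending a term $u : \btype_1 \arrtype \dots \arrtype \btype_n \arrtype \bsort$ to the closed base-type term $\apps{(u\xi)}{\bot_{\btype_1}}{\bot_{\btype_n}}$, where $\xi$ replaces every free variable $x:\atype$ by $\bot_\atype$ (using that $\mathsf{Bot}$ supplies a constant of every type). This is exactly the $\bot$-padding and free-variable elimination built into condition~(1).

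Next I read off how $\Phi$ interacts with the reduction triple along the chain. For a reduce step (case~\ref{depchain:reduce}) we have $t_i \arrr{\Rules} s_{i+1}$ in argument positions, and for a $\mathtt{beta}$-step of type~(a) we have $s_i \arr{\beta} t_i$; since rewriting is closed under substitution, rules are oriented by $\rge$ (condition~(2)), $\arr{\beta} \subseteq \rge$, and $\rge$ is monotonic, we get $\arr{\Rules} \subseteq \rge$ and hence $\Phi(t_i)\rge\Phi(s_{i+1})$, respectively $\Phi(s_i)\rge\Phi(t_i)$. For a DP step $\rho_i = \ell_i \arrdp p_i\ (A_i)$ I instantiate the closed-pattern inequality of condition~(1) by the substitution that sends the padding meta-variables $Z_j$ to $\bot_{\atype_j}$ and each $Z$ occurring in $\apps{\ell_i}{\vec{Z}}$ or $p_i$ to $\gamma(Z)\xi$; meta-stability then yields $\Phi(s_i)\pgt w_i$ if $\rho_i\in\P_1$ and $\Phi(s_i)\pge w_i$ if $\rho_i\in\P_2$, where $w_i$ is the grounded, $\bot$-padded instance of $p_i$. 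In the application/symbol case~(\ref{depchain:dp:instance}) one checks $w_i=\Phi(t_i)$ directly. In the collapsing case~(\ref{depchain:dp:meta}) and the $\mathtt{beta}$-step of type~(b), $t_i$ is instead a \emph{marked subterm} reached through a $\beta$-development, so I descend from $w_i$ to $\Phi(t_i)$ by iterating the projection inequality $\apps{a}{X_1}{X_\maa} \pge \apps{X_i}{\bot}{\bot}$ of condition~(3) along the argument positions traversed (the meta-variable conditions in $A_i$ ensure these positions are regarded), finishing with the marking inequality $\apps{\afun}{X_1}{X_{\arity(\afun)}} \pge \apps{\afun^\sharp}{X_1}{X_{\arity(\afun)}}$; this gives $w_i \pge \Phi(t_i)$.

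Collecting these facts, the sequence $\Phi(s_0), w_0, \Phi(t_0), \Phi(s_1), w_1, \dots$ has consecutive entries related by $\pgt$ (precisely at the strict orientation of $\P_1$-steps), by $\pge$ (weak orientation and the projection/marking detours), or by $\rge$ (rules, $\beta$, and reduce steps). Using the compatibility laws $\rge \circ \pgt \subseteq \pgt$ and $\pge \circ \pgt \subseteq \pgt$, any run of weak steps immediately \emph{preceding} a strict step is absorbed into it, so infinitely many $\P_1$-steps would yield an infinite $\pgt$-descending sequence, contradicting well-foundedness of $\pgt$. Hence the chain uses $\P_1$ only finitely often; discarding the finite prefix up to and including the last $\P_1$-step leaves an infinite chain whose DP steps all lie in $\P_2$, still minimal / $\AlterRules$-computable / formative because each such property is positional and the side condition $\arr{\AlterRules}\supseteq\arr{\Rules}$ is untouched. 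This is an infinite $(\P_2,\Rules)$-chain witnessing that $(\P_2,\Rules,m,f)$ is infinite, contradicting the assumption that the sole output problem is finite; soundness follows.

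The main obstacle is the collapsing and $\mathtt{beta}$-(b) analysis in the second paragraph: converting the single schematic projection and marking inequalities of condition~(3) into the concrete bound $w_i \pge \Phi(t_i)$ requires a dedicated induction that tracks how a non-variable subterm $v$ of the body of $\gamma(Z)$ is reached after the $\beta$-development encoded by $\approxp$, commutes the grounding substitution $\xi$ with both the development and the marking $s \mapsto s^\sharp$, and verifies at each descent into an argument of a meta-variable that the matching condition $Z:i$ indeed belongs to $A_i$ so that the projection step is licensed. Everything else is bookkeeping layered on top of the well-foundedness argument.
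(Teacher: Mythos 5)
Your proposal is correct and follows essentially the same route as the paper's proof: completeness via the subset argument, and soundness by converting an infinite chain into a $\pgt \cup \pge \cup \rge$ sequence using $\bot$-padding and free-variable grounding, with condition (1) plus meta-stability giving the strict steps, condition (3) driving the subterm descent in the collapsing/\texttt{beta} cases, and compatibility plus well-foundedness of $\pgt$ bounding the number of $\P_1$-occurrences so that a tail is a $(\P_2,\Rules)$-chain. Your fixed grounding map $\Phi$ is a concrete instantiation of the paper's relation $\sim$ (which existentially quantifies over the substitution and the padding terms, but whose witnesses throughout the paper's proof are in fact always $\bot$'s), so the two arguments coincide in substance; the only slip is your parenthetical crediting the meta-variable conditions $A_i$ with licensing the projection steps, which is unnecessary, since the descent happens inside the instantiated term $\gamma(Z)$ and is licensed by condition (3) alone.
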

}
\basetypeRedpairTheProc

Here, the elements of $\mathsf{Bot}$ take the role of minimal terms for
the ordering.
We use them here to eliminate the free variables in
the right-hand sides of ordering requirements, which makes it easier to
apply traditional methods of generating a reduction triple.

While $\pgt$ and $\pge$ may still have to orient meta-terms of
distinct types,
these
are always \emph{base} types, which we
could
collapse
to a single sort.  The only relation
required to be monotonic, $\rge$, only has to
regard pairs of
meta-terms of the same type.
Also, while right-hand sides of ordering requirements
contain no
fresh variables, they may still contain fresh \emph{meta-}variables.
This is one of the weaknesses of static DPs; however,
we may
sometimes
be able to choose reduction triples that do not
regard such meta-variables.


\begin{example}\label{ex:disregard}
Suppose $\F \supseteq \{ \symb{a} : \nat,\ 
\afun^\sharp : \nat \arrtype \nat \arrtype \nat \}$, and
$\P = \{ \afun^\sharp\ \symb{a}\ X
\arrdp \afun^\sharp\ y\ Z\ (\emptyset) \}$.
If $\Rules$ is empty, we have a requirement:
$
\afun^\sharp\ \symb{a}\ X \pgt \afun^\sharp\ \bot_{\nat}\ Z$.
Since there is no collapsing DP, function symbols do not have to regard
all their arguments, and we can orient the requirement above by using a
polynomial interpretation \cite{pol:96,fuh:kop:12} $\mathcal{J}$ with
$\mathcal{J}({\symb{a}}) = 1$,\ 
$\mathcal{J}({\bot_{\nat}}) = 0$ and
$\mathcal{J}({\afun^\sharp}(n_1, n_2)) =
n_1$.
\end{example}

For collapsing $\P$, the last condition makes it hard to harness
the main strength of the first-order DP approach:
filtering arguments of function symbols, as we did in
\refEx{ex:disregard}.  This is one of the weaknesses of dynamic DPs,
which generate collapsing DPs.%

Fortunately, we can often weaken the requirement.
In~\cite{kop:raa:12}, this was
accomplished by considering \emph{local} AFSs, where reductions
inside an abstraction could
be postponed.  In the current setting, 
this restriction can
be generalised to the following condition:

\begin{definition}[Abstraction-simple]
A pair $(\P,\Rules)$ is \emph{abstraction-simple} if for all
left-hand sides $\ell$ of a rule in $\Rules$ or DP in $\P$:
\begin{itemize}
\item $\ell$ is a fully extended linear pattern;
\item meta-variables occur only
  as $\abs{x_1 \dots x_\mia}{
  \meta{Z}{x_1,\dots,x_\mia}}$ in $\ell$;
\item if some element of $\Rules$ does not have base type, then
  all meta-variables in $\ell$ have arity $\leq 1$.
\end{itemize}
\end{definition}

Note that the last requirement is always satisfied in AFSMs obtained
from AFSs, where all meta-variables have arity $0$, and in those
obtained from HRSs, where all rules have base type.

Our reduction triple processor for DP problems $(\P,\Rules,m,f)$\linebreak
with abstraction-simple
$(\P,\Rules)$ distinguishes function symbol occurrences inside
abstractions by using a $\ttag$ function.

\begin{definition}[$\ttag$]\label{def:ttag}
For a set of DPs $\P$ and a set of rules $\Rules$,
let $\funs(\P,\Rules)$ be the set of all $\afun \in \F^\sharp$
occurring in $\P$ or $\Rules$. Then let $\funs^-(\P,\Rules)$ be a
subset of $\F$ disjoint of $\funs(\P,\Rules)$ and $\mathsf{Bot}$
that contains, for
every $\afun : \atype \in \funs(\P,\Rules)$, a symbol $\afun^- :
\atype$ of the same arity.
For any arity-respecting closed meta-term $s$ over $\funs(\P,\Rules)$,
let $\ttag(s)$ denote $s$ with all sub-expressions
$\apps{\afun}{s_1}{s_\mia}$ with $\mia = \arity(\afun)$ and
$\FV(\apps{\afun}{s_1}{s_\mia}) \neq \emptyset$ replaced
by $\apps{\afun^-}{s_1}{s_\mia}$.
\end{definition}

The $\ttag$ function adds a special mark to any function
symbol
between a lambda-abstraction $\lambda x$ and an
occurrence of the bound variable $x$.
Thus, $\ttag(\abs{x}{\afun\ x}) = \abs{x}{\afun^-\ x}$ and
$\ttag(\abs{x}{\afun\ \nul}) = \abs{x}{\afun\ \nul}$ (if $\arity(\afun)
= 1$).
While \cite{kop:raa:12} uses the $\ttag$ function as part
of the definition of dependency chain,
the DP framework
confines it to processors that actually use it,
by harnessing the $\formative$ flag:

\edef\procAbsSimpleRedpair{\number\value{theorem}}
\edef\procAbsSimpleRedpairSec{\number\value{section}}
\newcommand{\absSimpleRedpairTheProc}{
\begin{theorem}[Abstraction-simple reduction triple processor]\label{thm:tagredpair}
Let $\mathsf{Bot}$ be a set $ \{ \bot_\atype : \atype \mid$ all types
$\atype \} \subseteq \F^\sharp$ of unused
constructors.
Let $\adpprob = (\ext{\P},\Rules,m,f)$ be a DP problem
\ext{with $\P = \P_1 \uplus \P_2$}.
Let $(\rge,\pge,\pgt)$ be a reduction triple such that:
\begin{enumerate}
\item $(\ext{\P},\Rules)$ is abstraction-simple and $f = \formative$;
\item
  for all $\ell \arrdp p\ (A) \in \ext{\P}$ with $\ell :
  \atype_1 \arrtype \dots \arrtype \atype_\maa \arrtype \asort$ and
  $p : \btype_1 \arrtype \dots \arrtype \btype_n \arrtype \bsort$ we
  have:
  \begin{itemize}
  \item $\apps{\ell}{Z_1}{Z_\maa}\! \pgt\! \ttag(\apps{p[\vec{x}:=
    \vec{\bot}]}{\bot_{\btype_1}}{\bot_{\btype_n}})$ if
    $\ell\! \arrdp\! p\ (A)\! \in \P_1$
  \item $\apps{\ell}{Z_1}{Z_\maa}\! \pge\! \ttag(\apps{p[\vec{x}:=
    \vec{\bot}]}{\bot_{\btype_1}}{\bot_{\btype_n}})$ if
    $\ell\! \arrdp\! p\ (A)\! \in \P_2$
  \end{itemize}
  where $Z_1 : \atype_1,\dots,Z_\maa : \atype_\maa$ are fresh
  meta-variables and $[\vec{x}:=\vec{\bot}]$ is the substitution
  mapping all $x : \atype \in \FV(p)$ to $\bot_\atype$;
\item\label{it:redpair:rules}
  for all $\ell \arrz r \in \Rules$, we have $\ell \rge \ttag(r)$;
\item
  $\apps{\afun^-}{X_1}{X_{\arity(\afun)}} \rge \apps{\afun}{X_1}{X_{\arity(\afun)}}$ for all
  $\afun \in \funs(\P,\Rules)$;
\item\
  if $\ext{\P}$ contains a collapsing DP, then also:
  \begin{itemize}
  \item $\apps{a}{X_1}{X_\maa} \pge \apps{X_i}{\bot_{\btype_1}}{
    \bot_{\btype_n}}$ if $a : \atype_1 \arrtype \dots \arrtype
    \atype_\maa \arrtype \asort \in \M \cup \funs^-(\P,
    \Rules)$, $1 \leq i \leq \maa$ and
    $X_i : \btype_1 \arrtype \dots \arrtype \btype_n
    \arrtype \bsort \in \M$ ($\asort,\bsort \in \Sorts$);
  \item $\apps{\afun^-}{X_1}{X_{\arity(\afun)}} \pge \apps{\afun^\sharp}{X_1}{
    X_{\arity(\afun)}}$ for all $\afun \in \funs(\P,\Rules)$.
  \end{itemize}
\end{enumerate}
Then the processor $\Proc_{\ttag(\rge,\pge,\pgt)}$ that maps $M$ to
$\{ (\P_2,\Rules,m,f) \}$ is both sound and complete.
\end{theorem}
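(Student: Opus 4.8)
The plan is to prove soundness and completeness of $\Proc_{\ttag(\rge,\pge,\pgt)}$ by showing that any infinite $(\P,\Rules)$-chain that is formative (and minimal/$\AlterRules$-computable as appropriate for the flag $m$) can be transformed, via the $\ttag$ operation, into an infinite $(\rge \cup \pge \cup \pgt)$-compatible reduction sequence in which DPs from $\P_1$ are oriented strictly by $\pgt$. Since $\pgt$ is well-founded and absorbs both $\rge$ and $\pge$ on the left (by the reduction-triple axioms), a strict DP from $\P_1$ can be used only finitely often, contradicting the assumption that infinitely many chain steps come from $\P_1 \uplus \P_2$ if $\P_1$ is used infinitely often. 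This yields that finiteness of $(\P_2,\Rules,m,f)$ implies finiteness of $(\P,\Rules,m,f)$ (soundness); completeness follows because $\P_2 \subseteq \P$, so any infinite $(\P_2,\Rules)$-chain is already an infinite $(\P,\Rules)$-chain.

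\textbf{Key steps.} First I would fix an infinite chain $[(\rho_i,s_i,t_i) \mid i \in \N]$ with the properties dictated by $m$ and $f = \formative$, and set $s_i = \ell_i\gamma_i$, $t_i = p_i\gamma_i$ for the non-$\mathtt{beta}$ steps. The central technical step is to interpret each chain element through $\ttag$: I would define a measure by applying $\ttag$ to (a suitably $\bot$-instantiated form of) each $s_i$ and $t_i$, and show three compatibility facts. (1) For a DP step $\ell_i \arrdp p_i\ (A_i)$, the ordering requirement in condition~2, together with meta-stability and the substitution $[\vec x := \vec\bot]$ handling the fresh right-hand-side variables, gives $\ttag(s_i^{\mathrm{inst}}) \mathrel{R} \ttag(t_i^{\mathrm{inst}})$ with $R = \pgt$ if $\rho_i \in \P_1$ and $R = \pge$ if $\rho_i \in \P_2$. (2) For the rewrite connection $t_i \arrr{\Rules} s_{i+1}$ (step~\ref{depchain:reduce} of \refDef{def:chain}), condition~\ref{it:redpair:rules} ($\ell \rge \ttag(r)$) together with monotonicity of $\rge$ and $\arr\beta \subseteq \rge$ gives $\ttag(t_i^{\mathrm{inst}}) \rge \ttag(s_{i+1}^{\mathrm{inst}})$. (3) The $\mathtt{beta}$ steps, and the tag-lowering requirements (condition~4, $\afun^- \rge \afun$) plus the collapsing requirements (condition~5), are needed to keep the interpretation consistent as function symbols move in and out of abstraction scope; here the $\formative$ flag and abstraction-simplicity (condition~1) ensure that reductions below abstractions can be controlled so that the $\ttag$ marks track exactly the "inside a $\lambda$ over a free occurrence" positions.

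\textbf{The main obstacle} will be step (3): showing that $\ttag$ is \emph{stable} along the chain, i.e.\ that rewriting $t_i \arrr{\Rules} s_{i+1}$ respects the tagging discipline so that the inequalities compose. The difficulty is that an $\arr\Rules$-step can create or destroy free-variable occurrences beneath an abstraction, changing whether a symbol should be tagged. This is precisely why abstraction-simplicity and $f = \formative$ are assumed: formative reductions (\refDef{def:formative}) restrict the shape of the reduction $t_i \arrr{\Rules} s_{i+1}$ so that reductions inside abstractions can be postponed or normalised, and linearity/full-extendedness of patterns guarantees that matching does not force tagged and untagged symbols to be identified. I would handle this by an induction on the formative-reduction derivation, using condition~4 to relax any tagged symbol $\afun^-$ back to $\afun$ exactly when a bound variable disappears, mirroring the argument of \cite[Thm.~5.21]{kop:raa:12} but localised to this processor rather than baked into the chain definition. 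Completeness is routine once soundness is established, since removing DPs can only shrink the set of chains. The full argument, including the handling of the $\static_\AlterRules$ flag (where $\AlterRules$-computability of the chain must be preserved, which holds because the processor does not alter $\Rules$, $m$, or $f$), is deferred to \refApp{app:processors}.
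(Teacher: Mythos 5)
Your proof skeleton coincides with the paper's: interpret the chain through $\ttag$ (free variables replaced by $\bot$-symbols, both sides applied to $\bot$-arguments), turn a formative $(\P,\Rules)$-chain into an infinite $\rge \cup \pge \cup \pgt$ sequence with a $\pgt$-step for every use of $\P_1$, conclude by well-foundedness of $\pgt$ and its compatibility with $\rge,\pge$, and obtain completeness from $\P_2 \subseteq \P$. Your ``main obstacle'' paragraph also lands on the paper's central lemma: an induction on the formative-reduction derivation showing that $s \arrr{\Rules} \ell\gamma$ by an $\ell$-formative reduction implies $\ttag(s) \rge \ell\gamma^{\ttag}$, with condition~4 supplying the untagging steps. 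Be aware, though, that your step~(2) is false as literally stated: monotonicity plus $\ell \rge \ttag(r)$ does \emph{not} yield $\ttag(t_i) \rge \ttag(s_{i+1})$ for arbitrary $\arr{\Rules}$-steps, because a redex sitting below a binder over an occurrence of its bound variable appears in $\ttag(t_i)$ with \emph{tagged} symbols, while rules are oriented with an \emph{untagged} left-hand side, and $\rge$ can only remove tags (condition~4), never add them. The formative induction is not a repair of a technicality; it is the only thing that makes this step true, so your step~(2) should simply be replaced by it.

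The genuine gap is elsewhere: you assume every non-$\mathtt{beta}$ step of the chain has the form $s_i = \ell_i\gamma_i$, $t_i = p_i\gamma_i$. That is only case~(2a) of \refDef{def:chain}. For a \emph{collapsing} DP, case~(2b) applies instead: $t_i = v^\sharp[x_1:=u_1\gamma_i,\dots,x_\mac:=u_\mac\gamma_i]$ for a non-variable subterm $v$ of the body of $\gamma_i(Z)$; similarly, a $\mathtt{beta}$ step of type~(1b) gives $t_i = q^\sharp[x:=v]$ for a proper subterm $q$ of the abstraction body. Your interpretation must therefore descend from the tagged, $\bot$-applied instance of $p_i$ (resp.\ of the $\beta$-reduct) to such a marked subterm, and none of your three steps, nor the formative induction (which only concerns the reductions between $t_i$ and $s_{i+1}$), provides this. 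This is exactly what condition~5 exists for, and it hinges on an observation your proposal never makes: because the selected subterm contains the bound variable ($x \in \FV(q)$ in case~(1b), $\FV(v) \cap \{\vec{x}\} \neq \emptyset$ in case~(2b)), every function symbol on the path down to it is tagged, so the descent can be carried out using only the $\pge$-requirements $\apps{a}{X_1}{X_\maa} \pge \apps{X_i}{\bot_{\btype_1}}{\bot_{\btype_n}}$ for $a \in \M \cup \funs^-(\P,\Rules)$, together with $\apps{\afun^-}{X_1}{X_{\arity(\afun)}} \pge \apps{\afun^\sharp}{X_1}{X_{\arity(\afun)}}$ to reinstate the $\sharp$-mark. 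The paper isolates this as a dedicated lemma (\refLemma{lem:tagredpair:subterm}) and combines it with the formative lemma in the main argument; without it, your argument covers only non-collapsing $\P$ -- precisely the case for which this processor, as opposed to \refThm{thm:basetriple}, is least needed.
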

}
\absSimpleRedpairTheProc

\begin{proof}[Proof sketch]
If $s \arrr{\Rules} \ell\gamma$ by an $\ell$-formative reduction, then
$\ttag(s)
\arrr{\Rules} \ell[\vec{Z}:=\ttag(\gamma(\vec{Z}))]$.  Thus, an
infinite formative $(\P,\Rules)$-chain induces an infinite $(\rge \cup
\pge \cup \pgt)$-reduction, with every DP in $\P_1$ corresponding to a
$\pgt$ step.  
Because
$x \in \FV(q)$ in
case (\ref{depchain:beta}) of \refDef{def:chain} and 
$\FV(v) \cap
\{\vec{x}\} \neq \emptyset$ in case (\ref{depchain:dp:meta}), we do not
have to pass through untagged symbols when reducing to a subterm.
\end{proof}


\begin{example}\label{ex:staticbad:finish}
To remove both pairs in the remaining DP problem
from \refEx{ex:staticbad:dynamicgood} \ext{by}
\refThm{thm:tagredpair}\ext{,} we must satisfy the requirements:
\[
\begin{array}{rcl|rcl|rcl}
\symb{f}^\sharp\ \nul & \!\!\!\pgt\!\!\! & \symb{g}^\sharp\ 
  (\abs{x}{\symb{f}^-\ x})\! &
\symb{g}^\sharp\ (\abs{x}{\meta{F}{x}}) & \!\!\!\pgt\!\!\! & 
  \meta{F}{\one}\! &
\symb{f}^-\ X & \!\!\!\pge\!\!\! & \symb{f}^\sharp\ X \\
\symb{f}\ \nul & \!\!\!\rge\!\!\! & \symb{g}\ (\abs{x}{\symb{f}^-\ x})\! &
\symb{g}\ (\abs{x}{\meta{F}{x}}) & \!\!\!\rge\!\!\! & \meta{F}{\one}\! &
\symb{f}^-\ X & \!\!\!\rge\!\!\! & \symb{f}\ X \\
\end{array}
\]
And both $\symb{f}^-\ X \pge X$ and always $\apps{Z}{X_1}{X_\maa} \pge
X_i\ \vec{\bot}$.

We can soundly extend higher-order polynomial interpretations
\ext{\cite{pol:96,fuh:kop:12}} to
meta-variables with arguments
by
$\llbracket \meta{Z}{s_1,\dots,
s_\mac} \rrbracket_{\mathcal{J},\alpha} = \alpha(Z)(\llbracket
s_1\rrbracket_{\mathcal{J},\alpha},\dots,
\llbracket
s_\mac\rrbracket_{\mathcal{J},\alpha})$\ext{, as done in \cite[Chapter 4]{kop:12}}; the
  \ext{given}
ordering
requirements are
satisfied by
  \ext{taking}
$\mathcal{J}(@^{\atype \arrtype \btype}(f, x)) =
\max(f(x),
x(\vec{0}))$ and $\mathcal{J}(\bot_{\vec{\atype} \arrtype
\asort}(\vec{x})) = 0$.
\end{example}

\subsection{From dynamic to static DPs}\label{subsec:staticproc}

While \refThm{thm:sschain} does not yield an equivalence result, it is
observed in~\cite{kop:raa:12} that the static approach \emph{is}
complete if $\SDP(\Rules) \subseteq \DDP(\Rules)$, as then every
$(\SDP(\Rules),\Rules)$-chain is also a $(\DDP(\Rules),\Rules)$-chain.
By using the dependency graph we can even go beyond this.

\edef\procToSDP{\number\value{theorem}}
\edef\procToSDPSec{\number\value{section}}
\newcommand{\toSDPTheProc}{
\begin{theorem}\label{thm:sdpproc}
Let $G := G_\theta$ be a dependency graph approximation for
$\SDP(\Rules)$, and let $\SDP(\Rules)_G := \{ \rho \in \SDP(\Rules)
\mid \theta(\rho)$ is on a cycle in $G \}$.
Then the processor $\Proc_{\SDP_G}$ that maps a DP problem
$(\P,\Rules,m,f)$ to $\{(\SDP(\Rules)_G,\Rules,
\static_\Rules,\formative)\}$ if
$\P \subseteq \DDP(\Rules)$ and $\Rules$ is AFP, is sound;
it is complete if also $\SDP(\Rules)_G \subseteq \P$.
\end{theorem}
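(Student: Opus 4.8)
The plan is to handle soundness and (conditional) completeness separately, in each case reducing the claim to the chain characterisations \refThm{thm:chain} and \refThm{thm:sschain} together with the dependency-graph argument underlying \refThm{def:graphproc}. Throughout I assume the side conditions $\P \subseteq \DDP(\Rules)$ and that $\Rules$ is AFP, since otherwise $\Proc_{\SDP_G}(\adpprob) = \{\adpprob\}$ and both properties hold trivially.

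For soundness, assume the single output problem $(\SDP(\Rules)_G,\Rules,\static_\Rules,\formative)$ is finite; I must show $\adpprob = (\P,\Rules,m,f)$ is finite. The key step is to prove that $\Rules$ is terminating. Suppose not. Since $\Rules$ is AFP, \refThm{thm:sschain} yields an infinite $\Rules$-computable formative $(\SDP(\Rules),\Rules)$-chain. Because $\SDP(\Rules)$ is non-collapsing with both sides of base type, no $\rho_i$ in this chain is a $\mathtt{beta}$ step, so consecutive DPs are connected by edges in the dependency graph of $(\SDP(\Rules),\Rules)$ and hence, via $\theta$, in the finite graph $G$. As in the proof of \refThm{def:graphproc}, the resulting infinite path in $G$ eventually stays inside a single SCC, so an infinite tail of the chain uses only DPs whose $\theta$-image lies on a cycle, i.e.\ only DPs from $\SDP(\Rules)_G$. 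Since being $\Rules$-computable and being formative are properties of the individual chain positions, this tail is again an infinite $\Rules$-computable formative chain, now over $\SDP(\Rules)_G$ --- contradicting finiteness of the output problem. Hence $\Rules$ terminates, and then by \refThm{thm:chain} there is no infinite $(\DDP(\Rules),\Rules)$-chain; as $\P \subseteq \DDP(\Rules)$, there is no infinite $(\P,\Rules)$-chain at all, so $\adpprob$ is finite.

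For completeness I additionally assume $\SDP(\Rules)_G \subseteq \P$ and must show $\adpprob$ is infinite whenever the output problem is. As $\Proc_{\SDP_G}(\adpprob) \neq \no$, it suffices to treat the case that the output problem is infinite. If this holds because $\Rules$ is non-terminating, then $\adpprob$ is infinite by definition. Otherwise the output problem is not finite, so there is an infinite $\Rules$-computable formative $(\SDP(\Rules)_G,\Rules)$-chain. By the inclusion $\SDP(\Rules)_G \subseteq \P$ this is literally an infinite $(\P,\Rules)$-chain. Being $\Rules$-computable it is in particular minimal (every $C_\Rules$-computable term is $\arr{\Rules}$-terminating), and it is formative; hence it meets the requirements imposed by the flags of $\adpprob$ --- directly when $m = \static_\Rules$, and via $\static_\Rules \succeq m$ for $m \in \{\minimal,\arbitrary\}$, together with $f$. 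Thus $\adpprob$ is not finite, and so infinite.

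The main obstacle is the termination argument in soundness: one must combine the computable static chain from \refThm{thm:sschain} with the graph restriction from \refThm{def:graphproc} and verify that restricting to an infinite SCC-tail preserves \emph{both} $\Rules$-computability and formativity, so that the exact chain class of the output problem is contradicted rather than merely some weaker infinite chain being produced. The remaining pieces --- the ``terminating $\Rules$ admits no dynamic chain'' direction and the subset argument for completeness --- follow routinely from the earlier theorems once this is in place.
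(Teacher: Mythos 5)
Your proof is correct and follows essentially the same route as the paper: soundness via the second half of Theorem~\ref{thm:chain} (an infinite $(\P,\Rules)$-chain with $\P \subseteq \DDP(\Rules)$ forces non-termination of $\Rules$), then Theorem~\ref{thm:sschain} to obtain an $\Rules$-computable formative static chain, then the dependency-graph tail argument (the paper's Lemma~\ref{lem:graphend}) to push that chain into $\SDP(\Rules)_G$; and completeness by the subset argument (the paper cites Lemma~\ref{lem:subsetcomplete}, which you simply unfold inline, noting that $\Rules$-computability implies minimality). The only cosmetic difference is that you phrase soundness contrapositively (first establishing termination of $\Rules$) and re-derive the graph-tail and subset lemmas rather than citing them, which changes nothing of substance.
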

}
\toSDPTheProc

\begin{proof}[Proof sketch]
Soundness holds by a combination of \refThm{thm:sschain} and
\ref{def:graphproc}, completeness
since the processor only
removes DPs.
\end{proof}

\refThm{thm:sdpproc} can be applied at any time in the framework.
Although a typical application would be to do this at the start, doing
it later might be useful if we can first apply a processor to, for
instance, remove some non-AFP rules.
This gives us ``the best of both worlds'': now reasoning
based on dynamic \emph{and} static DPs can be combined
within the \emph{same} termination proof.

\begin{example}\label{ex:staticgraph}
Let $\Rules$ consist of the rules for $\symb{map}$ from
\refEx{ex:mapintro} along with $\symb{f}\ L \arrz \symb{map}\ (\abs{x}{
\symb{g}\ x})\ L$.  Then $\DDP(\Rules)$ consists
of:
\[
\begin{array}{lrcll}
(1)\!\! & \symb{map}^\sharp\ (\abs{x}{\meta{Z}{x}})\ (\symb{cons}\ H\ T) &
\arrdp & \meta{Z}{H} & (\emptyset) \\
(2)\!\! & \symb{map}^\sharp\ (\abs{x}{\meta{Z}{x}})\ (\symb{cons}\ H\ T) &
\arrdp & \symb{map}^\sharp\ (\abs{x}{\meta{Z}{x}})\ T & (\emptyset) \\
(3)\!\! & \symb{f}^\sharp\ L & \arrdp & \symb{map}^\sharp\ (\abs{x}{\symb{g}\ x})\ L & (\emptyset) \\
(4)\!\! & \symb{f}^\sharp\ L & \arrdp & \symb{g}^\sharp\ x & (\emptyset) \\
\end{array}
\]
And $\SDP(\Rules) = \{ (2),(3),(4') \}$, where $(4')$ is
$\symb{f}^\sharp\ L \arrdp \symb{g}^\sharp\ X\ (\emptyset)$.
  %
  %
  %
  We can clearly choose a graph approximation $G_{\mathtt{id}}$ where
  $(3)$ and $(4')$ have no incoming edges.
  Thus, only $(2')$ is on a cycle;\linebreak
$\Proc_{\SDP_{G_\theta}}(
\DDP(\Rules),\Rules,\minimal,\formative) =
\{\ (\{(2)\},\Rules,\linebreak
\mathtt{computable}_\Rules,
\formative)\ \}$,
and 
we have not lost completeness.
\end{example}

  %
  %
  \edef\procUseless{\number\value{theorem}}
  \edef\procUselessSec{\number\value{section}}
  \newcommand{\removeUselessTheProc}{
  \begin{theorem}\label{thm:removeuseless}
  The processor $\Proc_{\mathtt{useless}}$ that maps a DP problem
  $(\P_1 \uplus \P_2,\Rules,m,f)$ with $m \succeq \minimal$ and
  $\P_1 \uplus \P_2 \subseteq \DDP(\Rules)$
  to $\{ (\P_2,\Rules,m,f) \}$ if for all DPs $\rho\in\P_1$ there is $\ell^\sharp \arrdp p^\sharp\ (A) \in \P_2$ such that
  $\rho$ has the form $(\apps{\ell}{Z_1}{Z_i})^\sharp \arrdp p^\sharp\ (A)$,
  is
  sound and complete.
  \end{theorem}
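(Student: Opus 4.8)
The plan is to prove the two directions separately: \emph{completeness} is immediate, while \emph{soundness} requires a local surgery on the dependency chain. For completeness, I would simply use $\P_2 \subseteq \P_1 \uplus \P_2$. Every infinite $(\P_2,\Rules)$-chain is verbatim an infinite $(\P_1 \uplus \P_2,\Rules)$-chain, and minimality, $\AlterRules$-computability, and formativeness are properties of the terms and reductions of a chain rather than of the underlying DP set, hence are inherited; non-termination of $\Rules$ is shared. Thus if $(\P_2,\Rules,m,f)$ is infinite then so is $(\P_1 \uplus \P_2,\Rules,m,f)$, using none of the hypotheses beyond the inclusion.

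For soundness I would prove the contrapositive, transforming an infinite $(\P_1 \uplus \P_2,\Rules)$-chain $[(\rho_k,s_k,t_k) \mid k \in \N]$ (with the flag-required properties) into one over $\P_2$. The structural observation that drives the construction is that each $\rho = (\apps{\ell}{Z_1}{Z_i})^\sharp \arrdp p^\sharp\ (A) \in \P_1$ shares its right-hand side $p^\sharp$ and conditions $A$ with its core $\ell^\sharp \arrdp p^\sharp\ (A) \in \P_2$, and that $p$ — being a candidate of $r$ rather than of $\apps{r}{Z_1}{Z_i}$ — does not mention the fresh meta-variables $Z_1,\dots,Z_i$. Hence at a $\rho$-step we have $s_k = \apps{(\ell\gamma)}{\gamma(Z_1)}{\gamma(Z_i)}$, whereas $t_k = p^\sharp\gamma$ is built \emph{without} the arguments $\gamma(Z_j)$: they are matched and then discarded. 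Replacing the $\rho$-step by the core step with the same $\gamma$ yields the identical $t_k$, but demands the shorter term $(\ell\gamma)^\sharp$ as left-hand side instance in place of $s_k$.

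Because $t_k$ is unchanged, the outgoing edge to step $k{+}1$ survives, and the real work is repairing the incoming edge. In the original chain $s_k$ was reached head-preservingly from $t_{k-1}$ (\refDef{def:chain}, item~\ref{depchain:reduce}), so $t_{k-1}$ is headed by the \emph{unmarked} $\afun$ applied to $\arity(\afun)+i$ arguments, while the core step now wants the marked, $\arity(\afun)$-ary term $(\ell\gamma)^\sharp$. I would repair this \emph{locally} at step $k-1$: if it is a $\mathtt{beta}$-step, re-select the marked subterm of item~\ref{depchain:beta} at the arity-truncated position; if it is a DP-step, replace its right-hand-side candidate by the one obtained by dropping the trailing $i$ arguments and marking the head, which is again a $\beta$-reduced-sub-meta-term since $\bsuptermeq{A}$ is closed under removing trailing arguments (first clause of the definition). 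This leaves the left-hand side of step $k-1$ intact, so the modification does not cascade further back. Minimality and $\AlterRules$-computability should be preserved since each rebuilt term is a subterm (up to marking) of its predecessor — which already terminated, respectively was computable — and formativeness is inherited because the repair only shortens the reductions $t_i \arrr{\Rules} s_{i+1}$.

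The step I expect to be the \textbf{main obstacle} is showing that every re-marked, arity-truncated predecessor DP introduced by this repair genuinely lies in $\P_2$, and not merely in $\DDP(\Rules)$, so that the rebuilt chain never leaves $\P_2$. This is precisely where the subsumption hypothesis on $\P_1$ must interact with the closure of $\DDP(\Rules)$ under arity-truncation of candidates (\refDef{def:candidates}), with the bookkeeping of the meta-variable conditions $A$, and with the assumption $m \succeq \minimal$, which is what makes the discarded arguments $\gamma(Z_1),\dots,\gamma(Z_i)$ irrelevant to the surviving chain; establishing this containment rigorously is the crux of the argument.
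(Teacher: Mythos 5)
Your completeness direction is fine and is exactly the paper's argument (\refLemma{lem:subsetcomplete}). The soundness direction, however, has a genuine gap, and it is precisely the step you flag as the ``main obstacle'' -- it cannot be closed. Your surgery at step $k-1$ needs a DP whose right-hand side is the arity-truncated, re-marked candidate, but nothing places such a DP in $\P_2$, or even in $\P_1 \uplus \P_2$: these are \emph{arbitrary} subsets of $\DDP(\Rules)$ (typically reached after earlier processors have deleted DPs), and the theorem's hypothesis only puts the \emph{core} of each $\rho \in \P_1$ into $\P_2$; it gives no closure of $\P_2$ under truncating right-hand sides. Moreover, the repair of a $\mathtt{beta}$ predecessor breaks down on its own: in case (\ref{depchain:beta})(a) of \refDef{def:chain} there is no subterm selection to re-make (the target $t_{k-1}$ is forced), and in case (\ref{depchain:beta})(b) the truncated subterm $q'$ need not satisfy the required condition $x \in \FV(q')$, since the bound variable may occur only in the trailing arguments you discard.

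The paper's proof (\refLemma{lem:omitstupid}) avoids surgery entirely by showing that the configuration you are trying to repair \emph{cannot occur}: in a minimal chain, an extended DP can stand only at position $1$. Suppose $\rho_k = (\apps{\ell}{Z_1}{Z_i})^\sharp \arrdp p^\sharp\ (A) \in \P_1$ with $k > 1$, so $s_k = \apps{\afun}{u_1}{u_{n+i}}$ with $\ell\gamma = \apps{\afun}{u_1}{u_n}$. By clause (\ref{depchain:reduce}) of \refDef{def:chain} we may write $t_{k-1} = \apps{\afun}{u_1'}{u_{n+i}'}$ with each $u_j' \arrr{\Rules} u_j$. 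Since $i > 0$, the prefix $\apps{\afun}{u_1'}{u_n'}$ is a \emph{strict} subterm of $t_{k-1}$, hence terminating by minimality (available because $m \succeq \minimal$, using \refLemma{lem:computableminimal} when $m = \static_{\AlterRules}$), and so is its reduct $\ell\gamma$. On the other hand, replacing the $k$-th chain element by $(\ell^\sharp \arrdp p^\sharp\ (A),\ (\ell\gamma)^\sharp,\ t_k)$ is legitimate with the \emph{same} substitution and the \emph{same} $t_k$ -- here your own observation that $p$ mentions none of $Z_1,\dots,Z_i$ is exactly what is needed -- so keeping the tail unchanged yields an infinite $(\DDP(\Rules),\Rules)$-chain starting in $(\ell\gamma)^\sharp$. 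This auxiliary chain need not be minimal nor stay inside $\P_2$: it is only fed to \refLemma{lem:chain:complete} (the completeness half of \refThm{thm:chain}), which is where the hypothesis $\P_1 \uplus \P_2 \subseteq \DDP(\Rules)$ is really used, and which yields non-termination of $\ell\gamma$ -- a contradiction. Consequently every $\rho_k$ with $k \geq 2$ lies in $\P_2$, so the tail of the original chain is \emph{verbatim} an infinite $(\P_2,\Rules)$-chain inheriting minimality, computability and formativeness, and soundness follows with no modification of any step.
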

  }

\subsection{Modifying collapsing dependency pairs}\label{subsec:modify}

The following two processors, aimed at collapsing DPs, have no
counterpart in any first-order framework.

We start with a simple transformation of collapsing DPs that is
particularly relevant for \emph{AFSs}, where meta-variables have
arity $0$:

\edef\procExtend{\number\value{theorem}}
\edef\procExtendSec{\number\value{section}}
\newcommand{\extendTheProc}{
\begin{theorem}[Extended meta-application processor]
Let us define
$\mathtt{extend}(s) := \meta{Z}{t_1,\dots,t_\mac,s_1,\dots,s_n}$ if
$s = \apps{\meta{Z}{t_1,\dots,t_\mac}}{\linebreak s_1}{s_n}$ and
$\mathtt{extend}(s) := s$ if $s$ has any other form.
The processor $\Proc_{\mathtt{extend}}$ that maps a DP problem
$(\P,\Rules,m,f)$ to the singleton set
$\{ (\P', \Rules, m, f) \}$ with $\P' = \{ \ell \arrdp
\mathtt{extend}(p)\ (A) \mid \ell \arrdp p\ (A) \in \P \}$,
is sound.
It is complete if $\P \subseteq \DDP(\Rules)$.
\end{theorem}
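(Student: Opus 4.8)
The plan is to argue both properties at the level of chains, using the single structural fact that drives everything: for every substitution $\gamma$ one has $p\gamma \arrr{\beta} \mathtt{extend}(p)\gamma$. Indeed, $\mathtt{extend}$ changes a DP $\ell \arrdp p\ (A)$ only when $p = \apps{\meta{Z}{t_1,\dots,t_\mac}}{s_1}{s_n}$ with $n \geq 1$ (otherwise $\mathtt{extend}(p) = p$, so $\P$ and $\P'$ share all other DPs, and $\P'$ is collapsing iff $\P$ is). For such $p$, writing $\gamma(Z) \approxp \abs{x_1 \dots x_\mac}{w}$ and $\gamma(Z) \approxp \abs{x_1 \dots x_{\mac+n}}{w'}$, the definition of meta-substitution gives $p\gamma = (\meta{Z}{t_1,\dots,t_\mac}\gamma)\,(s_1\gamma)\cdots(s_n\gamma)$, whose head redexes $\beta$-reduce exactly to $\mathtt{extend}(p)\gamma = w'[x_1:=t_1\gamma,\dots,x_\mac:=t_\mac\gamma,x_{\mac+1}:=s_1\gamma,\dots,x_{\mac+n}:=s_n\gamma]$. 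Since $\Rules$ is unchanged by the processor, the case where $\arr{\Rules}$ is non-terminating is immediate for both claims, so I may assume throughout that the question is the existence of appropriate chains.

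For completeness I assume $\P \subseteq \DDP(\Rules)$ and an infinite $(\P',\Rules)$-chain, and I show that $\arr{\Rules}$ is non-terminating — which already makes $(\P,\Rules,m,f)$ infinite, so the flags play no role. This follows the route of the second claim of \refThm{thm:chain}, working with unmarked terms $\unsharp{(\cdot)}$. Each DP of $\P$ has the form $\ell^\sharp \arrdp p^\sharp\ (A)$ with $p\ (A) \in \cand(r)$ for some $\ell \arrz r \in \RulesEta$, hence $r \bsuptermeq{A} p$; the induction lemma behind \refThm{thm:chain} then yields $r\gamma\ (\supterm \cup \arr{\beta})^*\ p\gamma$ for every $\gamma$ respecting $A$. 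Composing with $p\gamma \arrr{\beta} \mathtt{extend}(p)\gamma \suptermeq \unsharp{t_i}$ (the term selected by clause~(\ref{depchain:dp:meta}) is a subterm of $\mathtt{extend}(p)\gamma$) and with $\ell\gamma \arr{\Rules} r\gamma$ (an $\RulesEta$-step is an $\arr{\Rules}$-reduction), each extended-DP step contributes at least one $\arr{\Rules}$-step, while $\mathtt{beta}$-steps and the reductions of clause~(\ref{depchain:reduce}) contribute further $(\arr{\Rules} \cup \supterm)$-steps. The infinite chain thus induces an infinite $(\arr{\Rules} \cup \supterm)$-reduction containing infinitely many $\arr{\Rules}$-steps, contradicting termination of $\arr{\Rules}$.

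For soundness I take an infinite $(\P,\Rules)$-chain of the type required by $m$ and $f$, assuming without loss of generality that $\arr{\Rules}$ (hence $\arr{\beta}$) is terminating, so that no infinite run of $\mathtt{beta}$-steps occurs. All non-collapsing steps, all collapsing steps without extra arguments, and all $\mathtt{beta}$-steps not belonging to the evaluation of a collapsing DP are kept. At a step using a collapsing DP $\ell \arrdp \apps{\meta{Z}{\vec{t}}}{s_1}{s_n}\ (A)$ with $n \geq 1$, the original chain sets $t_i = p\gamma$ (clause~(\ref{depchain:dp:instance})) and $\beta$-reduces the head redexes of $(\meta{Z}{\vec{t}}\gamma)\,(s_1\gamma)\cdots(s_n\gamma)$ towards $\mathtt{extend}(p)\gamma$, finally introducing a mark by a clause~(\ref{depchain:beta})(b) step on a redex $(\abs{x}{u})\,a$, producing $q^\sharp[x:=a]$ with $u \suptermeq q$, $q \neq x$ and $x \in \FV(q)$. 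I collapse this whole segment into a single use of $\ell \arrdp \mathtt{extend}(p)\ (A) \in \P'$ through clause~(\ref{depchain:dp:meta}) with the same $\gamma$, selecting as the non-variable subterm $v$ of $w'$ the one instantiated by $q$ (note that $\gamma(Z)$ is a term, so $w'$ contains no meta-variables and $v$ is never a meta-application). The clause~(\ref{depchain:dp:meta}) conditions ``$v$ non-variable'' and ``$\{x_1,\dots,x_{\mac+n}\} \cap \FV(v) \neq \emptyset$'' hold precisely because $q \neq x$ and $x \in \FV(q)$, and the marked term produced coincides with the one of the original chain, so its tail is reused unchanged; the meta-variable condition $A$ and the substitution $\gamma$ are untouched.

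The main obstacle is making this surgery rigorous. One must verify that after a collapsing DP the infinite chain genuinely proceeds by evaluating $p\gamma$ and can re-enter only at a subterm of $w'$ still carrying one of $x_1,\dots,x_{\mac+n}$ — the degenerate case in which $\gamma(Z)$ discards its trailing arguments is exactly the one where neither chain can continue, so it need not be simulated — and that the argument reductions of clause~(\ref{depchain:reduce}) that the original chain interleaves with the head $\beta$-steps can be commuted past the single extended step (a short case analysis on the head of $v^\sharp[\dots]$ shows the chain can always be resumed). Flag preservation is then comparatively routine, since the rebuilt chain keeps (reducts of) terms and sub-reductions of the original: minimality and the $\formative$ flag are inherited from the corresponding old positions, and for $m = \static_\AlterRules$ the subterms required to be $C_\AlterRules$-computable stay so because $\mathtt{extend}(p)\gamma$ is a $\arr{\AlterRules}$-reduct of $p\gamma$ and $C_\AlterRules$ is closed under $\arr{\AlterRules}$; in the usual static setting $\P$ is non-collapsing, $\mathtt{extend}$ is the identity, and there is nothing to prove.
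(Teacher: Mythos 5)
Your completeness argument is essentially the paper's own and is fine: for $\P \subseteq \DDP(\Rules)$, the key fact $p\gamma \arrr{\beta} \mathtt{extend}(p)\gamma$ lets every extended-DP step be simulated by $(\arr{\Rules} \cup \supterm)$-steps, so an infinite $(\P',\Rules)$-chain yields non-termination of $\arr{\Rules}$ and hence infiniteness of $(\P,\Rules,m,f)$. The soundness half, however, has two genuine gaps. First, ``without loss of generality $\arr{\Rules}$ is terminating'' is not a legitimate reduction: soundness is about \emph{finiteness}, which is defined purely by the (non)existence of chains, and non-termination of $\Rules$ does not make $(\P',\Rules,m,f)$ non-finite (consider $\P' = \emptyset$ with non-terminating $\Rules$). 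So the non-terminating case cannot be discarded; the chain transformation must be carried out unconditionally. The assumption is also unnecessary: whenever $t_i$ has a $\beta$-redex at its head it cannot be written $\apps{\afun}{u_1}{u_n}$ with $\afun \in \F^\sharp$, so clause~(\ref{depchain:reduce}) forces $t_i = s_{i+1}$ and the next step to be \texttt{beta}; writing $\gamma_i(Z) = \abs{x_1 \dots x_m}{q}$ with $q$ not an abstraction, the segment absorbed into the extended step therefore consists of at most $\min(m,n)-\mac$ \emph{forced} \texttt{beta} steps, bounded structurally, with no interleaved argument reductions (so that worry of yours is moot as well).

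Second, and more seriously, your surgery only simulates one of the three shapes such a segment can take, namely $n < m$, where the last step is a marking step, case (b) of clause~(\ref{depchain:beta}). If $m \leq \mac$ there are no forced \texttt{beta} steps at all ($p\gamma_i$ already \emph{equals} $\mathtt{extend}(p)\gamma_i$), and if $\mac < m < n$ all forced steps are case (a) and no mark is ever introduced; in both situations the original chain simply continues from $\mathtt{extend}(p)\gamma_i$ via clause~(\ref{depchain:reduce}). Your dismissal of these cases --- that when $\gamma_i(Z)$ discards its trailing arguments ``neither chain can continue'' --- is false: with $m \leq \mac$ and $q$ headed by a defined symbol, $t_i = \apps{q[\ldots]}{(s_1\gamma_i)}{(s_n\gamma_i)}$ can perfectly well reduce to an instance of another DP's left-hand side (recall that DP left-hand sides need not carry marks). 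The repair, which is exactly what the paper's proof does, is to justify the new step in these cases by clause~(\ref{depchain:dp:meta}) with $v := w'$, the \emph{whole} body of $\gamma_i(Z) \approxp \abs{x_1 \dots x_n}{w'}$: it is a non-variable (an application), contains the trailing bound variables free, and satisfies $(w')^\sharp = w'$, so the new $t_j'$ coincides with the original chain's $t_i$ (respectively $t_{i+m-\mac}$) and the tail is reused verbatim. Finally, for $m = \static_\AlterRules$ you cannot appeal to ``the usual static setting'': the theorem allows collapsing $\P$ under this flag, and $\AlterRules$-computability requires the exact identity $t_j' = \mathtt{extend}(p_i)\gamma_i$, not merely closure of $C_\AlterRules$ under reduction; this holds because $\AlterRules$-computability of the original chain forces $t_i = p_i\gamma_i$ to be headed by a symbol of $\F^\sharp$, which rules out $m > \mac$ and reduces everything to the $m \leq \mac$ case --- a check missing from your argument.
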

}
\extendTheProc

\begin{proof}[Proof sketch]
By the definition of chains, for every occurrence of $\ell \arrdp
\apps{\meta{Z}{t_1,\dots,t_\mac}}{s_1}{s_n}\ (A)$ in a $(\P,
\Rules)$-chain, if $\gamma(Z) =
\linebreak
\abs{x_1 \dots x_i}{u}$ with $\mac
< i \leq n$, then the occurrence is followed by $i-\mac$ steps with
\texttt{beta}; we might as well do the $\beta$-reduction directly.
Completeness follows because,
when $\P \subseteq \DDP(\Rules)$, the existence of an infinite
$(\P',\Rules)$-chain implies non-termination of $\arr{\Rules}$.
\end{proof}

The advantage of extending meta-variable applications is twofold.
First, it might enable other transformations, such as the one we
will discuss next.  Second, it makes some reduction triples easier to
apply because the right-hand side is made smaller
(see \refSec{subsec:triple}).

\begin{example}\label{ex:extend}
Consider the AFSM with $\F \supseteq \{ \symb{f} : \nat \arrtype \nat,\ 
\symb{g} \linebreak
: (\nat \arrtype \nat) \arrtype \nat \arrtype \nat \}$ and
$\Rules = \{ \symb{f}\ \nul \arrz \symb{g}\ (\abs{x}{\nul})\ \one,\ 
\symb{g}\ F\ X \arrz F\ (\symb{f}\ X) \}$.
Then $\DDP(\Rules)$ consists of
(1) $\symb{f}^\sharp\ \nul \arrdp \symb{g}^\sharp\ (\abs{x}{\nul})\ \one\ 
  (\emptyset)$,
(2) $\symb{g}^\sharp\ F\ X \arrdp F\ (\symb{f}\ X)\ (\emptyset)$ and
(3) $\symb{g}^\sharp\ F\ X \arrdp \symb{f}^\sharp\ X\ (\emptyset)$.
We have $\Proc_{\mathtt{extend}}(\DDP(\Rules),\Rules,\minimal,
\formative) = \{ (\P,\Rules, \minimal,\linebreak
\formative) \}$ where $\P = \{ (1), (2')\ \symb{g}^\sharp\ F\ X \arrdp
\meta{F}{\symb{f}\ X}\ (\emptyset), (3) \}$.

If we use reduction triples with an extension of polynomial interpretations
to AFSMs \ext{(see \refEx{ex:staticbad:finish})},
the DP $(2)$
gives a requirement $\llbracket
\symb{g}^\sharp\ F\ X\rrbracket_{\mathcal{J},\alpha}\ \pge \max(
\alpha(F)(\llbracket \symb{f}\ X \rrbracket_{\mathcal{J},\alpha}),\ 
\llbracket \symb{f}\ X \rrbracket_{\mathcal{J},\alpha})$.
The DP $(2')$ would instead generate $\llbracket
\symb{g}^\sharp\ F\ X\rrbracket_{\mathcal{J},\alpha}\ \pge
\alpha(F)(\llbracket \symb{f}\ X \rrbracket_{\mathcal{J},\alpha})$.
\end{example}

\begin{example}
To see why the processor is not necessarily complete if $\P \not\subseteq
\DDP(\Rules)$, consider a DP problem $(\P,\emptyset,m,f)$ with $\P =
\{ \text{(1)}\ \symb{f}^\sharp\ F\ X \arrdp F\ \symb{a}\ X\ (\emptyset),
\ \text{(2)}\ \symb{g}^\sharp\ \symb{a} \arrdp \symb{f}^\sharp\ 
(\abs{xy}{\symb{g}\ x})\ \symb{a}\ (\emptyset) \}$.  This DP problem is
\emph{finite}: in an infinite chain $[(\rho_i,s_i,t_i) \mid i \in \N]$,
if $s_i = \symb{g}^{\sharp}\ \symb{a}$ then $s_{i+3} =
\app{(\abs{y}{\symb{g}\ \symb{a}})}{\symb{a}}$ and neither case of
\refDef{def:chain}(\ref{depchain:beta}) applies.  However, the DP problem
$(\{ (1')\ \symb{f}^\sharp\ F\ X \arrdp \meta{F}{\symb{a},X}\ 
(\emptyset),\ \text{(2)} \},\Rules,m,f)$ is infinite, as demonstrated by
the infinite dependency chain with $\rho_i = (1')$ for even $i$
and $\rho_i = \text{(2)}$ for odd $i$.
\end{example}

In \refSec{subsec:graph} we have seen how we can utilise
meta-variable conditions.
Our next
processor seeks to
enable this by adding conditions.

\edef\procAddCond{\number\value{theorem}}
\edef\procAddCondSec{\number\value{section}}
\newcommand{\addCondTheProc}{
\begin{theorem}[Condition-adding processor]\label{def:addcond:mean}
A processor\linebreak $\Proc_{\mathtt{addcond}}$ that maps a DP problem
$(\P,\Rules,m,f)$ with $m \succeq \minimal$ to $\{ (\P',\Rules,m,
f) \}$ if
the following conditions are satisfied is both\linebreak
sound and complete:
\begin{enumerate}
\item\label{thm:addcond:mean}
  for any term $s$ that is terminating under $\arr{\Rules}$: there is
  no minimal $(\P,\Rules)$-chain that starts in $s^\sharp$ or any of
  its subterms;
\item $\P = \P_1 \uplus \P_2$ where $\P_2$ contains only dependency
  pairs of the form $\ell \arrdp \meta{Z}{p_1,\dots,p_\mac}\ (A)$
  with $Z \in \FMV(\ell)$;
\item $\P' = \P_1 \cup \{ \ell \arrdp \meta{Z}{p_1,\dots,p_\mac}\ 
  (A \cup \{ Z : i \}) \mid \ell \arrdp \meta{Z}{p_1,\dots,\linebreak
  p_\mac}\ (A) \in \P_2 \wedge 1 \leq i \leq \mac \}$.
\end{enumerate}
\end{theorem}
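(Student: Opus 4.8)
The plan is to prove the two halves of the claim separately. Completeness is the easy direction, since $\P'$ only \emph{tightens} the meta-variable conditions of the pairs in $\P_2$ and leaves $\P_1$ untouched; soundness is the substantial part and amounts to turning an infinite minimal $(\P,\Rules)$-chain into an infinite $(\P',\Rules)$-chain of the same kind, pair by pair, by reading off the right extra condition from \refDef{def:chain}.

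For completeness I would argue directly on chains. Every element of $\P'$ is either a pair of $\P_1 \subseteq \P$, or has the form $\ell \arrdp \meta{Z}{p_1,\dots,p_\mac}\ (A \cup \{Z : i\})$ obtained from a pair $\ell \arrdp \meta{Z}{p_1,\dots,p_\mac}\ (A) \in \P_2$ by enlarging its condition set. Since a substitution respecting $A \cup \{Z : i\}$ also respects $A$, every step of an infinite $(\P',\Rules)$-chain $[(\rho_i,s_i,t_i) \mid i \in \N]$ is, after forgetting the added condition on $\P_2$-pairs, a legal step of a $(\P,\Rules)$-chain with the \emph{same} $s_i$ and $t_i$. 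As minimality, formativeness and $\AlterRules$-computability are properties of the $s_i, t_i, \rho_i$ and the witnessing substitutions only, the flag $m$ and $f$ are preserved; together with the fact that $\Rules$ is unchanged (so the ``$\Rules$ non-terminating'' clause of \emph{infinite} transfers verbatim), this shows $(\P',\Rules,m,f)$ infinite implies $(\P,\Rules,m,f)$ infinite.

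For soundness I would proceed by contraposition. If $(\P,\Rules,m,f)$ is infinite because $\Rules$ is non-terminating, then $(\P',\Rules,m,f)$ is infinite for the same $\Rules$; otherwise there is an infinite $(\P,\Rules)$-chain which, since $m \succeq \minimal$, we may take to be minimal (and formative if $f = \formative$), condition (1) keeping us within this minimal regime. I would relabel this chain $[(\rho_i,s_i,t_i) \mid i \in \N]$ step by step: a $\mathtt{beta}$-step and any step with $\rho_i \in \P_1$ is kept unchanged. For a step with $\rho_i = \ell \arrdp \meta{Z}{p_1,\dots,p_\mac}\ (A) \in \P_2$, the right-hand side is a bare meta-application, so only the meta-variable case (\ref{depchain:dp:meta}) of \refDef{def:chain} can apply: there is a substitution $\gamma$ with $\gamma(Z) \approxp \abs{x_1 \dots x_\mac}{w}$ and $t_i = v^\sharp[x_1:=p_1\gamma,\dots,x_\mac:=p_\mac\gamma]$ for some \emph{non-variable} subterm $v$ of $w$ with $\{x_1,\dots,x_\mac\} \cap \FV(v) \neq \emptyset$. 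Choosing some $i_0$ with $x_{i_0} \in \FV(v)$, I would replace $\rho_i$ by $\rho_i' := \ell \arrdp \meta{Z}{p_1,\dots,p_\mac}\ (A \cup \{Z : i_0\}) \in \P'$, reusing the \emph{same} $\gamma$; then $(\rho_i',s_i,t_i)$ is a valid $(\P',\Rules)$-chain step. The relabelled sequence has identical $s_i,t_i$ and witnessing substitutions, hence is again minimal (formative, $\AlterRules$-computable), and so witnesses infiniteness of $(\P',\Rules,m,f)$.

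The main obstacle is the step where I claim that the reused $\gamma$ respects the freshly added condition $Z : i_0$, that is, that $x_{i_0}$ is free in the \emph{body} $w$ of $\gamma(Z)$ after stripping its $\mac$ leading abstractions, rather than merely free in the subterm $v$. This requires careful free-variable bookkeeping: working modulo $\alpha$-conversion I would first rename the binders inside $w$ so that none of $x_1,\dots,x_\mac$ is re-bound, after which a free occurrence of $x_{i_0}$ in a subterm $v$ of $w$ is automatically a free occurrence in $w$ itself, giving $x_{i_0} \in \FV(w)$. I would also have to treat the two cases of $\approxp$: when $\gamma(Z)$ literally has $\mac$ leading abstractions the argument is as above, whereas in the expansion case $w = \apps{t}{x_{i+1}}{x_\mac}$ already exhibits $x_{i+1},\dots,x_\mac$ free in $w$, so a suitable $i_0$ exists outright. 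Finally, the hypothesis $Z \in \FMV(\ell)$ in condition (2) ensures $Z$ lies in the domain of every chain substitution, so that the condition $Z : i_0$ is meaningful for the relabelled pair.
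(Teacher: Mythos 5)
Your completeness half is exactly the paper's: forgetting the added condition turns any $(\P',\Rules)$-chain into a $(\P,\Rules)$-chain with the same terms, substitutions and flags, because respecting $A \cup \{Z : i\}$ implies respecting $A$. Your soundness half is correct but takes a genuinely different route. The paper never appeals to the side condition $\{x_1,\dots,x_\mac\} \cap \FV(v) \neq \emptyset$ in case (\ref{depchain:dp:meta}) of \refDef{def:chain}; instead it rules out a ``disregarding'' instantiation semantically: if $\gamma_i(Z) \approxp \abs{x_1 \dots x_\mac}{w}$ with no $x_j$ free in $w$, the infinite chain would continue from (a marked subterm of) $w$, so hypothesis (\ref{thm:addcond:mean}) forces $w$ to be non-terminating, while $Z \in \FMV(\ell)$ together with minimality (available since $m \succeq \minimal$, using \refLemma{lem:computableminimal} when $m = \static_\AlterRules$) forces $\gamma_i(Z)$ to be terminating --- a contradiction. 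That is precisely where hypotheses (\ref{thm:addcond:mean}), $m \succeq \minimal$ and the requirement $Z \in \FMV(\ell)$ are consumed. You instead read the regarded argument off syntactically from clause (\ref{depchain:dp:meta}) itself and lift it from the subterm $v$ to the body $w$ by the variable convention; consequently your proof never uses hypothesis (\ref{thm:addcond:mean}) or minimality at all, so under that reading it establishes a strictly stronger statement. What the paper's semantic detour buys is independence from the $\alpha$-conversion issue you yourself identify as the main obstacle: it does not matter there how ``subterm $v$ of $w$'' behaves across $\alpha$-representatives.

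On that obstacle, one sharpening is needed in your argument: you cannot rename the binders of $w$ \emph{after} the chain has handed you $v$. Renaming the binders of $w$ changes its set of subterms, so if $v$ sat below a re-binding of $x_{i_0}$ inside $w$, the renamed representative has no subterm in which $x_{i_0}$ occurs free, and your chosen $i_0$ evaporates. What you actually need is that case (\ref{depchain:dp:meta}) of \refDef{def:chain} is itself read on the hygienic representative, with the binders of $w$ taken apart from $x_1,\dots,x_\mac$, so that the $v$ you are given already has its occurrences of $x_{i_0}$ free in $w$. This reading is indeed the intended one, and it is forced on the whole paper: under the liberal reading, $v^\sharp[x_1:=u_1\gamma,\dots,x_\mac:=u_\mac\gamma]$ need not be reachable from $\meta{Z}{u_1,\dots,u_\mac}\gamma$ by $\supterm \cup \arr{\beta}$ steps, which would break the simulation underlying the second half of \refThm{thm:chain}. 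So state the convention as part of how you read the chain definition, rather than performing the renaming on the fly; with that adjustment your proof is sound.
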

}
\addCondTheProc

\begin{proof}[Proof sketch]
If $\gamma(Z)$ disregards its first $\mac$ arguments,
yet \linebreak
$\meta{Z}{p_1,\dots,p_\mac}\gamma$
\ext{starts} an infinite
chain, then $\gamma(Z)$ is non-termina\-ting by condition
(\ref{thm:addcond:mean}), contradicting minimality.
For completeness note that
a substitution that respects $A \cup \{
Z : i \}$ also respects $A$.
\end{proof}

\ext{Intuitively, if a DP collapses to
$\meta{Z}{p_1,\dots,p_\mac}$,
then $\gamma(Z)$ must regard
\emph{some} 
$p_i$ in
a minimal chain.}
Requirement (\ref{thm:addcond:mean}) is satisfied if $\P \subseteq
\DDP(\Rules)$, but also if $\P \subseteq \DDP(\Rules)'$ where
$\DDP(\Rules)'$ is obtained by applying the extended meta-variable
processor to $\DDP(\Rules)$.

\begin{example}\label{ex:extendadd}
We continue
\refEx{ex:extend}.
The condition-adding processor with
$\P_2 = \{ (2) \}$
maps $(\P,\Rules,\minimal,\formative)$ to $\{ (
\P',\linebreak
\Rules, \minimal,\formative) \}$ where $\P' = \{ (1),
(2'')\ \symb{g}^\sharp\ F\ X \arrdp \meta{F}{\symb{f}\ X}\ 
\linebreak
(\{ F : 1 \}), (3) \}$.
The dependency graph $G$ for the remaining problem,
\begin{center}
\includegraphics[scale=0.4]{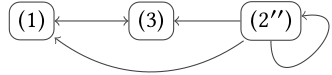}
%
\end{center}
does \emph{not} have an edge from $(1)$ to $(2'')$ due to the condition.
Thus, $\Proc_G(\P',\Rules,\minimal,\formative) = \{ \: (\{(1),(3)\},
\Rules,\minimal,\linebreak
\formative),\: (\{ (2'') \}, \Rules, \minimal,
\formative) \: \}$.
\end{example}

If duplication of dependency pairs is undesirable, we can take for
$\P_2$ the set of DPs in $\P$ whose right-hand side has the form
$\meta{Z}{s}$.
Then, the condition-adding processor merely adds a condition.

With the processors in this paper, there is no benefit in applying the
processors of 
  \ext{\refSec{subsec:modify}}
more than once.  Thus, it is a reasonable strategy to apply these
processors at the start of the algorithm for the \ext{framework,}
and then \ext{ignore} them in the rest of the process.



\subsection{Rule removal without search for orderings}\label{subsec:ruleremove}

While processors often simplify only $\P$,
they can also simplify $\Rules$.
One such processor uses the notion of \emph{formative rules}: the rules
that suffice for formative reductions.
As in the first-order case \cite{fuh:kop:14}, we use a semantic
characterisation of formative rules. In practice, we
then work with over-approximations of this characterisation,
analogous to the 
dependency graph approximations in
\refThm{def:graphproc}.

\begin{definition}
\label{def:formativerules}
A function $\FR$ that maps a pattern $\ell$ and a set of rules $\Rules$
to a set $\FR(\ell,\Rules) \subseteq \Rules$ is a \emph{formative
  rules approximation} if for all $s$ and $\gamma$: if
$s \arrr{\Rules} \ell\gamma$ by an $\ell$-formative reduction,
then this reduction can be done using only rules in $\FR(\ell,\Rules)$.
We
write
$\FR(\P,\Rules) = \bigcup \{ \FR(\ell_i,
\Rules) \mid \apps{\afun}{\ell_1}{\ell_n} \arrdp p\, (A)\: \in\, \P
\wedge 1 \leq i \leq n \}$.
\end{definition}

A formative rules approximation for AFSs is provided in
\cite[Def.~6.10]{kop:raa:12}; an approximation for AFSMs is
available in \refApp{app:processors}.
The following result follows trivially from \refDef{def:formativerules}:


\edef\procFormative{\number\value{theorem}}
\edef\procFormativeSec{\number\value{section}}
\newcommand{\formativeTheProc}{
\begin{theorem}[Formative rules processor]
\label{thm:formativeproc}
For a formative rules approximation $\FR$, the processor $\Proc_{\FR}$
that maps a DP problem $(\P,\Rules,m,\formative)$ to $\{ (\P,
\FR(\P,
\Rules),m,\formative) \}$ is both
sound and complete.
\end{theorem}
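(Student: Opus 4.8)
The plan is to exploit that, for the \emph{fixed} pair set $\P$, the set of formative $(\P,\Rules)$-chains and the set of formative $(\P,\FR(\P,\Rules))$-chains coincide as sequences; soundness and completeness then both reduce to checking that the flags $m$ and $f=\formative$ transfer in the direction needed. Since $\Proc_{\FR}$ never returns $\no$ and $\FR(\P,\Rules)\subseteq\Rules$, I write $M=(\P,\Rules,m,\formative)$ and $M'=(\P,\FR(\P,\Rules),m,\formative)=\Proc_{\FR}(M)$ and argue about these two problems throughout.

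The central step is to show that in any formative $(\P,\Rules)$-chain $[(\rho_i,s_i,t_i)\mid i\in\N]$ every connecting reduction $t_i\arrr{\Rules}s_{i+1}$ only uses rules from $\FR(\P,\Rules)$; this makes the chain a $(\P,\FR(\P,\Rules))$-chain, while the reverse inclusion is immediate from $\FR(\P,\Rules)\subseteq\Rules$. For this I would invoke clause~(\ref{depchain:reduce}) of \refDef{def:chain}: either $t_i=s_{i+1}$, so no rule is applied, or $t_i=\apps{\afun}{u_1}{u_n}$ and $s_{i+1}=\apps{\afun}{w_1}{w_n}$ with $\afun\in\F^\sharp$ and each $u_j\arrr{\Rules}w_j$. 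If $\rho_{i+1}=\mathtt{beta}$ then by clause~(\ref{depchain:beta}) $s_{i+1}$ is headed by an abstraction, which rules out the shape $\apps{\afun}{w_1}{w_n}$ and forces $t_i=s_{i+1}$. Otherwise $\rho_{i+1}=\ell_{i+1}\arrdp p_{i+1}\ (A_{i+1})$ with $s_{i+1}=\ell_{i+1}\gamma$, and formativity of the chain makes $t_i\arrr{\Rules}\ell_{i+1}\gamma$ an $\ell_{i+1}$-formative reduction. Writing $\ell_{i+1}=\apps{\afun}{\ell'_1}{\ell'_n}$ with $\afun\in\F^\sharp$ (a head carrying no rule, and not an abstraction), the clause of \refDef{def:formative} that decomposes into the arguments applies, yielding $\ell'_j$-formative reductions $u_j\arrr{\Rules}\ell'_j\gamma$; by \refDef{def:formativerules} each of these uses only rules from $\FR(\ell'_j,\Rules)\subseteq\FR(\P,\Rules)$.

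With the chains identified, I would transfer the flags. The $\formative$ flag is structural: from $\arr{\FR(\P,\Rules)}\subseteq\arr{\Rules}$ an $\FR(\P,\Rules)$-formative reduction is also $\Rules$-formative, and by the central step a $\Rules$-formative reduction of a chain is already an $\FR(\P,\Rules)$-reduction, so it is $\FR(\P,\Rules)$-formative. For $m=\minimal$, the soundness direction is immediate since $\arr{\FR(\P,\Rules)}\subseteq\arr{\Rules}$ preserves termination of the strict subterms of each $t_i$; for completeness I would split on termination of $\Rules$: if $\Rules$ is non-terminating then $M$ is infinite at once, and otherwise every term is $\arr{\Rules}$-terminating, so minimality with respect to $\Rules$ holds vacuously and the $\FR(\P,\Rules)$-chain is a minimal $(\P,\Rules)$-chain. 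For $m=\static_S$ the $C_S$-computability requirements of \refDef{def:Ccomputable} mention only $S$ and are untouched by the rule set, while the inclusions $\arr{S}\supseteq\arr{\Rules}\supseteq\arr{\FR(\P,\Rules)}$ make the accompanying side condition hold for both problems. Finally, completeness also needs its non-termination clause, handled by $\FR(\P,\Rules)\subseteq\Rules$: non-termination of $\FR(\P,\Rules)$ yields non-termination of $\Rules$, hence $M$ is infinite.

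The subtlest point, and the one I expect to demand the most care, is the completeness direction for $m=\minimal$: minimality for the pruned system $\FR(\P,\Rules)$ does \emph{not} by itself entail minimality for the larger $\Rules$, because termination is lost when rules are \emph{added}; the case split on termination of $\Rules$ is precisely what repairs this. A second delicate point is the claim that $\ell_{i+1}$-formativity of the whole left-hand side reduces, argument by argument, to the per-argument $\ell'_j$-formativity matching the union defining $\FR(\P,\Rules)$: this is clean when the $\ell'_j$ form a fully extended linear pattern, so that the decomposing clause of \refDef{def:formative} is forced rather than its catch-all first clause, and in the remaining degenerate cases one relies on the formative rules approximation returning all of $\Rules$.
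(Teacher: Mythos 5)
Your proposal follows the paper's own route: soundness by replaying the given formative $(\P,\Rules)$-chain with its connecting reductions redone inside $\FR(\P,\Rules)$, which is exactly what the paper extracts from \refDef{def:formativerules}, and completeness by the subset argument --- your closing case split on termination of $\Rules$ re-derives, almost verbatim, the paper's \refLemma{lem:subsetcomplete}, which its proof simply cites. Your explicit decomposition of the $\ell_{i+1}$-formative connecting reduction into $\ell'_j$-formative reductions of the arguments is the bridge (left implicit in the paper) between the chain-level condition and the argument-wise definition of $\FR(\P,\Rules)$, and in the fully extended linear case it is correct.

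The genuine gap is your treatment of the degenerate case. If $\ell_{i+1}$ is \emph{not} a fully extended linear pattern, the first clause of \refDef{def:formative} makes the formativity condition at that step vacuous, so the connecting reduction may use arbitrary rules; you dismiss this by saying one ``relies on the formative rules approximation returning all of $\Rules$''. It need not: \refDef{def:formativerules} constrains $\FR$ only on the \emph{argument} patterns $\ell_i$ of left-hand sides (the union defining $\FR(\P,\Rules)$ ranges over these), and a left-hand side can be degenerate while every argument is well behaved. Concretely, take the non-left-linear DP $\afun^\sharp\ Z\ Z \arrdp \afun^\sharp\ (\bfun\ \symb{a})\ \symb{a}\ (\emptyset)$ with $\Rules = \{\bfun\ X \arrz X\}$. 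The arguments of the left-hand side are the fully extended linear patterns $Z$ and $Z$, for which the trivial approximation of \refEx{ex:extendshorten} legitimately sets $\FR(Z,\Rules) = \emptyset$ (every $Z$-formative reduction consists of head-$\beta$ steps only), so $\FR(\P,\Rules) = \emptyset$. Yet there is an infinite, minimal, vacuously formative $(\P,\Rules)$-chain whose connecting step $\afun^\sharp\ (\bfun\ \symb{a})\ \symb{a} \arrr{\Rules} \afun^\sharp\ \symb{a}\ \symb{a}$ genuinely needs the rule $\bfun\ X \arrz X$ and cannot be simulated by $\beta$-steps, while no infinite $(\P,\emptyset)$-chain exists. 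So your transfer argument does not close this case --- and, to be fair, the paper's two-sentence proof is silent on it as well; repairing it seems to require either left-linearity of $\P$ or defining $\FR(\P,\Rules)$ from whole left-hand sides rather than their arguments.

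Two smaller glosses are reparable but should be tightened. First, \refDef{def:formativerules} guarantees only that a formative reduction \emph{can be redone} with rules in $\FR(\ell,\Rules)$, not that the given reduction already uses only such rules; this is harmless, since a chain merely needs the connecting reductions to exist. Second, for the new chain to carry the $\formative$ flag, the redone reductions must be formative \emph{with respect to} $\FR(\P,\Rules)$; your inference ``it is an $\FR(\P,\Rules)$-reduction, hence $\FR(\P,\Rules)$-formative'' is not immediate, because the head-step clause of \refDef{def:formative} refers to membership of the applied rule in the ambient rule set, so this downward transfer needs an induction over that definition.
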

}
\formativeTheProc


\begin{example}\label{ex:extendshorten}
Let $\FR$ be the trivial formative rules approximation
:
$\FR(Z,\Rules) = \emptyset$ for $Z \in \M$ and $\FR(t,\Rules) =
\Rules$ otherwise.
Continuing \refEx{ex:extendadd}, we have
$\Proc_\FR(\{ (2'') \},\Rules,
\minimal,\formative)\linebreak =
\{ (\{ (2'') \}, \emptyset, \minimal,\formative) \}$.
This problem is 
mapped
to $\{ (\emptyset,\emptyset,\minimal,\formative) \}$ by the reduction
triple processor, 
and this in turn
is mapped to $\emptyset$ by the dependency graph processor.
\end{example}

\smallskip
Where \emph{formative} rules are generated from the left-hand sides of
DPs and rules, \emph{usable} rules are generated from the right.
Origina\-ting in the first-order setting, this processor can eliminate
potentially many rules at once.  However, it is only applicable to
non-collapsing $\P$.  It also imposes a heavy price on the flags.

\begin{definition}
A function $\UR$ that takes a pair $(\P,\Rules)$ with $\P$
non-collapsing and returns a set of rules is a \emph{usable
rules approxi\-mation} if 
a function $\varphi$ from
terminating terms to terms 
exists
with:
\begin{itemize}
\item for all meta-terms $s$ that occur as the direct argument of a
  left- or right-hand side in $\P$, and all substitutions $\gamma$
  such that $s\gamma$ is a terminating term:
  $\varphi(s\gamma) = s\gamma^\varphi$;
  here, $\gamma^\varphi$ is the substitution mapping each $x \in
  \domain(\gamma)$ to $\varphi(\ext{\gamma(}x\ext{)})$
\item if $s \arrr{\Rules} t$ and $s$ is terminating, then $\varphi(s)
  \arrr{\UR(\P,\Rules)} \varphi(t)$.
\end{itemize}
\end{definition}

\edef\procUsable{\number\value{theorem}}
\edef\procUsableSec{\number\value{section}}
\newcommand{\usableTheProc}{
\begin{theorem}[Usable rules processor]
\label{thm:usable}
For a usable rules approximation $\UR$, the processor $\Proc_\UR$
that maps a DP problem $(\P,\Rules,m,f)$ with $\P$ 
non-collapsing 
and $m \succeq \minimal$ to $\{ (\P,
\linebreak
\UR(\P,\Rules),
\arbitrary,\nonformative)\}$ is sound.
\end{theorem}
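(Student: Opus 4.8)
The plan is to prove soundness in contrapositive form: starting from an infinite chain for the input problem $\adpprob = (\P,\Rules,m,f)$ that is compatible with its flags (i.e.\ a witness that $\adpprob$ is \emph{not} finite), I will construct an infinite $(\P,\UR(\P,\Rules))$-chain. Since the output problem carries the flags $\arbitrary$ and $\nonformative$, the constructed chain is subject to \emph{no} minimality or formativeness side condition, so its mere existence shows that $(\P,\UR(\P,\Rules),\arbitrary,\nonformative)$ is not finite, which is exactly soundness. Because $m \succeq \minimal$, the given chain $[(\rho_i,s_i,t_i)\mid i \in \N]$ is in particular minimal, so the strict subterms of every $t_i$ are $\arr{\Rules}$-terminating; this is precisely the hypothesis under which the function $\varphi$ from the usable rules approximation becomes applicable.

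I would first record the structural consequences of $\P$ being non-collapsing. In each step $\rho_i = \ell_i \arrdp p_i\ (A_i) \in \P$, only case (\ref{depchain:dp:instance}) of \refDef{def:chain} applies, so $t_i = p_i\gamma_i$. Moreover, a $\mathtt{beta}$ step (case (\ref{depchain:beta})) requires $s_i$ to be headed by an abstraction, and in a non-collapsing chain such an $s_i$ can only arise as a pass-through $s_i = t_{i-1}$ in case (\ref{depchain:reduce}) (an argument reduction needs an $\F^\sharp$-head); hence no rule of $\Rules$ is ever applied at the transition into a $\mathtt{beta}$ step, nor by the step itself. Consequently every genuine use of a rule of $\Rules$ happens in the connecting reductions of case (\ref{depchain:reduce}), strictly inside the terminating arguments of a function-headed $t_i = \apps{\afun}{u_1}{u_n}$.

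The heart of the argument is then to relabel the chain by $\varphi$. I set $s_i' := \ell_i\gamma_i^\varphi$ and $t_i' := p_i\gamma_i^\varphi$, where $\gamma_i^\varphi(x) := \varphi(\gamma_i(x))$. The first property of $\varphi$, namely $\varphi(s\gamma) = s\gamma^\varphi$ for direct arguments $s$ occurring in $\P$ with $s\gamma$ terminating, makes $\varphi$ commute with the DP substitution on precisely the terminating subterms that matter (minimality guarantees these instances terminate), so each DP step is replayed over $\UR(\P,\Rules)$ with substitution $\gamma_i^\varphi$, and the meta-variable conditions $A_i$ are preserved because $\varphi$ respects which bound variables occur free. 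The second property, that $u \arrr{\Rules} w$ with $u$ terminating implies $\varphi(u) \arrr{\UR(\P,\Rules)} \varphi(w)$, converts each argument reduction of case (\ref{depchain:reduce}) into a $\UR(\P,\Rules)$-reduction; monotonicity of $\arr{\UR(\P,\Rules)}$ lifts this to $t_i' \arrr{\UR(\P,\Rules)} s_{i+1}'$. The $\mathtt{beta}$ steps transfer directly, since (\textsf{Beta}) belongs to $\arr{\Rules'}$ for every rule set $\Rules'$. Assembling these, $[(\rho_i,s_i',t_i')\mid i \in \N]$ is an infinite $(\P,\UR(\P,\Rules))$-chain. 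It is in general neither minimal nor formative --- $\varphi$ may send terminating terms to non-terminating ones and need not preserve formative reductions --- which is exactly why the output flags are forced down to $\arbitrary$ and $\nonformative$ (and why only soundness, not completeness, is claimed here).

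The main obstacle I anticipate is the interaction between the \emph{partiality} of $\varphi$ (it is defined only on terminating terms) and the presence of $\mathtt{beta}$ steps, whose intermediate terms need not terminate. The careful part of the full proof is therefore to apply $\varphi$ only to the terminating components --- the marked-head terms $t_i$ and the terminating arguments on which rules actually fire --- while threading the non-terminating beta redexes through untouched, and to check that the resulting assignment $s_i', t_i'$ is globally consistent across beta-runs so that successive steps genuinely compose into one chain. Verifying that every instance handed to the two properties of $\varphi$ is indeed $\arr{\Rules}$-terminating, which follows from minimality together with preservation of termination under reduction, is the remaining bookkeeping; the existence of a concrete $\varphi$ (hence of a usable rules approximation at all) is supplied separately in \refApp{app:processors}, so here I would treat $\varphi$ purely abstractly through its two defining properties.
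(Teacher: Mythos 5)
Your core construction is the same as the paper's: assume (by $m \succeq \minimal$, using the paper's lemma that computable chains are minimal) a minimal infinite $(\P,\Rules)$-chain, replace each DP instantiation $\gamma_i$ by $\gamma_i^\varphi$, use the first property of the usable-rules approximation to identify $\ell_i\gamma_i^\varphi$ and $p_i\gamma_i^\varphi$ with the result of applying $\varphi$ to the (terminating, by minimality) arguments of $s_i$ and $t_i$, use the second property to turn the connecting reductions $u_j \arrr{\Rules} w_j$ into $\varphi(u_j) \arrr{\UR(\P,\Rules)} \varphi(w_j)$, and accept the loss of the $\minimal$ and $\formative$ flags. The divergence is in the treatment of $\mathtt{beta}$ entries, and that is where the genuine gap lies. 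The paper's first step is to observe that, because $\P$ is non-collapsing, the chain may be assumed to contain \emph{no} $\mathtt{beta}$ entries at all: a non-collapsing right-hand side $p_i$ is headed by a (possibly marked) function symbol, so $t_i = p_i\gamma_i$ and therefore $s_{i+1}$ keep a head in $\F^\sharp$ and cannot be $\beta$-redexes; hence a $\mathtt{beta}$ entry can never follow a DP entry. Consequently either the chain has infinitely many DP entries and thus a $\mathtt{beta}$-free tail, which is relabelled as above, or from some point on it consists solely of $\mathtt{beta}$ entries, in which case all connections in that tail are syntactic identities, no rule is used anywhere, and the tail is verbatim an infinite chain over any rule set, including $\UR(\P,\Rules)$.

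You instead keep the $\mathtt{beta}$ entries, relabel only the DP entries, and ``thread the beta redexes through untouched'', deferring the check that the assignment is ``globally consistent across beta-runs'' as bookkeeping. That deferred check is the gap: it is not bookkeeping, and as stated it fails. Suppose a $\mathtt{beta}$ entry followed a DP entry $\rho_i$. As you note yourself, the connection must then be the identity $t_i = s_{i+1}$, with $s_{i+1}$ headed by a $\beta$-redex. After your relabelling, $t_i' = p_i\gamma_i^\varphi$, while the untouched $s_{i+1}'$ equals $p_i\gamma_i$; these are in general distinct terms, and condition~(\ref{depchain:reduce}) of \refDef{def:chain} offers nothing to bridge them, since it allows only equality or argument-wise $\arrr{\UR(\P,\Rules)}$-reduction below a shared head in $\F^\sharp$, which an abstraction-headed term lacks. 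The exit of a beta-run breaks symmetrically: an untouched $t_i'$ would have to connect to $s_{i+1}' = \ell_{i+1}\gamma_{i+1}^\varphi$, yet the available reductions $u_j \arrr{\Rules} w_j$ become $\UR(\P,\Rules)$-reductions only after $\varphi$ is applied to \emph{both} endpoints. In other words, your relabelled sequence composes into a chain only because such junctions never occur, and establishing that they never occur is precisely the observation your proposal is missing; once it is added, the case distinction above makes the beta-handling vacuous and your argument collapses into the paper's proof.
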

}
\usableTheProc

\begin{proof}[Proof sketch]
Since $\P$ is non-collapsing, we can assume that a
$(\P,\Rules)$-chain
$[(\rho_i,s_i,t_i) \mid i \in \N]$ has no $\mathtt{beta}$ entries;
thus,
with $\varphi'(\apps{\afun}{s_1
}{s_n}) := \apps{\afun}{
\varphi(s_1)}{\varphi(s_n)}$, the sequence $[(\rho_i,\varphi'(s_i),
\varphi'(t_i)) \linebreak
\mid i \in \N]$ is well-defined and an infinite
$(\P,\UR(\P,\Rules))$-chain.
\end{proof}


This semantic notion of usable rules approximation differs wild\-ly
from first-order counterparts (and the higher-order definition in
\cite{suz:kus:bla:11}) by leaving the choice of $\UR$ completely
free; it is \emph{not}, for instance, required that $\UR(\P,\Rules)
\subseteq \Rules$.
This also differs from the semantic notion of usable rules in
\cite{thi:07}, which however considers the classical notion of usable
rules defined for \emph{innermost} termination.
Usable rules for full termination, called ``needed rules'' in
\cite{thi:07}, with a transformation $\varphi$ are still defined
syntactically in \cite{thi:07}.

An example approximation is the union of
usable rules following \cite{suz:kus:bla:11} (adapted to AFSMs) with
$\Ce := \{ \symb{p}_\atype\ X\ Y \arrz X,\ \symb{p}_\atype\ X\ Y
\arrz Y \mid$ any type $\atype \}$, where $\symb{p}_\atype : \atype
\arrtype \atype \arrtype \atype$.
However, this
leaves \emph{all} rules usable if any DP or
usable rule of $\P$ is collapsing -- which is
common,
even when $\P \subseteq \SDP(\Rules)$.
It would be worthwhile to research alternative approximations
not subject to this weakness.

\smallskip
As in the first-order case, we can use usable rules
inside
reduction triple
processors 
without losing the
$\minimal$ and $\formative$ flags.
For example, both in \refThm{thm:basetriple} and
\refThm{thm:tagredpair} we could
use
$\UR(\P,\Rules)$ in
case (\ref{it:redpair:rules}).
%
%
Even stronger results could likely be obtained by considering formative and
usable rules with respect to an argument filtering
\cite{fuh:kop:14,gie:thi:sch:fal:06}
(which is beyond the scope of this paper).



\subsection{Subterm criterion processors}

Reduction triple processors are powerful, but
they exert a computational
price:
we must orient all rules in $\Rules$.  The subterm criterion processor
allows us to remove DPs without considering
$\Rules$ at all.
It is based on a \emph{projection function} \cite{hir:mid:07}.
In our higher-order setting:

\begin{definition}
For $\P$ non-collapsing, let
$\mathtt{heads}(\P)$ be the set of all symbols $\afun$ that occur as
the head of a left- or right-hand side of a DP in $\P$.
A \emph{projection function} for $\P$ is a function $\nu :
\mathtt{heads}(\P) \to \N$ such that for all dependency pairs
$\ell \arrdp p\ (A) \in \P$, the function $\project$ with
$\project(\apps{\afun}{s_1}{s_n}) = s_{\nu(\afun)}$ is well-defined
both for $\ell$ and for $p$.
\end{definition}

Note the limitation to non-collapsing dependency pairs, which
is essential for the subterm criterion
\cite{hir:mid:07,kop:raa:12,kus:iso:sak:bla:09,suz:kus:bla:11} to work.

\pagebreak

\edef\procSubtermCriterion{\number\value{theorem}}
\edef\procSubtermCriterionSec{\number\value{section}}
\newcommand{\subtermCriterionTheProc}{
\begin{theorem}[Subterm criterion processor]\label{thm:subtermproc}
The processor
$\Proc_{\mathtt{subcrit}}$ that maps a DP problem
$(\P,\Rules,m,f)$ with $m \succeq \minimal$ to $\{(\P_2,\Rules,
m,f)\}$
if the following holds is both sound and complete:
\begin{itemize}
\item $\P = \P_1 \uplus \P_2$ is non-collapsing;
\item a projection function $\nu$ exists such that
  $\project(\ell) \supterm \project(p)$ for $\ell \arrdp p\ (A) \in
  \P_1$ and
  $\project(\ell) = \project(p)$ for $\ell \arrdp p\ (A) \in \P_2$.
\end{itemize}
\end{theorem}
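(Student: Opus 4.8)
The plan is to prove soundness by contraposition and to obtain completeness immediately from $\P_2 \subseteq \P$. For soundness, assume $(\P_2,\Rules,m,f)$ is finite; since a non-terminating $\Rules$ would already make $(\P_2,\Rules,m,f)$ infinite, I may assume $\arr{\Rules}$ terminates, and it suffices to show that no infinite $(\P,\Rules)$-chain with the property selected by $m$ (minimal, as $m \succeq \minimal$) and $f$ exists. Suppose for contradiction such a chain $[(\rho_i,s_i,t_i) \mid i \in \N]$ exists. Because $\P$ is non-collapsing, the right-hand side $p_i$ of every $\rho_i \in \P$ has a function-symbol head, so case~\ref{depchain:dp:meta} of \refDef{def:chain} never applies: $t_i = p_i\gamma_i$ by case~\ref{depchain:dp:instance}, and by step~\ref{depchain:reduce} the term $s_{i+1}$ shares this head. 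Hence after the first $\P$-step no $\mathtt{beta}$ entry can occur, and after discarding a finite prefix I may assume the chain consists solely of DP steps with $s_i = \ell_i\gamma_i$ and $t_i = p_i\gamma_i$.

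First I would push the projection through the chain. Writing $\afun_i = \head(\ell_i)$, the function $\project$ selects the $\nu(\afun_i)$-th argument, so it commutes with substitution: $\project(\ell_i\gamma_i) = \project(\ell_i)\gamma_i$, and likewise for $p_i$. The ordering hypotheses then give $\project(\ell_i)\gamma_i \supterm \project(p_i)\gamma_i$ when $\rho_i \in \P_1$ and $\project(\ell_i)\gamma_i = \project(p_i)\gamma_i$ when $\rho_i \in \P_2$, using that the proper-subterm relation of \refDef{def:terms} is stable under substitution (it descends only through abstractions and applications, which $\gamma_i$ preserves). Moreover, since $t_i$ and $s_{i+1}$ share the head $\afun_i$ and their arguments are related by $\arrr{\Rules}$ (step~\ref{depchain:reduce}), we get $\project(t_i) \arrr{\Rules} \project(s_{i+1})$. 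Chaining these yields, for each $i$, either $\project(s_i) \supterm \project(t_i) \arrr{\Rules} \project(s_{i+1})$ (if $\rho_i \in \P_1$) or $\project(s_i) \arrr{\Rules} \project(s_{i+1})$ (if $\rho_i \in \P_2$).

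The core is a well-foundedness measure. Since $m \succeq \minimal$, the chain is minimal, so every strict subterm of each $t_i$ is $\arr{\Rules}$-terminating; in particular each $\project(t_i)$, being an argument of $t_i$, terminates, and then so does each $\project(s_{i+1})$ because $\arr{\Rules}$ preserves termination. For an $\arr{\Rules}$-terminating term $u$ I would set $\mathrm{rk}(u) \in \N$ to be the length of a longest $\arr{\Rules}$-reduction from $u$ (finite, as $\arr{\Rules}$ is finitely branching) and let $|u|$ be the size of $u$. Two facts drive the argument: $u \arr{\Rules} v$ implies $\mathrm{rk}(u) > \mathrm{rk}(v)$; and $u \supterm v$ implies $\mathrm{rk}(u) \geq \mathrm{rk}(v)$ (any reduction of the subterm $v$ lifts, by monotonicity, to a reduction of $u$ of equal length) together with $|u| > |v|$. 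Ordering the pairs $(\mathrm{rk}(u),|u|)$ lexicographically over the well-founded $(\N \times \N, <_{\mathrm{lex}})$, each $\P_1$ step strictly decreases the measure of $\project(s_i)$ (the $\supterm$ step drops the size at non-increasing rank, and the following reductions keep the rank non-increasing), while each $\P_2$ step leaves it non-increasing. An infinite chain using $\P_1$ infinitely often would thus give an infinite strictly descending sequence, a contradiction. Hence $\P_1$ is used only finitely often, and the tail of the chain is an infinite minimal (and formative, when $f = \formative$) $(\P_2,\Rules)$-chain, contradicting finiteness of $(\P_2,\Rules,m,f)$.

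Completeness is immediate: as $\P_2 \subseteq \P$, any infinite $(\P_2,\Rules)$-chain carrying the $m$/$f$-properties is already such a $(\P,\Rules)$-chain, so an infinite element of $\Proc_{\mathtt{subcrit}}(\adpprob)$ forces $\adpprob$ infinite. I expect the main obstacle to be the well-foundedness step: one must confirm that proper-subterm together with $\arr{\Rules}$ admits no infinite descent on $\arr{\Rules}$-terminating terms, for which the $(\mathrm{rk},|\cdot|)$-lexicographic measure is the cleanest route, and one must check that $\mathrm{rk}$ behaves monotonically under $\supterm$ precisely because reductions below a $\lambda$ are permitted. A secondary point needing care is justifying that non-collapsing $\P$ lets us drop $\mathtt{beta}$ steps and that $\project$ with the subterm relation is stable under the chain's substitutions, which combine matching with a $\beta$-development.
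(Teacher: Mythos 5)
Your proof takes the same route as the paper's: completeness because $\P_2 \subseteq \P$ (the paper invokes its general \refLemma{lem:subsetcomplete}), and soundness by projecting a minimal infinite $(\P,\Rules)$-chain through $\project$ into a sequence $\project(s_i) \mathrel{(\supterm \cup =)} \project(t_i) \arrr{\Rules} \project(s_{i+1})$ in which every $\P_1$-step is strict, then arguing that strict steps occur only finitely often because minimality makes $\project(t_i)$ terminating, and a terminating term admits no infinite $(\arr{\Rules} \cup \supterm)$-descent. Your explicit handling of the $\mathtt{beta}$ entries (well-definedness of $\project$ forces function-symbol heads on right-hand sides, so case (\ref{depchain:dp:meta}) of \refDef{def:chain} never applies and no $\mathtt{beta}$ step can follow a $\P$-step) and of substitution-stability of $\project$ and $\supterm$ correctly fills in details the paper leaves implicit.

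There is, however, one step that fails as written: you define $\mathrm{rk}(u) \in \N$ as the length of a longest $\arr{\Rules}$-reduction from $u$, justified by the claim that $\arr{\Rules}$ is finitely branching. In this paper $\Rules$ may be infinite (stated right after \refDef{def:rule}, to model, e.g., polymorphic systems), and then $\arr{\Rules}$ need not be finitely branching: a terminating term can have infinitely many one-step reducts with unbounded reduction lengths, so the maximum need not exist and K\"onig's lemma is unavailable. The repair is routine but needed: take $\mathrm{rk}(u)$ to be the ordinal-valued well-founded rank of $u$ under $\arr{\Rules}$ (which still satisfies $\mathrm{rk}(u) > \mathrm{rk}(v)$ when $u \arr{\Rules} v$, and $\mathrm{rk}(u) \geq \mathrm{rk}(v)$ when $u \supterm v$, the latter exactly by the monotonicity property $\supterm \cdot \arr{\Rules} \subseteq \arr{\Rules} \cdot \supterm$ you identified), or prove well-foundedness of $\arr{\Rules} \cup \supterm$ on terminating terms directly by well-founded induction --- this is precisely the fact the paper's own proof asserts without proof. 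Two smaller points: your opening move ``since a non-terminating $\Rules$ would already make $(\P_2,\Rules,m,f)$ infinite, I may assume $\arr{\Rules}$ terminates'' is unjustified, because in this framework \emph{finite} and \emph{infinite} are not mutually exclusive (\refDef{def:dpproblem}), so finiteness of $(\P_2,\Rules,m,f)$ does not yield termination of $\Rules$; fortunately your argument never uses global termination, since minimality of the chain already gives termination of each $\project(t_i)$. And for $m = \static_\AlterRules$ the chain property is computability rather than minimality, so deducing minimality from $m \succeq \minimal$ needs \refLemma{lem:computableminimal}, which you should cite at that point.
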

}
\subtermCriterionTheProc

\begin{proof}[Proof sketch]
If the given conditions are satisfied, every infinite $(\P,
\Rules)$-chain induces an infinite $\suptermeq \mathop{\cdot} \arrr{\beta}$
sequence which starts in a strict subterm of $t_1$, contradicting
minimality unless all but finitely many steps are equality.  Since
every occurrence of a pair in $\P_1$ results in a strict $\supterm$
step, a tail of the chain lies in $\P_2$.
\end{proof}

\begin{example}\label{ex:mapfinish}
Using 
$\nu(\symb{map}^\sharp) = 2$,
$\Proc_{\mathtt{subcrit}}$ maps the DP
\linebreak
problem $(\{(2)\},\Rules,
\static_\Rules,\formative)$ from \refEx{ex:staticgraph} to
\linebreak
$\{(\emptyset,\Rules,\static_\Rules,\formative)\}$.
\end{example}

The subterm criterion can be strengthened, following
\cite{kus:iso:sak:bla:09,suz:kus:bla:11}, to also for instance handle
DPs like the one in \refEx{ex:derivsdp}.
Here, we focus on a new idea.
When considering \emph{computable} chains, we can build on the idea of
the subterm criterion to get something more.

\edef\procStaticSubtermCriterion{\number\value{theorem}}
\edef\procStaticSubtermCriterionSec{\number\value{section}}
\newcommand{\staticSubtermCriterionTheProc}{
\begin{theorem}[Static subterm criterion processor]\label{thm:staticsubtermproc}
The\linebreak processor $\Proc_{\mathtt{statcrit}}$  that maps a DP problem $(\P,
\Rules,\static_\AlterRules,\linebreak f)$ to $\{(\P_2,\Rules,\static_\AlterRules,
f)\}$ provided the following conditions hold is both
sound and complete:
\begin{itemize}
\item $\P = \P_1 \uplus \P_2$ is non-collapsing
\item a projection function $\nu$ exists such that $\project(\ell)
  \sqsupset \project(p)$ for all $\ell \arrdp p\ (A) \in \P_1$ and
  $\project(\ell) = \project(p)$ for all $\ell \arrdp p\ (A) \in
  \P_2$; here, $\sqsupset$ is the relation on base-type terms such
  that $s \sqsupset t$ if \ext{$s \neq t$ and (a)} $s \gracc t$ or \ext{(b)}
  there exists a meta-variable
  $Z$ such that $s \gracc \meta{Z}{x_1,\dots,x_\mia}$ and $t =
  \apps{\meta{Z}{t_1,\dots,t_\mac}}{s_1}{s_n}$
\end{itemize}
\CFold{The processor does not use the set $S$ that parametrises
the flag $\static_S$ at all. Thus, would this work for
\emph{arbitrary} $S$?}
\CKold{It is used in the definition of a finite chain.  Note that
  $S$-computability implies that $\arrr{S} \supseteq \arrr{\Rules}$.}

\end{theorem}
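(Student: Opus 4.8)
The plan is to mirror the proof of the ordinary subterm criterion processor (\refThm{thm:subtermproc}), but to replace its termination-based well-foundedness argument by one based on computability, and specifically on the relation $\accreduce{C_\AlterRules}$ supplied by \refThm{thm:defC}. Completeness is immediate: since $\P_2 \subseteq \P$, every $(\P_2,\Rules)$-chain is also a $(\P,\Rules)$-chain with the same flags, and $\Rules$ is untouched, so infiniteness of $(\P_2,\Rules,\static_\AlterRules,f)$ entails infiniteness of $(\P,\Rules,\static_\AlterRules,f)$. For soundness I would assume, for contradiction, an infinite $\AlterRules$-computable (and, if $f = \formative$, formative) $(\P,\Rules)$-chain $[(\rho_i,s_i,t_i)\mid i\in\N]$ with witnessing substitutions $\gamma_i$ as in \refDef{def:Ccomputable}, and show that a tail uses only pairs of $\P_2$; that tail is again an infinite $\AlterRules$-computable formative $(\P_2,\Rules)$-chain, contradicting finiteness of the single problem returned by the processor.

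To build the descent I would apply $\project$ componentwise. Writing $a_i := \project(s_i) = \project(\ell_i)\gamma_i$ and $b_i := \project(t_i) = \project(p_i)\gamma_i$, clause (\ref{depchain:reduce}) of \refDef{def:chain} gives $b_i \arrr{\Rules} a_{i+1}$, since $\project$ merely selects one argument below a shared head. The first step is to check that the projected terms are $C_\AlterRules$-computable: each $\project(p_i)$ is a $\beta$-reduced-sub-meta-term of $p_i$, so \refDef{def:Ccomputable} makes $b_i$ computable (free variables of right-hand sides are computable neutral terms), and closure of the RC-set $C_\AlterRules$ under $\arr{\AlterRules}\supseteq\arr{\Rules}$ then yields computability of each $a_{i+1}$. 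The second step is the key lifting lemma: for $\rho_i\in\P_1$ I would show $a_i\,(\arr{\AlterRules}\cup\accreduce{C_\AlterRules})^+\,b_i$, while for $\rho_i\in\P_2$ the hypothesis $\project(\ell_i)=\project(p_i)$ gives $a_i=b_i$. Concatenating, the projected chain becomes an infinite $(\arr{\AlterRules}\cup\accreduce{C_\AlterRules})$-reduction from a $C_\AlterRules$-computable base-type term, with a strict step for every occurrence of a $\P_1$ pair. By \refThm{thm:defC} such a term is terminating under $\arr{\AlterRules}\cup\accreduce{C_\AlterRules}$, so only finitely many strict steps occur; hence $\P_1$ is used finitely often and a tail lies in $\P_2$, as required (the $m$ and $f$ flags being preserved along the tail).

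The main obstacle is the lifting lemma, i.e.\ turning the syntactic relation $\sqsupset$ on the projected meta-terms into genuine $\accreduce{C_\AlterRules}$-steps on the instantiated terms. For clause (a), where $\project(\ell_i)\gracc\project(p_i)$ with the two sides distinct, I would prove by induction on the derivation of $\gracc$ that any $C_\AlterRules$-computable $s$ with $s\gracc t$, $t$ of base type and $s\neq t$, satisfies $s\,(\accreduce{C_\AlterRules})^+\,t$: at a function-symbol head (which, under AFP, is applied to exactly its maximal arity) one takes an $\accreduce{C_\AlterRules}$-step into the accessible argument, applying it to fresh $C_\AlterRules$-computable terms to reach base type, and recurses, noting that the added arguments leave accessibility (which is purely type-driven, \refDef{def:accArgs}) intact. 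Clause (b) is exactly what $\accreduce{C_\AlterRules}$ was designed for: if $\project(\ell_i)\gracc\meta{Z}{x_1,\dots,x_\mia}$ and $\project(p_i)=\apps{\meta{Z}{t_1,\dots,t_\mac}}{s_1}{s_n}$, then $\gamma_i(Z)$ occupies an accessible position of the computable term $a_i$ and is therefore computable, the instantiated arguments are computable by the first step, and a single $\accreduce{C_\AlterRules}$-step drops from the enclosing symbol onto $\gamma_i(Z)$ applied to those computable arguments, realising $b_i$. Care is needed with quasi-stability of $\sqsupset$ under $\gamma_i$ and with heads that are bound variables rather than function symbols; these are tamed by the facts that $\gracc$ descends only through $\F\cup\V$ and that the $\metafy$ operation of \refDef{def:sdp} converts the free variables of candidates into meta-variables, routing precisely those configurations through clause (b). The requirement $s \neq t$ in $\sqsupset$ guarantees that each $\P_1$ step contributes at least one strict $\accreduce{C_\AlterRules}$-step, which is what makes the counting argument succeed.
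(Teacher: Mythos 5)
Your overall strategy coincides with the paper's: completeness by the subset argument, and soundness by projecting an $\AlterRules$-computable chain through $\project$, showing that every $\P_1$-pair contributes a strict $(\accreduce{C_\AlterRules} \cup \arr{\beta})$-step and every $\P_2$-pair an equality, while the reductions $t_i \arrr{\Rules} s_{i+1}$ stay inside $\arr{\AlterRules}$, so that termination of $\arr{\AlterRules} \cup \accreduce{C_\AlterRules}$ on $C_\AlterRules$-computable terms (\refThm{thm:defC}) forces a tail of the chain into $\P_2$. The use of \refDef{def:Ccomputable} to get computability of the projected terms, and closure of computability under reduction, is also exactly how the paper argues.

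There is, however, a genuine gap in your lifting lemma for clause (a). Your induction treats only function-symbol heads, but $\gracc$ also descends through abstractions ($\abs{x}{s'} \gracc t$ whenever $s' \gracc t$) and through \emph{applied bound variables} ($\apps{x}{s_1}{s_n} \gracc t$ whenever $s_i \gracc t$ for some $i \in \Acc(x)$, and $\Acc(x)$ is non-empty for variables of suitable type by \refDef{def:accArgs}). So the $\gracc$-path from $\project(\ell)$ to $\project(p)$ may go under a binder and then through the bound variable as head, where no $\accreduce{C_\AlterRules}$-step is available. Your proposed escape --- that $\metafy$ routes these configurations through clause (b) --- does not work: $\metafy$ belongs to the definition of $\SDP(\Rules)$, whereas this processor applies to an \emph{arbitrary} non-collapsing $\P$ carrying the $\static_\AlterRules$ flag (for the same reason your parenthetical appeal to AFP is out of place), and in any case the problematic occurrences are \emph{bound} variables inside $\project(\ell)$, not the free variables of candidates that $\metafy$ replaces. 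What is actually needed is to instantiate such a bound variable, when its abstraction is $\beta$-reduced away, with a specially constructed computable term of the shape $\abs{y_1 \dots y_j z_1 \dots z_\maa}{\symb{c}\ (\apps{y_i}{w_1}{w_{n'}})}$ for a fresh constructor $\symb{c}$ with $\Acc(\symb{c}) = \{1\}$, so the descent can continue with an $\accreduce{C_\AlterRules}$-step through $\symb{c}$; this is precisely the content of \refLemma{lem:preservecompacchelper}, which the paper proves once in the computability appendix and then simply invokes in both cases of the processor proof. Without that lemma (or an equivalent construction) your induction stalls. Two smaller inaccuracies: in clause (b), reaching $\gamma_i(Z)$ generally takes a sequence of $\accreduce{C_\AlterRules}$- and $\beta$-steps rather than a single step, and one still has to do the $\approxp$-bookkeeping when $Z$ is applied to more than $\arity(Z)$ arguments, as the paper does explicitly.
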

}
\staticSubtermCriterionTheProc

\begin{proof}[Proof sketch]
If the given conditions are satisfied, every infinite $(\P,
\Rules)$-chain induces an infinite $(\accreduce{C_\AlterRules} \cup
\arr{\beta})^* \cdot \arrr{\Rules}$ sequence (where $C_\AlterRules$
is a computability set for $\arr{\AlterRules}$ following
\refThm{thm:defC}) that starts in an immediate subterm of $t_1$,
contradicting computability unless all but finitely many steps are
equality, while every occurrence of a pair in $\P_1$ results in a
strict $(\accreduce{C} \cup \arr{\beta})^+$ step.
\end{proof}

\begin{example}\label{ex:ordrecdone}
By
 \refEx{ex:ordrecstatic}, the AFSM $(\F,\Rules)$ from
\refEx{ex:ordrec} is terminating if $(\P,\Rules,\static_\Rules,
\formative)$ is finite, where $\P$ is:
\[
\begin{array}{rcll}
\symb{rec}^\sharp\ (\suc\ X)\ K\ F\ G & \arrdp & \symb{rec}^\sharp\ 
  X\ K\ F\ G\ (\emptyset) \\
\symb{rec}^\sharp\ (\symb{lim}\ H)\ K\ F\ G & \arrdp &
  \symb{rec}^\sharp\ (H\ M)\ K\ F\ G\ (\emptyset) \\
\end{array}
\]
Consider the projection function $\nu$ with $\nu(\symb{rec}^\sharp) =
1$.
As $\suc\ X \gracc X$ and $\symb{lim}\ H \gracc H$, we have
both $\suc\ X \sqsupset X$ and $\symb{lim}\ H \sqsupset H\ M$.
Thus $\Proc_{\mathtt{statc}}(\P,\Rules,\static_\Rules,\formative) =
\{ (\emptyset,\Rules,\linebreak
\static_\Rules,\formative) \}$.
We obtain termination of $\Rules$ by using the dependency graph
processor on the remaining DP problem.
\end{example}


The static subterm
\ext{criterion} fundamentally relies on the new
$\static_\Rules$ flag, so it has no counterpart in the literature
so far.

\subsection{Non-termination}

While (most of) the processors
presented thus far are complete,
none of them can actually return \no.
We have not yet implemented any dedicated automation to this end;
however, we can already
give a general specification of such a
non-termination processor.

\edef\procInfinite{\number\value{theorem}}
\edef\procInfiniteSec{\number\value{section}}
\newcommand{\infiniteTheProc}{
\begin{theorem}[Non-termination processor]\label{def:nontermproc}
Let $M = (\P,\Rules,\linebreak m,f)$ be a DP problem.
The processor $\Proc_{\mathtt{infinite}}$ that
maps $M$ to
\no\ if it
determines that a sufficient criterion for non-termination of $\Rules$
or for existence of an infinite
$(\P,\Rules)$-dependency chain according to the flags
$m$ and $f$ holds is sound and complete.
\end{theorem}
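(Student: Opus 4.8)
The plan is to observe that this processor carries essentially no mathematical content beyond the definitions: it is the identity wrapper that turns any externally supplied \emph{sufficient} criterion into a sound and complete processor. By the convention for presenting processors, $\Proc_{\mathtt{infinite}}(M)$ equals $\no$ exactly when the criterion fires, and equals $\{M\}$ otherwise. Both soundness and completeness will follow by unfolding \refDef{def:dpproblem} and \refDef{def:proc}.

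First I would dispatch the case $\Proc_{\mathtt{infinite}}(M) = \{M\}$, where both claims are immediate. For soundness, the hypothesis that every element of $\{M\}$ is finite says precisely that $M$ is finite. For completeness, the hypothesis that $\{M\}$ contains an infinite element says precisely that $M$ is infinite. So nothing is needed here.

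The only case with any substance is $\Proc_{\mathtt{infinite}}(M) = \no$. Soundness is then vacuous, since it only constrains behaviour when $\Proc_{\mathtt{infinite}}(M) \neq \no$. For completeness I would unfold what ``sufficient criterion'' guarantees: because a sufficient criterion implies the property it certifies, when the processor returns $\no$ we know that either $\Rules$ is non-terminating, or there exists an infinite $(\P,\Rules)$-chain that is $\AlterRules$-computable when $m = \static_\AlterRules$, minimal when $m = \minimal$, and formative when $f = \formative$. In the first subcase, $M$ is infinite directly by \refDef{def:dpproblem}, which declares a problem infinite whenever $\Rules$ is non-terminating. In the second subcase, the exhibited chain matches exactly the flag-dependent shape whose nonexistence defines a \emph{finite} problem; hence $M$ is not finite and is therefore infinite.

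Rather than a genuine obstacle, the one point to get right is the literal alignment between the phrase ``an infinite $(\P,\Rules)$-dependency chain according to the flags $m$ and $f$'' in the processor and the flag-indexed chain restrictions baked into the definition of finiteness in \refDef{def:dpproblem}. Once I state explicitly that ``according to the flags'' is shorthand for exactly those three conditions on $m$ and $f$, both directions close with no further computation.
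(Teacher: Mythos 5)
Your proposal is correct and takes the same route as the paper: the paper's entire proof is the single remark that soundness and completeness are direct consequences of \refDef{def:dpproblem} and \refDef{def:proc}, and your case analysis (trivial when the processor returns $\{\adpprob\}$, vacuous soundness and definition-unfolding completeness when it returns \no) is exactly the detailed unfolding of that remark.
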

}
\infiniteTheProc


\begin{example}\label{ex:nontermproc}
The AFSM from \refEx{ex:lambdadynamic} is non-terminating
if $(\P,\Rules,\minimal,\formative)$ is infinite, where $\P :=
\DDP(\Rules) = \{ \symb{ap}\ (\symb{lm}\ F)\ X \arrdp F\ X\ 
(\emptyset) \}$.  Write $s \arrdp_{\P} t$ if there exist a pair
$\ell \arrdp p\ (A) \in \P$ and a substitution $\gamma$ respecting
$A$ such that $s = \ell\gamma$ and $t = p\gamma$. Then the existence of
a loop $s_1\ (\arrdp_\P \cup \arr{\Rules}) \dots (\arrdp_\P \cup
\arr{\Rules}) \ s_n\ (\arrdp_\P \cup \arr{\Rules})\ s_1$ such that the
strict subterms of each $s_i$ are not instances of any rule
and such that every step is formative
is
certainly a sufficient criterion.
Thus, since such a sequence is given in \refEx{ex:lambdadynamic},
$\Proc_{\mathtt{infinite}}(\DDP(\Rules),\Rules,\minimal,\formative)
= \no$.
\end{example}


Aside from dedicated higher-order criteria (as used in \refEx{ex:lambdadynamic}),
we can also borrow non-termination criteria from first-order rewriting
\cite{gie:thi:sch:05:1,pay:08,emm:eng:gie:12}, with minor adaptions to
the typed setting.

\section{Conclusions and Future Work}\label{sec:conclusions}

We have presented 
a full integration of static and dynamic
higher-order dependency pairs in a unified DP framework
for termination and non-termination.
Our formulation is based on AFSMs, which makes it applicable to
various higher-order rewriting formalisms.

This framework considers not only arbitrary and minimally\linebreak non-terminating
dependency chains, but also minimally non-com-\linebreak{}putable 
chains.
This
lets us
efficiently
handle rules like \refEx{ex:ordrec} even
as part of a larger system.
We
use
formative reductions to isolate the
use of ``tags'' in \cite{kop:raa:12} to a 
processor and presented 
several new processors,
many of
which have no counterpart in 
the
literature.

To provide a strong formal groundwork for the higher-order
DP framework, we have presented many of our processors in a general
way, using semantic definitions of, e.g.,
the dependency
graph approximation as well as formative and usable rules rather than
syntactic definitions using functions like
$\mathit{TCap}$ \cite{gie:thi:sch:05:1}.
Even so,
most parts of the DP framework for AFSMs have been
implemented in the open-source termination prover
\wanda~\cite{wanda},
which has won the higher-order category of the
International Termination Competition~\cite{termcomp}
(analysing AFSs) four times.
In the International Confluence Competition~\cite{coco},
the 
tools \acph~\cite{acph16} and \csiho~\cite{csiho16}
also 
use \wanda\ as their ``oracle'' of choice for
termination proofs on HRSs.
This highlights the versatility of AFSMs and the
amena\-bility of the DP framework as a theoretical foundation
for an automatic termination analysis tool for higher-order rewriting.

\paragraph{Future work.}
DP frameworks for first-order rewriting exist also
specialised to
particular rewrite strategies, such as
innermost \cite{gie:thi:sch:fal:06} and
context-sensitive \cite{ala:gut:luc:10} rewriting. In future work,
we plan to extend the 
higher-order DP
framework to
rewrite
strategies, most importantly
implicit $\beta$-normalisation,
to have a complete analysis also for pattern HRSs of third order
and above.  Strategies inspired by the evaluation strategies of
functional programming languages like OCaml or Haskell are
natural additions as well.
A different direction for extensions would be to reduce the number of
term constraints
solved by the reduction triple
processor via a tighter integration with usable and formative rules
with respect to argument filterings, and
dedicated automation for detecting non-termination.

\bibliography{references}

\newpage
\appendix
\edef\savedcounter{\number0}
\edef\savedcounterSec{\number0}
%
\newcommand{\startappendixcounters}{
\setcounter{theorem}{\savedcounter}
\setcounter{section}{\savedcounterSec}
\renewcommand{\thetheorem}{\Alph{section}.\arabic{theorem}}
}
\newcommand{\oldcounter}[2]{
\edef\savedcounter{\number\value{theorem}}
\edef\savedcounterSec{\number\value{section}}
\setcounter{theorem}{#1}
\setcounter{section}{#2}
\renewcommand{\thetheorem}{\arabic{section}.\arabic{theorem}}
}
\startappendixcounters

\newcommand{\comprel}{\sqsupset}


\section{Arity functions}\label{app:arity}

In this appendix, we will prove the claim made in the text that if
$\Rules$ respects a certain arity function $\arity$ -- that is, if for
all $\ell \arrz r \in \Rules$ both $\ell$ and $r$ respect $\arity$ --
then for the sake of termination we only need to consider terms that
respect $\arity$.

To this end, we will use the following transformation:

\newcommand{\increasear}{\mathit{inc}_\arity}
\begin{definition}
Given an arity function $\arity$ and term $s$, let $\increasear(s)$
be given by:
\[
\begin{array}{rcl}
\increasear(\apps{x}{t_1}{t_n}) & \!\!\!\!=\!\!\!\! &
  \apps{x}{\increasear(t_1)}{\increasear(t_n)} \hfill
  (\text{if}\ x \in \V) \\
\increasear(\apps{\identifier{f}}{t_1}{t_n}) & \!\!\!\!=\!\!\!\! &
  \abs{x_{n+1}\dots x_\mia}{\apps{
  \identifier{f}}{\increasear(t_1)}{\increasear(t_n)}}\phantom{A} \\
  & & x_{n+1} \cdots x_\mia
  \hfill (\text{if}\ \identifier{f} \in \F\ \text{and}\ 
  \mia = \arity(\identifier{f})) \\
\!\!\increasear(\apps{(\abs{x}{s})}{t_1}{t_n}) & \!\!\!\!=\!\!\!\! &
  \apps{(\abs{x}{\increasear(s)})}{\increasear(t_1)}{\increasear(t_n)} \\
\end{array}
\]
Here, $\abs{x_{n+1}\dots x_\mia}{t}$ should be read as just $t$ if
$n \geq \mia$.
\end{definition}

To prove results on $\increasear$, the following observation is
useful:

\begin{lemma}\label{lem:increasearapply}
$\apps{\increasear(s)}{\increasear(t_1)}{\increasear(t_n)}
\arrr{\beta} \increasear(\apps{s}{t_1}{t_n})$ for all $s,\vec{t}$.
\end{lemma}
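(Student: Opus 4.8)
The plan is to prove the statement by a direct case analysis on the head $a = \head(s)$, with no induction required. Every term has a canonical form $s = \apps{a}{u_1}{u_p}$ for some $p \geq 0$ with $a = \head(s)$ a variable, function symbol, or abstraction, and then $\apps{s}{t_1}{t_n} = \apps{a}{u_1}{u_p}\ t_1 \cdots t_n$. On both sides of the claimed reduction the subterms $\increasear(u_j)$ and $\increasear(t_i)$ are merely carried along unchanged, so the only content lies in how $\increasear$ treats the head $a$, and each defining clause of $\increasear$ corresponds to exactly one case.

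If $a = x \in \V$ or $a = \abs{y}{s'}$ is an abstraction, then the first respectively third clause of the definition shows that $\apps{\increasear(s)}{\increasear(t_1)}{\increasear(t_n)}$ and $\increasear(\apps{s}{t_1}{t_n})$ are \emph{syntactically identical}: in both cases $\increasear$ distributes over application without introducing or removing any abstraction, so both terms equal $\apps{a'}{\increasear(u_1)}\cdots \increasear(u_p)\ \increasear(t_1) \cdots \increasear(t_n)$, where $a' = x$ respectively $a' = \abs{y}{\increasear(s')}$. The reduction then holds in zero steps.

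The only case with actual work is $a = \afun \in \F$; write $q = \arity(\afun)$. If $p \geq q$ then $\increasear(s) = \apps{\afun}{\increasear(u_1)}{\increasear(u_p)}$ carries no fresh abstractions, and since $p + n \geq q$ the right-hand side $\increasear(\apps{s}{t_1}{t_n})$ carries none either, so both sides coincide and the reduction again takes zero steps. If $p < q$, then $\increasear(s)$ is the $\eta$-expansion $\abs{x_{p+1} \dots x_q}{\apps{\afun}{\increasear(u_1)}{\increasear(u_p)}\ x_{p+1} \cdots x_q}$, where, working modulo $\alpha$, the bound variables are chosen fresh so as not to occur in any $\increasear(t_i)$. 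Applying $\increasear(t_1), \dots, \increasear(t_n)$ then performs $r := \min(n, q - p)$ head $\beta$-steps, each substituting $\increasear(t_i)$ for $x_{p+i}$. I would then split on which of $n$ and $q - p$ is smaller: if $n \leq q - p$, the result is the partially $\eta$-expanded term with $q - (p+n)$ abstractions remaining, which is exactly $\increasear(\apps{\afun}{u_1 \cdots u_p\ t_1 \cdots t_n})$ since $p + n \leq q$; if $n > q - p$, all $q - p$ abstractions are consumed and the remaining $n - (q-p)$ arguments are appended without further redexes, yielding $\apps{\afun}{\increasear(u_1)}\cdots \increasear(u_p)\ \increasear(t_1) \cdots \increasear(t_n)$, which matches the right-hand side since now $p + n > q$ and no abstractions are added.

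The main obstacle is thus purely the bookkeeping in this last case: one must keep the number $q - p$ of $\eta$-expansion abstractions aligned with the number $n$ of supplied arguments across the three arithmetic regimes ($p \geq q$; $p < q$ with $n \leq q - p$; and $p < q$ with $n > q - p$), and one must invoke $\alpha$-conversion to guarantee that the head $\beta$-steps do not capture free variables of the $\increasear(t_i)$. Once these points are handled, each regime reduces to a direct comparison of the two sides after unfolding $\increasear$, and the claimed $\arrr{\beta}$ reduction follows in $r$ steps.
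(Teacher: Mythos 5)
Your proposal is correct and follows essentially the same route as the paper's own proof: a direct case analysis (no induction) in which every shape of $s$ other than a function symbol applied to fewer arguments than its arity gives syntactic equality in zero steps, and the remaining case unfolds the $\eta$-expansion and performs the head $\beta$-steps. Your step count $\min(n, q-p)$ matches the paper's ``$n$ steps if $n+i \leq \mia$, else $\mia - i$ steps'' exactly, and your explicit attention to freshness of the bound variables is a harmless refinement of what the paper leaves implicit.
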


\begin{proof}
Consider the form of $s$.  If $s$ has any form other than
$\apps{\identifier{f}}{u_1}{u_i}$ with $i < \arity(\identifier{f})$,
we immediately obtain
$\increasear(\apps{s}{t_1}{t_n}) =
\apps{\increasear(s)}{\increasear(t_1)}{\increasear(t_n)}$.  If $s$
does have this form,
on the one hand
$\increasear(s) = \abs{x_{i+1} \dots x_\mia}{
\apps{\apps{\identifier{f}}{\increasear(u_1)}{\increasear(u_i)}}{
x_{i+1}}{x_\mia}}$, where $\mia := \arity(\identifier{f})$, and
on the other hand
$\increasear(\apps{s}{t_1}{t_n}) =\linebreak \abs{x_{i+n+1}
\ldots x_\mia}{
\apps{\apps{\apps{\identifier{f}}{\increasear(u_1)}{\increasear(u_i)}}{
\increasear(t_1)}{\increasear(t_n)}}{\linebreak
x_{i+n+1}}{x_\mia}}$.  It is clear
that $\apps{\increasear(s)}{\increasear(t_1)}{\increasear(t_n)}$
reduces in either $n$ steps (if $n+i \leq \mia$) or $\mia-i$ steps (if
$n + i > \mia$) to $\increasear(\apps{s}{t_1}{t_n})$.
\end{proof}

Having \refLemma{lem:increasearapply} as an aid, we easily see that
$\increasear$ is well-behaved for substitution, both on terms and
meta-terms.  In the following lemmas, $\gamma$ is a substitution and
$\gamma^{\increasear}$ is the substitution on domain $\domain(\gamma)$
that maps each (meta-)variable $x$ to $\increasear(\gamma(x))$.

\begin{lemma}\label{lem:increasearterm}
If $s$ is a term, then $\increasear(s)\gamma^{\increasear} \arrr{\beta}
\increasear(s\gamma)$.
\end{lemma}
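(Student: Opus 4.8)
The plan is to prove the claim by structural induction on the term $s$, decomposing it according to its head. Every term has a unique form $\apps{h}{t_1}{t_n}$ with $n \geq 0$ and $h = \head(s)$ a variable, a function symbol, or an abstraction, and the definition of $\increasear$ splits into exactly these three cases, so this matches the shape of the induction perfectly. Throughout I would choose the bound variables introduced by $\increasear$ (the fresh $x_{n+1},\dots,x_\maa$ in the function-symbol clause, and the binder in the abstraction clause) to avoid $\domain(\gamma^{\increasear})$ and the free variables of its range; this is legitimate since we work modulo $\alpha$ and assume infinitely many fresh variables of every type.

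The two routine cases are the function-symbol head and the abstraction head. If $s = \apps{\afun}{t_1}{t_n}$, then $\increasear(s)$ is the $\eta$-expansion $\abs{x_{n+1}\dots x_\maa}{\apps{\afun}{\increasear(t_1)}{\increasear(t_n)}\ x_{n+1}\cdots x_\maa}$; since $\afun$ is unaffected by $\gamma$ and the fresh binders are unaffected by $\gamma^{\increasear}$, applying $\gamma^{\increasear}$ only touches the arguments, and the induction hypothesis $\increasear(t_i)\gamma^{\increasear} \arrr{\beta} \increasear(t_i\gamma)$ rewrites them in place, yielding exactly $\increasear(s\gamma)$ (monotonicity of $\arrr{\beta}$ lets us reduce under the binders and in each argument). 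The abstraction head $s = \apps{(\abs{x}{u})}{t_1}{t_n}$ is analogous: $\increasear$ commutes with the application of $\gamma^{\increasear}$ up to rewriting the body $u$ and the arguments $t_i$ by the induction hypothesis, and the resulting term is syntactically $\increasear(s\gamma)$.

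The delicate case, and the one I expect to be the crux, is the variable head $s = \apps{x}{t_1}{t_n}$ when $x \in \domain(\gamma)$. Here $\increasear(s)\gamma^{\increasear} = \apps{\increasear(\gamma(x))}{(\increasear(t_1)\gamma^{\increasear})}{(\increasear(t_n)\gamma^{\increasear})}$, and after using the induction hypothesis on the arguments we arrive at $\apps{\increasear(\gamma(x))}{\increasear(t_1\gamma)}{\increasear(t_n\gamma)}$. The difficulty is that $\gamma(x)$ has been $\increasear$-transformed in isolation -- in particular, if $\gamma(x)$ is a partially applied function symbol it has been $\eta$-expanded -- whereas in $s\gamma = \apps{\gamma(x)}{t_1\gamma}{t_n\gamma}$ the head and arguments sit together and are transformed jointly. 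Bridging this gap is precisely what \refLemma{lem:increasearapply} was set up for: instantiating that lemma with $\gamma(x)$ as the head and the $t_i\gamma$ as the arguments gives $\apps{\increasear(\gamma(x))}{\increasear(t_1\gamma)}{\increasear(t_n\gamma)} \arrr{\beta} \increasear(\apps{\gamma(x)}{t_1\gamma}{t_n\gamma}) = \increasear(s\gamma)$, closing the case. The subcase $x \notin \domain(\gamma)$ is immediate, since then $\gamma^{\increasear}$ leaves $x$ fixed and only the arguments are rewritten by the induction hypothesis.
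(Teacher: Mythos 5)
Your proof is correct and follows essentially the same route as the paper's: structural induction on the head form of $s$, with the function-symbol and abstraction cases closed by the induction hypothesis and monotonicity, and the variable-head case closed by invoking \refLemma{lem:increasearapply} on $\gamma(x)$ applied to the arguments $t_i\gamma$. The paper handles the variable case slightly more uniformly (writing $\gamma^{\increasear}(x)$, which covers $x \notin \domain(\gamma)$ without a separate subcase), but this is a cosmetic difference only.
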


\begin{proof}
By a straightforward induction on the form of $s$; the only case where
$\arrr{\beta}$ is explicitly needed is when $s = \apps{x}{t_1}{t_n}$
and $\gamma(x) = \apps{\identifier{f}}{u_1}{u_i}$ with $i < \arity(
\identifier{f})$, in which case we apply \refLemma{lem:increasearapply}.
We will set out each of the cases in detail.

If $s = \apps{\identifier{f}}{t_1}{t_n}$, then $\increasear(s) =
(\abs{x_{n+1}\dots x_\mia}{\apps{\apps{\identifier{f}}{\increasear(t_1)}{
\linebreak
\increasear(t_n)}}{x_{n+1}}{x_\mia}})$ for $\mia := \arity(\identifier{f})$, 
so $\increasear(s)\gamma^{\increasear} =
\abs{x_{n+1}\dots \linebreak
x_\mia}{\apps{\apps{\identifier{f}}{
(\increasear(t_1)\gamma^{\increasear})}{
(\increasear(t_n)\gamma^{\increasear})}}{x_{n+1}}{x_\mia}}$,
which by the induction hypothesis $\arrr{\beta}
\abs{x_{n+1} \dots x_\mia}{\apps{\apps{\identifier{f}}{
\increasear(t_1\gamma)}{\increasear(t_n\gamma)}}}{\linebreak
x_{n+1}}{x_\mia} =
\increasear(\apps{\identifier{f}}{(t_1\gamma)}{(t_n\gamma)}) =
\increasear(s\gamma)$.

If $s = \apps{(\abs{x}{t_0})}{t_1}{t_n}$ we also easily complete by
induction hypothesis:
$\increasear(s)\gamma^{\increasear} =
\apps{(\abs{x}{(\increasear(t_0)\gamma^{\increasear})})}{
(\increasear(t_1)\gamma^{\increasear})}{\linebreak
(\increasear(t_n)\gamma^{\increasear})}$, which by the induction
hypothesis reduces to $\apps{(\abs{x}{\linebreak
\increasear(t_0\gamma)})}{
\increasear(t_1\gamma)}{\increasear(t_n\gamma)} =
\increasear(s\gamma)$.

If $s = \apps{x}{t_1}{t_n}$ then 
$\increasear(s)\gamma^{\increasear} =
\apps{\gamma^{\increasear}(x)}{(\increasear(t_1)\gamma^{\increasear})
\linebreak
}{(\increasear(t_n)\gamma^{\increasear})}$,
which by the induction hypothesis reduces to
$\apps{\gamma^{\increasear}(x)}{\increasear(t_1\gamma)}{
\increasear(t_n\gamma)}$, and by
\refLemma{lem:increasearapply} this reduces to
$\increasear(\apps{\gamma(x)}{(t_1\gamma)}{(t_n\gamma)}) =
\increasear(s\gamma)$.
\end{proof}

\begin{lemma}\label{lem:increasearmetasubst}
If $\gamma(Z) \approxp \abs{x_1 \dots x_\mac}{u}$, then also
$\gamma^{\increasear}(Z) \approxp
\linebreak
\abs{x_1 \dots
x_\mac}{\increasear(u)}$.
\end{lemma}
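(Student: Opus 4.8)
The plan is to unfold the definition $\gamma^{\increasear}(Z) = \increasear(\gamma(Z))$ and reason directly about the syntactic shape of $\increasear(\gamma(Z))$. The starting observation is that the third clause of the definition of $\increasear$, instantiated with $n = 0$, gives $\increasear(\abs{x}{s}) = \abs{x}{\increasear(s)}$, so $\increasear$ preserves every leading $\lambda$-abstraction. Recalling that $\gamma(Z)$ has at least $\mia$ leading binders, I would first dispose of the easy alternative of $\approxp$, where $\gamma(Z) = \abs{x_1 \dots x_\mac}{u}$ already exposes $\mac$ binders: here $\increasear(\gamma(Z)) = \abs{x_1 \dots x_\mac}{\increasear(u)}$, which is exactly the first alternative of $\approxp$ for $\gamma^{\increasear}(Z)$.

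The interesting case is $\gamma(Z) = \abs{x_1 \dots x_i}{t}$ with $\mia \leq i < \mac$, where $t$ is not an abstraction and $u = \apps{t}{x_{i+1}}{x_\mac}$; then $\increasear(\gamma(Z)) = \abs{x_1 \dots x_i}{\increasear(t)}$. I would split on whether $\increasear(t)$ is again an abstraction. By the definition of $\increasear$ this happens precisely when $t = \apps{\afun}{s_1}{s_n}$ with $\afun \in \F$ and $n < \arity(\afun)$. If it does \emph{not} happen (e.g.\ $\head(t) \in \V$, or $\head(t) \in \F$ applied to at least $\arity$ arguments), then $\increasear$ distributes over the outer application and, using $\increasear(x_j) = x_j$, we get $\increasear(u) = \apps{\increasear(t)}{x_{i+1}}{x_\mac}$ with $\increasear(t)$ not an abstraction; the second alternative of $\approxp$ then applies verbatim with the same index $i$. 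The distributivity facts here rest on \refLemma{lem:increasearapply}.

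The main obstacle is the remaining subcase, where $\increasear$ $\eta$-expands an under-applied function symbol and thereby introduces $b := \arity(\afun) - n$ fresh binders, so that $\increasear(\gamma(Z))$ exposes $i + b$ leading $\lambda$'s rather than $i$. Here I would do the bookkeeping explicitly: a typing argument shows $\arity(\afun) \leq n + (\maa - i)$, hence $i + b \leq \maa$, so the enlarged binder count still lies within the arity bounds of $Z$. Writing out $\increasear(u) = \increasear(\apps{\afun}{s_1}{s_n}\ x_{i+1} \cdots x_\mac)$ and $\alpha$-renaming the fresh binders, I would compare the two sides after splitting on whether $\mac \geq i + b$ or $\mac < i + b$: in the former I recover the second alternative of $\approxp$ with index $i + b$ (or the first alternative when $\mac = i + b$), while in the latter the two $\eta$-expanded abstractions coincide outright, yielding the first alternative. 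Throughout, the uniqueness of the $\approxp$-image noted right after the definition of meta-substitution guarantees that these computations are forced rather than merely consistent, so verifying that $\increasear(u)$ is the correct body suffices to conclude.
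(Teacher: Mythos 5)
Your proposal is correct and follows essentially the same route as the paper's proof: the same easy case, the same split on whether the body's head is an under-applied function symbol, and in the $\eta$-expanding subcase the same final split ($\mac \geq i+b$ versus $\mac < i+b$ is exactly the paper's $\mia - n \leq \mac - i$ versus $\mia - n > \mac - i$), resolved by the same $\alpha$-renaming bookkeeping. The only nitpick is that the distributivity equalities you attribute to \refLemma{lem:increasearapply} are really immediate from the definition of $\increasear$ (that lemma only asserts a $\beta$-reduction), which is how the paper computes them.
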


\begin{proof}
If $\gamma(Z) = \abs{x_1 \dots x_\mac}{u}$ then clearly
$\gamma^{\increasear}(Z) = \abs{x_1 \dots
\linebreak
x_\mac}{\increasear(u)}$, so
also $\gamma(Z) \approxp \abs{x_1 \dots x_\mac}{\increasear(u)}$.

Otherwise, $\gamma(Z) = \abs{x_1 \dots x_i}{q}$ with $i < \mac$ and $u =
\apps{q}{x_{i+1}}{x_\mac}$, and $q$ not an abstraction.  Write $q =
\apps{q'}{q_1}{q_n}$ with $q'$ not an application (so either $q'$ is
an abstraction and $n > 0$, or $q' \in \V \cup \F$).

If $q' \notin \F$, then $\increasear(q) = \apps{\increasear(q')}{
\increasear(q_1)}{\increasear(q_n)}$ and $\increasear(\apps{q}{x_{i+1}}{
x_\mac}) = \apps{\apps{\increasear(q')}{\increasear(q_1)}{\increasear(q_n)}
}{x_{i+1}}{x_\mac}$.  Also if $q' \in \F$ but $i \geq \arity(q')$ we have
$\increasear(q) = \apps{q'}{\increasear(q_1)}{\linebreak
\increasear(q_n)}$ and
$\increasear(\apps{q}{x_{i+1}}{x_\mac}) = \apps{\apps{q'}{\increasear(q_1)
}{\increasear(q_n)}}{x_{i+1}\linebreak
}{x_\mac}$.
Either way, $\gamma^{\increasear}(Z) \approxp \abs{x_1 \dots x_\mac}{
\apps{\increasear(q)}{x_{i+1}}{x_\mac}} = \abs{x_1 \dots x_\mac}{
\increasear(u)}$.

If $q' = \identifier{f} \in \F$ and $n < \arity(\identifier{f}) =: \mia$,
then $\increasear(q) = \abs{y_{n+1} \dots y_\mia}{\apps{\apps{
\identifier{f}}{\linebreak
\increasear(q_1)}{\increasear(q_n)}}{y_{n+1}}{y_\mia}}$,
Using $\alpha$-conversion (and introducing new variables
$x_{\mac+1}\dots x_{i+\mia-n}$ if $\mia-n > \mac-i$) this is equal to
$\abs{x_{i+1} \dots x_{i+\mia-n}}{\apps{\apps{\identifier{f}}{
\increasear(q_1)}{\increasear(q_n)}}{x_{i+1}}{x_{i+\mia-n}}}$.  Thus
we have:
$\gamma^{\increasear}(Z) = \abs{x_1 \dots x_{i+\mia-n}}{\apps{\apps{
\identifier{f}}{\increasear(q_1)}{\increasear(q_n)}}{x_{i+1}}{\linebreak
x_{i+\mia-n}}}$.  There are two possibilities:
\begin{itemize}
\item $\mia - n \leq \mac - i$: then $i+\mia-n \leq \mac$ and
  $\gamma^{\increasear}(Z) \approxp \abs{x_1 \dots x_\mac}{\linebreak
  \apps{\apps{\identifier{f}}{\increasear(q_1)}{\increasear(q_n)}}{
  x_{i+1}}{x_{i+\mia-n}} \cdots x_\mac} = \abs{x_1 \dots x_\mac}{\linebreak
  \increasear(u)}$ because in this case
  $\increasear(u) = \increasear(\apps{\apps{\identifier{f}}{q_1}{q_n}}{
  \linebreak
  x_{i+1}}{x_\mac}) = \apps{\apps{\identifier{f}}{\increasear(q_1)}{
  \increasear(q_n)}}{x_{i+1}}{x_\mac}$.
\item $\mia - n > \mac - i$: then $\gamma^{\increasear}(Z) \approxp
  \gamma^{\increasear}(Z) = \abs{x_1 \dots x_\mac}{\abs{x_{\mac+1}\linebreak
  \dots x_{i+\mia-n}}{\apps{\apps{\identifier{f}}{\increasear(q_1)}{
  \increasear(q_n)}}{x_{i+1}}{x_\mac} \cdots x_{i+\mia-n}}} = \abs{x_1
  \dots x_\mac}{\increasear(u)}$ because in this case $\increasear(u) =
  \increasear(\apps{\apps{\identifier{f}}{q_1\linebreak
  }{q_n}}{x_{i+1}}{x_\mac}) =
  \abs{x_{\mac+1} \dots x_{i+a-n}}{\apps{\apps{\identifier{f}}{
  \increasear(q_1)}{\linebreak
  \increasear(q_n)}}{x_{i+1}}{x_{i+\mia-n}}}$.
\qedhere
\end{itemize}
\end{proof}

\begin{lemma}\label{lem:increasearmetaterm}
If $s$ is a meta-term that respects $\arity$ and whose domain includes
all meta-variables in $s$, then $s\gamma^{\increasear} \arrr{\beta}
\increasear(s\gamma)$.
\end{lemma}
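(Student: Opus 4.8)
The plan is to prove the statement by structural induction on the meta-term $s$, following its spine form $s = \apps{a}{s_1}{s_n}$ with $a = \head(s)$ a variable, function symbol, abstraction, or meta-variable application. Three of these four cases run exactly parallel to the corresponding cases of \refLemma{lem:increasearterm}, only replacing the direct appeals to smaller instances with the induction hypothesis: for $a = x \in \V$ (using $\gamma^{\increasear}(x) = \increasear(\gamma(x))$ and then \refLemma{lem:increasearapply}), for $a = \abs{x}{s_0}$ (splitting on $n = 0$ and $n > 0$), and for $a = \identifier{f} \in \F$. In the last case the hypothesis that $s$ respects $\arity$ is essential: it guarantees $n \geq \arity(\identifier{f})$, so $\increasear$ performs no $\eta$-expansion on either $s\gamma$ or $s\gamma^{\increasear}$, and after reducing the arguments by the induction hypothesis the two sides agree. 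Every sub-meta-term of $s$ again respects $\arity$, and its free meta-variables stay inside $\domain(\gamma)$ since $\FMV(s_i) \subseteq \FMV(s)$, so the induction hypothesis applies to each $s_i$.

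The only genuinely new case is $a = \meta{Z}{t_1,\dots,t_\mac}$, and here the key tool is \refLemma{lem:increasearmetasubst}. Since $\domain(\gamma)$ contains every meta-variable of $s$, in particular $Z \in \domain(\gamma)$; let $u$ be the unique term with $\gamma(Z) \approxp \abs{x_1 \dots x_\mac}{u}$, so that $\meta{Z}{t_1,\dots,t_\mac}\gamma = u[x_1:=t_1\gamma,\dots,x_\mac:=t_\mac\gamma]$. By \refLemma{lem:increasearmetasubst} we have $\gamma^{\increasear}(Z) \approxp \abs{x_1 \dots x_\mac}{\increasear(u)}$, hence $\meta{Z}{t_1,\dots,t_\mac}\gamma^{\increasear} = \increasear(u)[x_1:=t_1\gamma^{\increasear},\dots,x_\mac:=t_\mac\gamma^{\increasear}]$. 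I would then push $\increasear$ through this $\beta$-development in two moves. First, the induction hypothesis gives $t_i\gamma^{\increasear} \arrr{\beta} \increasear(t_i\gamma)$ for each $i$; since $\arr{\beta}$ is closed under replacing a subterm by one of its reducts, substituting into the body yields $\increasear(u)[x_1:=t_1\gamma^{\increasear},\dots,x_\mac:=t_\mac\gamma^{\increasear}] \arrr{\beta} \increasear(u)\delta^{\increasear}$, where $\delta := [x_1:=t_1\gamma,\dots,x_\mac:=t_\mac\gamma]$ and $\delta^{\increasear}$ maps each $x_i$ to $\increasear(t_i\gamma)$. Second, \refLemma{lem:increasearterm} applied to the term $u$ with the substitution $\delta$ gives $\increasear(u)\delta^{\increasear} \arrr{\beta} \increasear(u\delta) = \increasear(\meta{Z}{t_1,\dots,t_\mac}\gamma)$. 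Chaining these, $\meta{Z}{t_1,\dots,t_\mac}\gamma^{\increasear} \arrr{\beta} \increasear(\meta{Z}{t_1,\dots,t_\mac}\gamma)$; for the whole spine I then reduce the applied arguments $s_j\gamma^{\increasear} \arrr{\beta} \increasear(s_j\gamma)$ by the induction hypothesis and finish with \refLemma{lem:increasearapply} to absorb them into $\increasear(s\gamma)$.

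The main obstacle is precisely this meta-variable case, because the meta-substitution there is not a plain replacement but a combination of substitution and $\beta$-development, and one must track how $\increasear$ interacts with the $\approxp$ relation (which absorbs the possibility $\mac > \arity(Z)$). \refLemma{lem:increasearmetasubst} is exactly what tames this: it lets me treat the over-application bookkeeping uniformly instead of case-splitting on the leading-abstraction structure of $\gamma(Z)$, after which the remaining work is the routine substitution-compatibility of $\arrr{\beta}$ together with \refLemma{lem:increasearterm} and \refLemma{lem:increasearapply}.
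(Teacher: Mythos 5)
Your proposal is correct and follows essentially the same route as the paper's proof: induction on the form of $s$ re-using the cases of \refLemma{lem:increasearterm} (with the arity hypothesis ensuring no $\eta$-expansion happens at function-symbol heads), plus the new meta-variable case handled via \refLemma{lem:increasearmetasubst}, the induction hypothesis on the arguments, \refLemma{lem:increasearterm} for the $\beta$-development body, and \refLemma{lem:increasearapply} to absorb the spine arguments. The only difference is presentational: you spell out the intermediate substitution $\delta$ and the closure of $\arrr{\beta}$ under substitution explicitly, which the paper leaves implicit.
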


\begin{proof}
We use an induction similar to the one used in the proof of
\refLemma{lem:increasearterm}.  This gives the same cases and
reasoning as before, except that in the first case ($s =
\apps{\identifier{f}}{t_1}{t_n}$) we have $n \geq k$ because $s$
respects $\arity$, so $\increasear(s\gamma) = \apps{\identifier{f}}{
\increasear(t_1\gamma)}{\increasear(t_n\gamma)}$, allowing the induction
to proceed.  We also have an additional case
if $s = \apps{\meta{Z}{s_1,\dots,s_\mac}}{t_1}{t_n}$ and
$\gamma \approxp \abs{x_1 \dots x_\mac}{u}$, then (by
\refLemma{lem:increasearmetasubst}) $s\gamma^{\increasear} =
\apps{\increasear(u)[x_1:=s_1\gamma^{\increasear},\dots,x_\mac:=
s_\mac\gamma^{\increasear}]}{(t_1\gamma^{\increasear})}{\linebreak
(t_n\gamma^{\increasear})}$, which by the induction hypothesis reduces
to $\apps{\increasear(u)\linebreak{}[x_1
:=\increasear(s_1\gamma),\dots,x_\mac:=
\increasear(s_\mac\gamma)]}{
\increasear(t_1\gamma)}{\increasear(t_n\gamma)}$, and by
\refLemma{lem:increasearterm} to
$\apps{\increasear(u[x_1:=s_1\gamma,\dots,x_\mac:=s_\mac\gamma])}{
\increasear(t_1\gamma)\linebreak}{\increasear(t_n\gamma)}$.
Now we complete with \refLemma{lem:increasearapply}.
\end{proof}

\begin{lemma}\label{lem:increasearpattern}
If $s$ is a pattern that respects $\arity$ and $\domain(\gamma)$
contains only meta-variables, and contains all meta-variables in $s$,
then $s\gamma^{\increasear} = \increasear(s\gamma)$.
\end{lemma}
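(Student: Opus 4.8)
The plan is to mirror the proof of \refLemma{lem:increasearmetaterm}, but to track precisely \emph{where} the $\arrr{\beta}$ of that lemma can be sharpened to an equality once $s$ is restricted to be a pattern and $\domain(\gamma)$ to contain only meta-variables. I would proceed by structural induction on $s$, following the three shapes of a pattern from \refDef{def:metaterm}: a meta-variable application $\meta{Z}{x_1,\dots,x_\mia}$ with the $x_i$ distinct variables and $\mia = \arity(Z)$; an abstraction $\abs{x}{\ell}$; or $\apps{a}{\ell_1}{\ell_n}$ with $a \in \F \cup \V$ and each $\ell_i$ a pattern.

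First I would dispose of the non-meta-variable cases, which are immediate. For $s = \abs{x}{\ell}$ and for $s = \apps{a}{\ell_1}{\ell_n}$ with $a$ a \emph{variable}, the crucial observation is that $\gamma$, and hence $\gamma^{\increasear}$, acts as the identity on variables, so $s\gamma$ keeps the same head and $\increasear$ never inserts a leading abstraction; the two sides therefore agree after invoking the induction hypothesis on each $\ell_i$. For $s = \apps{\identifier{f}}{\ell_1}{\ell_n}$ with $\identifier{f} \in \F$, the hypothesis that $s$ respects $\arity$ gives $n \geq \arity(\identifier{f})$, so again the $\increasear$ clause for function symbols contributes no leading binders and $\increasear(s\gamma) = \apps{\identifier{f}}{\increasear(\ell_1\gamma)}{\increasear(\ell_n\gamma)}$ matches $s\gamma^{\increasear}$ by induction. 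The point throughout is that the only two places where \refLemma{lem:increasearterm} and \refLemma{lem:increasearmetaterm} needed a genuine $\beta$-reduction (and hence \refLemma{lem:increasearapply}) — a variable whose image under $\gamma$ is a partially applied function symbol, and a meta-variable carrying an applicative tail — simply cannot arise for a pattern under a meta-variable-only substitution.

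The key case is $s = \meta{Z}{x_1,\dots,x_\mia}$. Here $Z \in \domain(\gamma)$ by assumption, so $\gamma(Z)$ has the form $\abs{y_1 \dots y_\mia}{u}$ with \emph{exactly} $\mia = \arity(Z)$ binders and no applicative context; taking $\mac = \mia$, the second (``$\beta$-developing'') branch of $\approxp$ requires some $i$ with $\mia \leq i < \mac$ and is therefore vacuous, so $\gamma(Z) \approxp \abs{x_1 \dots x_\mia}{u}$ holds via the \emph{equality} branch and $s\gamma$ is just $u$ renamed along $y_i \mapsto x_i$. Applying \refLemma{lem:increasearmetasubst} then yields $\gamma^{\increasear}(Z) \approxp \abs{x_1 \dots x_\mia}{\increasear(u)}$ — again in the equality branch — whence $s\gamma^{\increasear}$ equals $\increasear(u)$ under the same renaming, i.e.\ $\increasear(s\gamma)$, with no stray $\beta$-step. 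I expect this to be the main obstacle, since it requires arguing carefully that $\approxp$ collapses to equality precisely because patterns apply meta-variables at their minimal arity, and that $\increasear$ commutes with the variable-for-variable renaming used by the substitution clause for $(\textsf{M})$.
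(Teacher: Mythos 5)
Your proposal is correct and follows essentially the same route as the paper's proof: a structural induction over the three pattern shapes, where the variable and abstraction cases go through because $\domain(\gamma)$ contains no variables, the function-symbol case goes through because arity-respecting heads receive no extra binders from $\increasear$, and the meta-variable case collapses to an equality since a pattern applies $Z$ to exactly $\arity(Z)$ distinct variables, forcing the equality branch of $\approxp$. The only cosmetic difference is that you route the key case through Lemma~\ref{lem:increasearmetasubst}, whereas the paper just writes $\gamma(Z) = \abs{x_1 \dots x_\mia}{t}$ directly (by $\alpha$-conversion) and reads off $s\gamma^{\increasear} = \increasear(t) = \increasear(s\gamma)$.
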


\begin{proof}
We again proceed by induction.  If $s = \apps{\identifier{f}}{t_1}{t_n}$
(with $n \geq \arity(\identifier{f})$) we immediately complete with
the induction hypothesis;, 
also
if $s = \abs{x}{s'}$ or
$\apps{x}{t_1}{t_n}$ with $x \in \V$ (since $x \notin \domain(\gamma)$).
The only remaining case is $\meta{Z}{x_1,\dots,x_\mia}$, in which case
we can write $\gamma(Z) = \abs{x_1 \dots x_\mia}{t}$ and have
$s\gamma^{\increasear} = \increasear(t) = \increasear(s\gamma)$.
\end{proof}

The results on substitution make it easy to prove that reduction is
preserved under $\increasear$:

\begin{lemma}\label{lem:arityrespect}
Suppose that all rules in $\Rules$ respect $\arity$.
Then whenever $s \arr{\Rules} t$ also $\increasear(s) \arr{\Rules}
\increasear(t)$.
\end{lemma}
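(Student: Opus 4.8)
The plan is to induct on the structure of $s$, following the same three-way head decomposition that is used in the definition of $\increasear$, and within each case to analyse where the redex of the step $s \arr{\Rules} t$ sits. Writing $s = \apps{a}{s_1}{s_n}$ with $a = \head(s)$, the step is either a root step (a rule step when $a \in \F$, or a beta step when $a$ is an abstraction and $n \geq 1$), or it takes place strictly inside one of the arguments $s_j$ (or inside the body when $a$ is an abstraction). The congruence cases are routine: by the induction hypothesis $\increasear(s_j) \arr{\Rules} \increasear(s_j')$, and since $\arr{\Rules}$ is monotonic — allowing reductions below $\lambda$ and inside applications — the claim follows after observing that $\increasear$ of the reduct differs from $\increasear(s)$ only in its $j$-th recursive call. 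Note that in the function-headed case $\increasear(s)$ may wrap the spine in an $\eta$-expansion $\abs{x_{n+1}\dots x_\mia}{\afun\ \increasear(s_1)\cdots\increasear(s_n)\ x_{n+1}\cdots x_\mia}$ (with $\mia = \arity(\afun)$, when $n < \arity(\afun)$), but since the redex lies strictly inside an argument, monotonicity still applies underneath this wrapper.

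The interesting cases are the two root steps. For a rule step the redex is a prefix $\apps{\afun}{s_1}{s_m} = \ell\delta$ of the spine for some $\ell \arrz r \in \Rules$ and substitution $\delta$ on domain $\FMV(\ell)$, with $t = \apps{(r\delta)}{s_{m+1}}{s_n}$. Because all rules respect $\arity$ we have $m \geq \arity(\afun)$, so $\increasear$ inserts no $\eta$-wrapper at this head and the first $m$ factors of $\increasear(s)$ are exactly $\increasear(\ell\delta)$. The key move is then to fire the rule with the \emph{lifted} substitution $\delta^{\increasear}$: by \refLemma{lem:increasearpattern} (applicable since $\ell$ is a pattern respecting $\arity$) we have $\increasear(\ell\delta) = \ell\delta^{\increasear}$, so $\ell\delta^{\increasear} \arr{\Rules} r\delta^{\increasear}$ is a genuine rule step; and by \refLemma{lem:increasearmetaterm} (as $r$ respects $\arity$) we have $r\delta^{\increasear} \arrr{\beta} \increasear(r\delta)$. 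Reassembling the trailing arguments $\increasear(s_{m+1}),\dots,\increasear(s_n)$ with \refLemma{lem:increasearapply} yields $\increasear(s) \arr{\Rules} \cdot \arrr{\beta} \increasear(t)$.

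The beta case is analogous. For $s = \apps{(\abs{x}{u})}{s_1}{s_n}$ with $t = \apps{u[x:=s_1]}{s_2}{s_n}$, the head redex of $\increasear(s) = \apps{(\abs{x}{\increasear(u)})}{\increasear(s_1)}{\increasear(s_n)}$ contracts to $\apps{\increasear(u)[x:=\increasear(s_1)]}{\increasear(s_2)}{\increasear(s_n)}$; then \refLemma{lem:increasearterm} gives $\increasear(u)[x:=\increasear(s_1)] \arrr{\beta} \increasear(u[x:=s_1])$ and \refLemma{lem:increasearapply} again reassembles the trailing arguments, producing a nonempty reduction ending in $\increasear(t)$.

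The hard part is precisely the non-homomorphic behaviour of $\increasear$ on function-headed terms: unlike a plain relabelling, $\increasear$ alters the spine whenever a symbol is under-applied, so one cannot commute it through application or substitution purely syntactically. This is exactly what forces the use of the four substitution lemmas and what introduces the trailing $\arrr{\beta}$ steps, and it is why the arity-respecting hypothesis is essential: it guarantees both $m \geq \arity(\afun)$ (no spurious wrapper at the firing head) and that $r$ respects $\arity$, without which $\ell\delta^{\increasear}$ would not coincide with $\increasear(\ell\delta)$. I would therefore record the conclusion as the nonempty reduction $\increasear(s) \arr{\Rules} \cdot \arrr{\beta} \increasear(t)$, which is all that is needed to lift an infinite $\arr{\Rules}$-reduction from $s$ to $\increasear(s)$ in this appendix.
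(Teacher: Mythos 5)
Your proof is correct and follows essentially the same route as the paper's: induction on the form of $s$, congruence cases by the induction hypothesis and monotonicity, and for the two head-step cases the same chain of lemmas (\refLemma{lem:increasearpattern} to fire the rule under the lifted substitution $\delta^{\increasear}$, then \refLemma{lem:increasearmetaterm} resp.\ \refLemma{lem:increasearterm}, then \refLemma{lem:increasearapply} to reassemble the trailing arguments), yielding $\increasear(s) \arr{\Rules} \cdot \arrr{\beta} \increasear(t)$. Your explicit remark that the conclusion is really a nonempty reduction $\arr{\Rules}^+$ rather than a single step is in fact slightly more precise than the lemma's own wording, and matches how the paper actually uses it in \refThm{thm:arityrespect}.
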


\begin{proof}
By induction on the form of $s$.
If $s$ has the form $\apps{x}{s_1}{s_n}$ or $\apps{\identifier{f}}{
s_1}{s_n}$ or $\apps{(\abs{x}{s_0})}{s_1}{s_n}$ and the reduction takes
place in one of the $s_i$, we immediately complete by the induction
hypothesis.
Otherwise the reduction takes place at the head; it could be either by
a rule step or a $\beta$ step.

In the first case, $s = \apps{\identifier{f}}{s_1}{s_n}$ and there exist
a rule $\apps{\identifier{f}}{\ell_1}{\ell_i} \arrz r$ and substitution
$\gamma$ such that $n \geq i$ and $s_j = \ell_j\gamma$ for $1 \leq j
\leq i$ and $t = \apps{(r\gamma)}{s_{i+1}}{s_n}$.  Since $\Rules$
respects $\arity$, in particular $\apps{\identifier{f}}{\ell_1}{\ell_i}$
and $r$ do so, and $n \geq i \geq \arity(\identifier{f})$.  Thus,
\begin{align*}
&\increasear(s)\\
=\ \ \ \ & \apps{\identifier{f}}{\increasear(s_1)}{\increasear(s_n)} \\
=\ \ \ \ & \apps{\identifier{f}}{\increasear(\ell_1\gamma)}{\increasear(
  \ell_i \gamma)} \cdots \increasear(s_n) \\
=\ \ \ \ & \apps{\apps{\identifier{f}}{(\ell_1\gamma^{\increasear})}{
  (\ell_i\gamma^{\increasear})}}{\increasear(s_{i+1})}{\increasear(s_n)} &
  \tag{\text{by \refLemma{lem:increasearpattern}}} \\
\arr{\Rules}\ & \apps{(r\gamma^{\increasear})}{\increasear(s_{i+1})}{
  \increasear(s_n)} \\
\arrr{\beta}\ & \apps{\increasear(r\gamma)}{\increasear(s_{i+1})}{
  \increasear(s_n)}
  \tag{\text{by \refLemma{lem:increasearmetaterm}}} \\
\arrr{\beta}\ & \increasear(\apps{(r\gamma)}{s_{i+1}}{s_n})
  \tag{\text{by \refLemma{lem:increasearapply}}} \\
=\ \ \ \ & \increasear(t)
\end{align*}

In the second case, $s = \apps{(\abs{x}{u})}{s_0}{s_n}$ and
thus
$t = \apps{u[x:=s_0]}{s_1}{s_n}$.  We have
\begin{align*}
&\increasear(s)\\
=\ \ \ \ & \apps{(\abs{x}{\increasear(u)})}{\increasear(s_0)}{
  \increasear(s_n)} \\
\arr{\beta}\ \ & \apps{\increasear(u)[x:=\increasear(s_0)]}{
  \increasear(s_1)}{\increasear(s_n)} \\
\arrr{\beta}\ \ & \apps{\increasear(u[x:=s_0])}{\increasear(s_1)}{
  \increasear(s_n)}
   \tag{\text{by \refLemma{lem:increasearterm}}} \\
\arrr{\beta}\ \ & \increasear(\apps{u[x:=s_0]}{s_1}{s_n})
  \tag{\text{by \refLemma{lem:increasearapply}}} \\
=\ \ \ \ & \increasear(t)
\end{align*}
\end{proof}

With reduction steps being preserved, the statement from the text
now follows immediately:

\begin{theorem}\label{thm:arityrespect}
If for all rules $\ell \arrz r$ in $\Rules$ both $\ell$ and $r$ respect
$\arity$, then $\arr{\Rules}$ is terminating if and only if all
arity-respecting terms are terminating under $\arr{\Rules}$.
\end{theorem}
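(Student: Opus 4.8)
The plan is to treat the two directions of the equivalence separately. The left-to-right implication is immediate: if $\arr{\Rules}$ is terminating then, by definition, \emph{every} term is terminating under $\arr{\Rules}$, and arity-respecting terms are merely a special case. So the content lies entirely in the right-to-left implication, which I would establish by contraposition.

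Assume then that $\arr{\Rules}$ is not terminating, witnessed by an infinite reduction $s_0 \arr{\Rules} s_1 \arr{\Rules} s_2 \arr{\Rules} \cdots$ from some arbitrary term $s_0$. The strategy is to transport this reduction along $\increasear$ so as to obtain an infinite reduction \emph{out of an arity-respecting term}, contradicting the hypothesis. First I would record the structural fact that $\increasear(s)$ respects $\arity$ for every term $s$: inspecting the defining clauses, each occurrence $\apps{\identifier{f}}{t_1}{t_n}$ is replaced by a term in which $\identifier{f}$ is applied to $\max(n,\arity(\identifier{f}))$ arguments, the fresh binders $x_{n+1}\dots x_\mia$ supplying any missing ones, while the recursive calls guarantee the same property for all subterms. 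Consequently $\increasear(s_0)$ is arity-respecting, and by the hypothesis it is terminating.

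It remains to derive the contradiction. Applying \refLemma{lem:arityrespect} to each step $s_i \arr{\Rules} s_{i+1}$ gives $\increasear(s_i) \arr{\Rules} \increasear(s_{i+1})$, and chaining these yields an infinite $\arr{\Rules}$-reduction $\increasear(s_0) \arr{\Rules} \increasear(s_1) \arr{\Rules} \cdots$ starting from $\increasear(s_0)$. This contradicts termination of $\increasear(s_0)$, closing the contrapositive. The single point that genuinely relies on \refLemma{lem:arityrespect}, rather than being bookkeeping, is that each individual original reduction step must induce a \emph{nonempty} reduction after transformation: only then is infiniteness preserved when we pass to the $\increasear$-images. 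Since the lemma secures exactly this (the image reduction always contains at least one step, as $\arr{\beta} \mathrel{\subseteq} \arr{\Rules}$), I anticipate no further obstacle; everything else is the routine verification sketched above.
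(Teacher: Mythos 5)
Your proof is correct and takes essentially the same route as the paper's: the forward direction is trivial, and the reverse direction transports an infinite reduction $s_0 \arr{\Rules} s_1 \arr{\Rules} \cdots$ through $\increasear$ using \refLemma{lem:arityrespect} to obtain an infinite reduction from the arity-respecting term $\increasear(s_0)$. Your two explicit checks---that $\increasear(s)$ always respects $\arity$, and that each original step induces a \emph{nonempty} image reduction so that infiniteness is preserved---merely spell out what the paper leaves implicit, and are sound.
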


\begin{proof}
Termination of $\arr{\Rules}$ is by definition equivalent to
termination of all terms under $\arr{\Rules}$.
Obviously if all terms are terminating, then so are all arity-respecting
terms.  This provides one direction.
For the other, suppose that there is a non-terminating term, so
we can construct a sequence $s_0 \arr{\Rules} s_1 \arr{\Rules} s_2
\arr{\Rules} \dots$.  Then
\refLemma{lem:arityrespect} implies that there is
a non-terminating arity-respecting term as well: $\increasear(s_0)
\arr{\Rules}^+ \increasear(s_1) \arr{\Rules}^+ \increasear(s_2)
\arr{\Rules}^+ \dots$.
\end{proof}

\section{Computability: the set $C$}\label{app:computability}

In this appendix, we prove \refThm{thm:defC}: the existence of an
RC-set $C$ that provides an accessibility relation $\gracc$ that
preserves computability, and a base-type accessibility step
$\accreduce{C}$ that preserves both computability and termination.

As we have said before, $\V$ and $\F$ contain infinitely many symbols
of all types.  We will use this to select variables or constructor
symbol of any given type without further explanation.

These proofs \emph{do not require} that computability is considered
with respect to a rewrite relation: other relations (such as recursive
path orderings) may be used as well.  To make this explicit, we will
use an alternative relation symbol, $\comprel$.

Note: a more extensive discussion of complexity can be found
in~\cite{bla:16}.  Our notion of \ext{accessibility} largely corresponds
\ext{to}
membership of the computability closure defined there (although not
completely).

\subsection{Definitions and main computability result}

\begin{definition}\label{def:comprel}
In \refApp{app:computability}, $\comprel$ is assumed to be a given
relation on terms of the same type, with respect to which we consider
computability.  We require that:
\begin{itemize}
\item $\comprel$ is monotonic (that is, $s \comprel t$ implies that
  $s\ u \comprel s'\ u$ and $u\ s \comprel u\ s'$ and $\abs{x}{s}
  \comprel \abs{x}{s'}$);
\item for all variables $x$: $\apps{x}{s_1}{s_n} \comprel t$ implies
  that $t$ has the form $\apps{x}{s_1}{s_i'} \cdots s_n$ with
  $s_i \comprel s_i'$;
\item if $s \arrr{\mathtt{head}\beta} u$ and $s \comprel t$, then
  there exists $v$ such that $u \comprel^* v$ and $t \arrr{\mathtt{
  head}\beta} v$;
  here, $\arr{\mathtt{head}\beta}$ is the relation generated by the
  head-step $\apps{(\abs{x}{u})\ v}{w_1}{w_n}
  \arr{\mathtt{head}\beta} \apps{u[x:=v]}{w_1}{w_n}$;
\item if $t$ is the $\mathtt{head}\beta$-normal form of $s$, then
  $s \comprel^* t$.
\end{itemize}
We call a term \emph{neutral} if it has the form $\apps{x}{s_1}{s_n}$
or $\apps{(\abs{x}{u})}{s_0}{s_n}$.
\end{definition}

The generality obtained by imposing only the minimal requirements needed
on $\comprel$ is not needed in the current paper (where we only consider
computability with respect to a rewrite relation), but could be used to
extend the method to other domains.
First note:

\begin{lemma}\label{lem:rewriterelationsuffices}
A rewrite relation $\arr{\Rules}$ satisfies the requirements of
$\comprel$ stated in \refDef{def:comprel}.
\end{lemma}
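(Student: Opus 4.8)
The plan is to verify the four requirements of \refDef{def:comprel} in turn for the instance $\comprel\ :=\ \arr{\Rules}$, observing that two of them are immediate, one is an easy structural observation, and only the last carries real content.

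\emph{Monotonicity} holds by definition: in \refDef{def:rule} the relation $\arr{\Rules}$ is introduced as the \emph{smallest monotonic} relation containing (\textsf{Rule}) and (\textsf{Beta}), so closure under the term constructors is built in. The \emph{head-$\beta$-normal-form} clause is equally direct: each step $\apps{(\abs{x}{u})\ v}{w_1}{w_n} \arr{\mathtt{head}\beta} \apps{u[x:=v]}{w_1}{w_n}$ contracts the redex $(\abs{x}{u})\ v$ inside the context $\Box\ w_1 \cdots w_n$, hence is a (\textsf{Beta}) step closed under monotonicity and therefore an $\arr{\Rules}$ step; consequently $s \arrr{\mathtt{head}\beta} t$ yields $s \arrr{\Rules} t$, i.e.\ $s \comprel^* t$.

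For the \emph{variable-headed} clause I would analyse a single step $\apps{x}{s_1}{s_n} \arr{\Rules} t$ by the position of its contracted redex. No step can occur at the head: (\textsf{Rule}) requires a function symbol there, but $x \in \V$, and (\textsf{Beta}) requires the whole term to have the form $(\abs{y}{\cdot})\ \cdot$, whereas for $n \geq 1$ the function part $\apps{x}{s_1}{s_{n-1}}$ is again variable-headed (and for $n = 0$ the term $x$ is irreducible). Hence the redex lies strictly inside some argument, so $t = \apps{x}{s_1 \cdots s_{i-1}}{s_i'}\ s_{i+1} \cdots s_n$ with $s_i \arr{\Rules} s_i'$, exactly as stated; formally this is an induction following the inductive definition of the monotonic closure.

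The \emph{main obstacle} is the third clause, a commutation of head-$\beta$ reduction with rewriting. I would prove the general lemma that from $s \arrr{\mathtt{head}\beta} u$ and $s \arrr{\Rules} t$ one obtains $v$ with $u \arrr{\Rules} v$ and $t \arrr{\mathtt{head}\beta} v$; the required clause, where $s \comprel t$ is a \emph{single} rewrite step, is then the special case. By the standard Hindley--Rosen diagram chase (induction on the length of the head-$\beta$ reduction, then on the length of the rewrite reduction) this reduces to the one-step/one-step situation $s = \apps{(\abs{x}{a})\ b}{w_1}{w_n} \arr{\mathtt{head}\beta} \apps{a[x:=b]}{w_1}{w_n} = u_1$ against $s \arr{\Rules} t$, which I would settle by a case split on the position of the contracted redex: if it is the head redex itself then $t = u_1$; if it lies in $a$, in $b$, or in some $w_i$, then contracting the head redex of $t$ yields a common reduct $v$ with $t \arr{\mathtt{head}\beta} v$ and $u_1 \arrr{\Rules} v$ (a single step, or one step per occurrence of $x$ in $a$ when the redex was in $b$). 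This closes the square with at most one head-$\beta$ step on the $t$-side and $\arrr{\Rules}$ on the $u_1$-side, precisely the strong-commutation shape needed to lift. The substitution bookkeeping in the ``redex in $b$'' case — tracking the occurrences of $x$ in $a$ — is where I expect the only genuine friction.
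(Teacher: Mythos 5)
Your proposal is correct, and on three of the four requirements it coincides with the paper's one-line treatment: monotonicity is by definition of $\arr{\Rules}$, variable-headed terms admit no head steps (rule left-hand sides are headed by function symbols, and (\textsf{Beta}) needs an abstraction at the head), and $\arr{\mathtt{head}\beta}\ \subseteq\ \arr{\Rules}$ gives the normal-form clause.

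For the third requirement --- the only one with real content --- you take a genuinely different route. The paper, like you, first strengthens the claim to many rewrite steps $s \arrr{\Rules} t$, but it then argues by well-founded induction on $s$ ordered by $\arr{\beta}$: if the reduction $s \arrr{\Rules} t$ contains a head step, that step is permuted to the front, giving $s \arr{\mathtt{head}\beta} s' \arrr{\Rules} t$, and the induction hypothesis is applied to $s'$; if it contains none, the head redex survives into $t$ and contracting it yields the required $v$. You instead establish the one-step local diagram (one head-$\beta$ step against one arbitrary step, closed with at most one head-$\beta$ step on one side and finitely many $\arr{\Rules}$ steps on the other) and lift it by the standard Hindley--Rosen tiling, with induction on the two reduction lengths. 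The two arguments share their only real computation: commuting an internal step past the head redex, with the fan-out of one step per occurrence of $x$ when the internal step lies in the argument $b$ --- the paper's ``the head step can be done first'' rearrangement is exactly your local diagram read in the opposite direction. What your organization buys: the length-based induction is purely combinatorial and never needs $\arr{\beta}$ to be well-founded, whereas the paper's induction ``on $s$ with $\arr{\beta}$'' leans on termination of simply-typed $\beta$-reduction, a fact usually itself proved by computability, so avoiding it is arguably cleaner in the very section that sets computability up; moreover, your systematic tiling automatically supplies the extra induction-hypothesis application that the paper's no-head-step case quietly needs when $u$ lies strictly below $s'$. What the paper's organization buys is brevity: once the rearrangement is granted, a single induction disposes of both cases, while your route requires the strong-commutation shape to be stated and lifted explicitly.
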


\begin{proof}
Clearly $\arr{\Rules}$ is monotonic, applications with a variable at
the head cannot be reduced at the head, and
moreover
$\arr{\Rules}$ includes
$\arr{\mathtt{head}\beta}$.

The third property we prove by induction
on $s$ with $\arr{\beta}$, using $\arrr{\Rules}$ instead of
$\arr{\Rules}$ for a stronger induction hypothesis.  If $s = u$, then we
are done choosing $v := t$.  Otherwise we can write
$s = \apps{(\abs{x}{q})\ w_0}{w_1}{w_n}$
and $s \arr{\mathtt{head}\beta} s' := \apps{q[x:=w_0]}{w_1}{w_n}$,
and $s' \arrr{\mathtt{head}\beta} u$.
If the reduction $s \arrr{\Rules} t$ does not take any head steps, then
\[
t = \apps{(\abs{x}{q'})\ w_0'}{w_1'}{w_n'} \arrr{\mathtt{head}\beta}
\apps{q'[x:=w_0']}{w_1'}{w_n'} =: v
\]
and indeed $u \arrr{\Rules} v$ by
monotonicity.  Otherwise, by the same argument we can safely assume
that the head step is done first, so $s' \arrr{\Rules} t$; we complete
by the induction hypothesis.
\end{proof}

\medskip
Recall \refDef{def:RCset} from the text.

\oldcounter{\defRCset}{\defRCsetSec}
\begin{definition}[with $\comprel$ rather than $\arr{\Rules}$]
A \emph{set of reducibility candidates}, or \emph{RC-set}, for a
relation $\comprel$ as in \refDef{def:comprel} is a set $I$ of
base-type terms $s : \asort$ such that:
\begin{itemize}
\item every term in $I$ is terminating under $\comprel$
\item $I$ is closed under $\comprel$ (so if $s \in I$ and $s \comprel t$
  then $t \in I$)
\item if $s$ is neutral, and for all $t$ with $s \comprel t$ we have
  $t \in I$, then $s \in I$
\end{itemize}
We define $I$-computability for an RC-set $I$ by induction on types:
\begin{itemize}
\item $s : \asort$ is $I$-computable if $s \in I$ ($\asort \in \Sorts$)
\item $s : \atype \arrtype \btype$ is $I$-computable if for all terms
  $t : \atype$ that are $I$-computable, $\app{s}{t}$ is $I$-computable
\end{itemize}
\end{definition}
\startappendixcounters

For $\asort$ a sort and $I$ an RC-set, we will write $I(\asort) = \{
s \in I \mid s : \asort \}$.

Let us illustrate \refDef{def:RCset} with two examples:

\begin{lemma}\label{lem:MINSN}
The set \textsf{SN} of all terminating base-type terms is an RC-set.
The set \textsf{MIN} of all terminating base-type terms whose
$\mathtt{head}\beta$-normal form can be written $\apps{x}{s_1}{
s_\maa}$ with $x \in \V$ is also an RC-set.
\end{lemma}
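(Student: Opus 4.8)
The plan is to check, for each of \textsf{SN} and \textsf{MIN} separately, the three defining conditions of an RC-set from \refDef{def:RCset}, using throughout that by \refDef{def:comprel} the relation $\comprel$ connects only terms of equal type, so base type is automatically preserved along $\comprel$-steps. For \textsf{SN} everything is routine. Termination of every element holds by definition of \textsf{SN}. For closure, if $s$ is terminating and $s \comprel t$, then any infinite $\comprel$-sequence from $t$ would extend one from $s$, so $t$ is terminating, and $t : \asort$ since $\comprel$ preserves types; hence $t \in \textsf{SN}$. For the neutrality clause, if every $\comprel$-reduct of a base-type $s$ lies in \textsf{SN}, then $s$ has no infinite $\comprel$-reduction (such a reduction must pass through a reduct, which is terminating), so $s \in \textsf{SN}$; here neutrality of $s$ is not even needed.

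For \textsf{MIN} I would first record that head-$\beta$-reduction $\arr{\mathtt{head}\beta}$ is deterministic, so a terminating base-type term has a \emph{unique} head-$\beta$-normal form, and any such normal form has the shape $\apps{a}{s_1}{s_n}$ with $a \in \V \cup \F$ (an abstraction would carry an arrow type). Thus \textsf{MIN} is precisely the set of terminating base-type terms whose head-$\beta$-normal form is variable-headed. Termination of its elements holds by definition. The interesting condition is closure. Given $s \in \textsf{MIN}$ with head-$\beta$-normal form $w = \apps{x}{s_1}{s_\maa}$ and a step $s \comprel t$, I would invoke the commutation clause of \refDef{def:comprel}: from $s \arrr{\mathtt{head}\beta} w$ and $s \comprel t$ it produces $v$ with $w \comprel^* v$ and $t \arrr{\mathtt{head}\beta} v$. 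Since $w$ is variable-headed, iterating the clause "$\comprel$ keeps a variable at the head" forces $v = \apps{x}{v_1}{v_\maa}$ to be variable-headed, hence head-$\beta$-normal; by determinism $v$ is \emph{the} head-$\beta$-normal form of $t$. As $t$ is a reduct of the terminating $s$, it is terminating, so $t \in \textsf{MIN}$.

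For the neutrality clause of \textsf{MIN}, suppose $s$ is neutral and base-typed and all its $\comprel$-reducts lie in \textsf{MIN} (so in particular $s$ is terminating, exactly as in the \textsf{SN} case). If $s = \apps{x}{s_1}{s_n}$ then $s$ is already head-$\beta$-normal and variable-headed, so $s \in \textsf{MIN}$ directly. If instead $s = \apps{(\abs{x}{u})}{s_0}{s_n}$ is a head redex, let $\hat{s}$ be its head-$\beta$-normal form (which exists since $s$ is terminating); then $\hat{s} \neq s$, and the last clause of \refDef{def:comprel} gives a nonempty reduction $s \comprel^* \hat{s}$. Its first step yields a reduct $t_0 \in \textsf{MIN}$ with $t_0 \comprel^* \hat{s}$, so by the closure property just proved $\hat{s} \in \textsf{MIN}$; but $\hat{s}$ is head-$\beta$-normal, hence equal to its own head-$\beta$-normal form, which is therefore variable-headed, whence $s \in \textsf{MIN}$.

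The main obstacle is the closure step for \textsf{MIN}: one must rule out that a single $\comprel$-step turns a variable-headed head-$\beta$-normal form into a function-symbol-headed one. This is exactly where the commutation clause between $\arr{\mathtt{head}\beta}$ and $\comprel$ and the variable-head-preservation clause of \refDef{def:comprel} are indispensable, and the same closure fact is what makes the head-redex case of the neutrality clause go through. The remaining steps are bookkeeping, provided one is careful to use the determinism (and hence uniqueness) of head-$\beta$-normal forms.
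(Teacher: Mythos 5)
Your proof is correct and takes essentially the same route as the paper's: \textsf{SN} is routine; for \textsf{MIN}, closure is obtained from the commutation clause of \refDef{def:comprel} plus preservation of variable heads under $\comprel$, and the neutrality clause is handled by splitting on the two neutral shapes, reducing the head-redex case to closure via the fact that $s \comprel^*$ its $\mathtt{head}\beta$-normal form. The only cosmetic differences are that you make determinism and uniqueness of $\mathtt{head}\beta$-normal forms explicit, and you dispense with the induction the paper nominally invokes but never actually uses.
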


\begin{proof}
It is easy to verify that the requirements hold for \textsf{SN}.  For
\textsf{MIN}, clearly termination holds.  If $s \in \textsf{MIN}$,
then $s \arrr{\mathtt{head}\beta} \apps{x}{s_1}{s_\maa} =: s'$, so for
any $t$ with $s \comprel^* t$ the assumptions on $\comprel$ provide
that $t \arrr{\mathtt{head}\beta} v$ for some $\comprel^*$-reduct of
$s'$, which can only have the form $\apps{x}{t_1}{t_\maa}$.
Finally, we prove that a neutral term $s$ is in \textsf{MIN} if all
its $\comprel^+$-reducts are, by induction on $s$ with $\arr{\beta}$
(this suffices because we have already seen that \textsf{MIN} is
closed under $\comprel$).
If $s = \apps{x}{s_1}{s_\maa}$ then it is included in \textsf{MIN} if
it is terminating, which is the case if all its reducts are
terminating, which is certainly the case if they are in \textsf{MIN}.
If $s = \apps{(\abs{x}{u})\ v}{w_1}{w_\maa}$ then it is included if
(a) all its reducts are terminating (which is satisfied if they are in
\textsf{MIN}), and (b) the $\mathtt{head}\beta$-normal form $s'$ of
$s$ has the right form, which holds because $s \comprel^+ s'$
(as $\arr{\mathtt{head}\beta}$ is included in $\comprel$) and
therefore $s' \in \textsf{MIN}$ by assumption.
\end{proof}

In fact, we have that \textsf{MIN} $\subseteq I \subseteq$ \textsf{SN}
for all RC-sets $I$.  The latter inclusion is obvious by the
termination requirement in the definition of RC-sets.  The former
inclusion follows easily:

\begin{lemma}\label{lem:mincom}
For all RC-sets $I$, \textsf{MIN} $\subseteq I$.
\end{lemma}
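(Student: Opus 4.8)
The plan is to prove the inclusion by well-founded induction, using the third defining clause of an RC-set (the clause admitting any neutral term all of whose $\comprel$-reducts already lie in $I$) together with the fact, established in \refLemma{lem:MINSN}, that $\textsf{MIN}$ is itself an RC-set.

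First I would observe that every $s \in \textsf{MIN}$ is \emph{neutral}. Indeed, $s$ has base type and, by definition of $\textsf{MIN}$, its $\mathtt{head}\beta$-normal form can be written $\apps{x}{s_1}{s_\maa}$ with $x \in \V$. There are two cases: either $s$ is already in $\mathtt{head}\beta$-normal form, so $s = \apps{x}{s_1}{s_\maa}$, which is neutral since its head is a variable; or $s$ still contains a head redex, so $s = \apps{(\abs{x}{u})\ v}{w_1}{w_n}$, which is neutral as well. (A base-type term cannot itself be an abstraction, so these are the only possibilities.)

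Next I would set up the induction. Since every $s \in \textsf{MIN}$ is terminating under $\comprel$, the relation $\comprel$ is well-founded on the terms reachable from $s$, so I can prove ``$s \in I$'' for all $s \in \textsf{MIN}$ by induction along $\comprel$. For the inductive step, fix $s \in \textsf{MIN}$ and take any $t$ with $s \comprel t$. Because $\textsf{MIN}$ is an RC-set by \refLemma{lem:MINSN}, it is closed under $\comprel$, so $t \in \textsf{MIN}$; the induction hypothesis then yields $t \in I$. Thus $s$ is neutral and all of its $\comprel$-reducts belong to $I$, and the third clause in the definition of an RC-set gives $s \in I$.

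There is no genuine obstacle here; the only point requiring a moment's care is the verification that elements of $\textsf{MIN}$ are neutral, and even that is an immediate case split on whether a head redex is present. Everything else is a direct invocation of closure under $\comprel$ (from \refLemma{lem:MINSN}) and the neutrality clause, which is exactly why the surrounding text notes that this inclusion ``follows easily.''
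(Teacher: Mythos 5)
Your proof is correct and follows essentially the same route as the paper's: induction along $\comprel$, using closure of $\textsf{MIN}$ under $\comprel$ (from \refLemma{lem:MINSN}) and the neutral-term clause of RC-sets. The only difference is that you spell out the neutrality of $\textsf{MIN}$-elements, which the paper dismisses as ``easy to see.''
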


\begin{proof}
We prove by induction on $\comprel$ that all elements of \textsf{MIN}
are also in $I$.  It is easy to see that if $s \in \textsf{MIN}$ then
$s$ is neutral.  Therefore, $s \in I$ if $t \in I$ whenever $s
\comprel t$.  But since \textsf{MIN} is closed by
\refLemma{lem:MINSN}, each such $t$ is in \textsf{MIN}, so also in
$I$ by the induction hypothesis.
\end{proof}

Aside from minimality of \textsf{MIN}, \refLemma{lem:mincom}
actually provides $I$-computability of all variables, regardless of $I$.
We prove this alongside termination of all $I$-computable terms.

\begin{lemma}\label{lem:compresults}
Let $I$ be an RC-set.  The following statements hold for all types
$\atype$:
\begin{enumerate}
\item\label{lem:compresults:vars}
  all variables $x : \atype$ are $I$-computable
\item\label{lem:compresults:term}
  all $I$-computable terms $s : \atype$ are terminating (wrt. $\comprel$)
\end{enumerate}
\end{lemma}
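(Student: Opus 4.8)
I'll prove the two statements simultaneously by a single induction on the type $\atype$, since they are mutually dependent: computability at higher types is defined via computability at lower types, and the neutral-term closure property of RC-sets is what lets us feed variables into such applications. This is the standard Tait/Girard pattern, so the structure is predictable; the work is in checking that the RC-set axioms of \refDef{def:RCset} give exactly what's needed at each step.

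\textbf{Base case} ($\atype = \asort \in \Sorts$). Statement~(\ref{lem:compresults:term}) is immediate: an $I$-computable term of base type is by definition a member of $I$, and every element of $I$ is terminating under $\comprel$ by the first RC-set axiom. For statement~(\ref{lem:compresults:vars}), a variable $x : \asort$ lies in $\textsf{MIN}$ (its $\mathtt{head}\beta$-normal form is itself, of the form $\apps{x}{}{}$ with zero arguments), and by \refLemma{lem:mincom} we have $\textsf{MIN} \subseteq I$, so $x \in I$, i.e.\ $x$ is $I$-computable.

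\textbf{Inductive step} ($\atype = \btype_1 \arrtype \dots \arrtype \btype_k \arrtype \asort$, with the statements assumed for all strictly smaller types, in particular for each $\btype_i$ and for the result types obtained by consuming arguments). For (\ref{lem:compresults:vars}), I take a variable $x : \atype$ and any $I$-computable arguments $t_i : \btype_i$; I must show $\apps{x}{t_1}{t_k}$ is $I$-computable, which by definition reduces to showing the base-type term $\apps{x}{t_1}{t_k} : \asort$ lies in $I$. Here I argue by a secondary induction on the multiset of the $t_i$ (which terminate, by the induction hypothesis~(\ref{lem:compresults:term}) applied at the smaller types $\btype_i$): the term is neutral, and by the third RC-set axiom it suffices to show every $\comprel$-reduct is in $I$; by the second property of $\comprel$ in \refDef{def:comprel}, any such reduct is again of the form $\apps{x}{t_1}{t_i'}\cdots t_k$ with $t_i \comprel t_i'$, hence computable (RC-sets are closed under $\comprel$) and smaller, closing the inner induction. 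For (\ref{lem:compresults:term}) at type $\atype$, I take an $I$-computable $s : \atype$, pick any variable $y : \btype_1$ (computable by the part of (\ref{lem:compresults:vars}) just established, or at type $\btype_1 < \atype$ by the outer hypothesis), and then further arguments, to form a base-type term; since $s$ is computable, applying it to computable arguments yields a computable base-type term, which is terminating, and termination of the application forces termination of $s$ itself via monotonicity of $\comprel$.

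\textbf{Main obstacle.} The delicate point is statement~(\ref{lem:compresults:vars}) in the inductive step: showing $\apps{x}{t_1}{t_k} \in I$ requires the neutral-closure axiom, and to apply it I must control \emph{all} one-step $\comprel$-reducts. The second clause of \refDef{def:comprel} — that $\apps{x}{s_1}{s_n} \comprel t$ forces $t = \apps{x}{s_1}{s_i'}\cdots s_n$ — is precisely the lever that keeps the head variable fixed and confines reduction to the arguments, so that the well-founded induction on the (terminating) arguments goes through. I expect the remaining cases to be routine applications of the RC-set axioms and the $\comprel$-assumptions.
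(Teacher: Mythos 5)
Your proof is correct in substance and follows the same overall strategy as the paper: a mutual induction on types, with statement~(2) proved exactly as in the paper (apply $s$ to computable variables, use the RC-set termination axiom, then monotonicity of $\comprel$ to pull termination back to the head). The one place you genuinely diverge is the inductive step of statement~(1). The paper disposes of it in one line: since the arguments $t_i$ are terminating (by IH~(2) at smaller types), the variable-headed application $\apps{x}{t_1}{t_k}$ is its own $\mathtt{head}\beta$-normal form and is terminating, hence lies in $\textsf{MIN}$, and \refLemma{lem:mincom} ($\textsf{MIN} \subseteq I$) finishes immediately. You instead re-derive this instance by hand, with an inner multiset induction over the terminating arguments using the neutral-closure axiom and the head-variable property of $\comprel$ --- which is precisely the argument already packaged inside Lemmas~\ref{lem:MINSN} and~\ref{lem:mincom}, so you are duplicating work the paper's lemma structure lets you reuse. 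Your inline version also contains one imprecision worth fixing: to apply your inner induction hypothesis to a reduct $\apps{x}{t_1}{t_i'}\cdots t_k$ you need the arguments to satisfy whatever hypothesis your inner statement imposes, and you require them to be $I$-\emph{computable}, justified by ``RC-sets are closed under $\comprel$''. That axiom only concerns base-type members of $I$; at arrow types $\btype_i$, preservation of computability under $\comprel$ is not an axiom but needs its own small induction on types via monotonicity (this is essentially \refLemma{lem:preservecomp} in the appendix). The clean repair is to weaken your inner induction statement to quantify over \emph{terminating} (not necessarily computable) arguments --- termination \emph{is} preserved by $\comprel$ --- at which point your inner induction becomes literally the variable-headed case of $\textsf{MIN} \subseteq I$, i.e.\ the paper's route.
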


\begin{proof}
By a mutual induction on the form of $\atype$, which we may safely
write $\atype_1 \arrtype \dots \arrtype \atype_\maa \arrtype \asort$
(with $\maa \geq 0$ and $\asort \in \Sorts$).

(\ref{lem:compresults:vars}) By definition of $I$-computability,
$x : \atype$ is computable if and only if $\apps{x}{s_1}{s_\maa} \in I$
for all $I$-computable terms $s_1 : \atype_1,\dots,s_\maa : \atype_\maa$.
However, as all $\atype_i$ are smaller types, we know that such terms
$s_i$ are terminating, so \refLemma{lem:mincom} gives the required
result.

(\ref{lem:compresults:term}) Let $x_1 : \atype_1,\dots,x_\maa :
\atype_m$ be variables; by the induction hypothesis they are
computable, and therefore $\apps{s}{x_1}{x_\maa}$ is in $I$ and
therefore terminating. Then the head, $s$, cannot itself be
non-terminating (by monotonicity of $\comprel$).
\end{proof}

While \textsf{SN} is indisputably the easiest RC-set to define and
work with, it will be beneficial for the strength of the method to
consider a set strictly between \textsf{MIN} and \textsf{SN}.  To this
end, we assume given an ordering on types, and a function mapping each
function symbol $\afun$ to a set $\Acc(\afun)$ of arguments positions.
Here, we deviate from the text by not fixing $\Acc$; again, this
generality is not needed for the current paper, but is done with an
eye on future extensions.

\begin{definition}
Assume given a quasi-ordering $\greqsort$ on $\Sorts$ whose strict part
$\grsort\ :=\ \greqsort \setminus \leqsort$ is well-founded.  Let $\eqsort$
denote the corresponding equivalence relation $\eqsort\ :=\ \greqsort \cap
\leqsort$.

For a type $\atype\ \equiv\ \atype_1 \arrtype \dots \arrtype \atype_\maa
\arrtype \bsort$ (with $\bsort \in \Sorts$) and sort $\asort$, we
write $\ext{\asort \gracsortup \atype}$ if $\asort \greqsort \bsort$ and
\ext{$\asort \gracsortdown \atype_i$ for each $i$}, and we
write $\ext{\asort
\gracsortdown \atype}$ if $\asort \grsort \bsort$ and \ext{$\asort
\gracsortup \atype_i$ for each $i$.}

For $\afun : \atype_1 \arrtype \dots \arrtype \atype_\maa \arrtype
\asort$ we assume given a set $\Acc(\afun) \subseteq \{ i \mid 1
\leq i \leq \maa \wedge \ext{\asort \gracsortup \atype_i} \}$.
For $x : \atype_1 \arrtype \dots \arrtype \atype_\maa \arrtype \asort
\in \V$, we write $\Acc(x) = \{ i \mid 1 \leq i \leq \maa \wedge
\atype_i = \btype_1 \arrtype \dots \arrtype \btype_n \arrtype \bsort
\wedge \asort \greqsort \bsort \}$.
We
write $s \gracc t$ if either $s = t$, or $s = \abs{x}{s'}$ and $s'
\gracc t$, or $s = \apps{a}{s_1}{s_n}$ with $a \in \F \cup \V$ and $s_i
\gracc t$ for some $i \in \Acc(a)$.
\end{definition}

\emph{Remark:} This definition of the accessibility relations deviates
from, e.g., \cite{bla:jou:rub:15} by using a pair of relations ($\ext{\gracsortup}$ and
$\ext{\gracsortdown}$) rather than positive and negative positions.  This is not
an important difference, but simply a matter of personal preference;
using a pair of relations avoids the need to discuss type positions in
the text, allowing for a shorter presentation.  It is also not common
to allow a choice in
$\Acc(\afun)$, but rather to fix $\Acc(\afun) = \{ \atype_i \mid 1 \leq i
\leq \maa \wedge \ext{\asort \gracsortup \atype} \}$ for \emph{some} symbols
(for instance constructors) and $\Acc(\afun) = \emptyset$ for the
rest.  We elected to leave the choice open for greater generality.

\medskip
The interplay of the positive and negative relations \ext{$\gracsortup$} and
\ext{$\gracsortdown$} leads to an important result on RC-sets.

\begin{lemma}\label{lem:gracsortinterplay}
Fix a sort $\asort \in \Sorts$.
Suppose $I,J$ are RC-sets such that $I(\bsort) = J(\bsort)$ for all
$\bsort$ with $\asort \grsort \bsort$ and $I(\bsort) \subseteq
J(\bsort)$ if $\asort \eqsort \bsort$.  Let $s : \atype$.  Then we have:
\begin{itemize}
\item If $\ext{\asort \gracsortup \atype}$, then if $s$ is $I$-computable
  also $s$ is $J$-computable.
\item If $\ext{\asort \gracsortdown \atype}$, then if $s$ is $J$-computable
  also $s$ is $I$-computable.
\end{itemize}
\end{lemma}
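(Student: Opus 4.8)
The plan is to prove both statements simultaneously by induction on the structure of the type $\atype$, exploiting the duality between $\gracsortup$ and $\gracsortdown$: argument positions of an arrow type are treated contravariantly, so the ``up'' statement at $\atype$ will feed on the ``down'' statement at the argument types, and vice versa. Throughout I would write $\atype \equiv \atype_1 \arrtype \dots \arrtype \atype_\maa \arrtype \bsort$ with $\bsort \in \Sorts$, and use that $s : \atype$ is $I$-computable iff $s\ t$ is $I$-computable for every $I$-computable $t : \atype_1$ (and, at base type, iff $s \in I(\bsort)$).

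For the base case ($\maa = 0$, so $\atype = \bsort$) I would argue directly from the hypotheses on $I$ and $J$. If $\asort \gracsortup \bsort$ then $\asort \greqsort \bsort$, so either $\asort \grsort \bsort$, giving $I(\bsort) = J(\bsort)$, or $\asort \eqsort \bsort$, giving $I(\bsort) \subseteq J(\bsort)$; in both cases $s \in I(\bsort)$ implies $s \in J(\bsort)$, i.e.\ $I$-computability implies $J$-computability. If instead $\asort \gracsortdown \bsort$, then $\asort \grsort \bsort$, so $I(\bsort) = J(\bsort)$ and $J$-computability implies $I$-computability.

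For the inductive step ($\maa > 0$) I would unfold the arrow-type clause and apply the two induction hypotheses in dual directions. Suppose $\asort \gracsortup \atype$, so $\asort \greqsort \bsort$ and $\asort \gracsortdown \atype_i$ for every $i$, and let $s$ be $I$-computable. To show $s$ is $J$-computable, take any $J$-computable $t : \atype_1$. Since $\asort \gracsortdown \atype_1$ and $\atype_1$ is structurally smaller, the ``down'' induction hypothesis makes $t$ $I$-computable, whence $s\ t$ is $I$-computable. Now $s\ t : \atype_2 \arrtype \dots \arrtype \bsort$, and from $\asort \greqsort \bsort$ together with $\asort \gracsortdown \atype_i$ for $i \geq 2$ we obtain $\asort \gracsortup (\atype_2 \arrtype \dots \arrtype \bsort)$; as this tail type is strictly smaller, the ``up'' induction hypothesis yields that $s\ t$ is $J$-computable. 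Hence $s$ is $J$-computable. The case $\asort \gracsortdown \atype$ is entirely symmetric, swapping the roles of $I$ and $J$ and of ``up'' and ``down''.

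The point to get right — rather than a genuine obstacle — is the pairing of the two hypotheses: both statements must be carried together, since the argument-position flip means each relies on the other at a strictly smaller type, and one must check that both the argument type $\atype_1$ and the tail type $\atype_2 \arrtype \dots \arrtype \bsort$ decrease so that the single structural induction closes. The asymmetry in the base case — the ``up'' direction needing the inclusion hypothesis when $\asort \eqsort \bsort$, whereas the ``down'' direction only ever meets $\asort \grsort \bsort$ and so uses equality — is exactly what keeps the inductive bookkeeping consistent.
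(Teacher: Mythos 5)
Your proof is correct and follows essentially the same route as the paper's: a mutual structural induction on the type, with the two statements feeding each other contravariantly at argument positions and the base case settled by the inclusion/equality hypotheses on $I$ and $J$. The only cosmetic difference is that you peel off one argument at a time and pass the $\gracsortup$/$\gracsortdown$ property to the tail type, whereas the paper applies all arguments $t_1,\dots,t_\maa$ simultaneously and concludes directly at the base sort.
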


\begin{proof}
We prove both statements together by a shared induction on the form of
$\atype$.  We can always
write $\atype\ \equiv\ \atype_1 \arrtype
\dots \arrtype \atype_\maa \arrtype \bsort$ with $\bsort \in \Sorts$.

First suppose $\ext{\asort \gracsortup \atype}$; then $\asort \greqsort
\bsort$ -- so $I(\bsort) \subseteq J(\bsort)$ -- and each $\ext{\asort
\gracsortdown \atype_i}$.  Assume that $s$ is $I$-computable; we must show
that it is $J$-computable, so that for all $J$-computable $t_1 :
\atype_1,\dots,t_\maa : \atype_\maa$ we have:
$\apps{s}{t_1}{t_\maa} \in J$.  However, by the induction hypothesis each
$t_i$ is also $I$-computable, so $\apps{s}{t_1}{t_\maa} \in I(\bsort)
\subseteq J(\bsort)$ by the assumption.

For the second statement, suppose $\ext{\asort \gracsortdown \atype}$; then
$\asort \grsort \bsort$, so $I(\bsort) = J(\bsort)$.  Assume that $s$
is $J$-computable; $I$-computability follows if $\apps{s}{t_1}{t_\maa}
\in I(\bsort) = J(\bsort)$ whenever $t_1,\dots,t_\maa$ are\linebreak
$I$-computable.
By the induction hypothesis they are $J$-computable\ext{,} so this holds
by assumption.
\end{proof}

The RC-set $C$ whose existence is asserted below offers computability
with a notion of accessibility.  It is worth noting that this is
\emph{not} a standard definition, but is designed to provide an
additional relationship $\accreduce{I}$ that is terminating on computable
terms.  This re\-lation will be useful in termination proofs using static
DPs.

\begin{theorem}\label{thm:defC2}
Let $\accreduce{I}$ be the relation on base-type terms where $
\apps{\afun}{s_1}{s_\maa} \accreduce{I} \apps{s_i}{t_1}{t_n}$ whenever $
i \in \Acc(\afun)$ and $s_i : \atype_1 \arrtype \dots \arrtype \atype_n
\arrtype \asort $ and  each $t_j$ is $I$-computable.

There exists an RC-set $C$ such that $C = \{ s : \asort \mid \asort \in
\Sorts \wedge s$ is terminating under $\comprel \cup \accreduce{C}$ and
if $s \comprel^* \apps{\afun}{s_1}{s_\maa} : \asort$ then $s_i$ is
$C$-computable for all $i \in \Acc(\afun) \}$.
\end{theorem}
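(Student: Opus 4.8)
The plan is to obtain $C$ as a fixpoint, and the main obstacle is that the defining property mixes two opposite monotonicity behaviours: enlarging a candidate set makes the clause ``$s_i$ is $C$-computable'' \emph{easier} (monotone), whereas the clause ``$s$ terminates under $\comprel \cup \accreduce{C}$'' appears \emph{harder}, since $\accreduce{C}$ itself refers to $C$. The key observation that resolves this is \refLemma{lem:gracsortinterplay}, which says that accessibility descends in the sort ordering only in a strictly positive way. Indeed, an $\accreduce{C}$-step out of a term of sort $\asort$ reaches a term whose sort $\bsort$ satisfies $\asort \greqsort \bsort$, and it is licensed by requiring the extra arguments $t_j$ to be $C$-computable at types $\atype_j$ with $\asort \gracsortdown \atype_j$. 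By the $\gracsortdown$-case of \refLemma{lem:gracsortinterplay}, computability at such ``negative'' types is \emph{anti}-monotone in $C$; hence enlarging $C$ \emph{removes} $\accreduce{C}$-steps, which makes the termination clause \emph{easier}. Both clauses thus move in the same direction, and the operator becomes monotone.

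First I would do well-founded induction on the strict sort order $\grsort$, processing the $\eqsort$-classes of $\Sorts$ from the bottom up. Suppose $C$ is already fixed, as an RC-set, on all sorts strictly below the current class $K$; call this part $C_{<K}$. On the complete lattice of RC-sets carried by the sorts of $K$ (RC-sets form a complete lattice with meet given by intersection, which is routine from \refDef{def:RCset}), define
\[
\Phi_K(X) = \{\, s : \asort \mid \asort \in K,\ s\ \text{terminates under}\ \comprel \cup \accreduce{C},\ \text{and}\ s \comprel^* \apps{\afun}{s_1}{s_\maa}\ \text{implies that}\ s_i\ \text{is}\ C\text{-computable for all}\ i \in \Acc(\afun)\,\},\quad C := C_{<K} \cup X.
\]
Since these conditions only query $C$-computability at types $\atype$ with $\asort \gracsortup \atype$ or $\asort \gracsortdown \atype$ for $\asort \in K$, and since $C_{<K}\cup X$ and $C_{<K}\cup X'$ agree below $K$ while $X \subseteq X'$ on $K$, \refLemma{lem:gracsortinterplay} applies with the sort parameter taken in $K$: the positive clause is monotone, and the negative clause (computability of the $t_j$ inside $\accreduce{C}$, hence the termination clause) is monotone by the argument above. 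Therefore $\Phi_K$ is monotone, the Knaster--Tarski theorem yields a fixpoint $C_K = \Phi_K(C_K)$, and I would set $C := C_{<K}\cup C_K$ and continue.

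It then remains to check that the resulting $C$ is an RC-set and satisfies the displayed equation. Termination under $\comprel$ is immediate, as each member terminates even under $\comprel \cup \accreduce{C}$. Closure under $\comprel$ holds since any $\comprel$-reduct of $s \in C$ still terminates under $\comprel \cup \accreduce{C}$ and inherits the reduct condition from $s$. For the neutral clause, a neutral $s = \apps{x}{s_1}{s_n}$ or $\apps{(\abs{x}{u})}{s_0}{s_n}$ admits no head $\accreduce{C}$-step (those need a function symbol at the head), so termination under $\comprel \cup \accreduce{C}$ follows from that of its $\comprel$-reducts, and any $s \comprel^* \apps{\afun}{s_1}{s_\maa}$ must factor through a proper reduct, giving the reduct condition. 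All of this holds for \emph{any} fixpoint, so the choice of $C_K$ (e.g.\ the least one) is immaterial; and the global equation holds because membership at sort $\asort \in K$ refers to $C$ only at sorts $\preceq \asort$, where $C$ agrees with $C_{<K}\cup C_K$.

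Finally, specialization to \refThm{thm:defC} is immediate: by \refLemma{lem:rewriterelationsuffices}, the rewrite relation $\arr{\Rules}$ meets all requirements imposed on $\comprel$ in \refDef{def:comprel}. The only genuinely delicate point is the monotonicity argument of the second paragraph, which is entirely powered by \refLemma{lem:gracsortinterplay}; everything else is bookkeeping.
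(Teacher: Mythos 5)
Your proposal is correct and follows essentially the same route as the paper's proof: stratify by $\eqsort$-classes of sorts via well-founded induction on $\grsort$, form the complete lattice of RC-sets agreeing with the already-fixed part below, and obtain the fixpoint by Knaster--Tarski, where monotonicity comes exactly from \refLemma{lem:gracsortinterplay} (positive positions monotone, negative positions anti-monotone, so enlarging the set shrinks $\accreduce{}$ and eases termination). The paper phrases the operator as acting on full RC-sets that are frozen below the current class rather than on their restrictions to the class, but this is only a cosmetic difference.
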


\begin{proof}
We will define, by well-founded induction on $\asort$ using $\greqsort$,
a set $A_\asort$ of terms as follows.

Assume $A_\bsort$ has already been defined for all $\bsort$ with
$\asort \grsort \bsort$, and let $X_\asort$ be the set of RC-sets $I$
such that $I(\bsort) = A_\bsort$ whenever $\asort \grsort \bsort$.
We observe that $X_\asort$ is a complete lattice with respect to
$\subseteq$: defining the bottom element $\sqcup \emptyset := \bigcup \{
A_\bsort \mid \asort \grsort \bsort \} \cup \textsf{MIN}$ and the
top element $\sqcap \emptyset := \bigcup \{ A_\bsort \mid \asort
\grsort \bsort \} \cup \bigcup \{$ \textsf{SN}$(\bsort) \mid \neg
(\asort \grsort \bsort) \}$, and letting
$\sqcup Z := \bigcup Z,\ \sqcap Z := \bigcap Z$ for non-empty $Z$, it
is easily checked that $\sqcap$ and $\sqcup$ give a greatest lower and
least upper bound within $X_\asort$ respectively.
Now for an RC-set $I \in X_\asort$, we let:
\[
\begin{array}{rcl}
F_\asort(I) & = & \{ s \in I \mid s : \bsort \not\eqsort \asort \} \\
& \cup & \{s \in \Terms(\F,\V,\arity) \mid s : \bsort \eqsort \asort
   \wedge s\ \text{is terminating} \\
& & \ \ \text{under}\ \comprel \cup \accreduce{I}
   \wedge\ \text{if}\ s \comprel^* \apps{\afun}{s_1}{s_\maa}\ 
   \text{for a symbol} \\
& & \ \ \afun \in \F\ \text{then}\ 
   \forall i \in \Acc(\afun)\ [t_i\ \text{is}\ I\text{-computable}]
   \} \\
\end{array}
\]

Clearly, $F_\asort$ maps elements of $X_\asort$ to $X_\asort$: terms of
type $\bsort \not\eqsort \asort$ are left alone, and $F_\asort(I)$
satisfies the properties to be an RC-set.  Moreover, $F_\asort$ is
monotone.  To see
this, let $I,J \in X_\asort$ such that $I \subseteq J$; we must see that
$F_\asort(I) \subseteq F_\asort(J)$.  To this end, let $s \in
F_\asort(I)$; we will see that also $s \in F_\asort(J)$.  This is
immediate if $s : \bsort \not\eqsort \asort$, as membership in $X_\asort$
guarantees that $F_\asort(I)(\bsort) = I(\bsort) \subseteq J(\bsort) =
F_\asort(J)(\bsort)$.  So assume $s : \bsort \eqsort \asort$.  We must see
two things:
\begin{itemize}
\item $s$ is terminating under $\comprel \cup \accreduce{J}$.  We
  show that $\comprel \cup \accreduce{J}\ \subseteq\ 
  \comprel \cup \accreduce{I}$; as $s$ is terminating in the
  latter, the requirement follows.  Clearly $\comprel\ \subseteq\ 
  \comprel \cup \accreduce{I}$, so assume $s \accreduce{J} s'$.
  Then $s = \apps{\identifier{f}}{t_1}{t_\maa}$ and $s' =
  \apps{t_i}{u_1}{u_n}$ for $i \in \Acc(\identifier{f})$ and
  $J$-computable $u_1,\dots,u_n$.
  We can write $t_i : \ext{\atype_1} \arrtype \dots \arrtype \atype_n \arrtype
  \bsort$ and since $i \in \Acc(\identifier{f})$ we have $\asort
  \greqsort \bsort$ and \ext{$\asort \gracsortdown \atype_j$ for each $j$}.
  By \refLemma{lem:gracsortinterplay} then each $u_j$ is also
  $I$-computable, so also $s \accreduce{I} s_1$.
\item If $s \comprel^* \apps{\identifier{f}}{s_1}{s_\maa}$ for some
  symbol $\identifier{f}$ then for all $i \in \Acc(\identifier{f})$:
  $t_i$ is $J$-computable.  But this is obvious: as $s \in F_\asort(I)$,
  we know that such $t_i$ are $I$-computable, and since $\ext{\asort
  \gracsortup \atype_i}$ for $i \in \Acc(\identifier{f})$,
  \refLemma{lem:gracsortinterplay} provides $J$-computability.
\end{itemize}
Thus, $F$ is a monotone function on a lattice; by Tarski's fixpoint
theorem there is a fixpoint, so an RC-set $I$ such that for all sorts
$\bsort$:
\begin{itemize}
\item if $\asort \grsort \bsort$ then $I(\bsort) = A_\bsort$;
\item if $\asort \eqsort \bsort$ then $I(\bsort) = \{ s \in
  \Terms(\F,\V,\arity) \mid s : \bsort \wedge s$ is terminating under
  $\comprel \cup \accreduce{I} \wedge$ if $s \comprel^*
  \apps{\identifier{f}}{s_1}{s_\maa}$ for a symbol $\identifier{f}$ then
  $\forall i \in \Acc(\identifier{f})\ [t_i$ is $I$-computable$] \}$
\end{itemize}
We define $A_\bsort := I(\bsort)$ for all $\bsort \eqsort \asort$.

Now we let $C := \bigcup_{\asort \in \Sorts} A_\asort$.  Clearly, $C$
satisfies the requirement in the theorem.
\end{proof}

\refThm{thm:defC2} easily gives the proof of \refThm{thm:defC}
in the text:

\begin{proof}[Proof of \refThm{thm:defC}]
\refThm{thm:defC} follows by taking $\arr{\Rules}$ for $\comprel$\linebreak
(which satisfies the requirements by \refLemma{lem:rewriterelationsuffices}) 
and taking for each
$\Acc(\afun)$ the maximum set $\{ i \mid 1 \leq i \leq \maa \wedge
 \ext{\asort \gracsortup \atype_i} \}$.
\end{proof}

\subsection{Additional properties of computable terms}\label{subsec:compprop}

For reasoning about computable terms (as we will do when defining
static DPs and reasoning about computable chains), there are a number
of properties besides those in \refLemma{lem:compresults} that will
prove very useful to have.  In the following, we fix the RC-set $C$ as
obtained from \refThm{thm:defC2}.

\begin{lemma}\label{lem:preservecomp}
If $s$ is $C$-computable and $s \comprel t$, then $t$ is
also
$C$-computable.
\end{lemma}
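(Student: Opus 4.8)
The plan is to prove this by a straightforward induction on the type $\atype$ of $s$, splitting into a base case and an arrow case; the two relevant ingredients are the closure properties of an RC-set (\refDef{def:RCset}) and the monotonicity of $\comprel$ (\refDef{def:comprel}).

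First I would handle the \textbf{base case}, where $s : \asort$ with $\asort \in \Sorts$. By the definition of $I$-computability for base types, $s$ being $C$-computable means exactly $s \in C$. Since $C$ is an RC-set, it is closed under $\comprel$ by the second clause of \refDef{def:RCset}; hence $s \in C$ together with $s \comprel t$ yields $t \in C$, i.e.\ $t$ is $C$-computable. This case is where the RC-set closure property does the work.

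For the \textbf{inductive case}, suppose $s : \atype \arrtype \btype$ and that the claim holds for all types structurally smaller than $\atype \arrtype \btype$, in particular for $\btype$. To show that $t$ is $C$-computable, I take an arbitrary $C$-computable term $u : \atype$ and argue that $t\ u$ is $C$-computable. By monotonicity of $\comprel$, the step $s \comprel t$ gives $s\ u \comprel t\ u$. Since $s$ is $C$-computable and $u$ is $C$-computable, $s\ u$ is $C$-computable by the definition of computability at arrow type. As $t\ u : \btype$, the induction hypothesis applies to the step $s\ u \comprel t\ u$, yielding that $t\ u$ is $C$-computable. Since $u$ was arbitrary, $t$ is $C$-computable.

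I do not expect any genuine obstacle here: the result is the standard ``computability is preserved under reduction'' lemma, and the only points requiring care are that the induction hypothesis is invoked at the strictly smaller type $\btype$, and that monotonicity of $\comprel$ delivers exactly the single $\comprel$-step $s\ u \comprel t\ u$ needed to match the hypothesis' shape. Note that the argument uses only monotonicity and the closure of $C$ under $\comprel$, so it does not depend on the specific fixpoint construction of $C$ in \refThm{thm:defC2}, nor on the accessibility relation $\accreduce{C}$.
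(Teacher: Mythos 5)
Your proof is correct and follows essentially the same route as the paper's own argument: induction on the type of $s$, using closure of $C$ under $\comprel$ (as an RC-set) for the base case, and monotonicity of $\comprel$ plus the induction hypothesis at the smaller type $\btype$ for the arrow case. The only cosmetic difference is that you invoke the RC-set closure clause directly where the paper cites the characterisation of $C$ from \refThm{thm:defC2}; these amount to the same thing.
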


\begin{proof}
By induction on the type of $s$.  If $s$ has base type, then
$C$-computability implies that $s \in C$, and following the definition
in \refThm{thm:defC2} all reducts of $s$ are also in $C$.
Otherwise, $s : \atype \arrtype \btype$ and computability of $s$ implies
computability of $s\ u$ for all computable $u : \atype$.  By the
induction hypothesis, the fact that $s\ u \comprel t\ u$ by
monotonicity of $\comprel$ implies that $t\ u$ is computable for
all computable $u$, and therefore by definition $t$ is computable.
\end{proof}

\ext{Thus, computability is preserved under $\comprel$; the following result
shows that it is also preserved under $\accreduce{C}$.}

\begin{lemma}\label{lem:preservecompaccreduce}
\ext{If $s$ is $C$-computable and $s \accreduce{C} t$, then $t$ is also
$C$-computable.}
\end{lemma}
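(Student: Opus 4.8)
The plan is to obtain the conclusion almost immediately from the fixpoint characterisation of $C$ in \refThm{thm:defC2} together with the clause defining computability at arrow types in \refDef{def:RCset}. First I would unfold the hypothesis $s \accreduce{C} t$: by the definition of $\accreduce{C}$, the term $s$ has base type and the form $\apps{\afun}{s_1}{s_\maa}$, while $t = \apps{s_i}{t_1}{t_n}$ for some index $i \in \Acc(\afun)$, where $s_i : \atype_1 \arrtype \dots \arrtype \atype_n \arrtype \bsort$ and each $t_j$ is $C$-computable. Since $s$ has base type, its $C$-computability is simply membership $s \in C$.

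Next I would apply the defining property of $C$. By \refThm{thm:defC2}, every $s \in C$ satisfies: whenever $s \comprel^* \apps{\afun}{s_1}{s_\maa}$, the argument $s_j$ is $C$-computable for all $j \in \Acc(\afun)$. Instantiating this with the reflexive reduction $s \comprel^* s$ (the length-zero reduction from $s$ to itself), I conclude that $s_j$ is $C$-computable for every $j \in \Acc(\afun)$; in particular, this holds for the specific index $i$ appearing in the step $s \accreduce{C} t$.

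Finally I would assemble $t$. The term $s_i$ is $C$-computable and of type $\atype_1 \arrtype \dots \arrtype \atype_n \arrtype \bsort$, and each $t_j : \atype_j$ is $C$-computable. Hence, by the clause defining $C$-computability at arrow types (if $u : \atype \arrtype \btype$ is $C$-computable, then $\app{u}{v}$ is $C$-computable for every $C$-computable $v : \atype$), applying it successively $n$ times yields that $\apps{s_i}{t_1}{t_n} = t$ is $C$-computable.

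I do not expect any real obstacle here: the crux is simply the observation that the accessible-argument clause in the characterisation of $C$ already applies to $s$ itself through the reflexive instance of $\comprel^*$, so that the accessible subterm $s_i$ is known to be $C$-computable before any reduction is performed. The only point worth verifying explicitly is that the index $i$ supplied by the $\accreduce{C}$ step indeed lies in $\Acc(\afun)$, which is immediate from the definition of $\accreduce{C}$. In contrast to the previous lemma (\refLemma{lem:preservecomp}), no induction on the type of $s$ is needed, because $\accreduce{C}$ relates only base-type terms.
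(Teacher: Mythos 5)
Your proof is correct and follows essentially the same route as the paper's own (much terser) argument: both reduce to the observation that, by the fixpoint characterisation of $C$ in \refThm{thm:defC2} applied to the zero-step reduction $s \comprel^* s$, the accessible argument $s_i$ is already $C$-computable, after which the arrow-type clause of computability handles the applications to the $C$-computable $t_j$. No gap; the extra detail you provide (explicitly invoking the reflexive instance and iterating the application clause) is exactly what the paper's phrase ``by definition of $C$, also $s_i$ is $C$-computable'' compresses.
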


\begin{proof}
\ext{If $s \accreduce{C} t$, then both terms have base type, so
$C$-compu\-tability is simply membership in $C$.  We have
$s = \apps{\afun}{s_1}{s_\maa}$ and $t = \apps{s_i}{t_1}{t_n}$
with each $t_j$ $C$-computable.  Since, by definition of $C$, also
$s_i$ is $C$-computable, $C$-computability of $t$ immediately follows.}
\end{proof}

\ext{Finally, we will see that $C$-computability is also preserved under
$\gracc$.  For this, we first make a more general statement,
which will also handle variables below binders (which are freed in
subterms).}

\begin{lemma}\label{lem:preservecompacchelper}
\ext{Let $s : \atype_1 \arrtype \dots \arrtype \atype_\maa \arrtype \asort$
and $t : \btype_1 \arrtype \dots \arrtype \btype_n \arrtype \bsort$ be
meta-terms, such that $s \gracc t$.
Let $\gamma$ be a substitution with $\FMV(s) \subseteq \domain(\gamma)
\subseteq \M$.}

\ext{Let $u_1 : \btype_1,\dots,u_n : \btype_n$ be $C$-computable terms, and
$\delta$ a substitution with $\domain(\delta) \subseteq \V$ such that
each $\delta(x)$ is $C$-computable, and for $t' := \apps{(t(\gamma \cup
\delta))}{u_1}{u_n}$ there is no overlap between $\FV(t')$ and the
variables bound in $s$.}

\ext{Then there exists a $C$-computable substitution $\xi$ with
$\domain(\xi) \subseteq \V$ and $C$-computable terms $v_1 : \atype_1,
\dots,v_\maa : \atype_\maa$ such that $\apps{(s(\gamma \cup
\xi))}{v_1}{v_\maa}\ (\accreduce{C} \cup \arr{\mathtt{head}\beta})^*\ 
t'$.}
\end{lemma}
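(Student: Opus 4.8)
The plan is to proceed by induction on the derivation of $s \gracc t$, which mirrors the structure of $s$ following the definition of $\gracc$. In the base case $s = t$ the two type annotations coincide, so $\maa = n$ and each $\atype_i = \btype_i$; I would simply take $\xi := \delta$ and $v_i := u_i$, making $\apps{(s(\gamma\cup\xi))}{v_1}{v_\maa}$ syntactically equal to $t'$, so that the empty $(\accreduce{C} \cup \arr{\mathtt{head}\beta})$-reduction suffices. In the abstraction case $s = \abs{x}{s'}$ with $s' \gracc t$, I would first $\alpha$-rename $x$ fresh (legitimate because the no-overlap hypothesis gives $x \notin \FV(t')$), apply the induction hypothesis to $s'$ to obtain $\xi'$ and $v_2,\dots,v_\maa$, and then prepend a single step $\apps{(\abs{x}{s'(\gamma\cup\xi)})}{v_1}{v_\maa} \arr{\mathtt{head}\beta} \apps{(s'(\gamma\cup\xi)[x:=v_1])}{v_2}{v_\maa}$, absorbing $x := v_1$ into $\xi$ by setting $v_1 := \xi'(x)$ (a variable, hence $C$-computable by \refLemma{lem:compresults}, when $x \notin \domain(\xi')$) and $\xi := \xi' \setminus \{x\}$; the substitution lemma turns the right-hand side into $\apps{(s'(\gamma\cup\xi'))}{v_2}{v_\maa}$, to which the induction hypothesis applies.

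For the function-symbol case $s = \apps{\afun}{p_1}{p_k}$ with $j \in \Acc(\afun)$ and $p_j \gracc t$, I would apply the induction hypothesis to $p_j$ (its meta-variables and bound variables are among those of $s$, so the hypotheses transfer, and no-overlap is inherited) to get $\xi_j$ and $C$-computable $v'_1,\dots,v'_{n'}$ with $\apps{(p_j(\gamma\cup\xi_j))}{v'_1}{v'_{n'}}\ (\accreduce{C} \cup \arr{\mathtt{head}\beta})^*\ t'$. Taking $\xi := \xi_j$ and choosing $v_1,\dots,v_\maa$ to be fresh computable variables completing $\afun$ to its full arity, the term $\apps{(s(\gamma\cup\xi))}{v_1}{v_\maa}$ is $\afun$ fully applied, so one $\accreduce{C}$-step selects argument $j$ to give $\apps{(p_j(\gamma\cup\xi))}{v'_1}{v'_{n'}}$; the step is legal precisely because $j \in \Acc(\afun)$ and the $v'_l$ are $C$-computable. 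The positivity bookkeeping $\gracsortup/\gracsortdown$ together with \refLemma{lem:gracsortinterplay} and the fact that $\accreduce{C}$ preserves computability (\refLemma{lem:preservecompaccreduce}) is what makes the chosen argument and its fillers fit. Crucially, since $\afun$ is a function symbol it is untouched by $\xi$, so occurrences of $\afun$ inside $p_j$ are undisturbed and no clash arises.

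The hard part will be the variable-head case $s = \apps{y}{p_1}{p_k}$ with $j \in \Acc(y)$: here $y$ \emph{is} substituted by $\xi$, and a single image $\xi(y)$ must serve every occurrence of $y$ at once. A naive projection combinator fails, because distinct occurrences of $y$ along the accessibility path may select \emph{different} arguments, and uniform substitution cannot distinguish them. My plan to circumvent this is to replace $y$ by a \emph{fresh function symbol} $\cc_y$ of the same type, so that the choice of accessible argument at each occurrence is made by an \emph{independent} $\accreduce{C}$-step rather than by the uniform substitution; the variable case then reduces to the function-symbol case, and $\xi(y) := \cc_y$ is $C$-computable by the same argument as any symbol applied to computable arguments. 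The delicate point — and the main obstacle — is that variable accessibility $\Acc(y)$ is strictly more permissive than function accessibility $\Acc(\cc_y)$: it ignores the strict positivity of the argument types. For an accessible argument of higher type where positivity fails, $\accreduce{C}$ on $\cc_y$ is unavailable, and one must instead descend with $\arr{\mathtt{head}\beta}$, exploiting that the variables freed under the intervening binders lie in $\domain(\delta)$ and are therefore $C$-computable. Reconciling these two mechanisms into one $C$-computable $\xi$, while preserving the no-overlap invariant, is where the real work lies.

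Throughout I would maintain two invariants: that the constructed $\xi$ is $C$-computable (fresh variables are computable by \refLemma{lem:compresults}, fresh symbols applied to computable arguments are computable via \refThm{thm:defC2}, and the reducts involved stay computable by \refLemma{lem:preservecomp} and \refLemma{lem:preservecompaccreduce}), and that no variable bound in $s$ is ever captured, which the no-overlap hypothesis on $\FV(t')$ together with $\alpha$-renaming secures. Concatenating the per-case reductions then assembles the required $(\accreduce{C} \cup \arr{\mathtt{head}\beta})^*$-reduction to $t'$.
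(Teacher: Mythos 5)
Your base, abstraction, and function-symbol cases coincide with the paper's own proof, but the variable-head case --- which you correctly single out as the crux --- is left genuinely unresolved, so the proposal has a gap exactly where the real content of the lemma lies. Moreover, your plan of setting $\xi(y) := \cc_y$ for a fresh function symbol cannot be repaired in that form, for the very reason you yourself identify: an $\accreduce{C}$-step at $\cc_y$ requires $i \in \Acc(\cc_y)$, i.e.\ $\asort \gracsortup \atype_i$, whereas the definition of $\Acc$ on \emph{variables} only guarantees $\asort \greqsort \bsort'$ for the output sort of the $i$-th argument, with no positivity information about its argument types. So reducing the variable case to the function-symbol case via a bare fresh symbol is a dead end.

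The paper resolves this case differently, and the key is the order of operations: apply the induction hypothesis \emph{first}, and then bake its data into the substitution image. For $s = \apps{x}{s_1}{s_j}$ with $s_i \gracc t$ (the paper's proof additionally assumes $x \notin \FV(s_i)$, which also disposes of your worry about several occurrences of the head variable having to select different arguments), the IH yields a computable $\xi'$ and computable terms $w_1,\dots,w_{n'}$ with $\apps{(s_i(\gamma\cup\xi'))}{w_1}{w_{n'}}\ (\accreduce{C}\cup\arr{\mathtt{head}\beta})^*\ t'$. One then picks a \emph{fresh constructor} $\symb{c} : \bsort' \arrtype \asort$, not occurring in $\Rules$, with $\Acc(\symb{c}) = \{1\}$; this choice is legal because the argument type $\bsort'$ is a \emph{base} type, so $\asort \gracsortup \bsort'$ degenerates to $\asort \greqsort \bsort'$, which is exactly what $\Acc(x)$ for variables provides --- positivity is vacuous for base types, so your obstacle evaporates. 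The substitution image is then not a symbol but the abstraction $\xi(x) := \abs{y_1\dots y_j z_1 \dots z_\maa}{\symb{c}\ (\apps{y_i}{w_1}{w_{n'}})}$, which is $C$-computable because the $w_k$ are computable and $\symb{c}$ is inert. Now $j+\maa$ head-$\beta$ steps followed by a single $\accreduce{C}$-step stripping $\symb{c}$ land exactly on the IH's starting term $\apps{(s_i(\gamma\cup\xi'))}{w_1}{w_{n'}}$. In short, the projection you sought is realised not by a combinator that must choose among arguments under a uniform substitution, but by an abstraction that already contains the IH-supplied arguments $w_1,\dots,w_{n'}$ together with a base-type coercion; this is the idea missing from your proposal.
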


\begin{proof}
\ext{We prove the lemma by induction on the derivation of $s \gracc t$.}

\ext{If $s = t$, then we are done choosing $\xi$ and $\vec{v}$
equal to $\delta$ and $\vec{u}$.}

\ext{If $s = \abs{x}{s'}$ with $x : \atype_1$ and $s' \gracc t$, then we can
safely assume $x$ to be fresh (so not occurring in $s$ or in the range
of $\gamma$).
By the induction hypothesis, there exist a computable substitution $\xi'$
and computable terms $v_2,\dots,v_\maa$ such that $\apps{(s'(\gamma
\cup \xi'))}{v_2}{v_\maa}\ (\accreduce{C} \cup \arr{\mathtt{head}
\beta})^* t'$.  We
can safely assume that $x$ does not occur in the range of $\xi'$, since
$x$ does not occur in $t'$ either.
Therefore, if we define $\xi := [x:=x] \cup [y:=\xi'(y) \mid y \in \V
\wedge y \neq x]$, we have $s'(\gamma \cup \xi') = (s'(\gamma \cup \xi))
[x:=\xi'(x)]$.  Choosing $v_1 := \xi'(x)$,
we get
$\apps{(s(\gamma \cup \xi))}{v_1}{v_\maa} \arr{\mathtt{head}\beta}
\apps{(s'(\gamma \cup \xi'))}{v_2}{v_\maa}\ (\accreduce{C} \cup
\arr{\mathtt{head}\beta})^*\linebreak t'$.}

\ext{If $s = \apps{x}{s_1}{s_j}$ for $s_i : \ctype_1 \arrtype \dots
\arrtype \ctype_{n'} \arrtype \bsort'$ with $\asort \greqsort \bsort'$
and $x \notin \FV(s_i)$ and $s_i \gracc t$, then the induction
hypothesis provides $C$-computable terms $w_1 : \ctype_1,\dots,w_{n'}
: \ctype_{n'}$ and a substitution $\xi'$ such that $\apps{(s_i(\gamma
\cup \xi'))}{w_1}{w_{n'}}\ (\accreduce{C} \cup \arr{\mathtt{head}\beta}
)^*\ t'$.  Since $x \notin \FV(s_i)$ we can safely assume that $x
\notin \domain(\xi')$.
Now recall that by assumption $\F$ contains infinitely many constructors
of all types; let $\symb{c} : \bsort \arrtype \asort$ be a symbol that
does not occur anywhere in $\Rules$.  We can safely assume that
$\Acc(\symb{c}) = \{ 1 \}$.  Then $w := \abs{y_1 \dots y_j z_1 \dots
z_\maa}{\symb{c}\ (\apps{y_i}{w_1}{w_{n'}})}$ is $C$-computable.
Now let $\xi := [x:=w] \cup [y:=\xi'(y) \mid y \in \V \wedge y \neq x]$,
and let $v_1,\dots,v_\maa$ be variables (which are $C$-computable by
Lemma~\ref{lem:compresults}(\ref{lem:compresults:vars})).
Then $\apps{(s(\gamma \cup \xi))}{v_1}{v_\maa} \arr{\mathtt{head}\beta}^{
j+\maa} \apps{s_i(\gamma \cup \xi')}{w_1}{w_n'}\ (\accreduce{C} \cup
\arr{\mathtt{head}\beta})^*\ t'$.}

\ext{Otherwise, $s = \apps{\identifier{f}}{s_1}{s_n}$ and $s_i \gracc t$
for some $i \in \Acc(\identifier{f})$; by the induction hypothesis
there exist $\xi$ and $C$-computable terms $w_1,\dots,w_{n'}$ such that
$s' := \apps{(s_i(\gamma \cup \xi))}{w_1}{w_{n'}}\ (\accreduce{C} \cup
\arr{\mathtt{head}\beta})^*\linebreak t'$.
We have $\apps{(s(\gamma \cup \xi))}{v_1}{v_\maa} \accreduce{C} s'$
for any $\vec{v}$ (e.g., variables).}
\end{proof}

\ext{From this we conclude:}

\begin{lemma}\label{lem:preservecompacc}
\ext{Let $s$ be a closed meta-term, $\gamma$ a substitution with
$\FMV(s) \subseteq \domain(\gamma) \subseteq \M$ and $t$ such that
$s \gracc t$ and $s\gamma$ is $C$-computable.  Then for all
substitutions $\delta$ mapping $\FV(t)$ to computable terms:
$t(\gamma \cup \delta)$ is $C$-computable.}
\end{lemma}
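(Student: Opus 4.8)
The plan is to peel off the functional type of $t$ using the definition of $C$-computability and then feed the resulting base-type goal into the general helper \refLemma{lem:preservecompacchelper}. Write the type of $t$ as $\btype_1 \arrtype \dots \arrtype \btype_n \arrtype \bsort$ with $\bsort \in \Sorts$. By the definition of computability at a functional type, it suffices to fix arbitrary $C$-computable terms $u_1 : \btype_1, \dots, u_n : \btype_n$ and to prove that the base-type term $t' := \apps{(t(\gamma \cup \delta))}{u_1}{u_n}$ lies in $C$. This term is well-formed, since $\FMV(t) \subseteq \FMV(s) \subseteq \domain(\gamma)$ (the inclusion $\FMV(t) \subseteq \FMV(s)$ holds because $s \gracc t$ only ever descends into sub-meta-terms) and $\delta$ substitutes for every variable in $\FV(t)$.

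Next I would check the hypotheses of \refLemma{lem:preservecompacchelper}. The only delicate point is its freshness requirement: $\FV(t')$ must be disjoint from the variables bound in $s$. Since $t(\gamma \cup \delta)$ is invariant under $\alpha$-conversion, I would pass to an $\alpha$-representative of the pair $(s,t)$ — renaming the bound variables of $s$ and, correspondingly, the variables they free into $t$, transporting $\delta$ along — chosen so that all bound variables of $s$ avoid $\FV(t')$; this loses no generality because the term whose computability we want, $t(\gamma \cup \delta)$, does not depend on the representative. The remaining hypotheses are immediate: $\FMV(s) \subseteq \domain(\gamma) \subseteq \M$, each $u_j$ is $C$-computable, and each $\delta(x)$ is $C$-computable with $\domain(\delta) = \FV(t) \subseteq \V$. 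The helper lemma then yields a $C$-computable substitution $\xi$ with $\domain(\xi) \subseteq \V$ and $C$-computable terms $v_1, \dots, v_\maa$ (for $s : \atype_1 \arrtype \dots \arrtype \atype_\maa \arrtype \asort$) such that $\apps{(s(\gamma \cup \xi))}{v_1}{v_\maa}\ (\accreduce{C} \cup \arr{\mathtt{head}\beta})^*\ t'$.

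To conclude, I would use that $s$ is closed: a variable substitution has no effect on a closed meta-term, so $s(\gamma \cup \xi) = s\gamma$. Since $s\gamma$ is $C$-computable by assumption and every $v_i$ is $C$-computable, $\maa$ applications of the definition of computability show that $\apps{(s\gamma)}{v_1}{v_\maa}$ is $C$-computable. Finally I would push computability forward along the displayed reduction: every $\accreduce{C}$ step preserves $C$-computability by \refLemma{lem:preservecompaccreduce}, and every $\arr{\mathtt{head}\beta}$ step is a $\comprel$ step, hence preserves $C$-computability by \refLemma{lem:preservecomp}; an easy induction on the length of the sequence gives that $t'$ is $C$-computable, which is what remained to be shown. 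The substantial work is carried by \refLemma{lem:preservecompacchelper}; the main obstacle in the present lemma is purely the bookkeeping of the $\alpha$-renaming, i.e.\ meeting the non-overlap hypothesis while respecting that some variables of $\FV(t)$ originate as binders of $s$.
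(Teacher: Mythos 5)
Your proposal is correct and follows essentially the same route as the paper's own proof: reduce to base type by applying $t(\gamma\cup\delta)$ to arbitrary computable $u_1,\dots,u_n$, invoke \refLemma{lem:preservecompacchelper}, use closedness of $s$ to identify $s(\gamma\cup\xi)$ with $s\gamma$, and transport computability along the $(\accreduce{C} \cup \arr{\mathtt{head}\beta})^*$ sequence via Lemmas \ref{lem:preservecompaccreduce} and \ref{lem:preservecomp}. Your explicit handling of the $\alpha$-renaming needed for the helper lemma's freshness hypothesis is a detail the paper leaves implicit, but it is handled correctly.
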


\begin{proof}
\ext{$t(\gamma \cup \delta)$ is $C$-computable if $\apps{(t(\gamma \cup
\delta))}{u_1}{u_n}$ is $C$-computable for all computable $u_1,\dots,
u_n$.  By \refLemma{lem:preservecompacchelper} and the fact that $s$
is closed, there exist $C$-computable terms $v_1,\dots,v_\maa$ such
that $\apps{(s\gamma)}{v_1}{v_\maa}\ (\accreduce{C} \cup \arr{\mathtt{
head}\beta})^* \apps{(t(\gamma \cup \delta))}{u_1}{u_n}$.  But
$s\gamma$ is $C$-computable, and therefore so is
$\apps{(s\gamma)}{v_1}{v_\maa}$.  Since $\accreduce{C}$ and
$\arr{\mathtt{head}\beta}$ are both computability-preserving by
Lemmas \ref{lem:preservecompaccreduce} and \ref{lem:preservecomp}
respectively (as $\arr{\mathtt{head}\beta}$ is included in $\comprel$)
we are done.}
\end{proof}

\begin{lemma}\label{lem:neutralcomp}
A neutral term is $C$-computable if and only if all its
$\comprel$-reducts are $C$-computable.
\end{lemma}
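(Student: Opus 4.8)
The plan is to prove the two implications separately; the forward direction is immediate and the backward one carries all the work. For the forward direction, if $s$ is $C$-computable and $s \comprel t$ then $t$ is $C$-computable by \refLemma{lem:preservecomp}, so every $\comprel$-reduct of a computable (neutral) term is computable.

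For the backward direction, assume $s$ is neutral and all its $\comprel$-reducts are $C$-computable. Since computable terms terminate (\refLemma{lem:compresults}(\ref{lem:compresults:term})), $s$ is itself terminating. I would show $s$ is $C$-computable by induction on its type $\atype \equiv \atype_1 \arrtype \dots \arrtype \atype_\maa \arrtype \asort$. Unfolding the definition of $C$-computability, it suffices to prove $u := \apps{s}{t_1}{t_\maa} \in C$ for arbitrary $C$-computable (hence terminating) $t_1 : \atype_1, \dots, t_\maa : \atype_\maa$. This $u$ has base type and is again neutral (it inherits the head of $s$), so by the third defining clause of an RC-set (\refDef{def:RCset}) it is enough to show that every $\comprel$-reduct of $u$ lies in $C$; I would prove this by a nested well-founded induction on $(t_1,\dots,t_\maa)$ under the product extension of $\comprel$, which is well-founded because each $t_j$ terminates.

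When $s = \apps{x}{s_1}{s_k}$ is variable-headed, the second requirement on $\comprel$ in \refDef{def:comprel} forces every reduct of $u$ to reduce exactly one argument. If the reduced argument lies in $s$, so $s \comprel s'$, then $s'$ is computable by the outer hypothesis and thus $\apps{s'}{t_1}{t_\maa} \in C$; if instead some $t_j \comprel t_j'$, then $t_j'$ is computable by \refLemma{lem:preservecomp} and the argument tuple strictly decreases, so the nested induction hypothesis applies. Either way the reduct is in $C$.

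The main obstacle is the redex-headed case $s = \apps{(\abs{x}{v})\ s_0}{s_1}{s_k}$, where \refDef{def:comprel} gives no structural description of the reducts of $u$. Here I would use that head-$\beta$ steps are contained in $\comprel$ (as already exploited in the proof of \refLemma{lem:MINSN}): the head-$\beta$ contractum $s^\beta$ is a $\comprel$-reduct of $s$, hence computable, so $u^\beta := \apps{s^\beta}{t_1}{t_\maa} \in C$. For an arbitrary reduct $u \comprel w$, applying the simulation property (third item of \refDef{def:comprel}) to $u \arr{\mathtt{head}\beta} u^\beta$ yields $z$ with $u^\beta \comprel^* z$ and $w \arr{\mathtt{head}\beta} z$, and closure of $C$ under $\comprel$ gives $z \in C$. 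To conclude $u \in C$ I would not rely on the bare RC-set clause but on the explicit characterization of $C$ from \refThm{thm:defC2}: termination of $u$ under $\comprel \cup \accreduce{C}$ follows by pushing any reduction sequence out of $u$ onto the terminating $u^\beta$ via the simulation and normal-form clauses of \refDef{def:comprel}, and the accessibility condition on $\afun$-headed $\comprel^*$-reducts of $u$ reduces, again through the simulation, to the same condition already guaranteed by $u^\beta \in C$. Getting this routing precise — ensuring genuine membership in $C$ rather than only a reduct in $C$ — is the delicate point, since redex-headed neutral terms cannot be analysed by enumerating their reducts the way variable-headed ones can.
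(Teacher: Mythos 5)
Your forward direction and your variable-headed case coincide with the paper's proof: the paper likewise fixes $C$-computable $t_1,\dots,t_\maa$, argues by induction on the tuple $(t_1,\dots,t_\maa)$ ordered by the product extension of $\comprel$ that every reduct of $u := \apps{s}{t_1}{t_\maa}$ lies in $C$, and concludes by the neutral-term clause of \refDef{def:RCset}. The divergence -- and the gap -- is your redex-headed case. The paper does not treat it separately: every reduct of $u$ is either $\apps{s'}{t_1}{t_\maa}$ with $s \comprel s'$, or $\apps{s}{t_1}{t_i'} \cdots t_\maa$ with $t_i \comprel t_i'$, and in the first case it is irrelevant what shape $s'$ has (head-normal, an abstraction, whatever): $s'$ is $C$-computable by the hypothesis of the lemma, each $t_i$ is $C$-computable, so $\apps{s'}{t_1}{t_\maa} \in C$ directly by the definition of computability. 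The hypothesis ``all reducts of $s$ are computable'' absorbs everything that happens after the head redex of $s$ fires, so no analysis of $u$ beyond a single step is needed, and one never has to unfold the characterization of $C$ from \refThm{thm:defC2}.

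Your alternative route contains a genuine unresolved gap, which you flag but do not close: the termination half of membership in $C$. The simulation clause of \refDef{def:comprel} only yields, for $u \comprel w$, some $z$ with $u^\beta \comprel^* z$ and $w \arrr{\mathtt{head}\beta} z$, where $\comprel^*$ may be \emph{zero} steps. Hence an infinite $(\comprel \cup \accreduce{C})$-sequence out of $u$ projects to a sequence out of $u^\beta$ that may stutter forever (steps inside arguments erased by the head $\beta$-step project to nothing), and termination of $u^\beta$ then yields no contradiction. Ruling out infinite stuttering requires knowing that the invisible steps occur inside terminating subterms -- which is exactly the structural decomposition of $u$'s reducts, combined with the induction on $(t_1,\dots,t_\maa)$, that you were trying to avoid. (Your accessibility half is fine: an $\afun$-headed $\comprel^*$-reduct of $u$ is head-$\beta$-normal, so iterated simulation makes it a $\comprel^*$-reduct of $u^\beta$.) Note also that your starting point, that the head contractum $s^\beta$ is a single $\comprel$-reduct of $s$, is not among the axioms of \refDef{def:comprel} (only reachability of the head-$\beta$ \emph{normal form} via $\comprel^*$ is guaranteed); it does hold for the rewrite relations that instantiate $\comprel$ via \refLemma{lem:rewriterelationsuffices} -- but once you permit yourself such structural facts about $\comprel$, the paper's one-step decomposition of reducts is equally available, and the entire detour through \refThm{thm:defC2} becomes unnecessary.
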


\begin{proof}
That $C$-computability of a term implies $C$-computability of its
reducts is given by \refLemma{lem:preservecomp}.  For the other
direction, let $s : \atype_1 \arrtype \dots \arrtype \atype_\maa \arrtype
\asort$ be neutral and suppose that all its reducts are
$C$-computable.  To prove that also $s$ is $C$-computable, we must see
that for all $C$-computable terms $t_1 : \atype_1,\dots,t_\maa :
\atype_\maa$ the term $u := \apps{s}{t_1}{t_\maa}$ is in $C$.  We prove
this by induction on $(t_1,\dots,t_\maa)$ ordered by
$\comprel_{\mathtt{prod}}$.
Clearly, since $s$ does not have the form $\apps{\afun}{s_1}{s_n}$ with
$\Acc(\afun) \neq \emptyset$, nor does $u$, so $u \in C$ if all its
reducts are in $C$.  But since $s$ is neutral, all reducts of
$u$ either have the form $\apps{s'}{t_1}{t_\maa}$ with $s \comprel s'$
-- which is in $C$ because all $t_i$ are $C$-computable and $s'$ is
computable as a reduct of $s$ -- or the form $\apps{s}{t_1}{t_i'} \cdots
t_\maa$ with $t_i \comprel t_i'$ -- which is in $C$ by the induction
hypothesis.
\end{proof}

Using the $\arr{\mathtt{head}\beta}$-restrictions on $\comprel$, we
obtain the following result:

\begin{lemma}\label{lem:abscomputable}
Let $x : \atype \in \V$.
A term $\abs{x}{s}$ is $C$-computable if and only if $s[x:=t]$ is
computable for all $C$-computable $t : \atype$.
\end{lemma}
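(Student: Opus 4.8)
The plan is to prove both implications directly from the definition of computability at an arrow type, using the head-$\beta$ behaviour of $\comprel$ recorded in \refDef{def:comprel}. For the forward direction, suppose $\abs{x}{s}$ is $C$-computable and let $t : \atype$ be $C$-computable. By the clause for arrow types in \refDef{def:RCset}, $(\abs{x}{s})\ t$ is $C$-computable. Since $(\abs{x}{s})\ t \arr{\mathtt{head}\beta} s[x:=t]$ (the head-step of \refDef{def:comprel} with no trailing arguments) and $\arr{\mathtt{head}\beta}$ is included in $\comprel$, \refLemma{lem:preservecomp} immediately yields that $s[x:=t]$ is $C$-computable. This direction is routine.

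For the converse, assume $s[x:=t]$ is $C$-computable for every $C$-computable $t : \atype$. First I would observe that $s$ itself is $C$-computable: taking $t := x$, a variable of type $\atype$ and hence $C$-computable by \refLemma{lem:compresults}, gives $s = s[x:=x]$. To show $\abs{x}{s}$ is $C$-computable it suffices, by \refDef{def:RCset}, to show that $(\abs{x}{s})\ t$ is $C$-computable for an arbitrary fixed $C$-computable $t : \atype$. As $(\abs{x}{s})\ t$ is neutral, \refLemma{lem:neutralcomp} reduces this to proving that every $\comprel$-reduct of $(\abs{x}{s})\ t$ is $C$-computable.

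The heart of the argument is therefore a $\beta$-expansion step, which I would prove by well-founded induction on $h(s) + h(t)$, where $h(\cdot)$ denotes the length of a longest $\comprel$-reduction out of a $C$-computable (hence, by \refLemma{lem:compresults}, terminating) term. Given a reduct $(\abs{x}{s})\ t \comprel r$, I would apply the head-$\beta$ commutation clause of \refDef{def:comprel} to the step $(\abs{x}{s})\ t \arr{\mathtt{head}\beta} s[x:=t]$ together with $(\abs{x}{s})\ t \comprel r$, obtaining a term $z$ with $s[x:=t] \comprel^* z$ and $r \arrr{\mathtt{head}\beta} z$. Since $s[x:=t]$ is $C$-computable, so is $z$ by \refLemma{lem:preservecomp}. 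If $r = z$, then $r$ is computable directly. Otherwise $r$ still exhibits a head redex, so it again has the shape of a $\beta$-redex whose components are $\comprel$-reducts of $s$ and of $t$ of strictly smaller height and whose head-$\beta$ reduct is a $\comprel$-reduct of the computable term $s[x:=t]$; the induction hypothesis then applies to conclude that $r$ is computable.

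The main obstacle is precisely this expansion step. Because $\comprel$ is an abstract relation constrained only by the clauses of \refDef{def:comprel}, I cannot simply decompose an arbitrary reduct of $(\abs{x}{s})\ t$ as a reduction occurring inside $s$ or inside $t$, as one would for the concrete relation $\arr{\Rules}$; the head-$\beta$ commutation and head-$\beta$-normal-form clauses are the only tools available for analysing such reducts. The care lies in choosing the induction measure so that it provably decreases whenever these clauses are invoked, and in verifying that each head-$\beta$ reduct that arises is genuinely a $\comprel$-reduct of the computable term $s[x:=t]$, so that \refLemma{lem:preservecomp} and the induction hypothesis can be combined to close the argument.
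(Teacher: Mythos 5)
Your forward direction and the opening of your converse match the paper's proof: reduce computability of $\abs{x}{s}$ to computability of $(\abs{x}{s})\ t$ for an arbitrary computable $t$, observe that this term is neutral, invoke \refLemma{lem:neutralcomp} so that only the $\comprel$-reducts $r$ of $(\abs{x}{s})\ t$ remain to be handled, and use the commutation clause of \refDef{def:comprel} to obtain $z$ with $s[x:=t] \comprel^* z$ and $r \arrr{\mathtt{head}\beta} z$; the case $r = z$ then follows from \refLemma{lem:preservecomp}. The gap is in your remaining case. You assert that $r$ ``again has the shape of a $\beta$-redex whose components are $\comprel$-reducts of $s$ and of $t$ of strictly smaller height''. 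The first half of this (that $r$ still has a head redex) is correct, since $r \arr{\mathtt{head}\beta}^+ z$; but the claim about its components is precisely the decomposition that you yourself correctly identify as unavailable: none of the four clauses of \refDef{def:comprel} allows you to conclude that a $\comprel$-reduct of an application $(\abs{x}{s})\ t$ is again an application of a reduct of $\abs{x}{s}$ to a reduct of $t$ --- monotonicity goes in the opposite direction. No choice of induction measure repairs this, because the decrease of $h(s)+h(t)$ can only be certified after the components have been identified. Moreover, even if the decomposition were granted, applying your induction hypothesis to $r = (\abs{x}{s'})\ t'$ would require knowing that $s'[x:=t']$ is computable; from the commutation step you only know that some head-$\beta$-reduct of $r$ is a $\comprel$-reduct of $s[x:=t]$, and inferring computability of $s'[x:=t']$ from computability of its head-$\beta$-reduct is exactly the $\beta$-expansion property being proved (avoiding that circularity would need stability of $\comprel$ under substitution, which is also not among the assumed properties).

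The missing idea --- and the way the paper closes the argument --- is that no decomposition of $r$ is needed at all. Since $r \arr{\mathtt{head}\beta}^+ z$, the term $r$ is itself \emph{neutral}, so \refLemma{lem:neutralcomp} applies to $r$ directly: $r$ is computable as soon as all of \emph{its} $\comprel$-reducts are. Those reducts are again $\comprel^+$-reducts of $(\abs{x}{s})\ t$, so the paper's proof is a well-founded induction on the reducts of $(\abs{x}{s})\ t$ ordered by $\comprel$, with the commutation clause and \refLemma{lem:preservecomp} handling every reduct that has reached a reduct of $s[x:=t]$, and \refLemma{lem:neutralcomp} plus the induction hypothesis handling every reduct that still carries a head redex. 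This recursion on reducts, rather than on the heights of the (inaccessible) components, is what makes the argument go through for the abstract relation $\comprel$; your measure-based argument is the classical one for a concrete rewrite relation, where reducts of an application do decompose, but it does not transfer to the setting in which this lemma is stated and used.
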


\begin{proof}
If $\abs{x}{s}$ is $C$-computable, then by definition so is
$(\abs{x}{s})\ t$ for all $C$-computable $t$; by
\refLemma{lem:preservecomp} and inclusion of $\arr{\mathtt{head}\beta}$
in $\comprel$, this implies $C$-computability of the reducts $s[x:=t]$.

For the other direction, suppose $s[x:=t]$ is $C$-computable for all
$C$-computable $t : \atype$.
To obtain $C$-computability of $\abs{x}{s}$, we must see that
$(\abs{x}{s})\ t$ is $C$-computable for all $C$-computable $t : \atype$.
%
As $(\abs{x}{s})\ t$ is neutral, this holds if all its
$\comprel$-reducts $u$ are $C$-computable by
\refLemma{lem:neutralcomp}, and certainly if all its
$\comprel^+$-reducts are $C$-computable, which we prove by induction
on $u$ oriented with $\comprel$.  But by definition of $\comprel$ (and
induction on the derivation $(\abs{x}{s})\ t \comprel^+ u$) there
exists a term $v$ such that $s[x:=t] \comprel^* v$ and $u
\arrr{\mathtt{head}\beta} v$.  If $u = v$ we therefore obtain the
required property, and if $u \arr{\mathtt{head}\beta}^+ v$ then $u$
is neutral and therefore is $C$-computable if all its
$\comprel$-reducts are, which is the case by the induction hypothesis.
\end{proof}

\section{Dynamic dependency pairs: the main result}\label{app:ddp}

In this appendix, we prove \refThm{thm:chain}, which states that
an AFSM $(\F,\Rules)$ is terminating if and only if it admits no (minimal,
formative) infinite $(\DDP(\Rules),\Rules)$-dependency chains.
This proof follows the same reasoning as used in
\cite[Thm.~5.7]{kop:raa:12}, but is adapted to the new setting.
The new setting also allows the completeness result,
\refLemma{lem:candcomplete}, to be obtained without requiring
left-linearity.

\medskip
We first show that the existence of any infinite $(\DDP(\Rules),
\Rules)$-chain (minimal and formative or not) shows non-termination of
the relation $\arr{\Rules}$, through a number of lemmas exploring the
relation between dependency pairs and reduction steps.

\begin{lemma}\label{lem:candidatereduce}
Let $s,t$ be meta-terms and suppose $s \bsuptermeq{A} t$ for some set
$A$ of meta-variable conditions.  Then for any substitution $\gamma$
that respects $A$ and has a finite domain with $\FMV(s) \subseteq
\domain(\gamma) \subseteq \M$: $s\gamma\ (\supterm \mathop{\cup}
\arr{\beta})^*\ t\gamma$.
\end{lemma}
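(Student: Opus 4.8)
The plan is to induct on the derivation of $s \bsuptermeq{A} t$, handling the five clauses that define $\bsuptermeq{A}$ as separate cases. Throughout I exploit that $\domain(\gamma) \subseteq \M$, so $\gamma$ fixes every variable: in particular $(\abs{x}{u})\gamma = \abs{x}{(u\gamma)}$ after the usual $\alpha$-renaming, and for any meta-term $\apps{a}{s_1}{s_n}$ repeated use of the clause $(u\ v)\gamma = (u\gamma)(v\gamma)$ gives $(\apps{a}{s_1}{s_n})\gamma = \apps{(a\gamma)}{(s_1\gamma)}{(s_n\gamma)}$. This holds even when $a$ is a meta-variable application whose instance $a\gamma$ is produced by a $\beta$-development, so each of $a\gamma$ and the $s_j\gamma$ occurs as a genuine subterm of $(\apps{a}{s_1}{s_n})\gamma$. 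Before invoking the induction hypothesis in any case I check that its premises are preserved: each premise meta-term $u$ satisfies $\FMV(u) \subseteq \FMV(s) \subseteq \domain(\gamma)$, and $\gamma$ continues to respect the same set $A$. I also note that the reflexive--transitive subterm relation $\suptermeq$ is contained in $(\supterm \mathop{\cup} \arr{\beta})^*$, so producing a $\suptermeq$-step suffices.

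The three subterm clauses are routine. For $s = \apps{t}{s_1}{s_n}$ with $n \geq 0$, either $n = 0$ and $s\gamma = t\gamma$, or $t\gamma$ is the head subterm of $s\gamma$, so $s\gamma \suptermeq t\gamma$. For $s = \abs{x}{u}$ we have $s\gamma \supterm u\gamma$ and finish by the induction hypothesis on $u \bsuptermeq{A} t$; likewise for $s = \apps{a}{s_1}{s_n}$ descending into an argument $s_i$, where $s_i\gamma$ is a subterm of $s\gamma$. The head-$\beta$ clause $s = \apps{(\abs{x}{u})}{s_0}{s_n}$ needs an actual reduction: $s\gamma = \apps{(\abs{x}{u\gamma})}{(s_0\gamma)}{(s_n\gamma)} \arr{\beta} \apps{((u\gamma)[x:=s_0\gamma])}{(s_1\gamma)}{(s_n\gamma)}$, and the standard substitution-commutation identity $(u[x:=s_0])\gamma = (u\gamma)[x:=s_0\gamma]$ (valid since $x \in \V \setminus \domain(\gamma)$, under the usual freshness conventions) rewrites the right-hand side as $(\apps{u[x:=s_0]}{s_1}{s_n})\gamma$, to which the induction hypothesis applies; composing the $\beta$-step with the resulting reduction closes this case.

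The crux is the meta-variable clause $s = \apps{\meta{Z}{t_1,\dots,t_\mac}}{s_1}{s_n}$ with $t_i \bsuptermeq{A} t$ and $(Z:i) \in A$. Since $\gamma$ respects $A$, we have $\gamma(Z) \approxp \abs{x_1 \dots x_\mac}{w}$ with $x_i \in \FV(w)$, and by definition $\meta{Z}{t_1,\dots,t_\mac}\gamma = w[x_1:=t_1\gamma,\dots,x_\mac:=t_\mac\gamma]$. Because $x_i$ occurs free in $w$, the instance $t_i\gamma$ occurs in this term, so $\meta{Z}{t_1,\dots,t_\mac}\gamma \suptermeq t_i\gamma$; as the meta-variable application is the head of $s\gamma$, we obtain $s\gamma \suptermeq t_i\gamma$ and conclude via the induction hypothesis on $t_i \bsuptermeq{A} t$. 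This is exactly where the meta-variable condition is indispensable: without $(Z:i) \in A$ the instance $\gamma(Z)$ could discard its $i$-th argument, i.e.\ $x_i \notin \FV(w)$, and then $t_i\gamma$ would vanish from $s\gamma$, breaking the subterm step. The main obstacle is therefore not any single case but the careful bookkeeping of how $\gamma$ interacts with meta-variable applications and $\beta$-developments --- concretely, justifying the commutation identity of the head-$\beta$ clause and verifying that the occurrence of $t_i\gamma$ survives the parallel variable substitution in the meta-variable clause.
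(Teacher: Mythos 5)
Your proof is correct and follows essentially the same route as the paper's: induction on the derivation of $s \bsuptermeq{A} t$ with the same five-case analysis, using a subterm step for the structural clauses, a head $\beta$-step plus the substitution-commutation identity for the $\beta$-clause, and the meta-variable condition $(Z:i) \in A$ to guarantee $x_i \in \FV(w)$ so that $t_i\gamma$ survives as a subterm in the meta-variable clause. If anything, your handling of trailing arguments $s_1,\dots,s_n$ on the meta-variable application is slightly more explicit than the paper's.
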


\begin{proof}
By induction on the definition of $\bsuptermeq{A}$.  Consider the last
step in its derivation.
\begin{itemize}
\item If $s = \apps{t}{s_1}{s_n}$ then we have
  $s\gamma = t\gamma$ if $n = 0$
  and $s\gamma = \apps{(t\gamma)}{(s_1\gamma)}{(s_n\gamma)} \supterm
  t\gamma$ if $n > 0$.
\item If $s = \abs{x}{u}$ and $u \bsuptermeq{A} t$, then by
  $\alpha$-conversion we can assume that $x \notin \FV(\gamma(Z))$ for
  any $Z \in \FMV(s)$. Thus, $s\gamma = \abs{x}{(u\gamma)} \supterm
  u\gamma\ (\supterm \mathop{\cup} \arr{\beta})^*\ t\gamma$ by the induction
  hypothesis.
\item If $s = \apps{(\abs{x}{u})}{s_0}{s_n}$ and $\apps{u[x:=s_0]}{
  s_1}{s_n} \bsuptermeq{A} t$, then by $\alpha$-conversion we can
  safely assume that $x$ is fresh wrt $\gamma$ as above; thus,
  $s\gamma = \apps{(\abs{x}{(u\gamma)})}{
  (s_0\gamma)}{(s_n\gamma)} \arr{\beta} \apps{(u\gamma[x:=s_0\gamma])
  }{(s_1\gamma)}{(s_n\gamma)} = (\apps{u[x:=s_0]}{s_1}{s_n})\gamma$,
  which reduces to $t\gamma$ by the induction hypothesis.
\item If $s = \apps{u}{s_1}{s_n}$ and $s_i \bsuptermeq{A} t$, then
  $s\gamma = \apps{(u\gamma)}{(s_1\gamma)}{(s_n\gamma)} \supterm
  s_i\gamma\ (\supterm \mathop{\cup} \arr{\beta})^*\ t\gamma$ by the
  induction hypothesis.
\item If $s = \apps{\meta{Z}{t_1,\dots,t_\mac}}{s_1}{s_n}$ and $t_i
  \bsuptermeq{A} t$ for some $1 \leq i \leq k$ with $(Z : i) \in A$,
  then because $\gamma$ respects $A$ we have $\gamma(Z) \approxp
  \abs{x_1 \dots x_\mac}{w}$ for some $w$ with $x_i \in \FV(w)$.  Thus,
  $s\gamma = w[x_1:=t_1\gamma,\dots,x_n:=t_n\gamma] \suptermeq
  t_i\gamma$.  We again complete by the induction hypothesis.
\qedhere
\end{itemize}
\end{proof}

\begin{lemma}\label{lem:dpreduce}
For $\ell^\sharp \arrdp p^\sharp\ (A) \in \DDP(\Rules)$ and substitution
$\gamma$ on domain $\FMV(\ell)$ such that $\gamma$ respects the
meta-variable conditions in $A$: both $\ell\gamma$ and $p\gamma$ are
terms and $\ell\gamma\ (\arr{\Rules} \mathop{\cup} \supterm)^+\ p\gamma$.
\end{lemma}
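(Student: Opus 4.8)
The plan is to unfold the definition of $\DDP(\Rules)$ and then chain together two facts: a single mandatory rewrite step coming from the underlying rule, followed by a sequence of subterm- and $\beta$-steps coming from the candidate relation, which \refLemma{lem:candidatereduce} has already packaged for us. Throughout, the marks $(\cdot)^\sharp$ are irrelevant, since the claim only concerns the unmarked $\ell$ and $p$.

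First I would observe that, since $\ell^\sharp \arrdp p^\sharp\ (A) \in \DDP(\Rules)$, there is a rule $\ell \arrz r \in \RulesEta$ with $p\ (A) \in \cand(r)$ (the side condition $\neg(\ell \supterm p)$ plays no role here). By the definition of $\RulesEta$, this rule has the form $\apps{\ell_0}{Z_1}{Z_i} \arrz \apps{r_0}{Z_1}{Z_i}$ for some $\ell_0 \arrz r_0 \in \Rules$ and fresh meta-variables $Z_1,\dots,Z_i$, so that $\FMV(r) = \FMV(r_0) \cup \{Z_1,\dots,Z_i\} \subseteq \FMV(\ell_0) \cup \{Z_1,\dots,Z_i\} = \FMV(\ell)$, using $\FMV(r_0) \subseteq \FMV(\ell_0)$ from \refDef{def:rule}. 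Because $\gamma$ is a \emph{substitution} (not merely a meta-substitution), it maps everything in $\domain(\gamma) = \FMV(\ell)$ to terms; hence $\ell\gamma$, $r\gamma$ and $p\gamma$ are all terms.

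Next I would produce the mandatory rewrite step. Writing $\delta$ for the restriction of $\gamma$ to $\FMV(\ell_0)$, the clause (\textsf{Rule}) gives $\ell_0\gamma = \ell_0\delta \arr{\Rules} r_0\delta = r_0\gamma$; applying monotonicity of $\arr{\Rules}$ to the common trailing arguments $\gamma(Z_1),\dots,\gamma(Z_i)$ then yields $\ell\gamma = \apps{(\ell_0\gamma)}{\gamma(Z_1)}{\gamma(Z_i)} \arr{\Rules} \apps{(r_0\gamma)}{\gamma(Z_1)}{\gamma(Z_i)} = r\gamma$ (when $i = 0$ this is simply $\ell_0\gamma \arr{\Rules} r_0\gamma$).

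Finally I would route the candidate through \refLemma{lem:candidatereduce}. From $p\ (A) \in \cand(r)$ we have $r \bsuptermeq{A} p$, and by hypothesis $\gamma$ respects $A$; since moreover $\FMV(r) \subseteq \domain(\gamma) \subseteq \M$ with $\domain(\gamma)$ finite, the lemma delivers $r\gamma\ (\supterm \mathop{\cup} \arr{\beta})^*\ p\gamma$. As $\arr{\beta} \mathop{\subseteq} \arr{\Rules}$, this is a $(\supterm \mathop{\cup} \arr{\Rules})^*$ reduction, and prepending the rewrite step of the previous paragraph gives $\ell\gamma\ (\arr{\Rules} \mathop{\cup} \supterm)^+\ p\gamma$, as required. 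I do not expect any serious obstacle: the only points needing care are the bookkeeping that $\gamma$'s domain contains $\FMV(r)$ so that \refLemma{lem:candidatereduce} applies, and the appeal to monotonicity that turns the $\RulesEta$-rule back into a genuine $\arr{\Rules}$-step on the original rule set.
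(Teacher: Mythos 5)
Your proposal is correct and follows essentially the same route as the paper's own proof: unfold $\DDP(\Rules)$ into a rule of $\Rules$ applied at the head (with the fresh trailing meta-variables $Z_1,\dots,Z_i$ handled by monotonicity), then invoke \refLemma{lem:candidatereduce} on $r \bsuptermeq{A} p$ and conclude via $\arr{\beta} \mathop{\subseteq} \arr{\Rules}$. Your additional bookkeeping (the domain check $\FMV(r) \subseteq \FMV(\ell)$ and the explicit termhood of $\ell\gamma$ and $p\gamma$) matches the paper's parenthetical remarks and is sound.
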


\begin{proof}
By definition of $\DDP$, there is a rule $\ell' \arrz r$ such that
$\ell = \apps{\ell'}{Z_1}{Z_i}$ and $p\ (A) \in \cand(\apps{r}{Z_1}{
Z_i})$, so $\apps{r}{Z_1}{Z_i} \bsuptermeq{A} p$.
Clearly, we have $\ell\gamma \arr{\Rules} (r\ \vec{Z})\gamma$ by that
rule (applied at the head), and $(\apps{r}{Z_1}{Z_i})\gamma\ (\supterm
\mathop{\cup} \arr{\Rules})^*\ p\gamma$ by \refLemma{lem:candidatereduce}
($\domain(\gamma)$ contains all meta-variables in $\apps{r}{Z_1}{Z_i}$
because $\FMV(r) \subseteq \FMV(\ell')$).  We are done because
$\arr{\beta}$ is included in $\arr{\Rules}$.
\end{proof}

\begin{lemma}\label{lem:chain:complete}
If there is an infinite
$(\DDP(\Rules),\Rules)$-dependency chain starting in
$s^\sharp$, then $s$ is non-terminating under $\arr{\Rules}$.
\end{lemma}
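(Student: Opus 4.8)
The idea is to read off, from the infinite chain, an infinite $(\arr{\Rules} \cup \supterm)$-sequence starting at the \emph{unmarked} term $s$, and then to argue that no $\arr{\Rules}$-terminating term admits such a sequence. I would first fix the notation $\unsharp{u}$ for $u$ with every marked symbol $\afun^\sharp$ replaced by $\afun$, and record the routine facts that $\unsharp{\cdot}$ commutes with substitution, $\unsharp{(u\gamma)} = \unsharp{u}\,\unsharp{\gamma}$ where $\unsharp{\gamma}$ is $\gamma$ with its range unmarked, that it preserves the subterm relations $\supterm$ and $\suptermeq$, and that it preserves $\arr{\Rules}$ (hence $\arrr{\Rules}$). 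None of these touch the variable/binder structure, so a substitution respects a set $A$ of meta-variable conditions iff its unmarked version does. In particular $\unsharp{s_0} = \unsharp{(s^\sharp)} = s$.

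\textbf{Tracing one chain step.} Given the chain $[(\rho_i,s_i,t_i) \mid i \in \N]$, I would show $\unsharp{s_i}\ (\arr{\Rules} \cup \supterm)^+\ \unsharp{t_i}$ for every $i$, by the case distinction of \refDef{def:chain}. If $\rho_i = \mathtt{beta}$: in subcase (a) the $\beta$-step commutes with $\unsharp{\cdot}$, giving $\unsharp{s_i} \arr{\beta} \unsharp{t_i}$; in subcase (b) one gets $\unsharp{s_i} \arr{\beta} \unsharp{u}[x:=\unsharp{v}] \suptermeq \unsharp{q}[x:=\unsharp{v}] = \unsharp{t_i}$, using $u \suptermeq q$. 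If $\rho_i = \ell^\sharp \arrdp p^\sharp\ (A) \in \DDP(\Rules)$ with witnessing $\gamma$, then \refLemma{lem:dpreduce} applied to $\unsharp{\gamma}$ (together with closure under the harmless renaming of the fresh variables in $\FV(p_i)$) yields $\unsharp{s_i} = \ell\,\unsharp{\gamma}\ (\arr{\Rules} \cup \supterm)^+\ p\,\unsharp{\gamma} = \unsharp{(p^\sharp\gamma)}$; in subcase (a) this is exactly $\unsharp{t_i}$, while in subcase (b) one has $p^\sharp\gamma = w[\vec{x}:=\vec{u_j\gamma}] \suptermeq v[\vec{x}:=\vec{u_j\gamma}]$, so after unmarking $\unsharp{(p^\sharp\gamma)} \suptermeq \unsharp{t_i}$, supplying one extra $\supterm$ step. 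Finally, the linking condition~(3) of \refDef{def:chain} gives $\unsharp{t_i} \arrr{\Rules} \unsharp{s_{i+1}}$ by monotonicity. Concatenating, $\unsharp{s_i}\ (\arr{\Rules} \cup \supterm)^+\ \unsharp{s_{i+1}}$ for all $i$, so from $\unsharp{s_0} = s$ we obtain an infinite $(\arr{\Rules} \cup \supterm)$-reduction.

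\textbf{From the union back to $\arr{\Rules}$.} It remains to show that an $\arr{\Rules}$-terminating term cannot start an infinite $(\arr{\Rules} \cup \supterm)$-reduction. Since $\arr{\Rules}$ is monotonic, subterms of $\arr{\Rules}$-terminating terms are again $\arr{\Rules}$-terminating, so every term in such a sequence is $\arr{\Rules}$-terminating. On $\arr{\Rules}$-terminating terms $\supterm$ commutes over $\arr{\Rules}$: if $a \supterm b \arr{\Rules} c$ then, replacing $b$ by $c$ at its position in $a$, monotonicity gives $a \arr{\Rules} a' \supterm c$, i.e.\ $\supterm \cdot \arr{\Rules}\ \subseteq\ \arr{\Rules} \cdot \supterm$. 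As $\supterm$ is well-founded and $\arr{\Rules}$ terminates on these terms, the standard quasi-commutation criterion shows that $\arr{\Rules} \cup \supterm$ terminates on them, contradicting the infinite reduction. Hence $s$ is non-terminating under $\arr{\Rules}$.

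\textbf{Main obstacle.} The casework is mostly bookkeeping once $\unsharp{\cdot}$ is set up; the delicate points are the two \emph{collapsing} situations (subcase (b) of the $\mathtt{beta}$ step and subcase (b) of a $\DDP$ step), where a $\beta$-development of a meta-substitution is followed by descending into a chosen non-variable subterm, and one must check that this descent becomes a genuine $\supterm$ step after unmarking. The conceptual crux is the final commutation argument reducing termination of $\arr{\Rules} \cup \supterm$ to termination of $\arr{\Rules}$; everything else reduces to \refLemma{lem:candidatereduce} and \refLemma{lem:dpreduce}.
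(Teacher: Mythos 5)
Your proposal is correct and takes essentially the same route as the paper's own proof: remove the $\sharp$ marks, use \refLemma{lem:dpreduce} (with $\arr{\beta} \subseteq \arr{\Rules}$ for the $\mathtt{beta}$ steps) to turn each chain step into $(\arr{\Rules} \cup \supterm)^+$ steps and condition (3) of \refDef{def:chain} into $\arrr{\Rules}$ steps, yielding an infinite $(\arr{\Rules} \cup \supterm)$-sequence from $s$, and then conclude non-termination of $\arr{\Rules}$. The only presentational difference is that the paper compresses the last step into the phrase ``due to monotonicity of $\arr{\Rules}$'', whereas you spell out the underlying quasi-commutation argument ($\supterm \cdot \arr{\Rules} \subseteq \arr{\Rules} \cdot \supterm$ plus well-foundedness of $\supterm$), which is exactly the standard justification being invoked.
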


\begin{proof}
Let $\unsharp{s_i}, \unsharp{t_i}$ denote the terms $s_i,t_i$ with all $\sharp$ marks
removed.
An
infinite
$(\DDP(\Rules),\Rules)$-dependency chain provides a sequence $(s_i,
t_i)$ for $i \in \N$ such that for all $i$, $\unsharp{s_i}\ (\arr{\Rules} 
\mathop{\cup} \supterm)^+\ \unsharp{t_i}$ (either because $\arr{\beta}$ is included in
$\arr{\Rules}$ or by \refLemma{lem:dpreduce}), and $\unsharp{t_i}
\arrr{\Rules} \unsharp{s_{i+1}}$.  Thus, we obtain an infinite $\arr{\Rules}
\mathop{\cup} \supterm$ sequence, which provides an infinite $\arr{\Rules}$
sequence due to monotonicity of $\arr{\Rules}$.
\end{proof}

Thus, if we can prove the existence of an \ext{infinite} $(\DDP(\Rules),
\Rules)$-dependency chain, we obtain non-termination of $\arr{\Rules}$ by
Lemma \ref{lem:chain:complete}.
Now let us consider the other direction.
We start once more by considering the relation $\bsuptermeq{A}$.

\begin{lemma}\label{lem:candcomplete}
Let $s$ be a meta-term and $\gamma$ a substitution on a finite domain
with $\FMV(s) \subseteq \domain(\gamma) \subseteq \M$, such that all
$\gamma(Z)$ are terminating.
If $s\gamma$ is non-terminating, then there exists a pair $t\ (A) \in
\cand(s)$ such that $t\gamma$ is non-terminating, $\gamma$ respects $A$,
and $t'\gamma$ is terminating for all $t' \neq t$ such that $t
\bsuptermeq{B} t'$ for some $B$ respected by $\gamma$.
\end{lemma}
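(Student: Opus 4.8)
The plan is to first locate a $\bsuptermeq{\beta}$-\emph{minimal} non-terminating $\beta$-reduced-sub-meta-term of $s$, and then show that any such minimal term must already have one of the candidate shapes of \refDef{def:candidates}. Two standard auxiliary facts will do the heavy lifting. First, every simply-typed term is $\beta$-strongly normalising (function symbols are mere constants, which does not affect $\beta$-termination), so for a term $q$ I may write $|q|_\beta$ for the finite length of a longest $\arr{\beta}$-reduction from $q$. Second, I will use a head-redex lemma: $\apps{(\abs{x}{v})}{w_0}{w_n}$ is $\arr{\Rules}$-terminating if and only if both $w_0$ and $\apps{v[x:=w_0]}{w_1}{w_n}$ are (the non-trivial direction follows by postponing the head $\beta$-step, since no rule of $\Rules$ can fire at a $\lambda$-headed position).

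For existence of a minimal term, I would order meta-terms $t$ with $\FMV(t) \subseteq \domain(\gamma)$ by the lexicographic measure $\Phi(t) = (|t\gamma|_\beta,\ |t\gamma|,\ |t|)$, where $|\cdot|$ is syntactic size. Using \refLemma{lem:candidatereduce} one checks that every clause of the definition of $\bsuptermeq{A}$ that moves to a \emph{different} meta-term strictly decreases $\Phi$: the head-application, abstraction, argument and meta-variable-argument clauses send $t\gamma$ to a subterm (so $|t\gamma|_\beta$ does not increase, while $|t\gamma|$ strictly drops, or in the degenerate projection case $|t\gamma|$ stays equal and the third component $|t|$ drops), whereas the $\beta$-clause performs a genuine $\arr{\beta}$-step on $t\gamma$ and strictly decreases the first component. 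Hence the relation ``proper $\bsuptermeq{}$-step with conditions respected by $\gamma$'' is well-founded. Among all $t'$ with $s \bsuptermeq{B'} t'$, $\gamma$ respecting $B'$, and $t'\gamma$ non-terminating, I pick one, $t$, of minimal $\Phi$; composing derivations and unioning their condition sets, this $t$ can have no proper $\bsuptermeq{}$-descendant $t'$ with respected conditions and $t'\gamma$ non-terminating.

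It then remains to show a minimal such $t$ is a candidate, by analysing $\head(t)$. If $t = \abs{x}{u}$, its body is an immediate descendant adding no conditions, and $(\abs{x}{u})\gamma$ is terminating iff $u\gamma$ is, contradicting minimality; if $t = \apps{(\abs{x}{u})}{t_1}{t_n}$ with $n > 0$, then $t_1\gamma$ and the head contraction $(\apps{u[x:=t_1]}{t_2}{t_n})\gamma$ are both terminating by minimality --- the contraction being a \emph{proper} descendant precisely because $|t\gamma|_\beta$ strictly exceeds $|\cdot|_\beta$ of its $\arr{\beta}$-reduct --- so the head-redex lemma makes $t\gamma$ terminating, a contradiction. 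If $\head(t)$ is a variable, a constructor, or a defined symbol applied to fewer than $\arity$ arguments, no rewrite or $\beta$-step can act at the root, so all redexes of $t\gamma$ lie inside its arguments $t_i\gamma$, which are immediate descendants and hence terminating, contradicting non-termination. This leaves exactly the candidate shapes (b); the excluded shape $\meta{Z}{x_1,\dots,x_{\arity(Z)}}$ of (c) is ruled out because there $t\gamma$ equals the body of $\gamma(Z)$, which is terminating by hypothesis. Finally, for (d) I would shrink the condition set: taking $A \subseteq B'$ minimal with $s \bsuptermeq{A} t$ keeps $\gamma$ respecting $A$, and since the family of working condition sets is upward closed, this $A$ is genuinely $\subsetneq$-minimal.

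The main obstacle is the $\beta$-clause of $\bsuptermeq{A}$: it is the only clause that can enlarge a meta-term, so a naive induction on size or on $s\gamma$ fails, and (as the shape $\meta{Z}{x_1,\dots,x_{\arity(Z)}}$ shows) a minimal non-terminating BRSMT need not be a candidate without the typing and termination hypotheses. The key insight rescuing the argument is that $\beta$-reduction alone is strongly normalising on simply-typed terms: this both bounds the first component of $\Phi$ and, in the applied-abstraction case, forces the head contraction to be a strictly smaller --- hence proper --- descendant, which is exactly what lets minimality and the head-redex lemma interact. The termination hypothesis on each $\gamma(Z)$ is then used only to discard the degenerate meta-variable shape.
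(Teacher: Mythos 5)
Your proposal is correct and takes essentially the same route as the paper's own proof: select a non-terminating $\beta$-reduced-sub-meta-term that is minimal in a well-founded order and then exclude every non-candidate shape by the very same case analysis (abstraction, variable or constructor head, under-applied defined symbol, head $\beta$-redex discharged by postponing the head step, and bare meta-variable discharged by termination of $\gamma(Z)$), with your lexicographic measure $\Phi$ merely making explicit the well-foundedness of the strict part of $\bsuptermeq{\beta}$ that the paper simply asserts. The one point you state without proof --- that derivations of $\bsuptermeq{A}$ and $\bsuptermeq{B}$ compose with condition sets unioned, so that a proper descendant of the minimal $t$ is again a BRSMT of $s$ and hence competes with $t$ for minimality --- is exactly the step the paper also leaves implicit (its ``$s \bsuptermeq{A \cup B} t \bsuptermeq{A \cup B} t'$''), and it is the only genuinely delicate point in either argument (it needs care when $t$ is a $\lambda$-abstraction extracted by the head-of-application clause), so the two proofs match both in structure and in level of detail.
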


\begin{proof}
Let $S$ be the set of all pairs $t\ (A)$ such that (a) $s
\bsuptermeq{A} t$, (b) $t\gamma$ is non-terminating, and (c)
$\gamma$ respects $A$.  As the strict parts of the relations
$\bsuptermeq{\beta}$ and $\supseteq$ are both well-founded
orderings,
we can select a
pair $t\ (A)$ that is \emph{minimal} in $S$: for all $t'\ (A') \in
S$: if $t \bsuptermeq{\beta} t'$ then $t = t'$ and $A \subseteq A'$.
We observe that for all $t',B$ such that $t' \neq t$ and $t
\bsuptermeq{B} t'$ and $\gamma$ respects $B$ we cannot have $t'\ (A
\cup B) \in S$ by minimality of $t\ (A)$, so since $s \bsuptermeq{
\beta} t \bsuptermeq{\beta} t'$ and clearly $\gamma$ respects $A \cup
B$, we must have termination of $t'\gamma$.

Thus, the lemma holds if $t\ (A) \in \cand(s)$.  By minimality of $A$,
this is the case if $t$ has the form $\apps{\identifier{f}}{t_1}{t_n}$
with $n \geq \arity(\identifier{f})$, or the form $\apps{\meta{Z}{
t_1,\dots,t_\mac}}{s_1}{s_n}$ with $n > 0$ or $t_1,\dots,t_\mac$ not
distinct variables or $\mac > \arity(Z)$.  Thus, we will see that if
$t$ does \emph{not} have one of these forms, then $t$ is not minimal.
Consider the form of $t$:
\begin{itemize}
\item $t = \abs{x}{t'}$: non-termination of $t\gamma$ implies
  non-termination of $t'\gamma$, and $s \bsuptermeq{A} t \bsuptermeq{A}
  t'$;
\item $t = \apps{a}{t_1}{t_n}$ with $a \in \V \cup (\F \setminus
  \Defineds)$: since any reduction of $s$ must take place in some $t_i$
  (as the domain of $\gamma$ does not contain variables),
  non-termination of $t\gamma$ implies non-termination of some $t_i
  \gamma$, and $s \bsuptermeq{A} t \bsuptermeq{A} t_i$;
\item $t = \apps{\identifier{f}}{t_1}{t_n}$ with $\identifier{f} \in
  \Defineds$ but $n < \arity(\identifier{f})$: same as above, because
  the rules respect $\arity$;
\item $t = \apps{(\abs{x}{u})}{t_0}{t_n}$: if $t_0\gamma$ is
  non-terminating, we are done because $s \bsuptermeq{A} t
  \bsuptermeq{A} t_0$; if not, then $t'\gamma$ is non-terminating for
  $t' := \apps{u[x:=t_0]}{t_1}{t_n}$ (both an instance of $u\gamma$
  and all $t_i\gamma$ for $i \neq 0$ are subterms of $t\gamma$), and
  $s \bsuptermeq{A} t \bsuptermeq{A} t'$
\item $t = \meta{Z}{x_1,\dots,x_\mia}$ with all $x_i$ distinct variables
  and $\mia$ is the minimal arity of $Z$: by $\alpha$-conversion, we can
  write $\gamma(Z) = \abs{x_1 \dots x_\mia}{u}$, and since $\gamma(Z)$
  is terminating by assumption, so is $u = t\gamma$; contradiction
  with $t\ (A)$ being 
  in
  $S$.
\qedhere
\end{itemize}
\end{proof}

Let us now
consider
formative reductions.  We will prove that
reductions from a terminating term to some instance of a pattern may
be assumed to be formative.

\begin{lemma}\label{lem:formative}
Let $\ell$ be a pattern and $\gamma$ a substitution on domain
$\FMV(\ell)$.
Let $s$ be a terminating term.
If $s \arrr{\Rules} \ell\gamma$, then there exists a substitution
$\delta$ on the same domain as $\gamma$ such that each $\delta(Z)
\arrr{\Rules} \gamma(Z)$ and $s \arrr{\Rules} \ell\delta$ by an
$\ell$-formative reduction.
\end{lemma}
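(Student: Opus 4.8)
The plan is to prove the statement by well-founded induction with a lexicographic measure: primarily on the term $s$ ordered by $(\arr{\Rules} \mathop{\cup} \supterm)^+$, which is well founded on the terminating terms (subterms and reducts of a terminating term are again terminating, and for the monotonic, terminating relation $\arr{\Rules}$ its combination with $\supterm$ stays well founded), and secondarily on the length of the reduction $s \arrr{\Rules} \ell\gamma$. I would phrase the induction uniformly over \emph{all} patterns $\ell$ and substitutions $\gamma$, so that the hypothesis can be re-applied to a different pattern as long as the measure drops. Throughout I would use that reduction creates no free variables, that $\arr{\Rules}$ is monotonic, and that $\arr{\beta} \subseteq \arr{\Rules}$, and I would follow the clauses of \refDef{def:formative}.

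First I would dispatch the two base-like cases. If $\ell$ is not a fully extended linear pattern, then \emph{any} reduction to $\ell\gamma$ is $\ell$-formative by the first clause, so $\delta := \gamma$ works. If $\ell = \meta{Z}{x_1,\dots,x_\mia}$, I would take the length-zero reduction: setting $\delta(Z) := \abs{x_1 \dots x_\mia}{s}$ gives $\ell\delta = s$ (second clause), while $s \arrr{\Rules} \ell\gamma$ and monotonicity give $\delta(Z) \arrr{\Rules} \gamma(Z)$, using that the substitution convention forces $\gamma(Z) = \abs{x_1 \dots x_\mia}{\ell\gamma}$ (exactly $\mia$ leading $\lambda$'s, so no $\eta$-issue arises).

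The main case is when $\ell$ is a fully extended linear pattern that is not a meta-variable application, where I would split on the shape of $s$. (a) If $s$ has a top-level $\beta$-redex, $s = \apps{(\abs{x}{u})\ v}{w_1}{w_n}$, then since the head of $\ell\gamma$ is never an abstraction applied to an argument, this redex is contracted somewhere in the reduction; postponing that head-$\beta$ step to the front (the internal steps commute past it, at worst duplicating the reduction of $v$) yields $s \arr{\beta} s' := \apps{u[x:=v]}{w_1}{w_n} \arrr{\Rules} \ell\gamma$. As $s \arr{\Rules} s'$, the primary measure drops, so the hypothesis gives a formative $s' \arrr{\Rules} \ell\delta$, and the fifth clause makes $s \arr{\beta} s' \arrr{\Rules} \ell\delta$ formative. (b) If no head step occurs and the top structure of $s$ already agrees with $\ell$ — $s = \abs{x}{s'}$ with $\ell = \abs{x}{\ell'}$, or $s = \apps{a}{s_1}{s_n}$ with $\ell = \apps{a}{\ell_1}{\ell_n}$ and $a \in \F \mathop{\cup} \V$ — then reduction stays inside the immediate arguments, giving $s' \arrr{\Rules} \ell'\gamma$, resp.\ $s_i \arrr{\Rules} \ell_i\gamma$. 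Each such subterm is $\prec s$, so the hypothesis applies; linearity of $\ell$ makes the sets $\FMV(\ell_i)$ disjoint, so the per-argument substitutions combine into a single $\delta$ with $\delta(Z) \arrr{\Rules} \gamma(Z)$, and the fourth (resp.\ third) clause concludes.

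The delicate case, which I expect to be the main obstacle, is (c): $s = \apps{\bfun}{s_1}{s_m}$ is headed by a function symbol and a head rule step occurs. I would locate the \emph{first} head step, so $s$ first reduces internally to $\apps{\bfun}{s_1'}{s_m'}$ and then a rule $\ell' \arrz r' \in \Rules$ with $\ell' = \apps{\bfun}{\ell'_1}{\ell'_j}$ fires. Writing $\apps{\bfun}{s_1'}{s_m'} = (\apps{\ell'}{W_1}{W_p})\theta$ for fresh meta-variables $W_1,\dots,W_p$ ($p = m-j$), with $\theta$ the matcher on $\FMV(\ell')$ and $\theta(W_i)$ the surplus arguments, the prefix $s \arrr{\Rules} (\apps{\ell'}{W_1}{W_p})\theta$ is strictly shorter, so the secondary hypothesis gives a formative $s \arrr{\Rules} (\apps{\ell'}{W_1}{W_p})\theta'$ with $\theta'(V) \arrr{\Rules} \theta(V)$ throughout. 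The rule still fires on $(\apps{\ell'}{W_1}{W_p})\theta'$, producing $(\apps{r'}{W_1}{W_p})\theta'$, which is a reduct of $s$ (hence terminating and $\prec s$) and which, by monotonicity, reduces through $(\apps{r'}{W_1}{W_p})\theta$ to $\ell\gamma$; the primary hypothesis then supplies a formative reduction to some $\ell\delta$ with $\delta(Z) \arrr{\Rules} \gamma(Z)$, and the sixth clause assembles the two pieces. The points I would check most carefully are the exact reading of the sixth clause (treating $\apps{\ell'}{W_1}{W_p}$ as the pattern governing the formative prefix), the commutation argument justifying the head-$\beta$ postponement in (a), and that each appeal to the hypothesis genuinely decreases the measure — in particular that (c) uses the secondary hypothesis for the prefix (same $s$, shorter reduction) but the primary hypothesis for the tail (strictly smaller term).
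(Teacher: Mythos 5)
Your proof is correct and follows essentially the same route as the paper's: the same lexicographic induction (first on $s$ ordered by $\arr{\Rules} \mathop{\cup} \supterm$, second on reduction length), the same two base cases (non-fully-extended-linear: take $\delta := \gamma$; meta-variable application: take $\delta(Z) := \abs{x_1 \dots x_\mia}{s}$), and the same three-way split — structural decomposition using linearity to glue per-argument substitutions, promotion of the head $\beta$-step to the front, and the head rule step handled via fresh meta-variables for the surplus arguments, with the secondary hypothesis on the (shorter) prefix and the primary hypothesis on the (strictly smaller) tail. The points you flagged for care — reading the sixth clause as an $(\apps{\ell'}{W_1}{W_p})$-formative prefix, the $\beta$-commutation that may duplicate the reduction of $v$, and the measure decreases — are resolved in the paper's proof exactly as you propose.
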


\begin{proof}
We prove the lemma by induction first on $s$ ordered by $\arr{\Rules}
\mathop{\cup} \supterm$, second on the length of the reduction $s \arrr{\Rules}
\ell\gamma$.  First observe that if $\ell$ is not a fully extended
linear pattern, then we are done choosing $\delta := \gamma$.
Otherwise, we consider four cases:
\begin{enumerate}
\item $\ell$ is a meta-variable application $\meta{Z}{x_1,\dots,x_\mia}$;
\item $\ell$ is not a meta-variable application, and the reduction
  $s \arrr{\Rules} \ell\gamma$ does not contain any headmost steps;
\item $\ell$ is not a meta-variable application, and the reduction
  $s \arrr{\Rules} \ell\gamma$ contains headmost steps, the first of
  which is a $\arr{\beta}$ step;
\item $\ell$ is not a meta-variable application, and the reduction
  $s \arrr{\Rules} \ell\gamma$ contains headmost steps, the first of
  which is not a $\arr{\beta}$ step.
\end{enumerate}

In the first case, if $\ell$ is a meta-variable application
$\meta{Z}{x_1,\dots,x_\mia}$, then by $\alpha$-conversion we may
write $\gamma = [Z:=\abs{x_1\dots x_\mia}{t}]$ with $\ell\gamma = t$.
Let $\delta$ be $[Z := \abs{x_1 \dots x_\mia}{s}]$.  Then $\delta$
has the same domain as $\gamma$, and indeed $\delta(Z) = \abs{x_1
\dots x_\mia}{s} \arrr{\Rules} \abs{x_1 \dots x_\mia}{(\ell\gamma)} =
\gamma(Z)$.

In the second case, a reduction without any headmost steps, we observe
that $s$ has the same outer shape as $\ell$: either (a) $s = \abs{x}{s'}$
and $\ell = \abs{x}{\ell'}$, or (b) $s = \apps{a}{s_1}{s_n}$ and $\ell =
\apps{a}{\ell_1}{\ell_n}$ for some $a \in \V \cup \F$ (since $\ell$ is
a pattern, $a$ cannot be a meta-variable application or abstraction if
$n > 0$).
In case (a), we obtain $\delta$ such that $s' \arrr{\Rules}
\ell'\delta$ by an $\ell'$-formative reduction and $\delta
\arrr{\Rules} \gamma$ by the induction hypothesis (as sub-meta-terms of
linear patterns are still linear patterns).  In case (b), we let
$\gamma_i$ be the restriction of $\gamma$ to $\FMV(\ell_i)$ for $1
\leq i \leq n$; by linearity of $\ell$, all $\gamma_i$ have
non-overlapping domains and $\gamma = \gamma_1 \cup \dots \cup
\gamma_n$.  The induction hypothesis provides $\delta_1,\dots,
\delta_n$ on the same domains such that each $s_i \arrr{\Rules}
\ell_i\delta_i$ by an $\ell_i$-formative reduction and $\delta_i
\arrr{\Rules} \gamma_i$; we are done choosing $\delta := \delta_1 \cup
\dots \cup \delta_n$.

In the third case, if the first headmost step is a
$\beta$-step,
note that $s$ must have the form $\apps{(\abs{x}{t})\ u}{q_1}{q_n}$,
and moreover $s \arrr{\Rules} \apps{(\abs{x}{t'})\ u'}{q_1'}{q_n'} \arr{\beta}
\apps{t'[x:=u']}{q_1'}{q_n'} \arrr{\Rules} \ell\gamma$ by steps in the
respective subterms.  But then also $s \arr{\beta} \apps{t[x:=u]}{
q_1}{q_n} \arrr{\Rules} \apps{t'[x:=u']}{q_1'}{q_n'} \arrr{\Rules}
\ell\gamma$, and we can
get
$\delta$ and an $\ell$-formative
reduction for $\apps{t[x:=u]}{q_1}{q_n} \arrr{\Rules} \ell\delta$ by
the induction hypothesis.

In the last case, if the first headmost step is not a
$\beta$-step, then we can write $s = \apps{\afun}{s_1}{s_n}
\arrr{\Rules} \apps{\afun}{s_1'}{s_n'} = \apps{(\ell'\eta)}{s_{i+1}'
}{s_n'} \arr{\Rules} \apps{(r\eta)}{s_{i+1}'}{s_n'} \arrr{\Rules}
\ell\gamma$ for some $\afun \in \Defineds$, terms
$s_j \arrr{\Rules} s_j'$ for $1 \leq j \leq n$,
rule $\ell' \arrz r$
and substitution $\eta$ on domain $\FMV(\ell')$.  But then
$\apps{\ell'}{Z_{i+1}}{Z_n} \arrz \apps{r}{Z_{i+1}}{Z_n} \in
\RulesEta$, and for $\eta' := \eta \cup [Z_{i+1}:=s_{i+1}',
\dots,Z_n:=s_n']$ we both have $s \arrr{\Rules} (\apps{\ell'}{Z_{i+1}
}{Z_n})\eta'$ without any headmost steps, and $(\apps{r}{Z_{i+1}
}{Z_n})\eta' \arrr{\Rules} \ell\gamma$.
By the second induction hypothesis, there exists a substitution
$\xi$ such that $s \arrr{\Rules} (\apps{\ell'}{Z_{i+1}}{Z_n})\xi$ by
a $(\apps{\ell'}{Z_{i+1}}{Z_n})$-formative reduction and $\xi
\arrr{\Rules} \eta'$.  This gives $s \arr{\Rules}^+
(\apps{r}{Z_{i+1}}{Z_n})\xi \arrr{\Rules} (\apps{r}{Z_{i+1}}{Z_n})
\eta' \arrr{\Rules} \ell\gamma$, so by the first induction
hypothesis we obtain $\delta$ such that $(\apps{r}{Z_{i+1}}{Z_n})\xi
\arrr{\Rules} \ell\delta$ by an $\ell$-formative reduction, and
$\delta \arrr{\Rules} \gamma$.
\end{proof}

Essentially, \refLemma{lem:formative} states that we can postpone
reductions that are not needed to obtain an instance of the given
pattern.  This is not overly surprising, but will help us eliminate
some proof obligations later in the termination proof.

At last, we can prove the first part of \refThm{thm:chain},
soundness.

\begin{lemma}\label{lem:chain:sound}
If $\Rules$ is non-terminating, then there is a minimal formative
$(\DDP(\Rules),\Rules)$-dependency chain.
\end{lemma}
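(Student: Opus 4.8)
The plan is to build the required chain directly from an infinite $\arr{\Rules}$-reduction, keeping throughout the invariant that the term currently being processed is a \emph{minimal non-terminating} term (MNT): a non-terminating term all of whose strict subterms terminate. Since $\arr{\Rules}$ is non-terminating there is a non-terminating term, and since $\supterm$ is well founded on any single (finite) term, every non-terminating term has an MNT subterm; this yields a starting MNT $c_0$. I then show that from any MNT $c$ one can emit a chain triple $(\rho_i,s_i,t_i)$ together with a new MNT $c'$ satisfying $c' = \unsharp{t_i}$ and $t_i = s_{i+1}$, so that dependent choice produces an infinite sequence. The flags are discharged en route: minimality holds because each $t_i$ is arranged to be an MNT, and formativity is installed by perturbing the matching substitution through \refLemma{lem:formative} \emph{before} a candidate is selected.

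The construction is a case analysis on $\head(c)$. If $\head(c)$ is a variable or a constructor then $c = \apps{a}{c_1}{c_n}$ admits no head step and all $c_j$ terminate, so $c$ would terminate, a contradiction; these cases never arise. If $\head(c) = \afun \in \Defineds$, the infinite reduction out of $c$ must, after finitely many argument steps, fire a rule at the head: $c \arrr{\Rules} \ell\gamma$ with $\ell \arrz r \in \RulesEta$ ($\eta$-saturation provides a rule matching all arguments of $c$), the head step gives a non-terminating $r\gamma$, and each $\gamma(Z)$ terminates as a reduct of a strict subterm of $c$. Applying \refLemma{lem:formative} to each (terminating) argument reduction $c_j \arrr{\Rules} \ell_j\gamma$ replaces $\gamma$ by a $\delta$ with $\delta(Z) \arrr{\Rules} \gamma(Z)$, where every $\delta(Z)$ still terminates, for which $c \arrr{\Rules} \ell\delta$ is $\ell^\sharp$-formative; monotonicity keeps $r\delta$ non-terminating, and if $\ell$ is not a fully extended linear pattern the reduction is formative for free and $\delta := \gamma$. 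Now \refLemma{lem:candcomplete}, applied to $r$ and $\delta$, supplies a candidate $t\ (A) \in \cand(r)$ with $t\delta$ non-terminating, $\delta$ respecting $A$, and every strict $\bsuptermeq{\beta}$-reduct terminating. Since the strict subterms of $\ell\delta$ are reducts of strict subterms of the MNT $c$ and hence terminate, a non-terminating $t\delta$ cannot be a strict subterm of $\ell\delta$, so $\neg(\ell \supterm t)$ and $\ell^\sharp \arrdp t^\sharp\ (A) \in \DDP(\Rules)$. I take this as $\rho_i$, with $s_i = \ell^\sharp\delta$ and $t_i = t^\sharp\delta$; for $i \geq 1$ the transition $\unsharp{t_{i-1}} = c \arrr{\Rules} \ell\delta$ is precisely the formative, argument-wise reduction demanded by condition~(\ref{depchain:reduce}) of \refDef{def:chain}, and for $i = 0$ we simply start at $s_0 = \ell^\sharp\delta$. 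The new MNT is $\unsharp{t_i} = t\delta$: its strict subterms terminate by the minimality clause of \refLemma{lem:candcomplete}, as each lies inside some terminating $\delta(Z)$ or is an instance of a $\bsuptermeq{\beta}$-reduct of $t$.

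If $\head(c)$ is an abstraction I emit a $\mathtt{beta}$ step. Write $c = \apps{(\abs{x}{u})\ v}{w_1}{w_n}$. The key observation for minimality is that every proper application prefix $\apps{(\abs{x}{u})\ v}{w_1}{w_k}$ with $k < n$ is a strict subterm of the MNT $c$, hence terminates, so its reduct $\apps{u[x:=v]}{w_1}{w_k}$ terminates as well. When $n > 0$ I use case~(a), setting $t_i = \apps{u[x:=v]}{w_1}{w_n}$: this is non-terminating (the head redex must fire somewhere in the infinite reduction, while all argument reductions terminate) yet has only terminating strict subterms by the observation, so it is itself the next MNT. When $n = 0$ I use case~(b): $u[x:=v]$ is non-terminating although $u$ and $v$ terminate, so I pick a $\supterm$-minimal subterm $q \suptermeq u$ with $x \in \FV(q)$, $q \neq x$ and $q[x:=v]$ non-terminating, and put $t_i = q^\sharp[x:=v]$, whose strict subterms terminate by minimality of $q$ together with termination of $v$. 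Collapsing pairs are absorbed into these two cases: a candidate of the form $\apps{\meta{Z}{u_1,\dots,u_\mac}}{s_1}{s_n}$ gives a collapsing DP whose instance either exposes an abstraction head --- dealt with by the ensuing $\mathtt{beta}$ step --- or is handled directly by case~(\ref{depchain:dp:meta}) of \refDef{def:chain}, which selects the appropriate subterm of $\delta(Z)$.

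The step I expect to be hardest is maintaining minimality \emph{and} formativity together across the beta and collapsing cases. Formativity requires perturbing the matching substitution with \refLemma{lem:formative} before choosing the candidate, and one must verify that this perturbation leaves $r\delta$ non-terminating and all $\delta(Z)$ terminating, so that \refLemma{lem:candcomplete} still applies and its minimality conclusion still delivers an MNT. Just as delicate is showing the beta steps keep $t_i$ minimal: this depends entirely on the prefix-termination observation above and on the minimal choice of $q$ in case~(b), both of which exploit that the term being processed is an MNT rather than merely non-terminating. Once these are in place the construction is total on MNTs, so dependent choice yields an infinite minimal formative $(\DDP(\Rules),\Rules)$-chain, establishing \refLemma{lem:chain:sound}.
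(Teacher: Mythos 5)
You follow the paper's own strategy: maintain a minimal non-terminating (MNT) term as invariant, perturb the matching substitution with \refLemma{lem:formative} \emph{before} selecting a candidate with \refLemma{lem:candcomplete}, and case-split on the head of the current MNT. The variable/constructor contradictions, both $\mathtt{beta}$ cases, and candidates of the form $\apps{\afun}{p_1}{p_n}$ or $\apps{\meta{Z}{\vec{u}}}{s_1}{s_n}$ with $n>0$ are handled soundly (your explicit derivation of $\neg(\ell \supterm t)$ from termination of the strict subterms of $\ell\delta$ is a detail the paper leaves implicit). The genuine gap is the collapsing candidate \emph{without} applied arguments, $t = \meta{Z}{u_1,\dots,u_\mac}$, for which you set $t_i = t^\sharp\delta$ and claim $\unsharp{t_i} = t\delta$ is the next MNT.

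That claim is false, and \refDef{def:chain} does not even permit that choice of $t_i$. Writing $\delta(Z)\approxp\abs{x_1\dots x_\mac}{w}$, we have $t\delta = w[x_1:=u_1\delta,\dots,x_\mac:=u_\mac\delta]$, whose strict subterms include $v[x_1:=u_1\delta,\dots,x_\mac:=u_\mac\delta]$ for every proper subterm $v$ of $w$ containing some $x_j$. Such a term is neither ``inside'' the terminating $\delta(Z)$ --- it is an \emph{instance} of a subterm of $\delta(Z)$, and substitution can create non-termination from terminating pieces: in \refEx{ex:lambdadynamic}, $\symb{ap}\ x\ x$ is terminating, yet substituting $\symb{lm}\ (\abs{y}{\symb{ap}\ y\ y})$ for $x$ yields a non-terminating term --- nor is it an instance of a $\bsuptermeq{\beta}$-reduct of $t$, so the minimality clause of \refLemma{lem:candcomplete} says nothing about it. Hence $t\delta$ need not be MNT, which breaks both the minimality flag and the invariant that lets your construction continue. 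Independently, case (\ref{depchain:dp:meta}) of \refDef{def:chain} forces, for a DP whose right-hand side is a meta-variable application, $t_i = v^\sharp[x_1:=u_1\gamma,\dots,x_\mac:=u_\mac\gamma]$ for a \emph{non-variable subterm} $v$ of $w$ with $\FV(v)\cap\{x_1,\dots,x_\mac\}\neq\emptyset$; taking $t_i = t^\sharp\delta$ is simply not an option, and your fallback (``absorbed by the ensuing $\mathtt{beta}$ step'') cannot work because the $\beta$-development is already performed by the meta-substitution itself, leaving no redex for a later $\mathtt{beta}$ entry. The repair is exactly the move you already make in the $\mathtt{beta}$, $n=0$ case, and which the paper makes here: among the subterms $v$ of $w$ with some $x_j \in \FV(v)$ and $v[x_1:=u_1\delta,\dots]$ non-terminating, pick one that is $\supterm$-minimal; it cannot be a variable (each relevant $u_j\delta$ terminates by the minimality clause with condition $Z:j$), and minimality of $v$ together with termination of $w$ and of the $u_j\delta$ makes $v[x_1:=u_1\delta,\dots]$ the next MNT, with $t_i := v^\sharp[x_1:=u_1\delta,\dots]$ as the definition demands. (Two smaller slips: your stated invariant $t_i = s_{i+1}$ contradicts your own --- correct --- later use of the second disjunct of condition (\ref{depchain:reduce}); and condition (\ref{depchain:dp}) requires the free variables of the candidate to be mapped to \emph{fresh} variables, a renaming step you omit.)
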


\begin{proof}
Suppose $\Rules$ is non-terminating; then there exists a non-terminating
term $t_{-1}$ of which all strict subterms are terminating.
Now for $i \in \N$, suppose $t_{i-1}$ is a non-terminating term of which
all strict subterms are terminating.  We also assume (which we can do
safely for $t_{-1}$ by using a renaming) that $t_{i-1}$ does not contain
any variable that occurs freely in the right-hand side of any element
of $\DDP(\Rules)$.  We will now show how to select terms $s_i,t_i$ and
$\rho_i \in \DDP(\Rules) \cup \{\mathtt{beta}\}$ such that the sequence
$[(\rho_j,s_j^\sharp,t_j^\sharp) \mid j \in \N]$ is a dependency chain.
For this, consider the shape of $t_{i-1}$.

Clearly $t_{i-1}$ cannot have the form $\apps{x}{q_1}{q_n}$ or $\apps{
\identifier{c}}{q_1}{q_n}$ with $\identifier{c} \in \F \setminus
\Defineds$, since every reduction of such a term takes place in a
strict subterm and leaves the top shape intact; thus, we would obtain
non-termination of some $q_j$, contradicting minimality of $t_{i-1}$.
Similarly, $t_{i-1}$ cannot have the form $\abs{x}{q}$, as this would
give non-termination of $q$.

If $t_{i-1} = \apps{(\abs{x}{s})\ u}{q_1}{q_n}$, then by termination of
$s$, $u$ and each $q_j$, any infinite reduction starting in $t_{i-1}$
must eventually take a head-step:
\[
t_{i-1} \arrr{\Rules}
\apps{(\abs{x}{s'})\ u'}{q_1'}{q_n'} \arr{\beta} \apps{s'[x:=u']}{
q_1'}{q_n'} \arrr{\Rules} \dots
\]
Writing $t' := \apps{s[x:=u]}{q_1}{q_n}$ we then also have $t_{i-1}
\arr{\beta} t' \arrr{\Rules} \apps{s'[x:=u']}{q_1'}{q_n'} \arrr{\Rules}
\dots$, so $t'$ is still non-terminating.  We let $s_i := t_{i-1}$ and
$\rho_i := \mathtt{beta}$, and let $t_i$ be an arbitrary
non-terminating subterm of $t'$ that is minimal.

(We observe that, in this case, $t_{i-1}^\sharp = t_{i-1}$ and
$s_i^\sharp = s_i$.  Moreover, if $n > 0$ then $t_i^\sharp = t_i = t'$,
since in that case every strict subterm of $t'$ is also a strict
subterm of $t_{i-1}$ or a reduct thereof; thus, $t'$ obtains minimality
from the
minimality of $t_{i-1}$.  If $n = 0$, then $t' = s[x:=u]$ and since
both $s$ and $u$ are terminating by minimality of $t_{i-1}$,
necessarily $t_i = v[x:=u]$ for some subterm $v$ of $s$ that contains
the variable $x$.  It is easy to see that $t_i^\sharp =
v^\sharp[x:=u]$.)

The only remaining possibility is that $t_{i-1} = \apps{\afun}{q_1}{q_n}$
with $n \geq \arity(\afun)$ and there exist a rule $\ell \arrz r$, a
substitution $\gamma$ and an integer $j \leq n$ such that $t_{i-1}
\arrr{\Rules} \apps{\afun}{q_1'}{q_n'} = \apps{(\ell\gamma)}{
q_{j+1}'}{q_n'}$ and $\apps{(r\gamma)}{q_{j+1}'}{q_n'}$ is still
non-terminating.  Letting $\ell' = \apps{\ell}{Z_{j+1}}{Z_n}$ and
$r' = \apps{r}{Z_{j+1}}{Z_n}$ and $\gamma' = \gamma \cup [Z_{j+1}:=
q_{j+1}',\dots,Z_n:=q_n']$ we have $t_{i-1} \arrr{\Rules} \ell'\gamma'$
and $r'\gamma'$ non-terminating; by \refLemma{lem:formative} we find
a substitution $\delta$ such that $t_{i-1} \arrr{\Rules} \ell'\delta$ by
an $\ell'$-formative reduction and $r'\delta \arrr{\Rules} r'\gamma'$
is still non-terminating.  By \refLemma{lem:candcomplete}, we find
a minimal candidate $p\ (A)$ of $r'$ such that $p\delta$ is still
non-terminating and $\delta$ respects $A$.  Since the domain of
$\delta$ does not contain variables, these are left alone, and by the
assumption on $t_{i-1}$, they do not occur in any $\delta(Z)$; thus,
if we extend $\delta$ with a mapping from each of these variables to a
fresh variable of the same type, also $p\delta'$ is non-terminating.
We let $s_i := \ell'\delta = \ell'\delta'$ and $\rho_i :=
{\ell'}^\sharp \arrdp p^\sharp\ (A)$, and let $t_i$ be an arbitrary
non-terminating subterm of $p\delta'$ that is minimal.

(We observe that $t_{i-1}^\sharp$ reduces to $s_i^\sharp$ by reductions
inside the subterms $q_j$, and $s_i^\sharp = {\ell'}^\sharp\delta$.
Also, $\rho_i$ is clearly an element of $\DDP(\Rules)$, and by the
choice of $p\ (A)$, $\delta'$ respects $A$.  Since $p$ does not contain
any meta-variables not already occurring in $\FMV(r') \subseteq \FMV(
\ell')$, the domain of $\delta'$ is $\FMV({\ell'}^\sharp) \cup \FMV(
p^\sharp) \cup \FV(p^\sharp)$, and $\delta'$ maps all variables in
$p^\sharp$ to fresh variables.  Moreover, if $p$ has the form
$\apps{\afun}{p_1}{p_n}$, then all $p_j\delta'$ are terminating (by
the minimality condition of \refLemma{lem:candcomplete}), so then
$t_{i+1} = p\delta'$.  Similarly, if $p$ has the form
$\apps{\meta{Z}{u_1,\dots,u_\mac}}{p_1}{p_n}$ with $n > 0$ then both
all $p_j\delta'$ and $(\apps{\meta{Z}{u_1,\dots,u_k}}{p_1}{p_j})
\delta'$ for $j < n$ are all terminating.  So $p\delta' \supterm t_i$
can only hold if $p = \meta{Z}{u_1,\dots,u_\mac}$.  Writing $\delta'(Z
) = \abs{x_1 \dots x_n}{w}$ with $w$ not an abstraction, we have
$\delta'(Z) \approxp \abs{x_1 \dots x_\mac}{\apps{w}{x_{n+1} \linebreak
}{x_\mac}}$.
We know that $w$ itself is terminating, as is $u_j\delta'$ for every
$j$ such that $x_j \in \FV(\apps{w}{x_{n+1}}{x_\mac})$ -- the latter
by minimality of the candidate $p\ (A)$.  Thus, a minimal
non-terminating subterm of $(\apps{w}{x_{n+1}}{x_\mac})[x_1:=u_1
\delta',\dots,x_\mac:=u_\mac\delta']$ can only have the form $v[x_1:=
u_1\delta',\dots,x_k:=u_k\delta']$ for some subterm $v$ of $\apps{w}{
x_{n+1}}{x_\mac}$ in which at least one $x_j$ occurs.  It is also
clear that $t_i^\sharp = v^\sharp[x_1:=u_1\delta',\dots,x_k:=u_k
\delta']$ as $v$ is not a variable.)

Thus, we have constructed a $(\DDP(\Rules),\Rules)$-dependency chain
$[(\rho_i,s_i^\sharp,t_i^\sharp) \mid i \in \N]$.
\end{proof}

\oldcounter{\thmDdpChain}{\thmDdpChainSec}
\ddpChainTheThm
\startappendixcounters

\begin{proof}
This is the combination of Lemmas~\ref{lem:chain:complete}
and~\ref{lem:chain:sound}.
\end{proof}

In \refDef{def:ddp}, we have -- both for reasons of space and to keep
the definition simple -- included more dependency
pairs than strictly necessary.  In particular, we could improve the
definition much like Dershowitz' refinement by excluding all pairs
$(\apps{\ell}{Z_1}{Z_i})^\sharp \arrdp p^\sharp\ (A)$ if
$\DDP(\Rules)$ also contains a pair $\ell^\sharp
\arrdp p^\sharp\ (A)$ -- so most of those DPs generated from rules in
$\RulesEta \setminus \Rules$.
This is done in~\cite{kop:12}.

\begin{lemma}\label{lem:omitstupid}
Let $\P \subseteq \DDP(\Rules)$ be a set of dependency pairs, and
$\P' \subseteq \P$ be such that all pairs in $\P'$ have the form
$(\apps{\ell}{Z_1}{Z_i})^\sharp \arrdp p^\sharp\ (A)$ with $i > 0$
and $\ell^\sharp \arrdp p^\sharp\ (A) \in \P \setminus \P'$.  Then
for every minimal $(\P,\Rules)$-chain, all $\rho_i$ with $i > 1$ are
in $\P \setminus \P'$.
\end{lemma}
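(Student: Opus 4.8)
The plan is to argue by contradiction. Suppose a minimal $(\P,\Rules)$-chain $[(\rho_i,s_i,t_i) \mid i \in \N]$ has $\rho_j \in \P'$ for some $j > 1$; I will show that $\unsharp{t_j}$ is then $\arr{\Rules}$-terminating, which clashes with the chain being infinite. Write $\rho_j = (\apps{\ell}{Z_1}{Z_k})^\sharp \arrdp p^\sharp\ (A)$, renaming the lemma's index to $k > 0$ to avoid the chain index, so that the base DP $\ell^\sharp \arrdp p^\sharp\ (A) \in \P \setminus \P'$, and put $\ell = \apps{\afun}{\ell_1}{\ell_n}$. The linchpin observation is that $\apps{\ell}{Z_1}{Z_k}$ applies the defined head $\afun$ to strictly more than $\arity(\afun)$ arguments, so marking has no effect: $(\apps{\ell}{Z_1}{Z_k})^\sharp = \apps{\ell}{Z_1}{Z_k}$, with \emph{unmarked} head $\afun$. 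Hence for the matching substitution $\gamma$ of this step, $s_j = \apps{(\ell\gamma)}{\gamma(Z_1)}{\gamma(Z_k)}$ is an application of $\ell\gamma$ to $k \geq 1$ further arguments.

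First I would show that $\ell\gamma$ is $\arr{\Rules}$-terminating. Because $j > 1$, the chain really contains the predecessor entry $t_{j-1}$, whose strict subterms are terminating by minimality. Condition~(\ref{depchain:reduce}) of \refDef{def:chain} links $t_{j-1}$ to $s_j$: either $t_{j-1} = s_j$, or $t_{j-1} = \apps{\afun}{u_1}{u_{n+k}}$ with $u_l \arrr{\Rules} w_l$ and $\apps{\afun}{w_1}{w_{n+k}} = s_j$. In the first case $\ell\gamma = \apps{\afun}{(\ell_1\gamma)}{(\ell_n\gamma)}$ is a strict subterm of $t_{j-1}$ (strict because $k \geq 1$), hence terminating. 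In the second case $\apps{\afun}{u_1}{u_n}$ is a strict subterm of $t_{j-1}$ (again using $k \geq 1$), so it terminates, and it reduces to $\apps{\afun}{w_1}{w_n} = \ell\gamma$, whence $\ell\gamma$ terminates as a reduct of a terminating term. Here the unmarked head is essential: it lets me move between $s_j$ and $\ell\gamma$ without a $\sharp$-mark blocking the head reductions.

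Next I would transport this to $t_j$ through the base DP. Restricting $\gamma$ to $\FMV(\ell)$ leaves $\ell\gamma$ and $p\gamma$ unchanged and still respects $A$ (base and $\P'$-DP share $A$, whose meta-variables lie in $\FMV(p) \subseteq \FMV(\ell)$), so \refLemma{lem:dpreduce} applied to $\ell^\sharp \arrdp p^\sharp\ (A) \in \DDP(\Rules)$ gives $\ell\gamma\ (\arr{\Rules} \mathop{\cup} \supterm)^+\ p\gamma$. Since $\ell\gamma$ terminates, and both reducts and subterms of terminating terms terminate, $p\gamma$ terminates. Finally I relate $\unsharp{t_j}$ to $p\gamma$ by the two sub-cases of \refDef{def:chain}: if $p^\sharp$ is an application or a symbol $\afun^\sharp$, then $t_j = p^\sharp\gamma$ and $\unsharp{t_j} = p\gamma$ up to the harmless head mark; if $p^\sharp = p = \meta{Z}{u_1,\dots,u_\mac}$ is collapsing, then $t_j = v^\sharp[x_1:=u_1\gamma,\dots,x_\mac:=u_\mac\gamma]$ for a non-variable subterm $v$ of the body $w$ of $\gamma(Z)$, and since $w[x_1:=u_1\gamma,\dots,x_\mac:=u_\mac\gamma] = p\gamma$, the term $\unsharp{t_j}$ is a subterm of $p\gamma$. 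Either way $\unsharp{t_j}$ is $\arr{\Rules}$-terminating.

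To close, the tail $[(\rho_i,s_i,t_i) \mid i > j]$ is itself an infinite $(\DDP(\Rules),\Rules)$-chain (as $\P \subseteq \DDP(\Rules)$) starting in $s_{j+1}$, so \refLemma{lem:chain:complete} makes $\unsharp{s_{j+1}}$ non-terminating; but condition~(\ref{depchain:reduce}) gives $\unsharp{t_j} \arrr{\Rules} \unsharp{s_{j+1}}$, forcing $\unsharp{s_{j+1}}$ to terminate, a contradiction. I expect the main obstacle to be establishing termination of $\unsharp{t_j}$ in the collapsing sub-case, where the bookkeeping of the meta-substitution $\meta{Z}{u_1,\dots,u_\mac}\gamma = w[x_1:=u_1\gamma,\dots,x_\mac:=u_\mac\gamma]$ together with the $\sharp$-marks must be handled precisely. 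It is also worth noting that this argument explains why the bound is $j > 1$ and not $j \geq 1$: the very first dependency-pair step has no predecessor entry that minimality constrains, so there $\ell\gamma$ need not terminate and the derivation breaks down exactly where the lemma permits $\P'$ to be used.
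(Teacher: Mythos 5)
Your proof is correct and takes essentially the same route as the paper's: both hinge on the observation that $\sharp$ has no effect on the over-applied left-hand side, use minimality of $t_{j-1}$ to conclude that the base instance $\ell\gamma$ (a strict subterm of $t_{j-1}$ up to reduction of the arguments) is terminating, and then derive the contradiction from \refLemma{lem:chain:complete}. The only difference is packaging: the paper splices a fresh first link $(\ell^\sharp \arrdp p^\sharp\ (A),\ (\ell\gamma)^\sharp,\ t_j)$ onto the tail of the chain and applies \refLemma{lem:chain:complete} to that modified chain, whereas you inline \refLemma{lem:dpreduce} to push termination from $\ell\gamma$ forward to $\unsharp{t_j}$ and apply \refLemma{lem:chain:complete} only to the tail --- the same ingredients, assembled in a slightly different order.
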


\begin{proof}
For $i > 1$, suppose $\rho_i \in \P'$, so $\rho_i =
(\apps{\ell}{Z_1}{Z_j})^\sharp \arrdp p^\sharp\ (A)$ with $j > 1$;
since $\ell^\sharp$ is the left-hand side of a DP, $\ell$
has the form $\apps{\afun}{\ell_1}{\ell_n}$ with $n \geq
\arity(\afun)$.
Therefore, the $\sharp$ has no effect, so $\rho_i$ is actually
$\apps{\ell}{Z_1}{Z_j} \arrdp p^\sharp\ (A)$.  Note that $s_i$ has
the form $\apps{\afun}{u_1}{u_{n+j}}$ and $t_{i-1}$ therefore has the
form $\apps{\afun}{u_1'}{u_{n+j}'}$, with each $u_m \arrr{\Rules}
u_m'$.
By minimality of $t_{i-1}$, clearly $\apps{\afun}{u_1'}{u_n'}$ is
terminating, and therefore so is its reduct $\apps{\afun}{u_1}{u_n}$.
However, the infinite $(\DDP(\Rules),\Rules)$-dependency chain
\[
[(\ell^\sharp \arrz p^\sharp\ (A),(\apps{\afun}{u_1}{u_n})^\sharp,
t_i)] \cdot [(\rho_j,s_j,t_j) \mid j \in \N \setminus \{0,\dots,i\}]
\]
(where ``$\cdot$'' denotes composition)
starts in the term $(\apps{\afun}{u_1}{u_n})^\sharp$.
\refLemma{lem:chain:complete} provides non-termination of this term;
contradiction.
\end{proof}

Thus, simply removing the first pair in the $(\DDP(\Rules),
\Rules)$-chain provides a sequence that does not use these
unnecessary pairs.

  We could either choose not to include these unnecessary DPs from the
  start, or remove them afterwards using the following DP processor:
  \removeUselessTheProc
  
  \begin{proof}
  Completeness will follow by \refLemma{lem:subsetcomplete}.
  Soundness follows by \refLemma{lem:omitstupid}.
  \end{proof}

  The advantages of using this processor over employing a definition of
  $\DDP$ that excludes such DPs from the start are: (1) it simplifies the
  definition, and (2)
  one may devise processors whose soundness or completeness
  relies on some set $\P'$ being contained in $\DDP(\Rules)$,
  as in the case for
  $\Proc_{\SDP_G}$.  Including more DPs in the initial definition of
  $\DDP(\Rules)$ may increase the applicability of such processors.

\section{Static dependency pairs}

In this appendix, we will first prove the main result from
\refSec{sec:static}.  Then, to provide a greater context to the
current work, we will discuss how the definitions in
\cite{kus:iso:sak:bla:09,suz:kus:bla:11} relate to the definitions
here.

\subsection{Static dependency pairs: the main result}\label{app:sdp}

In this appendix, we prove \refThm{thm:sschain}, which states that an
accessible function passing AFSM with rules $\Rules$ is terminating if
it admits no $\Rules$-computable formative $(\SDP(\Rules),\Rules)
$-dependency chains.
\refThm{thm:schain}, which states that an AFSM is terminating if
it admits no minimal formative $(\SDP(\Rules),\Rules)$-dependency
chains, follows as a corollary.

In the following, let $C = C_{\Rules}$ be a computability predicate
following \ref{thm:defC2} for $\comprel$ the rewrite relation
$\arr{\Rules}$; we will briefly call a term ``computable'' if it is
$C$-computable.
We start with an observation on the consequences of accessible
function passingness:

\begin{lemma}\label{lem:pfp}
Let $\ell$ be a closed pattern, $Z$ a meta-variable and $x_1,\dots,
x_\mia$ variables such that $\ell \gracc \meta{Z}{x_1,\dots,x_\mia}$.
If $\ell\gamma$ is a computable term, then so is $\gamma(Z)$.
\end{lemma}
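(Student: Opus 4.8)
The plan is to obtain this as an almost immediate corollary of \refLemma{lem:preservecompacc}, using \refLemma{lem:abscomputable} at the end to read off computability of $\gamma(Z)$ from computability of all of its instances. First I would record two structural observations. Since $\gracc$ only descends into abstraction bodies and into accessible arguments of function symbols and variables, the assumption $\ell \gracc \meta{Z}{x_1,\dots,x_\mia}$ implies that $\meta{Z}{x_1,\dots,x_\mia}$ occurs as a genuine sub-meta-term of $\ell$. Because $\ell$ is a \emph{pattern}, every meta-variable occurrence in it has the form $\meta{Z}{y_1,\dots,y_{\arity(Z)}}$ with the $y_i$ pairwise distinct variables and $\mia = \arity(Z)$ arguments; hence $x_1,\dots,x_\mia$ are pairwise distinct and $Z \in \FMV(\ell)$. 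Moreover, since $\ell$ is closed, restricting $\gamma$ to $\FMV(\ell)$ changes neither $\ell\gamma$ nor $\gamma(Z)$, so I may assume $\domain(\gamma) = \FMV(\ell) \subseteq \M$.

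Next I would instantiate \refLemma{lem:preservecompacc} with $s := \ell$ and $t := \meta{Z}{x_1,\dots,x_\mia}$. All hypotheses are met: $\ell$ is a closed meta-term, $\FMV(\ell) \subseteq \domain(\gamma) \subseteq \M$, we have $\ell \gracc t$, and $\ell\gamma$ is $C$-computable by assumption. The lemma then yields that $t(\gamma \cup \delta)$ is $C$-computable for every substitution $\delta$ sending $\FV(t) = \{x_1,\dots,x_\mia\}$ to $C$-computable terms. Writing $\gamma(Z) = \abs{y_1 \dots y_\mia}{w}$ (the canonical shape of $\gamma(Z)$, as $Z$ receives exactly $\arity(Z) = \mia$ arguments) and noting that the $x_i$ are variables outside $\domain(\gamma)$, I compute $t(\gamma \cup \delta) = w[y_1 := \delta(x_1),\dots,y_\mia := \delta(x_\mia)]$. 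Letting $\delta$ range over all assignments of $C$-computable terms to the \emph{distinct} variables $x_1,\dots,x_\mia$ shows that $w[y_1 := t_1, \dots, y_\mia := t_\mia]$ is $C$-computable for all $C$-computable $t_1,\dots,t_\mia$ of the appropriate types. Applying \refLemma{lem:abscomputable} $\mia$ times then gives $C$-computability of $\abs{y_1 \dots y_\mia}{w} = \gamma(Z)$, which is the claim.

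The one genuinely load-bearing point is the distinctness of $x_1,\dots,x_\mia$: without it, $\delta$ could not vary the arguments fed to $w$ independently, and the final appeal to \refLemma{lem:abscomputable} would break down. I expect this to be the main (though still minor) obstacle, and I would discharge it precisely via the pattern structure of $\ell$ as above. Everything else is bookkeeping — verifying the meta-substitution computation of $t(\gamma \cup \delta)$ and matching the types of the $\delta(x_i)$ to the binders $y_i$ — and the case $\mia = 0$ is immediate, since then $t = Z$ and $t(\gamma \cup \delta) = \gamma(Z) = w$ directly.
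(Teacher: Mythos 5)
Your proposal is correct and follows essentially the same route as the paper's own proof: instantiate \refLemma{lem:preservecompacc} with $s := \ell$ and $t := \meta{Z}{x_1,\dots,x_\mia}$ (using closedness of $\ell$ so that $\ell(\gamma \cup \delta) = \ell\gamma$), then let $\delta$ range over computable substitutions for the $x_i$ and conclude via \refLemma{lem:abscomputable}. The only difference is presentational — you make explicit the pattern-induced distinctness of $x_1,\dots,x_\mia$, which the paper leaves implicit.
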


\begin{proof}
Since $\ell$ is closed, $\ell(\gamma \cup \delta) = \ell\gamma$ is
computable for all computable substitutions $\delta$ whose domain is
contained in $\V$.  By \refLemma{lem:preservecompacc}, we
thus have computability of $\meta{Z}{x_1,\dots,x_\mia}(\gamma
\cup \delta)$ for all such $\delta$.
Since $\ell\gamma$ is a term, $Z \in \domain(\gamma)$ so
$\gamma(Z) = \abs{x_1 \dots x_\mia}{s}$.  If we let
$\delta := [x_1:=u_1,\dots,x_\mia:=u_\mia]$ for computable terms
$u_1,\dots,u_\mia$ we have computability of $\meta{Z}{x_1,\dots,
x_\mia}(\gamma \cup \delta) = s[x_1:=u_1,\dots,x_n:=u_\mia]$.  Since
this holds for \emph{all} computable $u_1,\dots,u_\mia$,
\refLemma{lem:abscomputable} provides computability of $\abs{x_1
\dots x_\mia}{s} = \gamma(Z)$.
\end{proof}

We continue with a variation of \refLemma{lem:candcomplete}:

\begin{lemma}\label{lem:scandcomplete}
Assume that the minimal arity equals the maximal arity for all
$\afun \in \F$.

Let $s$ be a meta-term and $\gamma$ a substitution on a finite domain
with $\FMV(s) \subseteq \domain(\gamma) \subseteq \M$, such that all
$\gamma(Z)$ are computable.
If there exists a computable substitution $\delta$ on a variable
domain (that is, $\domain(\gamma) \subseteq \V$) such that $s(\gamma
\cup \delta)$ is not computable, then there exists a pair $t\ (A) \in
\cand(s)$ such that all of the following hold:
\begin{itemize}
\item there is a computable substitution $\delta$ on variable
  domain such that $t(\gamma \cup \delta)$ is not computable;
\item $\gamma$ respects $A$;
\item for all $t' \neq t$ such that $t \bsuptermeq{B} t'$ holds for
  some $B$ respected by $\gamma$: $t'(\gamma \cup \delta)$ is
  computable for all computable substitutions $\delta$ on variable
  domain;
\item $t$ has the form $\apps{\afun}{t_1}{t_\maa}$.
\end{itemize}
\end{lemma}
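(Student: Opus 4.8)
The plan is to follow the proof of \refLemma{lem:candcomplete} line for line, replacing ``(non-)terminating'' everywhere by ``(non-)computable under some computable substitution with variable domain'' and substituting the computability machinery of \refApp{app:computability} for the elementary termination arguments. Concretely, I would let $S$ be the set of all pairs $t\ (A)$ with (a) $s \bsuptermeq{A} t$, (b) $t(\gamma \cup \delta)$ non-computable for some computable $\delta$ with $\domain(\delta) \subseteq \V$, and (c) $\gamma$ respects $A$. The hypothesis places $s\ (\emptyset)$ in $S$, so $S \neq \emptyset$, and since the strict parts of $\bsuptermeq{\beta}$ and $\supseteq$ are well-founded I may choose $t\ (A)$ minimal in $S$: first minimal for $\bsuptermeq{\beta}$, then with $A$ minimal under $\subseteq$. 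Bullets~1 and~2 are then immediate from (b) and (c); bullet~3 follows from minimality together with the transitivity fact that $s \bsuptermeq{A} t \bsuptermeq{B} t'$ implies $s \bsuptermeq{A \cup B} t'$, for if some $t' \neq t$ with $t \bsuptermeq{B} t'$ (and $\gamma$ respecting $B$) had $t'(\gamma\cup\delta)$ non-computable, then $t'\ (A\cup B)$ would be a strictly smaller member of $S$. The same reasoning (using reflexivity of $\bsuptermeq{A'}$) shows that no $A' \subsetneq A$ satisfies $s \bsuptermeq{A'} t$, giving condition (d) of \refDef{def:candidates}.

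The substance is bullet~4: that the minimal $t$ has the shape $\apps{\afun}{t_1}{t_\maa}$, which for a candidate forces $\afun \in \Defineds$ fully applied. I would rule out every other shape by producing a strictly $\bsuptermeq{\beta}$-smaller element of $S$. If $t = \abs{x}{t'}$, \refLemma{lem:abscomputable} gives a computable $u$ with $t'(\gamma \cup \delta[x:=u])$ non-computable and $t \bsuptermeq{A} t'$. If $t = \apps{a}{t_1}{t_n}$ with $a$ a variable or a constructor, I use that a computable head applied to computable arguments is itself computable --- variables are computable by \refLemma{lem:compresults}, and for a constructor $c$ the characterisation of $C$ in \refThm{thm:defC2} shows $\apps{c}{u_1}{u_\maa}$ is computable as soon as every $u_j$ is --- so non-computability forces some $t_j(\gamma\cup\delta)$ non-computable, and $t \bsuptermeq{A} t_j$ since $a \in \V \cup \F$ permits descent into any argument. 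If $t$ is a head $\beta$-redex I descend either into one of its (directly reachable) arguments or, when all arguments are computable, into its contractum, using the routine fact that a neutral head-redex with computable argument and computable contractum is computable (provable from \refLemma{lem:neutralcomp}). The form $t = \meta{Z}{x_1,\dots,x_\mia}$ with $\mia = \arity(Z)$ never lies in $S$, since $\gamma(Z)$ is computable and hence so is $t(\gamma\cup\delta)$. Finally, the under-applied defined case does not arise: in the intended application $s$ is a rule right-hand side and therefore respects $\arity$, so with minimal $=$ maximal arity every defined symbol in every $\beta$-reduced-sub-meta-term of $s$ is applied to exactly $\maa$ arguments.

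The main obstacle, and the genuinely new step relative to \refLemma{lem:candcomplete}, is the collapsing shape $t = \apps{\meta{Z}{u_1,\dots,u_\mac}}{s_1}{s_n}$. Writing $\gamma(Z) \approxp \abs{x_1\dots x_\mac}{w}$, I would argue as follows: if every $s_j(\gamma\cup\delta)$ is computable and every $u_i(\gamma\cup\delta)$ with $x_i \in \FV(w)$ is computable, then $t(\gamma\cup\delta)$ is $\gamma(Z)$ --- which is computable --- applied via \refLemma{lem:abscomputable} to computable arguments, hence computable, contradicting (b). Thus either some $s_j(\gamma\cup\delta)$ is non-computable, and I descend to $s_j$ keeping $A$, or some $u_i(\gamma\cup\delta)$ with $x_i \in \FV(w)$ is non-computable; but $x_i \in \FV(w)$ is precisely the statement that $\gamma$ respects $Z:i$, so I may descend via $t \bsuptermeq{A\cup\{Z:i\}} u_i$ with $\gamma$ still respecting $A \cup \{Z:i\}$. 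This is exactly where the static analysis improves on the dynamic one: computability of all $\gamma(Z)$ --- guaranteed in the AFP setting by \refLemma{lem:pfp} at the point where this lemma is invoked --- lets us strip meta-variable heads, whereas for plain termination the collapsing form must be kept as a candidate. I expect the delicate bookkeeping to lie in ensuring a single witnessing $\delta$ survives each descent, and in the constructor sub-case, where the precise closure properties of $C$ from \refThm{thm:defC2} must be invoked rather than the one-line ``reductions happen in subterms'' used in the termination proof.
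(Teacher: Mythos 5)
Your construction is, in structure and in nearly every case, the paper's own proof: the same set $S$ of pairs $t\ (A)$ with $s \bsuptermeq{A} t$, $t(\gamma \cup \delta)$ non-computable for some computable $\delta$ on variable domain, and $\gamma$ respecting $A$; the same choice of a minimal element with respect to $\bsuptermeq{\beta}$ and $\subseteq$; the first three bullets and candidate condition (d) read off from that minimality; and the same case analysis for the last bullet. The abstraction, variable-head, constructor and meta-variable-head cases are argued exactly as in the paper, and your treatment of the head $\beta$-redex (``computable arguments plus computable contractum give a computable redex'') is a harmless variant of the paper's contrapositive argument via non-termination of the first argument.

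The genuine gap is your dismissal of the case $t = \apps{\afun}{t_1}{t_n}$ with $\afun \in \Defineds$ and $n < \arity(\afun)$. The claim you rely on -- that every defined symbol occurring in a $\beta$-reduced-sub-meta-term of an arity-respecting meta-term is applied to exactly $\maa$ arguments -- is false: the first clause of the definition of $\bsuptermeq{A}$ (namely $s = \apps{t}{s_1}{s_n}$ for some $n \geq 0$) makes every \emph{head prefix} of an application a BRSMT, so for instance $\symb{f}\ u\ v \bsuptermeq{\emptyset} \symb{f}\ u$ and even $\symb{f}\ u\ v \bsuptermeq{\emptyset} \symb{f}$, regardless of $\arity(\symb{f})$. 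Such under-applied prefixes are not candidates (condition (b) of \refDef{def:candidates} demands at least $\arity(\afun)$ arguments), so the proof must show that they cannot be the minimal element of $S$; this obligation does not disappear when $s$ is a rule right-hand side. The paper does not skip it: it handles this case by the same argument as the constructor case, using that -- because rules respect $\arity$ -- a term headed by $\afun$ with fewer than $\arity(\afun)$ arguments admits no head step, so non-computability must again be traced into one of the arguments $t_i$, yielding a smaller element of $S$. Be warned that this is the most delicate corner of the whole lemma: unlike a constructor prefix, a defined prefix \emph{does} become head-reducible once it receives the further arguments over which computability quantifies (with $\Rules = \{\symb{f}\ X \arrz \symb{f}\ X\}$, the prefix $\symb{f}$ is a BRSMT of the right-hand side and is non-computable although it has no arguments at all into which non-computability could be pushed). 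So this case is exactly where a careful argument is required, not an appeal to the intended application.
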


\vspace{-4pt}
\begin{proof}
Let $S$ be the set of all pairs $t\ (A)$ such that (a) $s
\bsuptermeq{A} t$, (b) there exists a computable substitution $\delta$
on variable domain such that $t(\gamma \cup \delta)$ is
not computable, and (c) $\gamma$ respects $A$.  As the relations
$\bsuptermeq{\beta}$ and $\supseteq$ are both well-founded
quasi-orderings, we can select a pair $t\ (A)$ that is
\emph{minimal} in $S$: for all $t'\ (A') \in S$: if $t
\bsuptermeq{\beta} t'$ then $t = t'$ and $A \subseteq A'$.
We observe that for all $t',B$ such that $t' \neq t$ and $t
\bsuptermeq{B} t'$ and $\gamma$ respects $B$ we cannot have $t'\ (A
\cup B) \in S$ by minimality of $t\ (A)$, so since $s \bsuptermeq{
A \cup B} t \bsuptermeq{A \cup B} t'$ and clearly $\gamma$ respects
$A \cup B$, it can only follow that requirement (b) is not satisfied
for $t'$.

By the above reasoning and minimality of $A$, the lemma holds if $t$
has the form $\apps{\afun}{t_1}{t_\maa}$ with $\maa = \arity(\afun)$.
Thus, we will see that if $t$ does \emph{not} have this form, then
$t$ is not minimal.
Consider the form of $t$:
\begin{itemize}
\item $t = \abs{x}{t'}$: by \refLemma{lem:abscomputable},
  non-computability of $t(\gamma \cup \delta)$ implies
  non-computability of $t'(\gamma \cup \delta)[x:=u]$ for some
  computable $u$.  Since, by $\alpha$-conversion, we can assume that
  $x$ does not occur in domain or range of $\gamma$ or $\delta$,
  we have non-computability of $t'(\gamma \cup \delta \cup [x:=u])$,
  and $\delta \cup [x:=u]$ is a computable substitution on
  variable domain while $t \bsuptermeq{A} t'$.
\item $t = \apps{x}{t_1}{t_n}$ with $x \in \V$: whether $x \in
  \domain(\delta)$ or not, $\delta(x)$ is computable (either by the
  assumption on $\delta$ or by
  \refLemma{lem:compresults}(\ref{lem:compresults:vars})).  Therefore,
  the only way for $t(\gamma \cup \delta)$ to not be computable is if
  some $s_i(\gamma \cup \delta)$ is not computable, and $s
  \bsuptermeq{A} s_i$.
\item $t = \apps{\identifier{c}}{t_1}{t_\maa}$ with $\identifier{c}
  \in \F \setminus \Defineds$: $t(\gamma \cup \delta)$ is
  non-computable only if it is not in $C$, so if it is non-terminating
  or some $t_i(\gamma \cup \delta)$ is not computable.  Since
  head-reductions are impossible, non-termination implies
  non-termination of some $t_i(\gamma \cup \delta)$, which by
  \refLemma{lem:compresults}(\ref{lem:compresults:term}) implies
  non-computability of $t_i(\gamma \cup \delta)$ as well.  We are
  done because $t \bsuptermeq{A} t_i$.
\item $t = \apps{\afun}{t_1}{t_n}$ with $\identifier{f} \in \Defineds$
  but $n < \arity(\identifier{f})$: same as above, because the rules
  respect $\arity$;
\item $t = \apps{(\abs{x}{u})}{t_0}{t_n}$: $t(\gamma \cup \delta)$ is
  neutral, so by \refLemma{lem:neutralcomp} non-computability implies
  the non-computability of a reduct.  If the reduct
  $\apps{u(\gamma \cup \delta)[x:=t_0(\gamma \cup \delta)]}{
  (t_1(\gamma \linebreak
  \cup \delta))}{(t_n(\gamma \cup \delta))} =
  (\apps{u[x:=t_0]}{t_1}{t_n})(\gamma \cup \delta)$ is non-computable,
  we are done because $t \bsuptermeq{A} \apps{u[x:=t_0]}{t_1}{t_n}$.
  Otherwise, note that all many-step reducts of $t(\gamma \cup \delta)$
  are either also a reduct of $(\apps{u[x:=t_0]}{t_1}{t_n})(\gamma \cup
  \delta)$ -- and therefore computable -- or have the form
  $\apps{(\abs{x}{u'})}{t_0'}{t_n'}$ with $u(\gamma \cup \delta)
  \arrr{\Rules} u'$ and each $t_i(\gamma \cup \delta) \arrr{\Rules}
  t_i'$.  Thus, at least one of $u(\gamma \cup \delta)$ or $t_i(\gamma
  \cup \delta)$ has to be non-terminating.  But if $u(\gamma \cup
  \delta)$ is non-terminating, then so is $u[x:=u'](\gamma \cup \delta)$,
  contradicting computability of $(\apps{u[x:=t_0]}{t_1}{t_n})(\gamma
  \cup \delta)$.  The same holds if $t_i(\gamma \cup \delta)$ is
  non-terminating for some $i \geq 1$.  Thus, $t_0(\gamma \cup \delta)$
  is non-terminating and therefore non-computable, and we indeed have
  $t \bsuptermeq{A} t_0$.
\item $t = \apps{\meta{Z}{s_1,\dots,s_\mac}}{t_1}{t_n}$: let
  $\gamma(Z) \approxp \abs{x_1 \dots x_\mac}{u}$.  Then $u[x_1:=q_1,
  \dots,x_\mac:=q_\mac]$ is computable for all computable $q_1,\dots,
  q_\mac$:
  \begin{itemize}
  \item if $\gamma(Z) = \abs{x_1 \dots x_\mac}{u}$ then this holds by
    computability of all $\gamma(Z)$ and \refLemma{lem:abscomputable};
  \item if $\gamma(Z) = \abs{x_1 \dots x_i}{u'}$ with $i < \mac$ and
    $u = \apps{u'}{x_{i+1}}{x_\mac}$, then computability of
    $\gamma(Z)$ and \refLemma{lem:abscomputable} provide computability
    of $u'[x_1:=q_1,\dots,x_i:=q_i]$, which by definition of
    computability for higher-order terms implies computability
    for
    $\apps{u'[x_1:=q_1,\dots,x_i:=q_i]}{q_{i+1}}{q_n} = u[x_1:=q_1,
    \dots,x_n:=q_n]$.
  \end{itemize}
  Thus, if $u[x_1:=s_1(\gamma \cup \delta),\dots,x_\mac:=s_\mac(\gamma
  \cup \delta)]$ is non-computable, some $s_i(\gamma \cup \delta)$
  must be non-computable, and since substituting an unused variable
  has no effect, this must be the case for some $i$ with $x_i \in
  \FV(u)$.  So in this case, $\gamma$ respects $B := A \cup \{ Z : i
  \}$ and indeed $s \bsuptermeq{B} t \bsuptermeq{B} s_i$.
  If, on the other hand, $u[x_1:=s_1(\gamma \cup \delta),\dots,
  x_\mac:=s_\mac(\gamma \cup \delta)]$ is computable, then
  non-computability of $t(\gamma \cup \delta) = \apps{u[x_1:=
  s_1(\gamma \cup \delta),\dots,x_\mac:=s_\mac(\gamma \cup
  \delta)]}{(t_1(\gamma \cup \delta))}{(t_n(\gamma \cup \delta))}$
  implies non-computability of some $t_i(\gamma \cup \delta)$, and we
  indeed have $t \bsuptermeq{A} t_i$.
\qedhere
\end{itemize}
\end{proof}

From this, we have the main result on static dependency chains.

\oldcounter{\thmSSChain}{\thmSSChainSec}
\ssChainTheThm
\startappendixcounters

\begin{proof}
In the following, let a \emph{minimal non-computable term} be a term
$s := \apps{\afun}{s_1}{s_\maa}$ of base type, such that $\afun \in
\F$, and $s$ is non-terminating but all $s_i$ are $C$-computable.  We
say that $s$ is MNC.

We first observe that if $\arr{\Rules}$ is non-terminating, then there
exists a MNC term.  After all, if $\arr{\Rules}$ is non-terminating,
then there is a non-terminating term $s$, which (by
\refLemma{lem:compresults}(\ref{lem:compresults:term})) is also
non-computable.
Let $t\ (A)$ be the element of $\cand(s)$ that is given by
\refLemma{lem:scandcomplete} for $\gamma = \delta = []$.  Then $A =
\emptyset$ and $t$ has the form $\apps{\afun}{t_1}{t_\maa}$ with
$\maa = \arity(\afun)$ (so $t$ has base type), and there exists a
computable substitution $\delta$ such that $t\delta$ is not computable
but all $t_i\delta$ are.  By definition of $C$, this can only be
because $t\delta$ is non-terminating.

Thus, assuming $\arr{\Rules}$ is non-terminating, we can select a MNC
term $t_{-1}$.  Now for $i \in \N$, let a MNC term $t_{i-1} =
\apps{\afun}{q_1}{q_\maa}$ be given.  Because $t_{i-1}$ is
non-terminating and all $q_j$ are $C$-computable and therefore
terminating, we can write $t_{i-1} \arrr{\Rules,in}
(\apps{\afun}{\ell_1}{\ell_m})\gamma \linebreak
\arr{\Rules} r\gamma$ for some rule
$\apps{\afun}{\ell_1}{\ell_m} \arrz r$ and substitution $\gamma$,
where $\arrr{\Rules,in}$ indicates a reduction without root steps (so
each $q_j \arrr{\Rules} \ell_j\gamma$) and where $r\gamma$ is
non-terminating.  By \refLemma{lem:formative}, we can safely assume
that the reductions $q_j \arrr{\Rules} \ell_j\gamma$ are
$\ell_j$-formative if $\apps{\identifier{f}}{\ell_1}{\linebreak
\ell_m}$ is a
fully extended linear pattern; since $\ell$ is closed we can safely
assume that $\domain(\gamma) = \FMV(\ell)$.

Let $s_i := \apps{\afun^\sharp}{(\ell_1\gamma)}{(\ell_m\gamma)}$, and
note that all $\ell_j\gamma$ are computable by
\refLemma{lem:preservecomp}.
We observe that
for all $Z$ occurring in $r$ we have that $\gamma(Z)$ is $C$-computable
by a combination of accessible function passingness, computability
of $\ell_j\gamma$ and \refLemma{lem:pfp}.
As $r\gamma$ is non-computable, \refLemma{lem:scandcomplete} provides
a minimal element $t\ (A)$ of $\cand(r)$ and a computable
substitution $\delta$ on domain $\FV(t)$ such that $\gamma$ respects
$A$ and $t(\gamma \cup \delta)$ is not computable.
For $\FV(t) = \{ x_1,\dots,x_n\}$, let $Z_1,\dots,Z_n$ be fresh
meta-variables; then $p :=\linebreak t[x_1:=Z_1,\dots,x_n:=Z_n] = \mathit{
metafy}(t)$, and $p\eta = t\delta$ for $\eta$ the substitution
mapping $X \in \FMV(\ell)$ to $\gamma(\ell)$ and each $Z_j$ to
$\delta(x_j)$.

Set $\rho_i := \ell^\sharp \arrdp p^\sharp\ (A)$ and $t_i :=
p^\sharp\eta$.  We observe that $t_i$ is MNC, because the meta-term
$t$ supplied by \refLemma{lem:scandcomplete} has the form
$\apps{\bfun}{u_1}{u_\mia}$ with $\mia = \arity(\bfun)$ and $u_j(
\gamma \cup \delta)$ is $C$-computable for each $j$ because $t
\bsuptermeq{A} u_j$.  Thus, we can continue the infinite construction.

The chain $[(\rho_i,s_i,t_i) \mid i \in \N]$ thus constructed is an
infinite formative $(\SDP(\Rules),\Rules)$-dependency chain.  That it
is a $(\SDP(\Rules),\linebreak
\Rules)$-dependency chain is obvious because each
$\rho_i \in \SDP(\Rules)$ (since $t\ (A) \in \cand(r)$ and $t$ has the
right form), because $\gamma$ respects $A$ and $\eta$ corresponds with
$\gamma$ on all meta-variables that take arguments, and because
$\FV(p) = \emptyset$ and $\domain(\eta) = \FMV(\ell) \cup \{ Z_1,\dots,
Z_n \} = \FMV(\ell) \cup \FMV(p)$.
That it is a formative chain follows by the initial selection of
$\gamma$, as we assumed formative reductions to each $\ell_j\gamma$.

It is also a computable chain: clearly we have $t_i = p\eta$ in step $i$ in
the construction above.  Suppose $p^\sharp \bsuptermeq{B} v$ and
$\eta$ respects $B$, but $(\abs{x_1 \dots x_n}{v})\eta$ is not
computable for $\FV(v) = \{ x_1,\dots,x_n \}$ -- so by
\refLemma{lem:abscomputable}, $v(\eta \cup \zeta)$ is not computable for
some computable substitution $\zeta$ on domain $\FV(v)$.  Since the
meta-variables $Z_j$ do not occur applied in $p$, we can safely
assume that $B$ contains only conditions for the meta-variables
in $\domain(\gamma)$.  By renaming each $Z_j$ back to $x_j$, we
obtain that $\gamma$ respects $B$ and $t \bsuptermeq{B} v'$ with
$v = v'[x_1:=Z_1,\dots,x_n:=Z_n]$.  But then $r\gamma \bsuptermeq{A
\cup B} t \bsuptermeq{A \cup B} v'$ and $\gamma$ respects $A \cup
B$ and $v'(\gamma \cup \delta \cup \zeta)$ is non-computable.  By
minimality of the choice $t\ (A)$, we have $v' = t$, so $v = p^\sharp$.
However, $p^\sharp\eta$ has a marked symbol $\symb{g}^\sharp$ as a
head symbol, and thus cannot be reduced at the top; by
$C$-computability of its immediate subterms, it is terminating, and
therefore computable itself.
\end{proof}


We also prove the statement in the text that $\AlterRules$-computability
implies minimality:

\begin{lemma}\label{lem:computableminimal}
Every $\AlterRules$-computable $(\P,\Rules)$-dependency chain is 
\linebreak minimal.
\end{lemma}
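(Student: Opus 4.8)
The plan is to prove something apparently much stronger than minimality: that in an $\AlterRules$-computable chain \emph{each $t_i$ itself} is $C_{\AlterRules}$-computable, whence $t_i$ and all of its subterms are $\arr{\Rules}$-terminating. Since minimality (\refDef{def:formative}) only requires the \emph{strict} subterms of every $t_i$ to be terminating under $\arr{\Rules}$, this will immediately suffice.

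First I would collect two facts about the RC-set $C_{\AlterRules}$, obtained for the relation $\comprel := \arr{\AlterRules}$ (which meets the requirements on $\comprel$ by \refLemma{lem:rewriterelationsuffices}): every $C_{\AlterRules}$-computable term is $\arr{\AlterRules}$-terminating by \refLemma{lem:compresults}(\ref{lem:compresults:term}), and every variable is $C_{\AlterRules}$-computable by \refLemma{lem:compresults}(\ref{lem:compresults:vars}). Because \refDef{def:Ccomputable} requires $\arr{\AlterRules} \mathop{\supseteq} \arr{\Rules}$, any $\arr{\AlterRules}$-terminating term is also $\arr{\Rules}$-terminating; and since $\arr{\Rules}$ is monotonic, an infinite $\arr{\Rules}$-reduction inside any subterm would lift to one of the whole term, so every subterm of an $\arr{\Rules}$-terminating term is again $\arr{\Rules}$-terminating.

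The core step is to instantiate the computability requirement of \refDef{def:Ccomputable} with the \emph{whole} right-hand side. Note first that \refDef{def:Ccomputable} forbids $\mathtt{beta}$ steps, so for \emph{every} index we have $\rho_i = \ell_i \arrdp p_i\ (A_i)$ with $t_i = p_i\gamma_i$, where $\gamma_i$ maps the free variables $\FV(p_i) = \{x_1,\dots,x_n\}$ to fresh variables. Fixing $i$ and taking $B := \emptyset$ and $v := p_i$, we have $p_i \bsuptermeq{\emptyset} p_i$ (the first clause of $\bsuptermeq{A}$ with $n = 0$), and $\gamma_i$ trivially respects $\emptyset$; hence the requirement yields that $(\abs{x_1 \dots x_n}{p_i})\gamma_i$ is $C_{\AlterRules}$-computable. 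Applying \refLemma{lem:abscomputable} $n$ times and instantiating the abstracted variables with the fresh variables $\gamma_i(x_1),\dots,\gamma_i(x_n)$ — which are $C_{\AlterRules}$-computable by \refLemma{lem:compresults}(\ref{lem:compresults:vars}) — recovers exactly $p_i\gamma_i = t_i$. Thus $t_i$ is $C_{\AlterRules}$-computable, hence $\arr{\Rules}$-terminating, and so are all its strict subterms, which is precisely what minimality demands.

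The main obstacle I expect is the purely syntactic bookkeeping in the last step: verifying that $(\abs{x_1\dots x_n}{p_i})\gamma_i$, after stripping the abstraction via \refLemma{lem:abscomputable} and re-instantiating $x_1,\dots,x_n$ by $\gamma_i(x_1),\dots,\gamma_i(x_n)$, is genuinely $\alpha$-equal to $p_i\gamma_i$. This amounts to checking that substituting the free (ordinary) variables $x_1,\dots,x_n$ commutes with the meta-variable part of $\gamma_i$, which is safe precisely because these variables lie in a syntactic category disjoint from the meta-variables and $\gamma_i$ sends them to fresh variables, so no capture occurs. Everything else is a direct appeal to the computability machinery of \refApp{app:computability}.
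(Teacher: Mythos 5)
Your proof is correct and takes essentially the same route as the paper's: instantiate the computability condition of \refDef{def:Ccomputable} with $v := p_i$ and $B := \emptyset$ (using reflexivity of $\bsuptermeq{\emptyset}$) to conclude that each $t_i$ is $C_{\AlterRules}$-computable, hence terminating under $\arr{\AlterRules} \supseteq \arr{\Rules}$, so all its strict subterms are $\arr{\Rules}$-terminating. The only difference is presentational: you spell out, via \refLemma{lem:abscomputable} and computability of the fresh variables $\gamma_i(x_j)$, the passage from computability of $(\abs{x_1 \dots x_n}{p_i})\gamma_i$ to computability of $t_i = p_i\gamma_i$, a step the paper's proof leaves implicit.
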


\begin{proof}
Let $[(\rho_i,s_i,t_i) \mid i \in \N]$ be a $\AlterRules$-computable
$(\P,\Rules)$-chain and let $i \in \N$; we must prove that the strict
subterms of $t_i$ are terminating under $\arr{\Rules}$.  By
definition, since $\bsuptermeq{\emptyset}$ is a symmetric relationship,
$t_i$ is $C_{\AlterRules}$-computable where $C_{\AlterRules}$ is given by
\refThm{thm:defC} for a relation $\arr{\AlterRules} \mathop{\supseteq}
\arr{\Rules}$.  By
\refLemma{lem:compresults}(\ref{lem:compresults:term}), $t_i$ is
therefore terminating under $\arr{\AlterRules}$, so certainly under
$\arr{\Rules}$ as well.  The strict subterms of a terminating term are
all terminating.
\end{proof}

And we thus obtain:

\oldcounter{\thmSChain}{\thmSChainSec}
\sChainTheThm
\startappendixcounters

\begin{proof}
A direct consequence of \refThm{thm:sschain} and
\refLemma{lem:computableminimal}.
\end{proof}

\subsection{Original static dependency pairs}\label{app:sdp:original}

Since the most recent work on static dependency pairs has been defined
for a polymorphic variation of the HRS formalism, it is not evident
from sight how our definitions relate.  Here, we provide context by
showing how the definitions from~\cite{kus:iso:sak:bla:09,suz:kus:bla:11}
apply to the restriction of HRSs that can be translated to AFSMs.

\begin{definition}\label{def:pfp1}
An AFSM $(\F,\Rules,\arity)$ is plain function passing following
\cite{kus:iso:sak:bla:09} if:
\begin{itemize}
\item
for all rules $\apps{\afun}{\ell_1}{\ell_\maa} \arrz r$ and all
  $Z \in \FMV(r)$:
  if $Z$ does not have base type, then there are variables
  $x_1,\dots,x_n$ and some $i$ such that $\ell_i = \abs{x_1 \dots
  x_n}{\meta{Z}{x_{j_1},\dots,x_{j_\mia}}}$.
\end{itemize}

An AFSM $(\F,\Rules,\arity)$ is plain function passing following
\cite{suz:kus:bla:11} if:
\begin{itemize}
\item
  for all rules $\apps{\afun}{\ell_1}{\ell_\maa} \arrz r$ and all
  $Z \in \FMV(r)$: there are
     some variables
  $x_1,\dots,x_\mia$ and
  some $i \leq \maa$ such that $\ell_i \safesup \meta{Z}{x_1,\dots,
  x_k}$, where the relation $\safesup$ is given by:
  \begin{itemize}
  \item
    $s \safesup s$,
  \item
    $\abs{x}{t} \safesup s$ if $t \safesup s$,
  \item
    $\apps{x}{t_1}{t_n} \safesup s$ if $t_i \safesup s$ for some
    $i$ with $x \in \V \setminus \FV(t_i)$%
  \item
    $\apps{\identifier{f}}{t_1}{t_n} \safesup s$ if $t_i \safesup
    s$ for some $t_i$ of base type.%
    \footnote{The authors of \cite{suz:kus:bla:11} refer to such
    subterms as \emph{accessible}.  We do not use this terminology,
    as it does not correspond to the accessibility notion in
    \cite{bla:jou:oka:02,bla:jou:rub:15} which we follow here.
    In particular, the accessibility notion we use considers the
    relation $\ext{\gracsortup}$, which corresponds to the positive/negative
    inductive types in \cite{bla:jou:oka:02,bla:jou:rub:15}.  This is
    not used in \cite{suz:kus:bla:11}.}
  \end{itemize}
\end{itemize}

In addition, in both cases right-hand sides of rules are assumed to be
presented in $\beta$-normal form
and arities are
maximal (for all $\afun : \atype_1 \arrtype \dots \arrtype \atype_\maa
\arrtype \asort$ we have $\arity(\afun) = \maa$).
\end{definition}

The definitions of PFP in \cite{kus:iso:sak:bla:09
,suz:kus:bla:11} 
also capture some non-pattern HRSs,
but these cannot be represented as AFSMs.
Note that the key difference
between $\safesup$ and $\suptermeq$ for patterns is that the former
is not allowed to descend into a non-base argument of a function symbol.
The same difference applies when comparing $\safesup$ with $\gracc$:
$\safesup$ also cannot descend into the accessible higher-order
arguments.

\begin{example}
The rules from \refEx{ex:mapintro} are PFP
following both defi\-nitions.
The rules from \refEx{ex:ordrec} are not PFP
in either definition, since $\symb{lim}\ F \safesup F$ does
not hold.
Note that they \emph{are} AFP.
\end{example}

For a PFP AFSM, static dependency pairs are then
defined as \emph{pairs} $\ell^\sharp \arrdp \apps{\identifier{f}^{
\sharp}}{p_1}{p_m}$.
This allows for a very simple notion of chains, similar to the one in
the first-order setting:

\begin{definition}\label{def:sdpchain1}
A static dependency chain following
\cite{kus:iso:sak:bla:09,suz:kus:bla:11} is an infinite sequence
$[(\ell_i \arrdp p_i,\gamma_i) \mid i \in \N]$, where
$p_i\gamma_i \arrr{\Rules} \ell_{i+1}\gamma_{i+1}$ for all $i$.
It is \emph{minimal} if each $p_i\gamma_i$ is terminating under
$\arr{\Rules}$.
\end{definition}

Despite the absence of subterm steps in \refDef{def:sdpchain1},
absence of (minimal) static chains still implies termination:

\begin{theorem}[\cite{kus:iso:sak:bla:09,suz:kus:bla:11}]\label{thm:pfp1}
Let $\Rules$ be plain function passing following either definition in
\refDef{def:pfp1}.  Let $\P = \{ \ell^\sharp \arrdp \apps{
\identifier{f}^\sharp}{p_1}{p_\maa} \mid \ell \arrz r \in \Rules \wedge
r \suptermeq \apps{\identifier{f}}{p_1}{p_\maa} \wedge \identifier{f}
\in \Defineds \wedge \maa = \arity(\identifier{f}) \}$.  If $\arr{\Rules}$
is non-terminating, then there is an infinite minimal static
dependency chain with all $\ell_i \arrdp p_i \in \P$.
\end{theorem}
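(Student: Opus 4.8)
The plan is to derive the statement from our own \refThm{thm:schain} by exhibiting the original setting as a special case of ours. The first step is to show that every plain function passing AFSM (under either reading in \refDef{def:pfp1}) is accessible function passing in the sense of \refDef{def:apfp}. For this I would pick the degenerate sort quasi-ordering in which all sorts are equivalent, so that $\grsort = \emptyset$ (vacuously well-founded). Under this ordering $\Acc(\afun)$ is exactly the set of base-type argument positions of $\afun$, while $\Acc(x)$ is all of $x$'s argument positions, so $\gracc$ descends into abstraction bodies, into every argument of a variable, and into the base-type arguments of function symbols. For the \cite{suz:kus:bla:11} reading this immediately gives AFP, since a clause-by-clause comparison shows $\safesup\ \subseteq\ \gracc$ (the variable clause of $\safesup$ carries the extra side condition $x \notin \FV(t_i)$), so $\ell_i \safesup \meta{Z}{x_1,\dots,x_k}$ yields $\ell_i \gracc \meta{Z}{x_1,\dots,x_k}$. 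For the \cite{kus:iso:sak:bla:09} reading, a higher-type $Z$ occurs as $\ell_i = \abs{x_1\dots x_n}{\meta{Z}{x_{j_1},\dots,x_{j_\mia}}}$, and repeatedly applying the abstraction clause of $\gracc$ gives $\ell_i \gracc \meta{Z}{x_{j_1},\dots,x_{j_\mia}}$; the base-type meta-variables are handled using the fully-extendedness inherited from the HRS translation, which forbids a base, argument-free meta-variable from being buried beneath an abstraction and hence keeps every such occurrence $\gracc$-accessible from the argument in which it appears.

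The second step is to identify $\SDP(\Rules)$ with the set $\P$ of the theorem under the standing PFP assumptions that right-hand sides are $\beta$-normal and all arities are maximal. Because $r$ is $\beta$-normal, the $\beta$-contraction clause of $\bsuptermeq{A}$ never fires, so on such $r$ the relation $\bsuptermeq{A}$ is contained in the ordinary subterm relation $\suptermeq$; consequently a candidate $p\ (A) \in \cand(r)$ whose head is an $\afun \in \Defineds$ with $\arity(\afun) = \maa$ is exactly a subterm $r \suptermeq \apps{\afun}{p_1}{p_\maa}$ of the required shape, and conversely. Maximal arity makes marking transparent, so $(\apps{\afun}{s_1}{s_\maa})^\sharp = \apps{\afun^\sharp}{s_1}{s_\maa}$ and every resulting DP is non-collapsing with both sides of base type. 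The only remaining gap is cosmetic: $\SDP$ applies $\metafy$ to the bound variables of $r$ that become free in $p$, whereas $\P$ keeps them as variables; I would absorb this relabelling into the chain substitutions, matching each fresh meta-variable of $\metafy(p^\sharp)$ with the corresponding free variable instantiated by $\gamma_i$ in \refDef{def:sdpchain1}. Dropping the (for us additional) meta-variable conditions $A$ then exhibits $\P$ as the un-$\metafy$'d image of $\SDP(\Rules)$.

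The third step is to collapse the rich chain delivered by \refThm{thm:schain} into the simple static chain of \refDef{def:sdpchain1}. Since every DP in $\SDP(\Rules)$ is non-collapsing with base-type sides, the \texttt{beta} case and the meta-variable (collapsing) case of \refDef{def:chain} can never be invoked, so each chain step has $s_i = \ell_i\gamma_i$, $t_i = p_i\gamma_i$, and $t_i \arrr{\Rules} s_{i+1}$ performed in the arguments below a common head symbol $\afun^\sharp$; as marked symbols carry no rules, this head is never rewritten and the step is precisely $p_i\gamma_i \arrr{\Rules} \ell_{i+1}\gamma_{i+1}$. Minimality transfers directly, and in fact each $t_i = \apps{\afun^\sharp}{u_1}{u_\maa}$ is itself terminating because its immediate subterms are and its root is stuck, which is exactly the minimality demanded in \refDef{def:sdpchain1}; the formative property is simply forgotten, being stronger than anything required there. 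I expect the main obstacle to lie in the first step, specifically in verifying $\gracc$-accessibility of the base-type meta-variables for the \cite{kus:iso:sak:bla:09} definition — this is the point where the fully-extendedness of translated HRS patterns must be used essentially — together with the careful bookkeeping in the second step that reconciles $\metafy$'d meta-variables with the plain matching variables of the original formalism.
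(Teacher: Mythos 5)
First, a framing remark: the paper does not actually prove \refThm{thm:pfp1} — it is imported from \cite{kus:iso:sak:bla:09,suz:kus:bla:11}, and the paper only remarks afterwards that $\gracc$ subsumes $\safesup$ when $\greqsort$ equates all sorts. So your plan of re-deriving it as a corollary of \refThm{thm:schain} is a genuinely different route from the cited proofs. For the \cite{suz:kus:bla:11} variant it does work: under the trivial sort ordering one checks clause-by-clause that $\safesup\ \subseteq\ \gracc$, that definition constrains \emph{every} $Z \in \FMV(r)$, and your steps 2 and 3 (identifying $\SDP(\Rules)$ with $\P$ modulo $\metafy$ for $\beta$-normal, maximal-arity rules, and flattening a \refDef{def:chain}-chain of non-collapsing, base-type DPs into a chain in the sense of \refDef{def:sdpchain1}) are essentially right, modulo routine bookkeeping.

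The gap is in step 1 for the \cite{kus:iso:sak:bla:09} variant. As rendered in \refDef{def:pfp1}, that definition constrains only meta-variables that do \emph{not} have base type, while \refDef{def:apfp} demands $\ell_i \gracc \meta{Z}{x_1,\dots,x_k}$ for \emph{all} $Z \in \FMV(r)$; these requirements genuinely come apart. Take $\symb{f} : \nat \arrtype \nat$, $\symb{g} : (\nat \arrtype \nat) \arrtype \nat$, $\symb{h} : \nat \arrtype \nat \arrtype \nat$ and the rule $\symb{f}\ (\symb{g}\ (\abs{x}{\symb{h}\ \meta{Z}{}\ x})) \arrz \meta{Z}{}$ with $Z$ of type $\nat$ and arity $0$. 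This is a legal AFSM rule (closed pattern, maximal arities respected, $\beta$-normal right-hand side) and is PFP in the \cite{kus:iso:sak:bla:09} rendering, since its only meta-variable has base type; but it is not AFP for \emph{any} sort ordering: the sole occurrence of $Z$ lies inside the argument of $\symb{g}$, and $1 \in \Acc(\symb{g})$ would require $\nat \gracsortup (\nat \arrtype \nat)$, hence $\nat \grsort \nat$, which is impossible. So \refThm{thm:schain} is not applicable and your reduction breaks. Your proposed repair via fully-extendedness is not available: neither \refDef{def:pfp1} nor the pattern-HRS-to-AFSM translation guarantees fully extended left-hand sides (Miller's pattern condition allows an argument-free $\meta{Z}{}$ under an abstraction, exactly as above); with full extension \emph{and} maximal arities your conclusion would indeed follow, but that hypothesis has to come from somewhere. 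The theorem itself is not in danger — the original proof in \cite{kus:iso:sak:bla:09} uses a computability predicate for which base-type computability is plain termination, so for base-type $Z$ the term $\gamma(Z)$ is computable simply by being a subterm of a terminating instance of the left-hand side. That step has no analogue in the $\gracc$-based arguments behind \refThm{thm:schain} and \refThm{thm:sschain}, whose predicate $C$ is strictly stronger at base types; this is precisely where your reduction and the cited proof diverge (and, incidentally, where the paper's own remark that \refDef{def:apfp} ``includes both notions'' of \refDef{def:pfp1} is loose in the same way). The kus case therefore needs a fragment of the original computability argument rather than an appeal to \refThm{thm:schain}.
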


This definition is very close to the corresponding first-order notion.
However, its simplicity comes at a price: completeness is lost.  This
is both because meta-variable conditions are not considered, and for
the same reason that \refThm{thm:schain} is one-directional: the free
variables in the right-hand sides of dependency pairs may be
instantiated by anything.


Note that $\gracc$ corresponds to $\safesup$ (from
\refDef{def:pfp1}) if $\greqsort$\linebreak equates all sorts (
as then always
$\Acc(\afun) = \{$
the indices of all base type arguments
of $\afun\}$
).  Thus,
\refDef{def:apfp} includes both notions from \refDef{def:pfp1}.

\section{Dependency pair processors}\label{app:processors}

In this appendix, we prove the soundness -- and where applicable
completeness -- of all DP processors defined in the text.

We first observe:

\begin{lemma}\label{lem:subsetcomplete}
If $\Proc$ maps every DP problem to a set of problems such that for
all $(\P',\Rules',m',f') \in \Proc(\P,\Rules,m,f)$ we have that $\P'
\subseteq \P$, $\Rules' \subseteq \Rules$, $m' \succeq m$ and $f' =
f$, then $\Proc$ is complete.
%
\end{lemma}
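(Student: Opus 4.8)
The plan is to unfold the definition of completeness (\refDef{def:proc}) and reduce it to a single implication. Since the hypothesis guarantees that $\Proc$ returns a \emph{set} of problems (never $\no$) on every input, completeness amounts to showing: whenever $\Proc(\adpprob)$ with $\adpprob = (\P,\Rules,m,f)$ contains an infinite element $(\P',\Rules',m',f')$, the problem $\adpprob$ is itself infinite. By hypothesis this element satisfies $\P' \subseteq \P$, $\Rules' \subseteq \Rules$, $m' \succeq m$ and $f' = f$. First I would dispatch the easy half of the disjunction in the definition of ``infinite'' (\refDef{def:dpproblem}): if $\Rules'$ is non-terminating then, as $\Rules' \subseteq \Rules$ yields $\arr{\Rules} \mathop{\supseteq} \arr{\Rules'}$, the set $\Rules$ is non-terminating as well, so $\adpprob$ is infinite. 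Hence I may assume $(\P',\Rules',m',f')$ is \emph{not finite}, i.e.\ there is an infinite $(\P',\Rules')$-chain $[(\rho_i,s_i,t_i) \mid i \in \N]$ meeting the flag requirements imposed by $m'$ and $f'$.

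The second step is to observe that this very sequence is already an infinite $(\P,\Rules)$-chain. Going through \refDef{def:chain}: each $\rho_i \in \P' \cup \{\mathtt{beta}\} \subseteq \P \cup \{\mathtt{beta}\}$ by $\P' \subseteq \P$; the $\mathtt{beta}$-case and the DP-case only constrain the individual pairs and never mention the rule set; and the reduction-step case uses $\arrr{\Rules'}$, which is contained in $\arrr{\Rules}$ since $\Rules' \subseteq \Rules$. It then remains to check that the sequence also meets the conditions dictated by $\adpprob$'s own flags $m$ and $f$, or else that $\Rules$ is non-terminating (which again yields infiniteness). For the $f$ flag the only non-trivial case is $f = \formative$; since $f' = f$, the chain is formative for $\Rules'$, and I would argue by induction on the derivation of the $\ell$-formative predicate (\refDef{def:formative}) that $\ell$-formativity with respect to $\Rules'$ lifts to $\ell$-formativity with respect to $\Rules$, every reduction step and every rule used in the $\Rules'$-derivation being available in $\Rules$ via $\Rules' \subseteq \Rules$.

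For the $m$ flag I would use $m' \succeq m$ together with the shape of $\succeq$, which is generated only by $\static_S \succeq \minimal \succeq \arbitrary$. If $m = \arbitrary$ there is nothing to prove. If $m = \minimal$ I would case-split on whether all strict subterms of all $t_i$ are $\arr{\Rules}$-terminating: if they are, the chain is $\Rules$-minimal and witnesses that $\adpprob$ is not finite; if some are not, then $\Rules$ is non-terminating and $\adpprob$ is infinite anyway. If $m = \static_{\AlterRules}$, then, as $\static_{\AlterRules}$ lies $\succeq$-above only itself, necessarily $m' = \static_{\AlterRules}$ for the \emph{same} $\AlterRules$, so the chain is $\AlterRules$-computable and the per-index computability conditions of \refDef{def:Ccomputable} (which mention only $\AlterRules$, the $\rho_i$ and $C_{\AlterRules}$, not the current rule set) transfer verbatim.

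The main obstacle is precisely the $m$ flag, because minimality and computability are stated relative to the \emph{current} rule set: as $\Rules' \subseteq \Rules$, $\Rules'$-minimality is \emph{weaker} than $\Rules$-minimality, so the chain need not be minimal for $\adpprob$. The device that rescues this is the case-split above, exploiting that ``$\adpprob$ infinite'' is a disjunction already containing non-termination of $\Rules$. A subtler point lurks in the $\static_{\AlterRules}$ case: the definition of an $\AlterRules$-computable $(\P,\Rules)$-chain carries the side condition $\arr{\AlterRules} \mathop{\supseteq} \arr{\Rules}$, whereas the chain only delivers $\arr{\AlterRules} \mathop{\supseteq} \arr{\Rules'}$. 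Here I would invoke the standing invariant that a $\static_{\AlterRules}$ flag is attached only to problems with $\arr{\AlterRules} \mathop{\supseteq} \arr{\Rules}$ --- true for the initial static problem (where $\AlterRules = \Rules$) and preserved by every processor, since processors only shrink $\Rules$ while leaving $\AlterRules$ fixed --- so that $\arr{\AlterRules} \mathop{\supseteq} \arr{\Rules} \mathop{\supseteq} \arr{\Rules'}$ and the lifted chain is genuinely $\AlterRules$-computable, whence $\adpprob$ is not finite. Combining the cases, $\adpprob$ is infinite in every situation, establishing completeness.
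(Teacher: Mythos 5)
Your proposal follows essentially the same route as the paper's proof: note that $\Proc$ never returns $\no$, lift the given infinite $(\P',\Rules')$-chain to a $(\P,\Rules)$-chain via the inclusions $\P' \subseteq \P$ and $\arrr{\Rules'} \subseteq \arrr{\Rules}$, and discharge the flag conditions, letting the disjunct ``$\Rules$ is non-terminating'' in the definition of \emph{infinite} absorb the awkward cases. The main organisational difference is where the termination case split happens. The paper performs it \emph{once}, at the very start: if $\arr{\Rules}$ is non-terminating, $\adpprob$ is infinite and we are done; otherwise $\arr{\Rules}$ is terminating, and then minimality of the lifted chain is immediate (\emph{all} terms are $\arr{\Rules}$-terminating, so in particular the strict subterms of the $t_i$ are), with no per-flag case analysis needed. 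Your later case split on the subterms of the $t_i$ reaches the same conclusion, just more laboriously; your explicit induction for transferring $\ell$-formativity from $\Rules'$ to $\Rules$ is also fine (the paper treats that step as immediate).

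The substantive issue is the $\static_{\AlterRules}$ case, and there you have correctly identified something the paper's proof silently skips: $\AlterRules$-computability of a chain is defined relative to the chain's \emph{own} rule set through the side condition $\arr{\AlterRules} \mathop{\supseteq} \arr{\Rules}$ (\refDef{def:Ccomputable}), and this condition does not transfer from $\Rules'$ to the larger set $\Rules$; the paper simply asserts ``the chain is indeed $S$-computable''. However, your repair is not available under the lemma's hypotheses. The lemma quantifies over processors acting on \emph{every} DP problem, \refDef{def:dpproblem} allows $\static_{\AlterRules}$ for \emph{any} set of rules $\AlterRules$ (no relation to $\Rules$ is imposed), and completeness (\refDef{def:proc}) ranges over all DP problems -- including degenerate ones never reachable from an initial problem, for which your ``standing invariant'' simply does not hold. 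Indeed, for a degenerate input with $\arr{\AlterRules} \supseteq \arr{\Rules'}$ for some $\Rules' \subsetneq \Rules$ but $\arr{\AlterRules} \not\supseteq \arr{\Rules}$, an infinite $\AlterRules$-computable $(\P,\Rules')$-chain can exist while $\arr{\Rules}$ is terminating; then the output problem $(\P,\Rules',\static_{\AlterRules},f)$ is infinite, yet $(\P,\Rules,\static_{\AlterRules},f)$ is finite, because \emph{no} $(\P,\Rules)$-chain whatsoever can be $\AlterRules$-computable. So the case you flagged cannot be closed from the stated hypotheses at all: requiring $\arr{\AlterRules} \supseteq \arr{\Rules}$ as an invariant of problems carrying the $\static_{\AlterRules}$ flag is the right mathematical fix, but it amounts to strengthening the definition of a DP problem (or the lemma), not to proving the lemma as stated. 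In that sense your proof has the same gap as the paper's own -- with the difference that you made it explicit.
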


\begin{proof}
$\Proc(\P,\Rules,m,f)$ is never \texttt{NO}.  Suppose $\Proc(\P,
\Rules,m,f)$ contains an infinite element $(\P',\Rules',m',f')$; we
must prove that then $(\P,\Rules,m,f)$ is infinite as well.  This is
certainly the case if $\arr{\Rules}$ is non-terminating, so assume
that $\arr{\Rules}$ is terminating.  Then certainly $\arr{\Rules'}
\mathop{\subseteq} \arr{\Rules}$ is
terminating as well, so $(\P',\Rules',m',f')$ can be infinite only
because there exists an infinite $(\P',\Rules')$-chain that is
$\AlterRules$-computable if $m' = \static_\AlterRules$, minimal if
$m' = \minimal$ and formative if $f' = \formative$.  By definition,
this is also a $(\P,\Rules)$-dependency chain, which is formative if
$f = f' = \formative$.  Since $\arr{\Rules}$ is terminating, this
chain is also minimal.
If we have $m = \static_S$, then also $m' = \static_S$
(since $\static_S$ is maximal under $\succeq$)
and the chain is indeed $S$-computable.
\end{proof}

\subsection{The dependency graph}

The dependency graph processor allows us to split a DP problem into
multiple smaller ones.  Despite the clear weakness that all collapsing
DPs are necessarily on the same cycle, this processor is useful both
when considering static and dynamic dependency pairs.  To prove
correctness of its main processor, we first prove a lemma that will
also aid in \refThm{thm:sdpproc}.

\begin{lemma}\label{lem:graphend}
Let $\adpprob = (\P,\Rules,m,f)$ and $G_\theta$ an approximation of
its dependency graph.  Then for every infinite $\adpprob$-chain
$[(\rho_i,s_i,t_i) \mid i \in \N]$ there exist $n \in \N$ and a cycle
$C$ in $G_\theta$ such that for all $i > n$: if $\rho_i \neq
\mathtt{beta}$ then $\theta(\rho_i) \in C$.
\end{lemma}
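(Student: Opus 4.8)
\textbf{The plan} is to show that any infinite chain eventually becomes trapped in a single cycle of the approximating graph $G_\theta$. The key observation is already recorded in the proof sketch of the dependency graph processor (\refThm{def:graphproc}): whenever two non-$\mathtt{beta}$ positions $\rho_i$ and $\rho_{i+j}$ occur in a chain with all intermediate pairs between them, there is a corresponding path in the true dependency graph $\mathit{DG}$, and hence (since $G_\theta$ approximates $\mathit{DG}$) a path between $\theta(\rho_i)$ and $\theta(\rho_{i+j})$ in $G_\theta$. So the first step is to make precise the statement that consecutive non-$\mathtt{beta}$ entries are connected by edges in $\mathit{DG}$.

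\textbf{First I would} argue that an infinite chain contains infinitely many non-$\mathtt{beta}$ entries. A $\mathtt{beta}$ step always strictly decreases a suitable measure on the term (the number of head-$\beta$-redexes, or more simply: a $\mathtt{beta}$ step can only be applied to a term whose head is an abstraction, and its result feeds into the next step via $\arrr{\Rules}$; an uninterrupted sequence of $\mathtt{beta}$ steps terminates). Hence there cannot be an infinite tail consisting only of $\mathtt{beta}$ entries, so infinitely many indices $i$ carry $\rho_i \in \P$. Next I would verify the edge claim: if $\rho_i = \ell_1 \arrdp p_1\ (A_1) \in \P$ and $\rho_k = \ell_2 \arrdp p_2\ (A_2) \in \P$ are two successive (i.e. with only $\mathtt{beta}$ entries strictly between them) non-$\mathtt{beta}$ entries, then by following the definition of chain (\refDef{def:chain}, cases \ref{depchain:beta}--\ref{depchain:reduce}) the term $t_i$ reduces, through the intervening $\mathtt{beta}$ and $\arrr{\Rules}$ steps, to $s_k = \ell_2\gamma$; this witnesses exactly the reachability condition defining an edge from $\rho_i$ to $\rho_k$ in $\mathit{DG}$. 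The approximation property then transfers this to an edge from $\theta(\rho_i)$ to $\theta(\rho_k)$ in $G_\theta$.

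\textbf{Then} the result follows by a standard finiteness-pigeonhole argument. Let $\rho_{k_0}, \rho_{k_1}, \rho_{k_2}, \dots$ enumerate the non-$\mathtt{beta}$ entries in order; by the edge claim, $\theta(\rho_{k_0}), \theta(\rho_{k_1}), \dots$ is an infinite walk in the finite graph $G_\theta$. Since $G_\theta$ has finitely many nodes, some node is visited infinitely often, and after the first such recurrence all subsequent visited nodes lie on a common strongly connected subgraph; more carefully, the set of nodes visited infinitely often forms a single strongly connected set, and hence sits inside one cycle $C$ (or, if one prefers, inside one SCC, from which a cycle is extracted). Choosing $n$ to be an index past the point where the walk has entered and stays within the nodes visited infinitely often gives the claim: for all $i > n$ with $\rho_i \neq \mathtt{beta}$, $\theta(\rho_i) \in C$.

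\textbf{The main obstacle} I anticipate is the bookkeeping in the edge claim when $\mathtt{beta}$ steps intervene between two $\P$-entries. The edge definition in \refDef{def:depgraph} is phrased in terms of a direct $\arrr{\Rules}$ reduction from (a filtering of) $p_1$ to $\ell_2$, whereas in the chain the passage from $t_i$ to $s_k$ may interleave $\mathtt{beta}$ entries with the rewrite steps of clause~\ref{depchain:reduce}. I would handle this by noting that $\mathtt{beta}$ entries correspond to $\arr{\beta}$ steps, which are themselves included in $\arr{\Rules}$, so the composite passage from $t_i$ to $s_k$ is still an $\arrr{\Rules}$ reduction of the appropriate shape; combined with the head-symbol or meta-variable-head matching that the edge definition requires, this yields the needed edge. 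Care is also needed when $p_1$ is headed by a meta-variable, but in that case the first clause of the edge definition makes $\rho_i$ adjacent to \emph{every} node, so the edge is immediate.
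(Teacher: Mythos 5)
Your overall architecture --- an edge claim relating successive non-$\mathtt{beta}$ entries, followed by the finite-graph pigeonhole argument --- is the same as the paper's, your pigeonhole step is fine, and your explicit argument that an infinite chain must contain infinitely many non-$\mathtt{beta}$ entries is correct (the paper in fact leaves this implicit). The genuine gap is in your justification of the edge claim in the case where $\rho_i$ is \emph{non-collapsing} and $\mathtt{beta}$ entries intervene before $\rho_k$. You propose to absorb the intervening $\mathtt{beta}$ entries into a single $\arrr{\Rules}$ reduction from $t_i$ to $s_k$ ``of the appropriate shape''. This fails on two counts. First, a $\mathtt{beta}$ entry is not in general a rewrite step on the whole term: in case (\ref{depchain:beta})(b) of \refDef{def:chain} the next term is a marked \emph{subterm} $q^\sharp[x:=v]$ of the $\beta$-reduct, so the composite passage need not be an $\arrr{\Rules}$ reduction at all. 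Second, even a genuine whole-term reduction $t_i \arrr{\Rules} s_k$ does not witness an edge: the second clause of \refDef{def:depgraph} requires that $p_1$ and $\ell_2$ have the \emph{same head symbol} applied to the \emph{same number of arguments}, with the reduction proceeding \emph{argumentwise} under substitutions that respect $A_1,A_2$ and map variables to variables; a head $\beta$-step (which is exactly what a $\mathtt{beta}$ entry of type (a) performs) destroys precisely this head structure, and a root rewrite step would too.

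What is missing is the observation the paper uses to close exactly this case: the problematic configuration cannot occur. If $\rho_j$ is non-collapsing, then case \ref{depchain:dp:instance} of \refDef{def:chain} applies, so $t_j = p_j\gamma_j$ is headed by a symbol of $\F^\sharp$; clause \ref{depchain:reduce} then forces $s_{j+1}$ either to equal $t_j$ or to share this head, so $s_{j+1}$ cannot be headed by a $\beta$-redex and $\rho_{j+1}$ cannot be $\mathtt{beta}$. Hence after a non-collapsing DP the very next entry is again a DP, and the argumentwise reductions supplied by clause \ref{depchain:reduce} instantiate the edge condition of \refDef{def:depgraph} verbatim. In other words, $\mathtt{beta}$ entries can only ever follow collapsing DPs, where (as you correctly note) the edge is free by the first clause of the edge definition. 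Substituting this observation for your ``composite reduction'' argument makes your proof go through.
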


\begin{proof}
Suppose $\rho_i,\rho_j \in \P$.  We say that $\rho_i$ ``follows''
$\rho_j$ in the chain if $i = j + k$ for $k > 0$ and for all $j < m <
i$: $\rho_m = \mathtt{beta}$.  We claim: \emph{(**) if $\rho_i$
follows $\rho_j$ in the chain, then there is an edge from $\theta(
\rho_j)$ to $\theta(\rho_i)$.}

By definition of \emph{approximation}, the claim follows if $DG$ has
an edge from $\rho_j$ to $\rho_i$.  This is certainly the case if
$\rho_j$ is collapsing.  Otherwise, write $\rho_j = \ell_j \arrdp
p_j\ (A_j)$ with $p_j = \apps{\afun}{q_1}{q_n}$; since case
\ref{depchain:dp:instance} applies, there is a substitution
$\gamma_j$ that respects $A_j$ such that $t_j = p_j\gamma_j$.  Due
to the form of $p_j\gamma_j$, $\rho_{j+1}$ cannot be \texttt{beta},
so $i = j + 1$; we can write $\rho_i = \apps{\afun}{u_1}{u_n} \arrdp
p_i\ (A_i)$ and there exists a substitution $\gamma_i$ that respects
$A_i$ such that $s_i = \apps{\afun}{(u_1\gamma_i)}{(u_n\gamma_i)}$
and each $q_k\gamma_j \arrr{\Rules} u_k\gamma_i$.  Thus, all the
requirements are satisfied for there to be an edge from $\rho_j$ to
$\rho_i$ in $DG$.

Now, having (**), we see that the chain traces an infinite path in
$G_\theta$.  Let $C$ be the set of nodes that occur infinitely often
on this path; then for every node $d$ that is not in $C$, there is
an index $n_d$ after which $\theta(\rho_i)$ is never $d$ anymore.
Since $G_\theta$ is a \emph{finite} graph, we can take $n :=
\max(\{ n_d \mid d$ a node in $G_\theta \wedge d \notin C \})$.  Now
for every pair $d,b \in C$: because they occur infinitely often, there
is some $i > n$ with $\theta(\rho_i) = d$ and there is $j > i$ with
$\theta(\rho_j) = b$.  Thus, by (**) there is a path in $G_\theta$
from $d$ to $b$.  Similarly, there is a path from $b$ to $d$.  This
implies that they are on a cycle.
\end{proof}

Note that we did not modify the original chain at all, beyond looking
at a tail.  This is why the same flags apply to the resulting chain.
This makes it very easy to prove correctness of the main processor:

\oldcounter{\procDPGraph}{\procDPGraphSec}
\DPGraphTheProc
\startappendixcounters

\begin{proof}
Completeness follows by \refLemma{lem:subsetcomplete}.
Soundness follows because if $(\P,\Rules,m,f)$ admits an infinite
chain, then by \refLemma{lem:graphend} there is a cycle $C$ such that
a tail of this chain is mapped into $C$.  Let $C'$ be the strongly
connected component in which
$C$ lies, and $\P' = \{ \rho \in \P \mid \theta(\rho) \in
C' \}$.  Then clearly the same tail lies in $\P'$, giving an infinite
$(\P',\Rules,m,f)$-chain, and $(\P',\Rules,m,f)$ is one of the elements
of the set returned by the dependency graph processor.
\end{proof}

The dependency graph processor is essential to prove termination in
our framework because it is the only processor defined so far that
can remove a DP problem to $\emptyset$.

%
%

\subsection{Processors based on reduction triples}

For a complete picture when it comes to reduction triples, we first
present a processor that was \emph{not} used in the text, but that
most naturally corresponds to the reduction pair processor of
first-order rewriting.

\newcommand{\redpairTheProc}{
\begin{theorem}[Basic reduction triple processor]\label{thm:redpairproc}
Let $M = (\P_1 \uplus \P_2,\Rules,m,f)$ be a DP problem.
If $(\rge,\pge,\pgt)$ is a reduction triple such that
\begin{enumerate}
\item
for all $\ell \arrdp p\ (A) \in \P_1$, we have $\ell \pgt p$;
\item
for all $\ell \arrdp p\ (A) \in \P_2$, we have $\ell \pge p$;
\item
for all $\ell \arrz r \in \Rules$, we have $\ell \rge r$;
\item
if $\P$ contains a collapsing DP, then we have $\supterm \mathop{\subseteq} \pge$,
and\linebreak $\apps{\afun}{X_1}{X_{\arity(\afun)}} \pge \apps{\afun^\sharp}{X_1}{X_{\arity(\afun)}}$
for fresh meta-variables $X_1, \ldots, X_{\arity(\afun)}$
holds for all $\afun \in \F$,
\end{enumerate}
then the function that maps $M$ to $\{(\P_2,\Rules,m,f)\}$
is a sound and complete DP processor.
\end{theorem}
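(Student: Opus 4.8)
\textbf{Completeness} is immediate from \refLemma{lem:subsetcomplete}: the processor maps $M = (\P_1 \uplus \P_2,\Rules,m,f)$ to the single problem $(\P_2,\Rules,m,f)$, and since $\P_2 \subseteq \P_1 \uplus \P_2$ while the rule set and both flags are unchanged (so in particular $m \succeq m$ and $f = f$), the hypotheses of that lemma are met.

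For \textbf{soundness} I would argue by contraposition. Assume $(\P_2,\Rules,m,f)$ is finite and suppose, towards a contradiction, that there is an infinite $(\P_1 \uplus \P_2,\Rules)$-chain $[(\rho_i,s_i,t_i) \mid i \in \N]$ respecting the flags. The plan is to read this chain as a quasi-decreasing sequence $s_0, s_1, s_2, \dots$ for the triple in which each use of a pair from $\P_1$ forces a strict $\pgt$ step and every other step is weak, so that well-foundedness of $\pgt$ bounds the number of $\P_1$-steps. Concretely, for each $i$ I would exhibit a path of elementary $\rge/\pge/\pgt$ relations from $s_i$ through $t_i$ to $s_{i+1}$. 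If $\rho_i = \ell_i \arrdp p_i\ (A_i)$ and case~(\ref{depchain:dp:instance}) of \refDef{def:chain} applies, then $s_i = \ell_i\gamma \mathrel{R} p_i\gamma = t_i$ by meta-stability applied to $\ell_i \mathrel{R} p_i$, where $R = \pgt$ if $\rho_i \in \P_1$ and $R = \pge$ if $\rho_i \in \P_2$ (the fresh variables substituted for $\FV(p_i)$ being harmless, as the orderings are stable under renaming of variables absent from the left-hand side). If case~(\ref{depchain:dp:meta}) applies, so $p_i = \meta{Z}{\vec u}$ with $\gamma(Z) \approxp \abs{\vec x}{w}$ and $t_i = v^\sharp[\vec x := \vec{u\gamma}]$ for a non-variable subterm $v \unlhd w$, then $s_i \mathrel{R} p_i\gamma = w[\vec x := \vec{u\gamma}] \suptermeq v[\vec x := \vec{u\gamma}] \pge v^\sharp[\vec x := \vec{u\gamma}] = t_i$, the last two steps using $\supterm \subseteq \pge$ and $\afun \pge \afun^\sharp$ (suitably applied) from the fourth condition. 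The two $\mathtt{beta}$ cases are handled similarly: case~(\ref{depchain:beta})(a) gives $s_i \arr{\beta} t_i$, hence $s_i \rge t_i$ since $\arr{\beta} \subseteq \rge$ and $\rge$ is monotonic; case~(\ref{depchain:beta})(b) gives $s_i \arr{\beta} u[x:=v] \suptermeq q[x:=v] \pge q^\sharp[x:=v] = t_i$, again weak via the fourth condition. Finally, case~(\ref{depchain:reduce}) reduces $t_i$ to $s_{i+1}$ in its arguments, and since each rule is oriented by $\rge$ and $\rge$ is monotonic, $t_i \rge s_{i+1}$.

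Putting the pieces together, the full sequence $s_0, s_1, \dots$ is connected by composites of $\rge, \pge, \pgt$, with a strict $\pgt$ occurring in macro-step $i$ exactly when $\rho_i \in \P_1$. Writing $i_1 < i_2 < \cdots$ for the indices with $\rho_{i_k} \in \P_1$ and letting $a_k$ be the term reached immediately after the strict step contributed by $\rho_{i_k}$, every stretch from $a_k$ to $a_{k+1}$ is a (possibly empty) run of weak steps followed by one strict step; iterating the compatibility laws $\rge \circ \pgt \subseteq \pgt$ and $\pge \circ \pgt \subseteq \pgt$ of \refDef{def:redpair} gives $(\rge \cup \pge)^* \circ \pgt \subseteq \pgt$, hence $a_k \pgt a_{k+1}$. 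If $\P_1$ were used infinitely often this would be an infinite $\pgt$-descending sequence, contradicting well-foundedness of $\pgt$. Therefore $\P_1$-pairs occur only finitely often, so a tail of the chain uses only $\P_2 \cup \{\mathtt{beta}\}$ and is an infinite $(\P_2,\Rules)$-chain; since minimality, formativeness, and $\AlterRules$-computability are all preserved under taking tails, this contradicts finiteness of $(\P_2,\Rules,m,f)$.

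The main obstacle I expect is the bookkeeping around the collapsing behaviour, i.e. cases~(\ref{depchain:dp:meta}) and~(\ref{depchain:beta})(b): one must verify that the fourth condition supplies precisely the weak steps needed to descend into the marked lifted subterm, and that these steps are available — that is, that such marking steps arise in a chain only when $\P$ is collapsing (a $\mathtt{beta}$ step can never immediately follow a non-collapsing $\P$-step, whose $t_i$ is headed by a marked symbol). A secondary technical point is the treatment of free variables in right-hand sides of DPs (as in \refEx{ex:derivdynamic}), which the basic processor handles only via stability of the orderings under renaming; this is exactly the weakness that \refThm{thm:basetriple} later removes by substituting the constants from $\mathsf{Bot}$.
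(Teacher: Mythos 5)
Your overall strategy is exactly the paper's: completeness from \refLemma{lem:subsetcomplete}, and soundness by reading an infinite $(\P_1 \uplus \P_2,\Rules)$-chain as an infinite $(\pgt \cup \pge \cup \rge)$-sequence in which every $\P_1$-entry contributes a strict $\pgt$ step, so that the compatibility laws of \refDef{def:redpair} plus well-foundedness of $\pgt$ force a tail lying in $\P_2$. Your case analysis (the two $\mathtt{beta}$ cases via $\arr{\beta} \subseteq \rge$ and condition~4, the collapsing DP case via $\supterm \subseteq \pge$ and the marking inequality, rule reductions via monotonicity and transitivity of $\rge$) also matches the paper's proof, as does your observation that the condition-4 steps are only ever needed when $\P$ is collapsing.

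There is, however, one genuine gap, and it is precisely the point where the paper's proof does real work: the fresh variables in right-hand sides. You justify $s_i = \ell_i\gamma \pgt p_i\gamma = t_i$ by meta-stability together with the claim that the orderings are ``stable under renaming of variables absent from the left-hand side''. No such property is granted by \refDef{def:redpair}: meta-stability only covers substitutions whose domain is $\FMV(\ell_i) \cup \FMV(p_i)$, whereas the chain substitution $\gamma$ additionally moves the variables in $\FV(p_i)$, so $\ell_i\gamma \pgt p_i\gamma$ does not follow as stated. The paper repairs this without any extra assumption by threading a renaming $\eta_i$ through the whole sequence: writing $\delta$ for the inverse renaming sending each fresh $\gamma(x)$ back to $x$, the ordering step compares $\ell_i(\gamma\delta\eta_i)$ with $p_i(\gamma\delta\eta_i)$ --- both instances of a pure meta-substitution, so meta-stability applies --- and all subsequent subterm, marking and $\arrr{\Rules}$ steps are then performed on the correspondingly renamed terms, with $\eta_{i+1}$ the restriction of $\delta\eta_i$ to $\FV(t_i)$; this uses only that rewriting, $\suptermeq$ and marking commute with renamings, never that $\pgt$ or $\pge$ do. Without this device (or an explicitly added renaming-stability hypothesis) your macro-step from $s_i$ to $t_i$ is unjustified, even though the remainder of your argument is sound.
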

}
\redpairTheProc


\begin{proof}
Completeness follows by \refLemma{lem:subsetcomplete}.  Soundness
follows because every infinite $(\P_1 \uplus \P_2,\Rules)$-chain
$[(\rho_i,s_i,t_i) \mid i \in \N]$ with $\P_1,\P_2,\Rules$ satisfying
the given properties induces an infinite $\pgt \cup \pge \cup \rge$
sequence, and every occurrence of a DP in $\P_1$ in the chain
corresponds to a $\pgt$ step in the sequence.  By compatibility of
the relations, well-foundedness guarantees that there can only be
finitely many such steps, so there exists some $n$ such that
$[(\rho_i,s_i,t_i) \mid i \in \N \wedge i > n]$ is an infinite
$(\P_2,\Rules)$-chain.

To see that we indeed obtain the sequence, let $i \in \N$, and let
$\eta_i$ be a renaming on a finite set of variables (we let $\eta_1
:= []$).

If $\rho_i = \mathtt{beta}$, then $s_i \arr{\beta} \cdot
\suptermeq t_i'$ for some $t_i'$ with $t_i = {t_i'}^\sharp$; but then
also $s_i\eta_i \arr{\beta} \cdot \suptermeq t_i'\eta_i$ and
$t_i = {t_i'}^\sharp$.  Since $\arr{\beta}$ is included in $\rge$ and
$\suptermeq$ in $\rge$, as are marking steps, $s_i\eta_i \rge \cdot
\pge t_i\eta_i$.  We let $\eta_{i+1} := \eta_i$.

Otherwise, $\rho_i$ has the form $\ell \arrdp p\ (A)$ and there is a
substitution $\gamma$ on domain $\FMV(\ell) \cup \FMV(p) \cup \FV(p)$
that maps all variables to distinct fresh variables, such that
$s_i = \ell\gamma$ and $p\gamma \suptermeq t_i'$ for some $t_i'$ with
$t_i = {t_i'}^\sharp$.  But then, writing $\delta$ for the substitution
mapping each $\gamma(x)$ with $x \in \FV(p)$ back to $x$,
meta-stability gives us that $s_i\eta_i = \ell(\gamma\delta\eta_i)\ 
(\pge \cup \pgt)\ p(\gamma\delta\eta_i) = p\gamma(\delta\eta_i)
\suptermeq t_i'(\delta\eta_i)$.  Letting $\eta_{i+1}$ be the
limitation of $\delta\eta_i$ to the variables occurring freely in
$t_i$, and using that the marking steps are included in $\pge$, we
thus have $s_i\eta_i\ (\pge \cup \pgt)\; \cdot \pge\ t_i\eta_{i+1}$.
Moreover, a $\pgt$ step is used if $\rho_i \in \P_1$.

Since always $t_i \arrr{\Rules} s_{i+1}$, clearly also $t_i\eta_{i+1}
\arrr{\Rules} s_{i+1}\eta_{i+1}$, and $\arrr{\Rules}$ is included in
$\rge$ by the ordering requirements for $\Rules$, monotonicity and
transitivity.
\end{proof}

Now that we have seen
a basic processor using
reduction triples,
let us consider the base-type processor presented in the text.

\oldcounter{\procBasetypeRedpair}{\procBasetypeRedpairSec}
\basetypeRedpairTheProc
\startappendixcounters


\begin{proof}
Completeness follows by \refLemma{lem:subsetcomplete}.  Soundness
follows as in the proof for the basic reduction triple processor, as
we can generate a $\pgt \cup \pge \cup \rge$ sequence using a $\pgt$
step for every occurrence of an element of $\P_1$ in a given chain.

In the following, we say $s \sim t$ if $s : \atype_1 \arrtype \dots
\arrtype \atype_\maa \arrtype \asort$ with $\asort \in \Sorts$ and $t$
has the form $\apps{(s\eta)}{u_1}{u_\maa}$ for some substitution
$\eta$.  We first prove:
\emph{(**) if $\P$ contains a collapsing DP, and if $s \suptermeq t$
and $s \sim s'$, then there exists $t'$ such that $t \sim t'$ and $s'
\ (\pge \cup \rge)^*\ t'$.}
We prove this by induction on the derivation of $s \suptermeq t$.
\begin{itemize}
\item if $s = t$, then we can choose $t' := s'$;
\item if $s = \abs{x}{q}$ and $q \suptermeq t$, then $s' =
  \apps{(\abs{x}{q'\eta})}{u_1}{u_\maa} \rge \apps{q'\eta[x:=u_1]}{
  u_2}{u_\maa} =: q'$ because $\rge$ includes $\arr{\beta}$ (and
  $\maa > 0$ because $s'$ has base type); we complete by the
  induction hypothesis because $q \sim q'$.
\item if $s = \apps{u_0}{q_1}{q_n}$ with $u_0$ a variable, abstraction
  or function symbol, and $q_i \suptermeq t$ for some $i$, then
  \[
  s' = \apps{\apps{(u_0\eta)}{(q_1\eta)}{(q_n\eta)}}{u_1}{u_\maa}
  \pge \apps{(q_i\eta)}{\bot_{\atype_1}}{\bot_{\atype_l}} =: q_i',
  \]
  and since $q_i \sim q_i'$, we complete by the induction hypothesis.
\end{itemize}
Next we observe:
\emph{(***) if $s \sim s'$ and $s \arrr{\Rules} t$, then there
exists $t'$ such that $t \sim t'$ and $s' \rge t'$.}
After all, if $s \sim s'$, we can write $s' = \apps{(s\eta)}{u_1}{
u_\maa}$ and letting $t' := \apps{(t\eta)}{u_1}{u_\maa}$ we both have
$t \sim t'$ (clearly) and $s' \arrr{\Rules} t'$ because $\arr{\Rules}$
is stable under substitution and monotonic.  But then also $s' \rge
t'$, since $\rge$ contains $\arr{\Rules}$ (as $\rge$ contains
$\arr{\beta}$ and $\rge$ orients the rules in $\Rules$ and is
meta-stable and monotonic) and is transitive.

Now, let $s_1' := \apps{s_1}{\bot_{\atype_1}}{\bot_{\atype_\maa}}$;
then clearly $s_1 \sim s_1'$.  For $i \in \N$, suppose $s_i \sim s_i'$.

If $\rho_i = \mathtt{beta}$, then $s_i \arr{\beta} \cdot \suptermeq b$
for some $b$ with $t_i = b^\sharp$.  Write $s_i' =
\apps{(\apps{(\abs{x}{q})\ v}{w_1}{w_n})\eta}{u_1}{u_\maa}$.  Then,
because $\arr{\beta}$ is included in $\rge$, we have $s_i' \rge
\apps{(\apps{u[x:=v]}{w_1}{w_n})\eta}{u_1}{u_\maa}$, which by (**)
$(\pge \cup \rge)^*\ b'$ for some $b'$ such that $b \sim b'$.  By
the inclusion of the marking rules in $\pge$, we have $b' \pge t_i'$
for a suitable $t_i'$ as well, and $t_i' \rge s_{i+1}'$ for suitable
$s_{i+1}'$ by (***).

Otherwise, $\rho_i = \ell \arrdp p\ (A)$ and $s_i = \ell\gamma$ and
$p\gamma \suptermeq b$ for some $b$ with $t_i = b^\sharp$.  We can
write $s_i' = \apps{\ell(\gamma\eta)}{u_1}{u_\maa}$; writing
$\delta := \gamma\eta \cup [Z_1:=u_1,\dots,Z_\maa:=u_\maa]$ for fresh
meta-variables $Z_i$, we then have $s_i' = (\apps{\ell}{Z_1}{Z_\maa})
\delta$.
Let $p : \btype_1 \arrtype \dots \arrtype \btype_n \arrtype \bsort$
and write $\xi$ for the substitution mapping each $x : \atype \in
\FV(p)$ to $\bot_\atype$.
If $\rho_i \in \P_1$, we have $\apps{\ell}{Z_1}{Z_\maa} \pgt
(\apps{p}{\bot_{\btype_1}}{\bot_{\btype_n}})\xi$; otherwise
$\apps{\ell}{Z_1}{Z_\maa} \pge
(\apps{p}{\bot_{\btype_1}}{\bot_{\btype_n}})\xi$.
By meta-stability, we have $s_i\ (\pgt \cup \pge)\ 
(\apps{p}{\bot_{\btype_1}}{\bot_{\btype_n}})\xi\gamma =
\apps{p(\gamma|_{\FMV(p)} \cup \xi)}{\bot_{\btype_1}}{\bot_{\btype_n}}
=: w'$ and clearly $p\gamma \sim w'$.  Therefore by (**) and the
inclusion of the marking rules and (***), we have
$w'\ (\pge \cup \rge)^*\: \cdot \pge t_i' \rge s_{i+1}'$ for suitable
$t_i'$ and $s_{i+1}'$.
\end{proof}

To obtain correctness
for the abstraction-simple reduction triple
processor, we first obtain a number of lemmas.
In the following, we let $\afun^- = \afun$ for $\afun \notin \funs(
\P,\Rules)$ (so $\ttag$ is defined on all terms).  
We formalise \refDef{def:ttag} as follows:

\begin{itemize}
\item $\ttag(s_1\ s_2) = \ttag(s_1)\ \ttag(s_2)$ if $s_1$ respects
  $\arity$;
\item $\ttag(x) = x$ for $x \in \V$;
\item $\ttag(\abs{x}{s}) = \abs{x}{\ttag(s)}$;
\item $\ttag(\meta{Z}{s_1,\dots,s_\mac}) = \meta{Z}{\ttag(s_1),
  \dots,\ttag(s_\mac)}$;
\item $\ttag(\apps{\afun}{s_1}{s_\mia}) = \apps{\afun}{\ttag(s_1)}{
  \ttag(s_\mia)}$ if $\FV(\apps{\afun}{s_1}{s_\mia}) = \emptyset$ and
  $\ttag(\apps{\afun}{s_1}{s_\mia}) = \apps{\afun^-}{\ttag(s_1)}{
  \ttag(s_\mia)}$ if $\FV(\apps{\afun}{s_1}{\linebreak s_\mia})
  \neq \emptyset$,
  for $\mia = \arity(\afun)$
\end{itemize}

\begin{lemma}\label{lem:tagredpair:tagapplies}
If $s$ respects $\arity$, then $\ttag(s\ t) = \ttag(s)\ \ttag(t)$.
\end{lemma}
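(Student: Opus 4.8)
The plan is to read off the result directly from the case structure of the just-given definition of $\ttag$, the only real work being to check that the clauses do not overlap on a term of the shape $s\ t$. Since $s\ t$ is an application, the clauses for variables, for abstractions, and for a bare meta-variable application $\meta{Z}{s_1,\dots,s_\mac}$ cannot apply to it; the only candidates are the first clause (an application $s_1\ s_2$ whose left component respects $\arity$) and the last clause (a fully applied symbol $\apps{\afun}{s_1}{s_\mia}$ with $\mia = \arity(\afun)$, which is itself an application as soon as $\arity(\afun) \geq 1$).

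First I would observe that the first clause already delivers the claim: instantiating it with $s_1 := s$ and $s_2 := t$, its side condition ``$s_1$ respects $\arity$'' is exactly the hypothesis of the lemma, so it gives $\ttag(s\ t) = \ttag(s)\ \ttag(t)$.

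The main (and essentially only) obstacle is to rule out the last clause, so that the first clause is genuinely the applicable one and the value of $\ttag(s\ t)$ is unambiguous. This is settled by arity counting: if $s\ t$ had the form $\apps{\afun}{s_1}{s_\mia}$ with $\mia = \arity(\afun)$, then necessarily $s = \apps{\afun}{s_1}{s_{\mia-1}}$ and $t = s_\mia$, so the symbol $\afun$ would occur in $s$ applied to only $\mia - 1 = \arity(\afun) - 1$ arguments. This contradicts the assumption that $s$ respects $\arity$, which demands that every occurrence of $\afun$ receive at least $\arity(\afun)$ arguments. Hence no such decomposition exists, the last clause cannot fire on $s\ t$, the first clause is the unique applicable one, and the equation $\ttag(s\ t) = \ttag(s)\ \ttag(t)$ follows. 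As a consistency check one can also inspect the remaining subcase $s\ t = \apps{\afun}{u_1}{u_n}$ with $n > \arity(\afun)$: here $s = \apps{\afun}{u_1}{u_{n-1}}$ still respects $\arity$, so the first clause applies at the top, while $\ttag(s)$ internally invokes the fully-applied clause on $\apps{\afun}{u_1}{u_{\arity(\afun)}}$; the two computations agree, confirming that no ambiguity arises.
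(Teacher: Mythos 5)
Your proof is correct and matches the paper's approach: the paper dismisses this lemma with ``trivial by case analysis on the form of $s$,'' and your argument is exactly that case analysis carried out explicitly, with the arity-counting observation (a decomposition $s\ t = \apps{\afun}{s_1}{s_\mia}$ with $\mia = \arity(\afun)$ would force $\afun$ to occur in $s$ with too few arguments) supplying the reason the fully-applied-symbol clause cannot clash with the application clause.
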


\begin{proof}
Trivial by case analysis on the form of $s$.
\end{proof}

\begin{lemma}\label{lem:tagredpair:tagmetavar}
Let $\gamma$ be a substitution.
If $\gamma(Z) \approxp \abs{x_1 \dots x_\mac}{s}$, then
$\gamma^\ttag(Z) \approxp \abs{x_1 \dots x_\mac}{\ttag(s)}$.
\end{lemma}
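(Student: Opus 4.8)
The statement relates the two-clause definition of $\approxp$ to the syntactic action of $\ttag$, and since $\gamma^{\ttag}(Z) = \ttag(\gamma(Z))$ by definition, the plan is to avoid any induction and simply unfold the hypothesis $\gamma(Z) \approxp \abs{x_1 \dots x_\mac}{s}$ into its two cases, pushing $\ttag$ through each. Throughout I would rely on the standing assumption that $\gamma$ maps meta-variables to arity-respecting (meta-)terms, which is exactly what makes $\ttag(\gamma(Z))$ well-defined, together with the fact that $\ttag$ commutes with the top-level syntactic constructors.

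In the first case, $\gamma(Z) = \abs{x_1 \dots x_\mac}{s}$. Here I would apply the abstraction clause of $\ttag$ repeatedly, obtaining $\gamma^{\ttag}(Z) = \ttag(\abs{x_1 \dots x_\mac}{s}) = \abs{x_1 \dots x_\mac}{\ttag(s)}$, which gives $\gamma^{\ttag}(Z) \approxp \abs{x_1 \dots x_\mac}{\ttag(s)}$ via the first clause of $\approxp$ with nothing further to check.

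In the second case there is $\mia \leq i < \mac$ with $\gamma(Z) = \abs{x_1 \dots x_i}{t}$, $t$ not an abstraction, and $s = \apps{t}{x_{i+1}}{x_\mac}$. Pushing $\ttag$ through the abstractions yields $\gamma^{\ttag}(Z) = \abs{x_1 \dots x_i}{\ttag(t)}$, and I would note that $\ttag(t)$ is again not an abstraction, since the only clause of $\ttag$ that produces an abstraction is the one for abstractions. It then remains to compute $\ttag(s)$: using \refLemma{lem:tagredpair:tagapplies} to peel off the trailing applications, together with $\ttag(x_j) = x_j$ for the variable arguments, I get $\ttag(s) = \apps{\ttag(t)}{x_{i+1}}{x_\mac}$. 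With $\ttag(t)$ not an abstraction and the same index $i$ satisfying $\mia \leq i < \mac$, the second clause of $\approxp$ yields $\gamma^{\ttag}(Z) = \abs{x_1 \dots x_i}{\ttag(t)} \approxp \abs{x_1 \dots x_\mac}{\ttag(s)}$, as required.

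The only genuine obstacle is the legality of applying \refLemma{lem:tagredpair:tagapplies}, whose hypothesis requires each application prefix to respect $\arity$. I would discharge this by observing that $\gamma(Z)$, and hence its body $t$, respects $\arity$, and that appending the variable arguments $x_{i+1}, \dots, x_\mac$ cannot violate arity-respecting: a head symbol already applied to at least its minimal arity in $t$ stays so after further application. Hence each prefix $\apps{t}{x_{i+1}}{x_j}$ respects $\arity$ and the lemma applies at every step, completing the computation of $\ttag(s)$.
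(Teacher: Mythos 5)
Your proof is correct and takes essentially the same route as the paper's: a case split on the two clauses of $\approxp$, pushing $\ttag$ through the abstractions, and using \refLemma{lem:tagredpair:tagapplies} together with $\ttag(x_j) = x_j$ to compute $\ttag(\apps{t}{x_{i+1}}{x_\mac}) = \apps{\ttag(t)}{x_{i+1}}{x_\mac}$ in the non-trivial case. You are merely more explicit about two side conditions the paper leaves implicit — that $\ttag(t)$ is again not an abstraction, and that each application prefix respects $\arity$ so the lemma may be applied at every step.
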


\begin{proof}
Immediate if $\gamma(Z) = \abs{x_1 \dots x_\mac}{s}$; otherwise
write $\gamma(Z)\linebreak = \abs{x_1 \dots x_n}{s'}$ with $n < \mac$ and $s =
\apps{s'}{x_{n+1}}{x_\mac}$.  By
\refLemma{lem:tagredpair:tagapplies}, $\ttag(s) = \apps{\ttag(s')}{
\ttag(x_{n+1})}{\ttag(x_\mac)} = \apps{\ttag(s')}{x_{n+1}}{x_\mac}$.
\end{proof}

\begin{lemma}\label{lem:tagredpair:tagmeta}
Suppose that $(\P,\Rules,m,f)$ and $(\rge,\pge,\pgt)$ satisfy the
conditions in \refThm{thm:tagredpair}.
If $s$ is a meta-term and $\gamma$ a substitution with $\FMV(s)
\subseteq \domain(\gamma)$ that maps everything in its domain to
closed terms, then $\ttag(s)\gamma^\ttag \rge \ttag(s\gamma)$.
\end{lemma}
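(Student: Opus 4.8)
The plan is to prove the statement by structural induction on the meta-term $s$, using the asymmetry of the untagging requirement, condition~(4) of \refThm{thm:tagredpair}, namely $\apps{\afun^-}{X_1}{X_{\arity(\afun)}} \rge \apps{\afun}{X_1}{X_{\arity(\afun)}}$, to absorb the changes that substitution makes to the tagging. Throughout I would use that $\rge$ is a monotonic quasi-ordering (so that $a \rge b$ lets us replace $a$ by $b$ in an arbitrary context, one occurrence at a time, chaining by transitivity) and that $\rge$ is meta-stable (so condition~(4) may be instantiated at the terms we need).

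First I would isolate the genuinely load-bearing phenomenon as an auxiliary term-substitution lemma, proved by induction on a term $w$ (with the $x_i$ fresh for the substituted terms): $\ttag(w)[x_i := \ttag(u_i)] \rge \ttag(w[x_i := u_i])$. The only interesting case is $w = \apps{\afun}{w_1}{w_\mia}$ with $\mia = \arity(\afun)$. Here tagging is \emph{context-sensitive}: $\afun$ is replaced by $\afun^-$ exactly when the subterm has a free variable. Substitution can only \emph{remove} free variables from such a subterm (a subterm of $w$ with no free variable is untouched, since none of the $x_i$ occur in it, and by freshness the symbols inside each $u_i$ keep the same free-variable set in either context), so the only mismatch is that $\ttag(w)[x_i := \ttag(u_i)]$ may carry a tag $\afun^-$ where $\ttag(w[x_i := u_i])$ carries the plain $\afun$ -- precisely when substituting the variable-free-becoming arguments turns the subterm closed. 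Condition~(4), read from tagged to untagged, closes exactly this gap in the $\rge$ direction; \refLemma{lem:tagredpair:tagapplies} keeps $\ttag$ compatible with the application structure throughout.

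With that in hand, the main induction on $s$ is routine. The variable case is an equality; the abstraction and under-/over-/non-fully-applied application cases follow from the induction hypothesis together with monotonicity of $\rge$. The fully-applied function-symbol case $s = \apps{\afun}{s_1}{s_\mia}$ mirrors the auxiliary lemma: the induction hypothesis rewrites the arguments and, if $s\gamma$ happens to be closed while $s$ is not, one final application of condition~(4) untags the head. The crucial case is $s = \meta{Z}{s_1,\dots,s_\mac}$: writing $\gamma(Z) \approxp \abs{x_1 \dots x_\mac}{w}$, \refLemma{lem:tagredpair:tagmetavar} gives $\gamma^\ttag(Z) \approxp \abs{x_1 \dots x_\mac}{\ttag(w)}$, so that $\ttag(s)\gamma^\ttag = \ttag(w)[x_i := \ttag(s_i)\gamma^\ttag]$ while $\ttag(s\gamma) = \ttag(w[x_i := s_i\gamma])$. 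I would then chain $\rge$-steps: the induction hypothesis on each $s_i$ (with monotonicity) to pass from $\ttag(s_i)\gamma^\ttag$ to $\ttag(s_i\gamma)$ inside $\ttag(w)$, followed by the auxiliary term-substitution lemma applied to $w$ and the arguments $s_i\gamma$.

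I expect the main obstacle to be the bookkeeping around the context-sensitivity of $\ttag$: whether a given function symbol is marked depends on the free variables of the \emph{enclosing} subterm, and this can change under substitution. The key insight that makes everything go through is that substitution can only ever \emph{shrink} the set of marked symbols (free variables are only removed, never introduced, at positions that were already marked), so a single directed inequality -- the untag rule of condition~(4) -- suffices, and it points the right way for $\rge$. Establishing rigorously that no \emph{new} tags are ever required (in particular that the symbols inside the substituted arguments are tagged identically in isolation and in context, which relies on the standard freshness convention for bound variables) is the one place where care is needed.
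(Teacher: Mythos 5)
Your proof is correct, and it rests on exactly the same ingredients as the paper's: \refLemma{lem:tagredpair:tagapplies}, \refLemma{lem:tagredpair:tagmetavar}, condition (4) of \refThm{thm:tagredpair} applied via meta-stability and monotonicity to strip tags that substitution has made superfluous, and the directional observation that substitution with closed meta-variable images can only remove free variables from subterms (hence only remove tags, never demand new ones). The difference is in how the induction is organised. You factor out a standalone term-substitution lemma, $\ttag(w)[\vec{x}:=\ttag(\vec{u})] \rge \ttag(w[\vec{x}:=\vec{u}])$ for meta-variable-free $w$, and then run a plain structural induction on $s$. The paper instead runs a single induction, lexicographically on (i) the number of meta-variables in $s$ and (ii) its form; in the case $s = \meta{Z}{s_1,\dots,s_\mac}$ it writes $\gamma(Z) \approxp \abs{x_1 \dots x_\mac}{w}$ and invokes part (i) of the induction on the meta-variable-free body $w$ with the substitution $[x_i := s_i\gamma]$ --- a recursive call that plays precisely the role of your auxiliary lemma. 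Your decomposition buys one genuine (if small) improvement in rigour: the paper's recursive call instantiates the lemma with a substitution whose range (the terms $s_i\gamma$, which may contain free variables) need not consist of closed terms, so the lemma's own hypothesis is formally not met. This is harmless, because closedness of the range is only needed to prevent meta-variable instantiation from introducing free variables --- and hence fresh tags, which $\rge$ could not absorb since only $\afun^- \rge \afun$ is available --- into previously untagged subterms, something that cannot happen when the term being substituted into has no meta-variables; but your auxiliary lemma, stated for arbitrary substituted terms under a freshness convention, makes this point explicit where the paper leaves it implicit.
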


Here, and in subsequent lemmas, $\gamma^\ttag =
[\ttag(\gamma(\avarormeta)) \mid \avarormeta \in \domain(\gamma)]$.

\begin{proof}
By induction first on the number of meta-variables occurring in $s$,
second on its form.
\begin{itemize}
\item
  If $s = s_1\ s_2$ and $s_1$ respects $\arity$, then
  by \refLemma{lem:tagredpair:tagapplies}
  $\ttag(s)\gamma^\ttag\linebreak = (\ttag(s_1)\gamma^\ttag)\ 
  (\ttag(s_2)
  \gamma^\ttag)$, which by
  the second part of the induction hypothesis $\rge
  \ttag(s_1\gamma)\ \ttag(s_2\gamma) = \ttag(s\gamma)$.
\item If $s = \abs{x}{s'}$, we easily complete by the second part of
  the induction hypothesis as well.
\item If $s = x \in \V$, then $\ttag(s)\gamma^\ttag = x\gamma^\ttag
  = \ttag(\gamma(x))$ (whether or not $x \in \domain(\gamma)$),
  $= \ttag(s\gamma)$.
\item If $s = \meta{Z}{s_1,\dots,s_\mac}$, then because $\FMV(s)
  \subseteq \domain(\gamma)$ we may denote $\gamma(Z) \approxp
  \abs{x_1 \dots x_\mac}{u}$ and by
  \refLemma{lem:tagredpair:tagmetavar}
  $\gamma^\ttag(Z) \approxp \abs{x_1 \dots x_\mac}{\ttag(u)}$.
  We have $\ttag(s)\gamma^\ttag = \meta{Z}{\ttag(s_1),\dots,\linebreak
  \ttag(
  s_\mac)}\gamma^\ttag = \ttag(u)[x_1:=\ttag(s_1)\gamma^\ttag,\dots,
  x_\mac:=\linebreak \ttag(s_\mac)
  \gamma^\ttag]$, which by the second part of the induction
  hypothesis $\rge \ttag(u)[x_1:=\ttag(s_1\gamma),\dots,x_\mac
  :=\ttag(s_\mac\gamma)]$.  By the first part of the IH,
  this $\rge \ttag(u[x_1:=s_1\gamma,\dots,x_\mac:=s_\mac\gamma]) =
  \ttag(s\gamma)$.
\item
  If $s = \apps{\afun}{s_1}{s_\mia}$ and $\FV(s) \neq \emptyset$, then
  we have
  $\ttag(s)\gamma^\ttag =
  (\apps{\afun^-}{\ttag(s_1)}{\ttag(s_\mia)})\gamma^\ttag \rge
  \apps{\afun^-}{\ttag(s_1\gamma)}{\ttag(s_\mia\gamma)}$ by the
  induction hypothesis.
  If $\FV(s\gamma) \neq \emptyset$ or $\afun \notin \funs(\P,
  \Rules)$, this is exactly $\ttag(s\gamma)$. Otherwise, it still
  $\rge \ttag(s\gamma)$ because $\apps{\afun^-}{X_1}{X_\mia} \rge
  \apps{\afun}{X_1}{X_\mia}$ by assumption and $\rge$ is meta-stable.
\item
  If $s = \apps{\afun}{s_1}{s_\mia}$ and $\FV(s) = \emptyset$, then
  the same holds for $\FV(s\gamma)$ because $\gamma$ maps only to
  closed terms.  We easily complete by the induction hypothesis.
\qedhere
\end{itemize}
\end{proof}

\begin{lemma}\label{lem:tagredpair:tagpattern}
Suppose that $(\P,\Rules,m,f)$ and $(\rge,\pge,\pgt)$ satisfy the
conditions in \refThm{thm:tagredpair}.
If $\ell$ is a closed fully extended pattern such that meta-variables
only occur in a context $\abs{x_1 \dots x_\mia}{\meta{Z}{x_1,\dots,
x_\mia}}$, and $\gamma$ is a substitution mapping to closed terms
with $\FMV(\ell) \subseteq \domain(\gamma)$, then $\ttag(\ell\gamma)
\rge \ell\gamma^{\ttag}$.
\end{lemma}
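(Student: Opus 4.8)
The plan is to prove a mild generalisation of the statement by structural induction on $\ell$: I would drop the closedness assumption on $\ell$ and instead prove, for \emph{every} abstraction-simple pattern $\ell$ and every substitution $\gamma$ with $\FMV(\ell) \subseteq \domain(\gamma)$ mapping its domain to \emph{closed} terms, that $\ttag(\ell\gamma) \rge \ell\gamma^{\ttag}$, reading $\ttag$ via its recursive characterisation so that the induction can descend through abstractions (where the body is no longer closed). Two preliminary observations make the bookkeeping tractable. First, since $\ell$ is arity-respecting and each $\gamma(\avarormeta)$ is closed, $\ell\gamma$ is arity-respecting and $\ttag(\ell\gamma)$ is well-defined. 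Second, and crucially, because the abstraction-simple shape forces every meta-variable of $\ell$ to occur only inside the self-contained block $\abs{x_1 \dots x_\mia}{\meta{Z}{x_1,\dots,x_\mia}}$ and because $\gamma(Z)$ is closed, substitution leaves the ordinary free-variable set of every subexpression unchanged ($\FV(\ell\gamma) = \FV(\ell)$): the $x_j$ remain bound by their abstraction and no variable of $\ell$ can be captured by $\gamma(Z)$. Hence whether $\ttag$ marks a given skeleton occurrence $\apps{\afun}{\ell_1}{\ell_{\arity(\afun)}}$ is a property of $\ell$ alone, independent of $\gamma$.

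With these in place, the routine cases follow quickly. If $\ell = \abs{x}{\ell'}$ with $\ell'$ not a bare meta-variable block, then $\ttag(\ell\gamma) = \abs{x}{\ttag(\ell'\gamma)}$, and the induction hypothesis together with monotonicity of $\rge$ under $\lambda$ gives $\abs{x}{\ttag(\ell'\gamma)} \rge \abs{x}{(\ell'\gamma^{\ttag})} = \ell\gamma^{\ttag}$. If $\ell = \apps{y}{\ell_1}{\ell_n}$ with $y \in \V$, the head is never tagged and each argument is handled by the induction hypothesis and monotonicity. The meta-variable block $\ell = \abs{x_1 \dots x_\mia}{\meta{Z}{x_1,\dots,x_\mia}}$ (including the arity-zero case $\ell = Z$) yields an exact equality: writing $\gamma(Z) = \abs{x_1 \dots x_\mia}{u}$, we have $\ttag(\ell\gamma) = \abs{x_1 \dots x_\mia}{\ttag(u)}$, while \refLemma{lem:tagredpair:tagmetavar} gives $\gamma^{\ttag}(Z) \approxp \abs{x_1 \dots x_\mia}{\ttag(u)}$ and hence $\ell\gamma^{\ttag} = \abs{x_1 \dots x_\mia}{\ttag(u)}$ as well. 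This equality is exactly where closedness of $\gamma(Z)$ is used: the in-context tagging of the substituted block inside $\ell\gamma$ coincides with the isolated tagging $\ttag(\gamma(Z))$ precisely because only the locally bound $x_j$ can be free in $u$.

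The one case producing a genuine $\rge$ rather than equality is $\ell = \apps{\afun}{\ell_1}{\ell_n}$ with $\afun \in \F$. Here $\ell\gamma^{\ttag}$ keeps the skeleton symbol $\afun$ untagged, whereas $\ttag(\ell\gamma)$ may replace the head by $\afun^-$, namely when $\FV(\apps{\afun}{\ell_1}{\ell_{\arity(\afun)}}) \neq \emptyset$, i.e.\ when this occurrence sits under an abstraction capturing one of its argument variables. I would close this gap using ordering requirement~(4) of \refThm{thm:tagredpair}, $\apps{\afun^-}{X_1}{X_{\arity(\afun)}} \rge \apps{\afun}{X_1}{X_{\arity(\afun)}}$: by meta-stability it instantiates to $\apps{\afun^-}{\ttag(\ell_1\gamma)}{\ttag(\ell_n\gamma)} \rge \apps{\afun}{\ttag(\ell_1\gamma)}{\ttag(\ell_n\gamma)}$ (any arguments beyond $\arity(\afun)$ are absorbed by monotonicity), after which the induction hypothesis on each $\ell_i$ and monotonicity rewrite every $\ttag(\ell_i\gamma)$ to $\ell_i\gamma^{\ttag}$, giving $\ell\gamma^{\ttag}$ by transitivity; when $\afun$ is not tagged the same chain works without the first step. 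The main obstacle is not any single computation but the free-variable bookkeeping underpinning the whole induction — establishing that the tag of each skeleton symbol is a $\gamma$-independent property of $\ell$, and that the tags inside each $\gamma(Z)$ are computed identically in isolation and in context — which is what licenses treating $\ttag$ and substitution as commuting up to the corrective steps supplied by requirement~(4).
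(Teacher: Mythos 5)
Your proof is correct, but it follows a genuinely different route from the paper's. The paper keeps closedness of $\ell$ as the induction invariant and exploits it twice: in the case $\ell = \apps{\afun}{\ell_1}{\ell_n}$, closedness of $\ell\gamma$ means the head is \emph{never} tagged, so that case needs only the induction hypothesis and monotonicity; and in the remaining abstraction case $\ell = \abs{x}{\ell'}$ (not a meta-variable block), closedness together with fully-extendedness and the block condition forces $\ell'$ to contain \emph{no} meta-variables at all, so $\ell\gamma = \ell$ and the whole case collapses to showing $\ttag(\ell) \rge \ell$ by stripping every tag at once via requirement~(4) of \refThm{thm:tagredpair}. You instead strengthen the statement to open patterns, justify that tagging decisions are $\gamma$-independent (meta-variable blocks are closed subexpressions replaced by closed terms, so $\FV$ of every skeleton subexpression is unchanged), and then invoke requirement~(4) locally at whichever function heads happen to be tagged. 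Your version buys generality and uniformity: the statement you prove needs neither closedness nor fully-extendedness of $\ell$, and your induction never needs the structural observation that bodies of non-block abstractions are meta-variable-free, which is precisely what the paper's third case depends on; conversely, the paper's invariant makes two of its three cases entirely tag-free. The only point you should make explicit in the generalised statement is that $\domain(\gamma)$ is disjoint from $\FV(\ell)$ (e.g.\ $\domain(\gamma) \subseteq \M$), since this is what licenses $\FV(\ell\gamma) = \FV(\ell)$ and the claim that $\ell\gamma^{\ttag}$ leaves free variables alone; it holds vacuously in the closed instance and is preserved when descending under binders by $\alpha$-conversion, but your write-up only assumes it implicitly. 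Both proofs ultimately rest on the same ingredients: the exact equality in the block case (your use of \refLemma{lem:tagredpair:tagmetavar}), requirement~(4) combined with meta-stability, and monotonicity plus transitivity of $\rge$.
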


\begin{proof}
By induction on the form of $\ell$.
\begin{itemize}
\item
  If $\ell = \apps{\afun}{\ell_1}{\ell_n}$, then because both $\ell$
  and all $\gamma(Z)$ are closed, we have
  $\ttag(\ell\gamma) = \apps{\afun}{
  \ttag(\ell_1\gamma)}{\ttag(\ell_n\gamma)} \rge
  \apps{\afun}{(\ell_1\gamma^\ttag)}{\linebreak
  (\ell_n\gamma^\ttag)}$ (by the
  induction hypothesis), $= (\apps{\afun}{\ell_1}{\ell_n})\gamma^{
  \ttag} = \ell\gamma^\ttag$.
\item
  If $\ell = \abs{x_1 \dots x_\mia}{\meta{Z}{x_1,\dots,x_\mia}}$,
  then 
  $\ttag(\ell\gamma) = \ttag(\gamma(Z)) = \gamma^\ttag(Z) = \ell
  \gamma^\ttag$.
\item
  Otherwise, because $\ell$ is closed, it can have only the form
  $\abs{x}{\ell'}$ where $\ell'$ does not contain meta-variables (as
  all meta-variable occurrences must be fully extended but occur with
  all their abstracted variables directly above them). Thus,\linebreak
  $\ttag(\ell\gamma) = \ttag(\ell)$, which $\rge \ell$ just by
  replacing every $\afun^-$ by $\afun$ step by step ($\rge$ is
  monotonic and transitive).
\qedhere
\end{itemize}
\end{proof}

\begin{lemma}\label{lem:tagredpair:beta}
Suppose that $(\P,\Rules,m,f)$ and $(\rge,\pge,\pgt)$ satisfy the
conditions in \refThm{thm:tagredpair}.
If $s \arr{\beta} t$, then $\ttag(s) \rge \ttag(t)$.
\end{lemma}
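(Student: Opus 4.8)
The plan is to prove \refLemma{lem:tagredpair:beta} by induction on the position of the contracted redex in $s$, reducing the base case to an auxiliary substitution property of $\ttag$. First I would establish the helper claim: for all terms $u,v$ and variables $x$ we have $\ttag(u)[x:=\ttag(v)] \rge \ttag(u[x:=v])$. This is the single-variable analogue of \refLemma{lem:tagredpair:tagmeta}, but with $v$ not necessarily closed, so it cannot be cited directly; the same induction works, however, because a closed subterm is never affected by the substitution $[x:=v]$. I would prove it by structural induction on $u$: the variable, abstraction (after $\alpha$-renaming the binder away from $x$ and $\FV(v)$), and application-with-non-symbol-head cases follow immediately from the induction hypothesis together with monotonicity and transitivity of $\rge$. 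The only interesting case is $u = \apps{\afun}{u_1}{u_\mia}$ with $\mia = \arity(\afun)$.

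In that case the tag attached to $\afun$ depends on whether $\FV(u) = \emptyset$. If $u$ is closed then so is $\ttag(u)$ (tagging only renames symbols), the substitution leaves it untouched, and both sides are literally equal. If $\FV(u) \neq \emptyset$ then $\ttag(u) = \apps{\afun^-}{\ttag(u_1)}{\ttag(u_\mia)}$, and the induction hypothesis plus monotonicity give $\ttag(u)[x:=\ttag(v)] \rge \apps{\afun^-}{\ttag(u_1[x:=v])}{\ttag(u_\mia[x:=v])}$. The right-hand side equals $\ttag(u[x:=v])$ when $\FV(u[x:=v]) \neq \emptyset$; otherwise $\ttag(u[x:=v])$ carries the untagged symbol $\afun$, and I would close the gap using requirement~(4) of \refThm{thm:tagredpair}, namely $\apps{\afun^-}{X_1}{X_{\arity(\afun)}} \rge \apps{\afun}{X_1}{X_{\arity(\afun)}}$, instantiated by meta-stability (for $\afun \notin \funs(\P,\Rules)$ this is reflexivity, since then $\afun^- = \afun$). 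The crucial point that makes this direction work is that $\beta$-reduction, and hence $[x:=v]$, can only remove free variables and never introduce them, so the tag can only change from $\afun^-$ to $\afun$, never the reverse.

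With the helper in hand I would treat the main statement by induction on the structure of $s$ around the contracted redex. In the base case $s = \app{(\abs{x}{u})}{v}$ is itself the redex; since $s$ respects $\arity$ so does $\abs{x}{u}$, whence $\ttag(s) = \app{(\abs{x}{\ttag(u)})}{\ttag(v)}$. Using $\arr{\beta} \mathop{\subseteq} \rge$ and the helper, $\ttag(s) \rge \ttag(u)[x:=\ttag(v)] \rge \ttag(u[x:=v]) = \ttag(t)$. In the inductive cases the redex lies strictly inside $s$: for $s = \abs{z}{s'}$, for $s = \app{s_1}{s_2}$ with $s_1$ respecting $\arity$, and for $s = \apps{\afun}{s_1}{s_\mia}$, I would apply the induction hypothesis to the reduced subterm and conclude by monotonicity of $\rge$, using that $\beta$-reduction preserves arity-respecting terms so that the applicable $\ttag$ clause does not change. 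The one point needing care is again the free-variable bookkeeping: reducing inside some $s_i$ may erase the free variables that caused $\apps{\afun}{s_1}{s_\mia}$ to be tagged, so the outer symbol may switch from $\afun^-$ on the left to $\afun$ on the right; this is absorbed by requirement~(4) exactly as in the helper. I expect this tag-direction accounting to be the only real obstacle; everything else is routine monotonicity and transitivity of the reduction triple.
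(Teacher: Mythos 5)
Your proof is correct, and its skeleton --- structural induction on $s$, with the contracted redex discharged by a substitution property of $\ttag$, and every tag switch absorbed by requirement~(4) of \refThm{thm:tagredpair} via meta-stability --- is the same as the paper's. The one genuine difference is how the substitution step is justified. The paper handles the head case $\apps{(\abs{x}{u})\ v}{w_1}{w_n}$ by citing \refLemma{lem:tagredpair:tagmeta} (with \refLemma{lem:tagredpair:tagapplies}) to get $\ttag(u)[x:=\ttag(v)] \rge \ttag(u[x:=v])$; but that lemma is stated only for substitutions mapping their domain to \emph{closed} terms, and in a $\beta$-redex the argument $v$ need not be closed (e.g.\ inside $\abs{y}{(\abs{x}{\afun\ x})\ y}$). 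You noticed exactly this and instead proved a bespoke single-variable helper valid for open $v$, correctly observing why the closedness hypothesis can be dropped in this special case: it is only needed when substitution can introduce new free variables below an untagged head, which cannot happen for $[x:=v]$ applied to a term, since an untagged (closed) subterm contains no free occurrence of $x$ and is left untouched. So your route is slightly more self-contained and in fact repairs a small imprecision in the paper's citation, at the cost of re-running an induction the paper prefers to reuse. Everything else coincides: $\arr{\beta} \mathop{\subseteq} \rge$ gives the rootmost step, monotonicity and transitivity propagate through contexts, and the only delicate bookkeeping is that $\beta$-reduction can only erase free variables, so a head symbol may switch from $\afun^-$ to $\afun$ but never the reverse --- precisely the direction that requirement~(4) supplies.
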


\begin{proof}
By induction on the form of $s$.  If $s = \apps{x}{s_1}{s_n}$ or $s =
\apps{(\abs{x}{s_0})}{s_1}{s_n}$ and the reduction takes place in one
of the $s_i$, we immediately complete by the induction hypothesis and
monotonicity of $\rge$; the same holds if $s = \apps{\afun}{s_1}{s_n}$
and $\FV(\apps{\afun}{s_1}{s_\mia}) = \emptyset$ for $\arity(\afun) =
\mia \leq n$ (since $\beta$-reduction cannot introduce a new free
variable).  If $s = \apps{\afun}{s_1}{s_n}$ and
$\FV(\apps{\afun}{s_1}{s_\mia}) \neq \emptyset$ and the reduction takes
place in $s_i$, then $\ttag(s) = \apps{\afun^-}{\ttag(s_1)}{\ttag(s_n)}
\rge \apps{\afun^-}{\ttag(s_1)}{\ttag(s_i')} \cdots \ttag(s_n)$ by
the induction hypothesis, and this either $= \ttag(t)$ or $\rge
\ttag(t)$ by a single tag removal at the head (if the reduction in
$s_i$ removed all free variables of the term).
Finally, if $s = \apps{(\abs{x}{u})\ v}{w_1}{w_n}$ and $t = \apps{
(u[x:=v])}{w_1}{w_n}$, then $\ttag(s) =
\apps{(\abs{x}{\ttag(u)})\ \ttag(v)}{\ttag(w_1)}{\ttag(w_n)} \rge
\apps{\ttag(u)[x:=\ttag(v)]}{\ttag(w_1)}{\ttag(w_n)}$ by the
inclusion of rootmost $\beta$-steps in $\rge$ and monotonicity;
by \refLemma{lem:tagredpair:tagmeta} and
\refLemma{lem:tagredpair:tagapplies}, this term $\rge
\apps{\ttag(u[x:=v])}{\ttag(w_1)}{\ttag(w_n)} = \ttag(t)$.
\end{proof}

\begin{lemma}\label{lem:tagredpair:formative}
Suppose that $(\P,\Rules,m,f)$ and $(\rge,\pge,\pgt)$ satisfy the
conditions in \refThm{thm:tagredpair}.
Let $\ell$ be a closed fully extended pattern 
whose
meta-variables
only occur in a context $\abs{x_1 \dots x_\mia}{\meta{Z}{x_1,\dots,
,x_\mia}}$, and $\gamma$ be a substitution mapping to closed terms
with $\FMV(\ell) \subseteq \domain(\gamma) \subseteq \M$.
If $s \arrr{\Rules} \ell\gamma$ by an $\ell$-formative reduction,
then $\ttag(s) \rge \ell\gamma^\ttag$.
\end{lemma}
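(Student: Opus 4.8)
The plan is to prove \refLemma{lem:tagredpair:formative} by induction on the definition of an $\ell$-formative reduction, following the case split in \refDef{def:formative}. The statement essentially says that the $\ttag$ operator commutes (up to $\rge$) with formative reductions, and the key insight is that a formative reduction only touches symbols that are needed to reach the pattern $\ell$ — so any symbol that is protected inside an abstraction (and hence would be tagged) is never reduced. First I would set up the induction, noting that the base case where $\ell$ is not a fully extended linear pattern cannot occur, since by hypothesis $\ell$ is closed, fully extended, and has meta-variables only in the restricted context, which forces linearity.

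The core of the argument proceeds through the individual formative-reduction cases. When $\ell$ is a meta-variable application $\meta{Z}{x_1,\dots,x_\mia}$ and $s = \ell\gamma$, I would conclude directly via \refLemma{lem:tagredpair:tagpattern}. For the structural cases — $s = \apps{a}{s_1}{s_n}$ with $\ell = \apps{a}{\ell_1}{\ell_n}$, or $s = \abs{x}{s'}$ with $\ell = \abs{x}{\ell'}$ — I would apply the induction hypothesis to each subterm and reassemble using monotonicity of $\rge$; here the abstraction-simplicity of $\ell$ guarantees that the subpatterns $\ell_i$ (respectively $\ell'$) still satisfy the hypotheses on meta-variable placement, so the induction applies. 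The head-$\beta$ case ($s = \apps{(\abs{x}{u})\ v}{w_1}{w_n}$ reducing to $\apps{u[x:=v]}{w_1}{w_n}$) is handled by combining \refLemma{lem:tagredpair:beta} — which gives $\ttag(s) \rge \ttag(\apps{u[x:=v]}{w_1}{w_n})$ — with the induction hypothesis on the reduct, using transitivity of $\rge$. The most delicate case is the last one, where $s \arrr{\Rules} (\apps{\ell'}{Z_1}{Z_n})\delta$ by an $\ell'$-formative reduction and $(\apps{r'}{Z_1}{Z_n})\delta \arrr{\Rules} \ell\gamma$ by an $\ell$-formative reduction for some rule $\ell' \arrz r' \in \Rules$: here I would apply the induction hypothesis to the first segment to get $\ttag(s) \rge (\apps{\ell'}{Z_1}{Z_n})\delta^\ttag$, then use the ordering requirement $\ell' \rge \ttag(r')$ from condition~(\ref{it:redpair:rules}) together with meta-stability to step from the (tagged) left-hand side to the tagged right-hand side, and finally apply the induction hypothesis to the second segment.

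The hard part will be gluing the two segments of the last case together correctly: I would need $(\apps{\ell'}{Z_1}{Z_n})\delta^\ttag \rge \ttag((\apps{r'}{Z_1}{Z_n})\delta)$ so that the induction hypothesis on the second formative reduction can be invoked. This requires careful bookkeeping — applying meta-stability of $\rge$ to the oriented rule $\ell' \rge \ttag(r')$ under the substitution $\delta^\ttag$, and then reconciling $\ttag(r')\delta^\ttag$ with $\ttag(r'\delta)$, for which \refLemma{lem:tagredpair:tagmeta} is the natural tool (possibly after extending to the fresh meta-variables $Z_1,\dots,Z_n$ introduced by $\eta$-saturation). The main obstacle is ensuring the tag invariants are preserved across the rule application, since the rule's right-hand side $r'$ may introduce symbols whose tagging status differs from that in $\ell'$; the condition $\apps{\afun^-}{X_1}{X_{\arity(\afun)}} \rge \apps{\afun}{X_1}{X_{\arity(\afun)}}$ in assumption~(4) of \refThm{thm:tagredpair} is precisely what allows us to discard an unwanted tag when needed, and threading this through the rule step is where the bulk of the verification lies.
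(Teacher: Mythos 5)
Your overall plan---induction on the definition of $\ell$-formative reductions, \refLemma{lem:tagredpair:beta} for the head-$\beta$ case, and the chain ``IH on the first segment, then $\ell' \rge \ttag(r')$ plus meta-stability plus \refLemma{lem:tagredpair:tagmeta}, then IH on the second segment'' for the rule case---is exactly the paper's proof, and your treatment of the rule case is correct. The genuine gap is in the abstraction case. When $\ell = \abs{x}{\ell'}$, you claim the subpattern $\ell'$ ``still satisfies the hypotheses'', but it does not: $x$ occurs free in $\ell'$, so $\ell'$ is no longer \emph{closed}, and your induction hypothesis, stated for closed patterns only, cannot be invoked. The paper resolves this by proving a generalised statement: $\ell$ need not be closed, but must satisfy ``if $\FMV(\ell) \neq \emptyset$ then $\FV(\ell) = \emptyset$''. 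Under that generalisation, descending under a binder is permissible only when $\ell'$ contains no meta-variables; if it does contain one, full-extendedness plus the restricted context shape force $\ell$ to be exactly $\abs{x_1 \dots x_\mia}{\meta{Z}{x_1,\dots,x_\mia}}$, and this case must be handled \emph{directly}, not by recursion. The paper does so by showing $s \arrr{\beta} \ell\gamma$---which needs the abstraction-simplicity conditions (meta-variable arity $\leq 1$, or all rules of base type when $\mia > 1$) to exclude head rule-steps under the binders---and then concluding with \refLemma{lem:tagredpair:beta} and \refLemma{lem:tagredpair:tagpattern}. Your plan instead treats the bare meta-variable application $\meta{Z}{x_1,\dots,x_\mia}$ as a base case discharged by \refLemma{lem:tagredpair:tagpattern}; but that lemma also requires a closed pattern, a bare meta-variable application with $\mia \geq 1$ is not closed, and this case only ever arises in the induction after descending under binders---precisely where your IH is unavailable.

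A second, smaller error: closedness, full-extendedness and ``meta-variables only in the context $\abs{x_1 \dots x_\mia}{\meta{Z}{x_1,\dots,x_\mia}}$'' do \emph{not} force linearity; for instance $\afun\ (\abs{x}{\meta{Z}{x}})\ (\abs{y}{\meta{Z}{y}})$ satisfies all three conditions but is non-linear. For a non-linear $\ell$, the first case of \refDef{def:formative} applies, so \emph{every} reduction $s \arrr{\Rules} \ell\gamma$ is $\ell$-formative---including rule steps below binders at tagged positions, for which the orientation $\ell' \rge \ttag(r')$ gives no handle---and the inductive argument cannot proceed. Linearity must therefore be carried as an (implicit) hypothesis, inherited from abstraction-simplicity of $(\P,\Rules)$ in \refThm{thm:tagredpair} when the lemma is applied to left-hand sides of rules and DPs; it is not a consequence of the other conditions on $\ell$.
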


\begin{proof}
By induction on the definition of an $\ell$-formative reduction.
In the induction, we do not limit interest to \emph{closed} $\ell$,
but rather consider $\ell$ such that: if $\FMV(\ell) \neq \emptyset$
then $\FV(\ell) = \emptyset$.
\begin{itemize}
\item If there are $\ell' \arrz r' \in \RulesEta$ and a substitution
  $\delta$ such that $s \arrr{\Rules} \ell'\delta$ by an
  $\ell'$-formative reduction and $r'\delta \arrr{\Rules} \ell\gamma$
  by an $\ell$-formative reduction, then the induction hypothesis
  gives that $\ttag(s) \rge \ell'\delta^\ttag$ and $\ttag(r'\delta)
  \rge \ell\gamma^\ttag$.  Moreover, since $\ell' \rge \ttag(r')$
  and $\rge$ is meta-stable, we have $\ell'\delta^\ttag \rge
  \ttag(r')\delta^\ttag \rge \ttag(r'\delta)$ by
  \refLemma{lem:tagredpair:tagmeta}. We complete by transitivity of $\rge$.
\item If $s = \apps{(\abs{x}{u})\ v}{w_1}{w_n}$ and $\apps{u[x:=v]}{
  w_1}{w_n} \arrr{\Rules} \ell\gamma$ by an $\ell$-formative
  reduction, then
  $\ttag(s) \rge
  \ttag(\apps{u[x:=v]}{w_1}{w_n})$ by \refLemma{lem:tagredpair:beta}.
  We complete by the induction hypothesis and transitivity of $\rge$.
\item If $\ell = \apps{\afun}{\ell_1}{\ell_n}$ and $s = \apps{\afun}{
  s_1}{s_n}$ with each $s_i \arrr{\Rules} \ell_i\gamma$, we have
  $\ttag(s_i) \rge \ell_i\gamma^\ttag$ by the induction
  hypothesis.  Whether or not $\FV(s) = \emptyset$, we have
  $\ttag(s) \rge \apps{\afun}{\ttag(s_1)}{\linebreak\ttag(s_n)}$ (either
  because $\rge$ is reflexive or by using an untagging step),
  which $\rge \ell\gamma^\ttag$ by monotonicity.
\item If $\ell = \abs{x_1 \dots x_\mia}{\meta{Z}{x_1,\dots,x_\mia}}$
  and $s = \abs{x_1}{s'}$, we observe that in fact $s \arrr{\beta}
  \ell\gamma$:
  \begin{itemize}
  \item if $\mia = 0$ then $s = \ell\gamma$ because the reduction is
    formative;
  \item if $\mia = 1$ then (because we have already considered a
    reduction using a headmost step) $s = \abs{x_1}{s'}$ with
    $s' \arrr{\Rules} \meta{Z}{x_1}\gamma$ by a $\meta{Z}{x_1
    }$-formative reduction; that is, $s' =\linebreak \meta{Z}{x_1}\gamma$, so
    $s = \ell\gamma$;
  \item if $\mia > 1$ then by definition of abstraction-simplicity
    all rules have base type.  Thus, $s \arrr{\beta} \abs{x_1 \dots
    x_\mia}{s''}$ with $s'' = \meta{Z}{x_1,\dots,x_\mia}\gamma$, so
    $\abs{x_1 \dots x_\mia}{s''} = \ell\gamma$.
  \end{itemize}
  Thus, $\ttag(s) \rge \ttag(\ell\gamma)$ by
  \refLemma{lem:tagredpair:beta} and $\ttag(\ell\gamma) \rge \ell
  \gamma^\ttag$ by \refLemma{lem:tagredpair:tagpattern}.
\item If $\FMV(\ell) \neq \emptyset$ and therefore $\FV(\ell) =
  \emptyset$, there are no other options: $\ell$ cannot be
  $\apps{x}{\ell_1}{\ell_n}$, and if $\ell = \abs{x}{\ell'}$, then
  any meta-variable occurrence in $\ell'$ must include $x$ (as
  $\ell$ is fully extended); given the required form of meta-variable
  occurrences in $\ell$, this only leaves $\ell' = \abs{x_2 \dots
  x_\mia}{\meta{Z}{x,x_2,\dots,x_\mia}}$.
\item If $\FMV(\ell) = \emptyset$ (so $\ell\gamma = \ell$), we must
  consider two more cases.
  First, if $s = \apps{x}{s_1}{s_n}$ and $\ell =
  \apps{x}{\ell_1}{\ell_n}$ with each $s_i \arrr{\Rules}
  \ell_i\gamma$, we immediately conclude by the induction hypothesis:
  $\ttag(s) = \apps{x}{\ttag(s_1)}{\ttag(s_n)} \rge
  \apps{x}{\ell_1}{\ell_n} = \ell$.
  Second, if $s = \abs{x}{s'}$ and $\ell = \abs{x}{\ell'}$ and
  $s' \arrr{\Rules} \ell'$ then $\ttag(s) = \abs{x}{\ttag(s')}
  \rge \abs{x}{\ell'} = \ell$ by the induction hypothesis and
  monotonicity of $\rge$.
\qedhere
\end{itemize}
\end{proof}

\begin{lemma}\label{lem:tagredpair:subterm}
Suppose that $(\P,\Rules,m,f)$ and $(\rge,\pge,\pgt)$ satisfy the
conditions in \refThm{thm:tagredpair}, and $\P$ contains a
collapsing DP.
Let $s \suptermeq t$ and $\emptyset \neq \FV(s) \subseteq \FV(t)$;
that is, $s$ is closed save for the variables that occur freely in
$t$, and there is at least one such variable.
Let $\gamma$ be a substitution on domain $\FV(s)$, mapping to closed
terms.
Let $u_1,\dots,u_\maa$ be closed terms such that $\apps{s}{u_1}{
u_\maa}$ is a well-typed term of base type.
Then there exist closed terms $v_1,\dots,v_n$ and a substitution
$\delta$ on domain $\FV(t) \setminus \FV(s)$ mapping to closed terms
such that $\apps{\ttag(s)\gamma^\ttag}{\ttag(u_1)}{\ttag(u_\maa)}\ 
(\pge \cup \rge)^*\ \ttag(\apps{((t\delta)^\sharp\gamma)}{v_1}{v_n})$,
with the latter term having base type as well.
\end{lemma}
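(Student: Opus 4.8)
The plan is to prove \refLemma{lem:tagredpair:subterm} by induction on the derivation of $s \suptermeq t$, mirroring the structure of the observation (**) used in the proof of \refThm{thm:basetriple}, but now carrying the $\ttag$ machinery and the substitution $\gamma$ through each case. The key point that makes the statement go through is that $s$ is closed except for the free variables it shares with $t$: since $\FV(s) \subseteq \FV(t)$ and there is at least one such variable, every time we descend into a subterm, the head symbols along the path that sit strictly above an occurrence of one of these free variables will be \emph{tagged}, so the collapsing-case inequalities in condition~(5) of \refThm{thm:tagredpair} (namely $\apps{a}{X_1}{X_\maa} \pge \apps{X_i}{\bot_{\btype_1}}{\bot_{\btype_n}}$ for $a \in \M \cup \funs^-(\P,\Rules)$) apply to exactly the symbols we need to pass through.

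First I would handle the base case $s = t$: here we simply take $\delta$ empty and $v_i := \ttag(u_i)$, using that $\apps{\ttag(s)\gamma^\ttag}{\ttag(u_1)}{\ttag(u_\maa)}$ already equals $\ttag(\apps{(s^\sharp\gamma)}{v_1}{v_n})$ up to the marking step on the head, which is covered by the second clause of condition~(5), $\apps{\afun^-}{X_1}{X_{\arity(\afun)}} \pge \apps{\afun^\sharp}{X_1}{X_{\arity(\afun)}}$. For the abstraction case $s = \abs{x}{q}$ with $q \suptermeq t$, I would apply $\rge$ using the inclusion $\arr{\beta} \subseteq \rge$ to fire the leading argument $u_1$ into the body (note $\maa > 0$ since $\apps{s}{\vec{u}}$ has base type), obtaining $\apps{q[x:=u_1]}{u_2}{u_\maa}$, and then invoke the induction hypothesis on $q$. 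The substitution case $s = \apps{a}{q_1}{q_n}$ with $a$ a variable, abstraction, or function symbol and $q_i \suptermeq t$ is the heart of the argument: because $s$ contains a free variable (shared with $t$), the head $a$ of $\ttag(s)\gamma^\ttag$ is either a meta-variable image, a bound variable, or a tagged symbol $\afun^-$, so $a \in \M \cup \funs^-(\P,\Rules)$ and the collapsing inequality of condition~(5) lets us step with $\pge$ down to $\apps{(\ttag(q_i)\gamma^\ttag)}{\bot_{\btype_1}}{\bot_{\btype_l}}$; we then apply the induction hypothesis to $q_i$, replacing the $\bot$-arguments by suitable closed terms.

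The main obstacle I anticipate is the careful bookkeeping of the free variables and of the distinction between $\FV(s)$ and $\FV(t)$ as we descend. When we move from $s$ into a subterm $q_i$, some free variables of $t$ may newly appear, and the ones promised in the statement (those in $\FV(t) \setminus \FV(s)$) must be collected into $\delta$ incrementally and mapped to closed terms; I would thread $\delta$ through the induction, extending it as needed while ensuring all images remain closed and well-typed. A second delicate point is verifying that the head of $s$ is genuinely tagged in every substitution-case branch: this requires that the path from the root of $s$ to the shared free variable passes only through symbols applied to their full arity with a free variable beneath them, which is precisely what guarantees $\ttag$ marked them as $\afun^-$. I would make this explicit by observing that any $\afun$ occurring on this path satisfies $\FV(\apps{\afun}{s_1}{s_\mia}) \neq \emptyset$ (since the shared variable lies below it), so $\ttag$ replaced it by $\afun^-$, placing it in $\funs^-(\P,\Rules)$ and thereby enabling condition~(5). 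Finally, I would assemble the $(\pge \cup \rge)^*$ chain by composing the single $\pge$-step at each level with the reduction sequence obtained from the induction hypothesis, using transitivity and the compatibility laws $\rge \circ \pgt \subseteq \pgt$, $\pge \circ \pgt \subseteq \pgt$ of \refDef{def:redpair} to keep the composite within the stated relation.
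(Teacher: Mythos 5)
Your overall strategy is indeed the paper's own: induction on the derivation of $s \suptermeq t$, with the hypothesis $\emptyset \neq \FV(s) \subseteq \FV(t)$ guaranteeing taggedness along the path to $t$, condition~(5) of \refThm{thm:tagredpair} supplying the projection steps, a $\beta$-step (via $\arr{\beta} \subseteq \rge$) for abstractions, and $\apps{\afun^-}{X_1}{X_{\arity(\afun)}} \pge \apps{\afun^\sharp}{X_1}{X_{\arity(\afun)}}$ for the final marking. However, your case analysis is not exhaustive, and the missing derivations are exactly the ones that need an idea you never introduce. Since $\suptermeq$ is generated by \emph{binary} application decomposition, a subterm $t$ of $s = \apps{a}{q_1}{q_n}$ need not be $s$ itself, a subterm of some $q_i$, or (for a top-level abstraction) a subterm of the body: it can also be a proper prefix $\apps{a}{q_1}{q_j}$ with $j < n$ (for instance $t = \apps{\afun}{q_1}{q_\mia}$ with $\mia = \arity(\afun) < n$, whose target $\ttag(\apps{((t\delta)^\sharp\gamma)}{v_1}{v_n})$ carries the marked head $\afun^\sharp$), or a subterm of the head when the head is an abstraction, i.e.\ $s = \apps{(\abs{x}{w})}{q_1}{q_n}$ with $t$ a subterm of $\abs{x}{w}$. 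Your three cases give these derivations no inductive home. The paper handles them by splitting applications as $s = s_1\ s_2$ (with $s_1$ respecting $\arity$) and, when $t$ lies in $s_1$, \emph{absorbing} the closed term $u_0 := s_2\gamma$ into the argument vector (using \refLemma{lem:tagredpair:tagapplies} and \refLemma{lem:tagredpair:tagmeta} to get $\apps{\ttag(s)\gamma^\ttag}{\ttag(u_1)}{\ttag(u_\maa)} \rge \apps{\ttag(s_1)\gamma^\ttag}{\ttag(u_0)}{\ttag(u_\maa)}$) before recursing; this is precisely why the lemma is stated with an arbitrary vector $u_1,\dots,u_\maa$ and existentially quantified $v_1,\dots,v_n$. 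The trap comes from mirroring observation~(**) in the proof of \refThm{thm:basetriple}: there the relation $\sim$ silently tolerates arbitrary trailing arguments, so head-position subterms cost nothing, but here the argument vector must be managed explicitly, and without the absorption move your induction simply cannot reach those subterms.

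Two secondary points. In the abstraction case you must apply the induction hypothesis to the pair $q[x:=u_1] \suptermeq t[x:=u_1]$, not ``to $q$'': if $x \in \FV(q) \setminus \FV(t)$, the pair $(q,t)$ violates the side condition $\FV(q) \subseteq \FV(t)$; substituting first removes $x$ and, as in the paper, $x$ is then accounted for by taking $[x:=u_1]\delta$ as the final substitution. And in your application case, the prefix $\apps{\afun}{q_1}{q_{\arity(\afun)}}$ is guaranteed to be tagged only when $t$ lies inside one of the first $\arity(\afun)$ arguments (then a shared free variable sits below it); when $t$ lies in an argument beyond the arity, that prefix may be closed and untagged, and one must instead instantiate the meta-variable clause of condition~(5) with the whole prefix as head -- a distinction your blanket ``the head is tagged'' claim glosses over, although the meta-variable clause you also invoke does cover it.
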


\begin{proof}
By induction on the derivation of $s \suptermeq t$.
\begin{itemize}
\item If $s = t$ then
  by \refLemma{lem:tagredpair:tagmeta} and monotonicity,
  $\apps{\ttag(s)\gamma^\ttag\linebreak
  }{\ttag(u_1)}{\ttag(u_\maa)} \rge
  \apps{\ttag(s\gamma)}{\ttag(u_1)}{\ttag(u_\maa)}$, which
  by \refLemma{lem:tagredpair:tagapplies}
  $= \ttag(\apps{(s\gamma)}{u_1}{u_\maa})$.
  This suffices if $s$ does not
  have the form $\apps{\afun}{s_1}{s_\mia}$ with $\mia =
  \arity(\afun)$ (taking $\delta := []$ and $v_1:=u_1,\dots,
  v_\maa:=u_\maa$), as then $s^\sharp = s$.

  If, however, $s$ does have this form, then note that
  \linebreak
  $\ttag(s)\gamma^\ttag = \apps{\afun^-}{(\ttag(s_1)\gamma^\ttag)}{
  (\ttag(s_\mia)\gamma^\ttag)}$ because $\FV(s) \neq \emptyset$ by
  assumption.  Therefore, by \refLemma{lem:tagredpair:tagmeta},
  \linebreak
  $\ttag(s)\gamma^\ttag\ \ttag(\vec{u}) \rge
  \apps{\apps{\afun^-}{\ttag(s_1\gamma)}{\ttag(s_\mia\gamma)}}{\ttag(
  u_1)}{\linebreak\ttag(u_\maa)}$.
  Since always  $\apps{\afun^-}{X_1}{X_{\mia+\maa}} \pge
  \apps{\afun^\sharp}{X_1}{X_{\mia+\maa}}$ and by meta-stability,
  this term $\pge \apps{\apps{\afun^\sharp}{\ttag(s_1\gamma)}{
  \ttag(s_\mia\gamma)}}{\linebreak\ttag(u_1)}{\ttag(u_\maa)}$, which by
  \refLemma{lem:tagredpair:tagapplies},
  $= \ttag(\apps{\apps{\afun^\sharp}{(s_1\gamma)\linebreak}{(s_\mia\gamma)}}{
  u_1}{u_\maa}) = \ttag(\apps{(s^\sharp\gamma)}{u_1}{u_\maa})$.
\item If $s = s_1\ s_2$ with $s_1$ respecting $\arity$, then
  $\apps{\ttag(s)\gamma^\ttag}{\ttag(u_1)\linebreak
  }{\ttag(u_\maa)} =
  \apps{\ttag(s_1)\gamma^\ttag\ \ttag(s_2)\gamma^\ttag}{
  \ttag(u_1)}{\ttag(u_\maa)}$ by \refLemma{lem:tagredpair:tagapplies}.
  If $s_1 \suptermeq t$, then by
  \refLemma{lem:tagredpair:tagmeta} we observe that this term
  $\rge \apps{\ttag(s_1)\gamma^\ttag}{\ttag(u_0)}{
  \ttag(u_\maa)}$ with $u_0 = s_2\gamma$, and we complete with the induction
  hypothesis.
  If $s_2 \suptermeq t$, then note that $\apps{Z}{X_0}{X_1}{X_\maa}
  \pge X_0\ \vec{\bot}$, so by meta-stability of $\pge$ we have
  $\apps{\ttag(s_1)\gamma^\ttag\ \ttag(s_2)\gamma^\ttag}{
  \ttag(u_1)}{\ttag(u_\maa)}\linebreak \pge \ttag(s_2)\gamma^\ttag\ 
  \vec{\bot}$, and since $\ttag(\bot_\atype) = \bot_\atype$, we
  again complete by the induction hypothesis.
\item If $s = \abs{x}{s'}$ and $s' \suptermeq t$ then we can safely
  assume $x$ does not occur in $\domain(\gamma)$; we also know that
  $\maa > 0$ because $s\ \vec{u}$ has base type.
  Thus $\apps{\ttag(s)\gamma^\ttag}{\ttag(u_1)}{
  \ttag(u_\maa)} \rge \apps{\ttag(s')\gamma^\ttag[x:=\ttag(u_1)]}{
  \ttag(u_2)}{\ttag(u_\maa)}$ because $\rge$ includes
  $\arr{\beta}$-steps, and this
  term $\rge \apps{\ttag(s'[x:=u_1])\gamma^\ttag}{\linebreak\ttag(u_2)}{
  \ttag(u_\maa)}$
  by \refLemma{lem:tagredpair:tagmeta}
  since $\gamma$ maps to closed terms.  But if $s'
  \suptermeq t$ then $s'[x:=u_1] \suptermeq t[x:=u_1]$, so by the
  induction hypothesis, this term $(\pge \cup \rge)^*\ 
  \ttag(\apps{((t[x:=u_1]\delta)^\sharp \gamma)}{v_1}{v_n})$, which
  suffices because $[x:=u_1]\delta$ is clearly a closed substitution
  on domain $\FV(t) \setminus \FV(s)$ for $\delta$ a closed
  substitution on domain $\FV(t[x:=u_1]) \setminus \FV(s'[x:=u_1])$.
\item Finally, if $s = \apps{\afun}{s_1}{s_\mia}$ with $s_i
  \suptermeq t$, then because $\FV(s) \neq \emptyset$ we
  have $\apps{\ttag(s)\gamma^\ttag}{\ttag(u_1)}{\ttag(u_\maa)}
  = \apps{\apps{\afun^-}{(\ttag(s_1)\gamma^\ttag)\linebreak
  }{(\ttag(s_\mia)\gamma^\ttag)}}{\ttag(u_1)}{\ttag(u_\maa)}$
  . 
  As
  $\apps{\afun^-}{X_1}{X_{\mia+\maa}} \linebreak
  \pge
  X_i\ \vec{\bot}$ and $\pge$ is meta-stable, this term $\pge
  \ttag(s_i)\gamma^\ttag\ \vec{\bot}$; we complete with the
  induction hypothesis.
\qedhere
\end{itemize}
\end{proof}

\oldcounter{\procAbsSimpleRedpair}{\procAbsSimpleRedpairSec}
\absSimpleRedpairTheProc
\startappendixcounters


\begin{proof}
Completeness follows by \refLemma{lem:subsetcomplete}.  Soundness
follows as in the proof for the base-type reduction triple processor,
since we can generate a $\pgt \cup \pge \cup \rge$ sequence using a
$\pgt$ step for every occurrence of an element of $\P_1$ in a given
chain.  We let  $s \sim s'$ if $s' = \ttag(\apps{(s\eta)}{u_1}{
u_\maa})$ with $s'$ a closed term of base type.
For a given formative $(\P,\Rules)$-chain $[(\rho_i,s_i,t_i)
\mid i \in \N]$, let $t_1' := \ttag(\apps{t_1[\vec{x}:=\vec{\bot}]}{
\bot_{\atype_1}}{\bot_{\atype_\maa}})$; then clearly $s_1 \sim s_1'$.
For $i \in \N$ with $i > 1$, suppose $t_{i-1} \sim t_{i-1}'$.
Consider the possibilities for $\rho_i$.

First, if $\rho_i = \mathtt{beta}$, then $t_{i-1} = s_i$, and we can
safely assume that this only occurs if $\P$ contains a collapsing DP.
Let $s_i' := t_{i-1}'$; then certainly $s_i \sim s_i'$.  There are
two options:
\begin{itemize}
\item If 
  $s_i = \apps{(\abs{x}{q})\ v}{
  w_1}{w_n}$ with $n > 0$, then $t_i = \apps{q[x:=v]}{w_1}{w_n}$.
  By definition of $\sim$, we
  can write $s_i' =\linebreak
  \ttag(\apps{\apps{(\abs{x}{(q\eta)})\ (v\eta)}{
  (w_1\eta)}{(w_n\eta)}}{u_1}{u_\maa})$ for some substitution $\eta$.
  By choosing $t_i' := \ttag(\apps{\apps{q\eta[x:=v\eta]}{(w_1\eta)}{
  \linebreak
  (w_n\eta)}}{u_1}{u_\maa})$, we have both $t_i \sim t_i'$ and $s_i'
  \rge t_i'$, the latter by \refLemma{lem:tagredpair:beta}.
\item If 
  $s_i = (\abs{x}{q})\ v$ and
  $t_i = w^\sharp[x:=v]$ for some $w$ with $q \suptermeq w$ and
  $x \in \FV(w)$ but $x \neq w$, then write $s_i' = \ttag(
  \apps{(\abs{x}{q\eta})\ (v\eta)}{\linebreak
  u_1}{u_\maa})$ and observe that
  $x$ does not occur in domain or range of $\eta$, so $\FV(q\eta) =
  \{x\}$ since $\eta$ maps all variables in $\FV(s_i)$ to
  closed terms.  What is more, clearly $q\eta \suptermeq w\eta$.
  By \refLemma{lem:tagredpair:subterm},
  $s_i\ (\pge \cup \rge)^*\ \ttag(\apps{((w\eta\delta)^\sharp[x:=
  v\eta])}{v_1}{v_n})$
  for some closed substitution $\delta$ on domain
  $\FV(w) \setminus \{x\}$. As $w$ is not a variable, this is
  exactly $\ttag(\apps{w^\sharp[x:=v]\delta\eta}{v_1}{v_n})\linebreak =: t_i'$;
  indeed $t_i \sim t_i'$.
\end{itemize}

Otherwise, $\rho_i = \ell \arrdp p\ (A)$ and $s_i = \ell\gamma$ and
$t_{i-1} \arrr{\Rules} s_i$ by an $\ell$-formative reduction.
Write $t_{i-1}' = \ttag(\apps{(t_{i-1}\eta)}{u_1}{u_\maa})$.  As an
induction on the definition of formative reductions shows that they
are preserved under substitution -- that is, also $t_{i-1}\eta \arrr{
\Rules} \ell\gamma\eta$ by an $\ell$-formative reduction -- it is not
hard to see that also $\apps{(t_{i-1}\eta)}{u_1}{u_\maa} \arrr{\Rules}
(\apps{\ell}{Z_1}{Z_\maa})(\gamma\eta \cup [Z_1:=u_1,\dots,Z_\maa:=
u_\maa])$ by a $(\apps{\ell}{Z_1}{Z_\maa})$-formative reduction.
We obtain $t_{i-1} \rge (\apps{\ell}{Z_1}{Z_\maa})(\gamma\eta \cup
[\vec{Z}:=\vec{u}])^\ttag = \apps{\ell(\gamma\eta)^\ttag}{\ttag(u_1)}{
\ttag(u_\maa)} =: s_i'$ by \refLemma{lem:tagredpair:formative}.
Now, there are once more two possibilities:
\begin{itemize}
\item If $t_i = p\gamma$ and $\rho_i \in \P_1$, then
  $\apps{\ell}{Z_1}{Z_\maa} \pgt \ttag(p[\vec{x}:=\bot]\ \vec{\bot})$.
  By meta-stability, $s_i' \pgt \ttag(p[\vec{x}:=\bot]\ 
  \vec{\bot})(\gamma\eta)^\ttag$,
  which by \refLemma{lem:tagredpair:tagmeta} $\rge \ttag(\apps{
  (p[\vec{x}:=\bot]\gamma\eta)}{\bot_1}{\bot_n}) =: t_i'$.
  Thus, $s_i' \pgt \cdot \rge t_i'$.
\item If $t_i = p\gamma$ and $\rho_i \in \P_2$, we similarly have
  $s_i' \pge \cdot \rge t_i'$.
\item If $\rho_i \in \P_1$ and $p = \meta{Z}{p_1,\dots,p_\mac}$, then
  write $(\gamma\eta)(Z) \approxp \abs{y_1 \dots y_\mac}{u}$; there is
  some non-variable $v$ with $u \suptermeq v$ and $\FV(v) \cap
  \{y_1,\dots,y_\mac\} \neq \emptyset$ and $t_i = v[y_1:=p_1\gamma,
  \dots,y_\mac:=p_\mac\gamma]$.
  Again using meta-stability and the way tagged dependency pairs are
  oriented, we obtain $s_i' \pgt \ttag(p[\vec{x}:=\bot]\ \linebreak
  \vec{\bot})(\gamma\eta)^\ttag = \ttag(u)[y_1:=\ttag(p_1)[\vec{x}:=
  \vec{\bot}](\gamma\eta)^\ttag,\dots,\linebreak
  y_\mac:=\ttag(p_\mac)
  [\vec{x}:=\vec{\bot}](\gamma\eta)^\ttag]\ \vec{\bot}$,
  which by \refLemma{lem:tagredpair:tagmeta} and the observation that
  the variables $x_j$ that occur free in $p$ can be assumed to not
  occur in the range of $\gamma$,
  $\rge \ttag(u)[y_1:=\ttag(p_1\gamma[\vec{x}:=\vec{\bot}]\eta),\dots,
  y_\mac:=\ttag(p_\mac\gamma[\vec{x}:=\vec{\bot}]\eta)]\ \vec{\bot} =
  \ttag(u)[y_1:=p_1\gamma[\vec{x}:=\vec{\bot}]\eta,\dots,y_\mac:=
  p_\mac\gamma[\vec{x}:=\vec{\bot}]\eta]^\ttag\ \vec{\bot}\linebreak =: t_i''$.
  Note that since $\gamma\eta$ maps the meta-variables in $\FMV(\ell)$
  to closed terms, $\FV(u) \subseteq \{y_1,\dots,y_\mac\}$.

  Let $j$ be such that $y_j \in \FV(v)$; at least one such $j$ exists.
  Writing $\xi := [y_n:=p_n\gamma[\vec{x}:=\vec{\bot}]\eta \mid 1 \leq
  n \leq \mac \wedge n \neq j]$, we have
  $t_i'' = \ttag(u)\xi^\ttag[y_j:=p_j\gamma[\vec{x}:=\vec{\bot}]
  \eta]^\ttag\ \vec{\bot}$ (because $\xi$ maps to closed terms), which
  by \refLemma{lem:tagredpair:tagmeta} $\rge \ttag(u\xi))[y_j:=
  p_j\gamma[\vec{x}:=\vec{\bot}]\eta]^\ttag\ \vec{\bot}$, which we can
  also write as $\apps{\ttag(u\xi))[y_j:=p_j\gamma[\vec{x}:=
  \vec{\bot}]\eta]^\ttag}{\ttag(\bot_1)}{\ttag(\bot_{n'})} =: t_i'''$.
  Then $\{y_j\} = \FV(u\xi) \subseteq \FV(v\xi)$ and we still have
  $u\xi \suptermeq v\xi$.

  As $\rho_i$ is collapsing, we can apply
  \refLemma{lem:tagredpair:subterm} to obtain some $\delta$ and terms
  $\vec{v}$ such that
  $t_i'''\ (\pge \cup \rge)^*\ \ttag(((u\xi\delta)^\sharp[y_j:=
  p_j\gamma[\vec{x}:=\vec{\bot}]\eta])\ \vec{v}) =: t_i'$.  Since $u$
  is not a variable, we can move the $\sharp$, and since $\xi$ and
  $\delta$ both map to closed terms, we can swap them around; thus,
  $t_i' = \ttag(u^\sharp\delta[y_n:=p_n\gamma[\vec{x}:=\vec{\bot}]\eta
  \mid 1 \leq n \leq \mac]\ \vec{v}) =
  \ttag(u^\sharp[y_1:=p_1\gamma,\dots,y_n:=p_n\gamma][\vec{x}:=
  \vec{\bot}]\eta\delta\ \vec{v})$.  Thus, also $t_i \sim t_i'$.
\item If $\rho_i \in \P_2$ and $p = \meta{Z}{p_1,\dots,p_\mac}$, we
  similarly obtain $t_i'$ with $s_i' \pge \cdot \rge \cdot (\pge \cup
  \rge)^*\ t_i'$.
\end{itemize}
All in all, for all $i$ we have $t_{i-1}' \rge s_i'$ and
$s_i' \pgt \cdot (\pge \cup \rge)^* t_i'$ if $\rho_i \in \P_1$,
otherwise $s_i'\ (\pge \cup \rge)^* t_i'$.
\end{proof}

\subsection{From dynamic to static DPs}

The SDP processor is straightforwardly proved with the results
we already have.

\oldcounter{\procToSDP}{\procToSDPSec}
\toSDPTheProc
\startappendixcounters

\begin{proof}
Completeness follows by \refLemma{lem:subsetcomplete}.  As for
soundness: if there is an infinite $(\P,\Rules)$-dependency
chain and $\P \subseteq \DDP(\Rules)$, then $\Rules$ is non-terminating
by \refThm{thm:chain}, so by
\refThm{thm:sschain} there is an $\Rules$-computable and formative
infinite $(\SDP(\Rules),\Rules)$-dependency chain.
By \refLemma{lem:graphend}, this chain has a
tail that is fully in $\SDP(\Rules)_G$.
\end{proof}

  %
  %
  %

\subsection{Modifying collapsing dependency pairs}


\oldcounter{\procExtend}{\procExtendSec}
\extendTheProc
\startappendixcounters

\begin{proof}
Completeness if $\P \subseteq \DDP(\Rules)$ follows
similarly to
the proof
of \refLemma{lem:chain:complete}.

As for soundness, suppose there is an infinite
$(\P,\Rules,m,f)$-chain $[(\rho_i,s_i,t_i) \mid i \in \N]$ and let
$\gamma_i$ be the corresponding substitution for $i \in \N$.
We will construct a $(\P',\Rules,m,f)$-chain. 
Let $t'_{-1} := t_0$.  Now for $j \in \N$, suppose that we have
some $i > 0$ such that $t'_{j-1} = t_{i-1}$.  Let $(\rho_j',s_j',
t_j') := (\rho_i,s_i,t_i)$ if $\rho_i$ does not have the form
$\ell \arrdp \apps{\meta{Z}{u_1,\dots,u_\mac}}{u_{\mac+1}}{u_n}\ (A)$
with $n > \mac$.  If $\rho_i$ \emph{does} have this form, then write
$\gamma_i(Z) = \abs{x_1 \dots x_m}{q}$ with $q$ not an abstraction.
Let $\rho_j' := \ell \arrdp \meta{Z}{u_1,\dots,u_n}\ (A)$ and
$s_j' := s_i$; for $t_j'$ consider 
$m$:
\begin{itemize}
\item if $m \leq \mac$, then let $t_j' := t_i$;
\item if $\mac < m \leq n$ then let $t_j' := t_{i+m-\mac}$;
\item if $n < m$ then let $t_j' := t_{i+n-\mac}$.
\end{itemize}
We claim that the sequence
built
like this
is a $(\P',\Rules,m,f)$-chain,
where $\P' = \{ \mathtt{extend}(\rho) \mid \rho \in \P \}$.  It is
clear that all $\rho_j \in \P' \cup \{ \mathtt{beta} \}$, and that
the reduction property (formative or not) from $t_{j-1}'$ to $s_j'$ is
always satisfied (as it is satisfied for $t_{i-1},s_i$).  This
gives us
property \ref{depchain:reduce} of \refDef{def:chain} for all $j \in
\N$.  Also, if $\rho_j' = \rho_i$, then the properties for $j$ in both
cases \ref{depchain:beta} and \ref{depchain:dp} of \refDef{def:chain}
are satisfied.  Otherwise, $\rho_i = \ell \arrdp \apps{\meta{Z}{u_1,
\dots,u_\mac}}{u_{\mac+1}}{u_n}\ (A)$ and $\rho_j' = \ell \arrdp
\meta{Z}{u_1,\dots,u_n}\ (A)$.  Clearly $s_j' = \ell\gamma_i$.
Then:
\begin{itemize}
\item If $m \leq \mac$, then $(\apps{\meta{Z}{u_1,\dots,u_\mac}}{
  u_{\mac+1}}{u_n})\gamma_i = \meta{Z}{u_1,\dots,\linebreak
  u_n}\gamma_i =
  \apps{q[x_1:=u_1\gamma_i,\dots,x_m:=u_m\gamma_i]}{(u_{m+1}
  \gamma_i)}{(u_n\gamma_i)}\linebreak
  = t_j'$.  This choice is adequate since we can write
  \[\gamma_i(Z) \approx_{n} \abs{x_1
  \dots x_{n}}{\apps{q}{x_{m+1}}{
  x_{\mac+n}}}\] and \[(\apps{q}{x_{m+1}}{x_n})^\sharp =
  \apps{q}{x_{m+1}}{x_n}\] since $\sharp$ has no effect on
  applications if the head respects $\arity$.
\item If $\mac < m < n$, then $(\apps{\meta{Z}{u_1,\dots,
  u_\mac}}{u_{\mac+1}}{u_n})\gamma_i$ is an application, which
  reduces in $m-\mac$ headmost $\beta$-steps to
  the term $\apps{q[x_1:=
  u_1\gamma_i,\dots,x_m:=u_m\gamma_i]}{(u_{m+1}\gamma_i)}{(u_n\gamma_i
  )} = \meta{Z}{u_1,\dots,u_n}\gamma_i$.  Since, in a dependency
  chain, only \texttt{beta} can be applied if $s_i$ has a
  $\beta$-redex at its head, necessarily $\rho_{i+1},\dots,
  \rho_{i+m-\mac}$ are all \texttt{beta}.  Moreover, since the
  $\beta$-redex does \emph{not} occur at the top but only at the
  head, the subterm case is not applied.  Thus, $t_j' = t_{i+m-\mac}
  = \meta{Z}{u_1,\dots,u_n}\gamma_i$.  This choice is adequate as
  in the case $m \leq \mac$.
\item Finally, if $n \leq m$, then
  $t_i = (\apps{\meta{Z}{u_1,\dots,u_\mac}}{u_{\mac+1}}{u_n})
  \gamma_i$ is an application of length $n-\mac$ with a
  $\lambda$-abstraction at the head; $\rho_{i+1},\dots,
  \rho_{i+n-\mac}$ are all \texttt{beta}, but in the last step, the
  $\beta$-redex occurs at the \emph{top}.  That is,
  $s_{i+n-\mac} = (\abs{x_n \dots x_m}{\linebreak{}q[x_1:=u_1\gamma_i,\dots,
  x_{n-1}:=u_{n-1}\gamma_i]})\ (u_n\gamma_i)$ and there exists a
  non-variable term $w$ with $\abs{x_{n+1} \dots x_m}{q} \suptermeq
  w$ such that $t_j' = t_{i+n-\mac} = w^\sharp[x_1:=u_1\gamma_i,
  \dots,x_n:=u_n\gamma_i]$; what is more, $x_n \in \FV(w)$.
  Then also $\gamma_i(Z) \approx_n \abs{x_1 \dots x_n}{
  \abs{x_{n+1} \dots x_m}{q}}$ and $x_n \in \FV(\abs{x_{n+1} \dots
  x_m}{q}) \cap \{ x_1,\dots,x_n \}$, so $t_j'$ satisfies the
  requirements.
\end{itemize}
As for the minimality flags: if $m = \minimal$, then all strict
subterms of $t_j'$ are terminating since they are also strict subterms
of some $t_n$.  If $m = \static_\AlterRules$, then by definition
\texttt{beta} does not occur in $[(\rho_i,s_i,t_i) \mid i \in \N]$,
so also the altered pairs are never used.
\end{proof}

\oldcounter{\procAddCond}{\procAddCondSec}
\addCondTheProc
\startappendixcounters


\begin{proof}
For soundness, suppose $(\P,\Rules,m,f)$ is infinite. Thus, let $[(\rho_i,
s_i,t_i) \mid i \in \N]$ be an infinite $(\P,\Rules)$-chain and
$\gamma_i$ be the corresponding substitution for $i \in \N$.  If
$\rho_i = \ell \arrdp \meta{Z}{p_1,\dots,p_\mac}\ (A) \in \P_2$, then
we cannot have $\gamma_i(Z) \approxp \abs{x_1 \dots x_\mac}{s}$ with
$\FV(s) \cap \{x_1,\dots,x_\mac\}\linebreak = \emptyset$, as then $t_i = s$ is
non-terminating: since $Z \in \FMV(\ell)$ by assumption, this
contradicts minimality (which we have both if $m = \minimal$ and by
\refLemma{lem:computableminimal} if $m = \static_\AlterRules$).

For completeness, suppose $(\P',\Rules,m,f)$ is infinite; if $\Rules$
is non-terminating then also $(\P,\Rules,m,f)$ is infinite, so suppose
there is an infinite $(\P',\Rules)$-dependency chain.  An infinite
$(\P,\Rules)$-dependency chain is obtained just by replacing all DPs
by their original: if $\gamma$ respects $A \cup \{Z:i\}$ then it also
respects $A$.
\end{proof}

\subsection{Rule removal without search for orderings}

\oldcounter{\procFormative}{\procFormativeSec}
\formativeTheProc
\startappendixcounters

\begin{proof}
Completeness follows by \refLemma{lem:subsetcomplete}.  Soundness
follows by definition of a formative rules approximation (a formative
infinite
$(\P,\Rules)$-dependency chain can be built using only rules in
$\FR(\P,\Rules)$).
\end{proof}

An example of a formative rules approximation is the following,
adapted from \cite[Def.~6.10]{kop:raa:12}:

\begin{definition}\label{def:FR}
A meta-term $s : \atype$ \emph{has shape} $(a,\btype)$ for $a \in \F
\cup \{ \lambda,\bot \}$ if $\atype = \btype$ and (1) if $s = \abs{x}{
s'}$ then $a = \lambda$, (2) if $s = \apps{\afun}{s_1}{s_n}$ with
$\afun \in \F$ then $a = \afun$, (3) if $s = \apps{(\abs{x}{u})}{v_0
}{v_n}$ then $\apps{u[x:=v_0]}{v_1}{v_n}$ has shape $(a,\btype)$.
Let $\Rules^{(a,\btype)} = \{ \ell \arrz r \in \Rules \mid r$ has
shape $(a,\btype) \}$.
For a pattern $\ell : \btype$, let $\FRA(\ell,\Rules)$ be any
set such that:
\begin{itemize}
\item if $\ell = \apps{\afun}{\ell_1}{\ell_n}$, then
  $\Rules^{(\afun,\btype)} \subseteq \FRA(\ell,\Rules)$ and
  $\FRA(\ell_i,\Rules) \subseteq \FRA(\ell,\Rules)$ for $1 \leq i \leq
  n$;
\item if $\ell = \apps{x}{\ell_1}{\ell_n}$ with $x \in \V$, then
  $\Rules^{(\bot,\btype)} \subseteq \FRA(\ell,\Rules)$ and
  $\FRA(\ell_i,\Rules) \subseteq \FRA(\ell,\Rules)$ for $1 \leq i \leq
  n$;
\item if $\ell = \abs{x}{\ell'}$, then
  $\Rules^{(\lambda,\btype)} \subseteq \FRA(\ell,\Rules)$ and
  $\FRA(\ell',\Rules) \subseteq\linebreak \FRA(\ell,\Rules)$;
\item if $\ell' \arrz r' \in \FRA(\ell,\Rules)$, then
  $\FRA(\ell',\Rules) \subseteq \FRA(\ell,\Rules)$.
\end{itemize}
Such a set always exists, as $\Rules$ itself qualifies, but we can
also find the smallest such set through an inductive process (even for
infinite $\Rules$ by choosing the smallest fixed point of a monotone
function).

Let $\FR(\ell,\Rules) := \{ \ell' \arrz r' \in \Rules \mid
\FRA(\ell,\RulesEta) \cap \{ \ell' \arrz r'\}^{\mathtt{ext}} \neq
\emptyset\}$.
\end{definition}

Since meta-terms $\apps{\meta{Z}{s_1,\dots,s_\mac}}{t_1}{t_n}$ of type
$\atype$ have shape $(a,\atype)$ for all $a$, collapsing rules will
often be included (depending on types).  However, dependency pairs of
the form $\apps{\afun^\sharp}{X_1}{X_n} \arrdp p\ (A)$ do not generate
any formative rules at all.  We observe:

\begin{lemma}\label{lem:FRA}
If $s \arrr{\Rules} \ell\gamma$ by an $\ell$-formative reduction, then
\linebreak
$s \arrr{\FR(\ell,\RulesEta)} \ell\gamma$.
\end{lemma}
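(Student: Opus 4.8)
The plan is to prove \refLemma{lem:FRA} by a structural induction on the definition of an $\ell$-formative reduction, mirroring the case analysis in \refDef{def:formative}, and showing at each step that all rules employed lie in $\FRA(\ell,\RulesEta)$ (hence, after intersecting with the extensions of original rules, in $\FR(\ell,\RulesEta)$). The central observation is that the closure conditions on $\FRA$ in \refDef{def:FR} exactly track the recursive structure of formative reductions: whenever a formative reduction descends into an argument $\ell_i$ of $\ell$, passes under an abstraction, or applies a head rule $\ell' \arrz r'$ followed by a further $\ell$-formative reduction, the corresponding inclusion ($\FRA(\ell_i,\Rules) \subseteq \FRA(\ell,\Rules)$, $\FRA(\ell',\Rules) \subseteq \FRA(\ell,\Rules)$, or $\Rules^{(a,\btype)} \subseteq \FRA(\ell,\Rules)$) is available by construction.

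First I would set up the induction on the derivation of ``$s \arrr{\Rules} \ell\gamma$ is $\ell$-formative'' as given in \refDef{def:formative}, handling its clauses one by one. The base cases are immediate: if $\ell$ is not a fully extended linear pattern, or $\ell$ is a meta-variable application with $s = \ell\gamma$, then no rule steps are used, so the empty set of rules suffices. For the congruence cases ($s = \apps{a}{s_1}{s_n}$ matching $\ell = \apps{a}{\ell_1}{\ell_n}$ with $a \in \F^\sharp \cup \V$, or $s = \abs{x}{s'}$ matching $\ell = \abs{x}{\ell'}$), the induction hypothesis gives that each sub-reduction $s_i \arrr{\Rules} \ell_i\gamma$ uses only rules in $\FR(\ell_i,\RulesEta)$; I would then invoke the inclusions $\FRA(\ell_i,\RulesEta) \subseteq \FRA(\ell,\RulesEta)$ (respectively $\FRA(\ell',\RulesEta) \subseteq \FRA(\ell,\RulesEta)$) to conclude that the whole reduction stays within $\FR(\ell,\RulesEta)$. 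The headmost-$\beta$ case ($s = \apps{(\abs{x}{u})\ v}{w_1}{w_n}$) introduces no rule step at the redex, so it reduces directly to the induction hypothesis on the contractum.

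The key case is the last clause, where $\ell$ is not a meta-variable application and there exist $\delta$, a rule $\ell' \arrz r' \in \Rules$, and fresh meta-variables $Z_1 \dots Z_n$ such that $s \arrr{\Rules} (\apps{\ell'}{Z_1}{Z_n})\delta$ by an $(\apps{\ell'}{Z_1}{Z_n})$-formative reduction and $(\apps{r'}{Z_1}{Z_n})\delta \arrr{\Rules} \ell\gamma$ by an $\ell$-formative reduction. Here I would argue that $r'$ (and hence $\apps{r'}{Z_1}{Z_n}$) necessarily ``has shape'' $(a,\btype)$ in the sense of \refDef{def:FR}, where $a$ is the head symbol of $\ell$ and $\btype$ its type: since the second reduction is $\ell$-formative and $\ell$ has the form $\apps{\afun}{\ell_1}{\ell_n}$ or $\abs{x}{\ell'}$ or $\apps{x}{\ell_1}{\ell_n}$, tracing its root-level steps (which are $\beta$-steps by the shape-analysis in \refDef{def:FR}(3)) shows the right-hand side $\apps{r'}{\vec{Z}}\delta$ must reduce to something of the correct shape, forcing $\apps{\ell'}{\vec{Z}} \arrz \apps{r'}{\vec{Z}}$ into $\RulesEta^{(a,\btype)}$. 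This places the extended rule into $\FRA(\ell,\RulesEta)$, and the closure condition $\FRA(\ell',\RulesEta) \subseteq \FRA(\ell,\RulesEta)$ (applied with $\ell' \arrz r' \in \FRA(\ell,\RulesEta)$) lets me fold in the first sub-reduction's rules by the induction hypothesis.

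The main obstacle I anticipate is reconciling the shape-based bookkeeping of \refDef{def:FR} (which is phrased over $\RulesEta$ and uses the auxiliary ``has shape'' predicate with its $\beta$-development clause) against the formative-reduction recursion of \refDef{def:formative} (which uses $\RulesEta$ implicitly through its last clause but is indexed by patterns $\ell$ directly). Carefully matching the head symbol $a$ and type $\btype$ extracted from $\ell$ with the shape of $r'$ through possibly several headmost $\beta$-steps, and ensuring that the $\eta$-extended rules used in the formative reduction correspond correctly to elements of $\{\ell' \arrz r'\}^{\mathtt{ext}}$ when intersecting to define $\FR$, is where the delicate argument lies; the congruence and $\beta$ cases are essentially routine once the inclusions are invoked.
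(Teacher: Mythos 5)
Your congruence and base cases match the paper's proof, but the key (root-step) case contains a genuine gap: you claim that $r'$ must have shape $(a,\btype)$ where $a$ is the head symbol of $\ell$, because the root-level steps of the remaining $\ell$-formative reduction are ``$\beta$-steps''. This is false in general. The $\ell$-formative reduction $(\apps{r'}{Z_1}{Z_n})\delta \arrr{\Rules} \ell\gamma$ may itself invoke the last clause of \refDef{def:formative} again, i.e.\ contain \emph{further root rule steps}, and these can change the head symbol. Concretely, with $\Rules = \{\symb{f}\ X \arrz \symb{g}\ X,\ \symb{g}\ X \arrz \symb{h}\ X\}$ and $\ell = \symb{h}\ Z$, the reduction $\symb{f}\ t \arr{\Rules} \symb{g}\ t \arr{\Rules} \symb{h}\ t$ is $\ell$-formative, yet the first rule's right-hand side $\symb{g}\ X$ has shape $(\symb{g},\btype)$, not $(\symb{h},\btype)$, so your forcing argument would wrongly conclude $\symb{f}\ X \arrz \symb{g}\ X \in \RulesEta^{(\symb{h},\btype)}$; that rule in fact enters $\FRA(\ell,\RulesEta)$ only through the \emph{fourth} closure clause of \refDef{def:FR} (if $\ell'' \arrz r'' \in \FRA(\ell,\Rules)$ then $\FRA(\ell'',\Rules) \subseteq \FRA(\ell,\Rules)$), which exists precisely for this situation and which your argument never genuinely engages.

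The paper's proof handles this by a case distinction on the form of $r'$: if $r'$ is collapsing, an abstraction, or headed by the same symbol as $\ell$, the shape clauses apply directly; otherwise there must be \emph{another} root step $r'\delta \arrr{\Rules} \ell''\eta \arr{\Rules} r''\eta \arrr{\Rules} \ell\gamma$, and then $\ell'' \arrz r'' \in \FRA(\ell,\RulesEta)$ by the induction hypothesis on the shorter suffix, while the glued reduction $s \arrr{\Rules} \ell'\delta \arr{\Rules} r'\delta \arrr{\Rules} \ell''\eta$ is an $\ell''$-formative reduction to which the induction hypothesis also applies, placing $\ell' \arrz r'$ in $\FRA(\ell'',\RulesEta) \subseteq \FRA(\ell,\RulesEta)$. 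Note that this repair is incompatible with your induction scheme: the glued reduction is \emph{not} a sub-derivation of the original formative derivation, so a structural induction on the definition of $\ell$-formative reduction does not license the needed IH application. The paper instead inducts first on the \emph{length} of the reduction (and second on the size of $s$), and proves the stronger statement that \emph{every} rule step of the reduction lies in $\FRA(\ell,\RulesEta)$; both strengthenings are needed to make the hard case go through.
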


\begin{proof}
We will show a bit more: 
that every step in a given
$\ell$-formative reduction $s \arrr{\Rules} \ell\gamma$ is in
$\FRA(\ell,\RulesEta)$, by induction first on the length of that
reduction, second on the size of $s$.

The statement clearly holds if $s = \ell\gamma$, and if $s \arr{\beta}
s' \arrr{\Rules} \ell\gamma$ (with the latter part $\ell$-formative)
we complete by the first induction hypothesis.
If $s = \abs{x}{s'}$ and $\ell = \abs{x}{\ell'}$ and $s \arrr{\Rules}
\ell'\gamma$ by an $\ell'$-formative reduction, we complete by the
second induction hypothesis because $\FRA(\ell',\RulesEta) \subseteq
\FRA(\ell,\RulesEta)$; we are similarly done if $s = \apps{a}{
s_1}{s_n}$,\ $\ell = \apps{a}{\ell_1}{\ell_n}$ and each
$s_i \arrr{\Rules} \ell_i\gamma$ for $a \in \F^\sharp \cup \V$.

Finally, suppose $s \arrr{\Rules} \ell'\delta$ by an $\ell'$-formative
reduction and $r'\delta \arrr{\Rules} \ell\gamma$ by an $\ell$-formative
reduction for $\ell' \arrz r' \in \Rules$.  We must show that then
$\ell' \arrz r' \in \FRA(\ell,\RulesEta)$.  Having this, the induction
statement follows easily, because (1) the reduction $r'\delta
\arrr{\Rules} \ell\gamma$ uses only rules in $ \FRA(\ell,\RulesEta)$ by
the first induction hypothesis, and (2) the reduction $s \arrr{\Rules}
\ell'\delta$ uses only rules in $\FRA(\ell',\RulesEta)$ by the first
induction hypothesis, and by assumption $\FRA(\ell',\RulesEta)
\subseteq \FRA(\ell,\gamma)$.

If $\ell' \arrz r'$ is collapsing, then $r'$ has shape $(a,\btype)$
for all $a$, where $\btype$ is the type of both $s$ and $\ell'$.
Since $\ell$ is not a meta-variable application (as $s = \ell\gamma$
in that case), $\ell' \arr r' \in (\RulesEta)^{(a,\btype)} \subseteq
\FRA(\ell,\RulesEta)$.

If $r'$ has the form $\abs{x}{r''}$, then necessarily $\ell$ has the
form $\abs{x}{\ell''}$, so $\ell' \arrz r' \in (\RulesEta)^{(\lambda,
\btype)} \subseteq \FRA(\ell,\RulesEta)$.

The only alternative is that $r'$ has the form $\apps{\afun}{r_1}{
r_n}$.  If $\ell' = \apps{\afun}{\ell_1}{\ell_n}$, then we are done
because $\ell' \arrz r\ \in (\RulesEta)^{(\lambda,\btype)}$.
Otherwise there is another root step in the $\ell$-formative reduction
$r'\delta \arrr{\Rules} \ell\gamma$: $r'\delta \arrr{\Rules}
\ell''\eta$ and $r''\eta \arrr{\Rules} \ell\gamma$.  By the first
induction hypothesis, $\ell'' \arrz r'' \in \FRA(\ell,\RulesEta)$.
But $s \arrr{\Rules} \ell'\delta \arr{\Rules} r'\delta \arrr{\Rules}
\ell''\eta$ is actually a $\ell''$-formative reduction.  Thus, by the
first induction hypothesis, it uses only rules in $\FR(\ell'',
\RulesEta) \subseteq \FR(\ell,\RulesEta)$.
\end{proof}


\begin{corollary}
The function $\FR$ from \refDef{def:FR} is a formative rules
approximation.
\end{corollary}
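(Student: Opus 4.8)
The plan is to verify the two defining conditions of a formative rules approximation from \refDef{def:formativerules}: that $\FR(\ell,\Rules) \subseteq \Rules$, and that every $\ell$-formative reduction $s \arrr{\Rules} \ell\gamma$ can be carried out using only rules in $\FR(\ell,\Rules)$. The first condition is immediate, since the set $\FR(\ell,\Rules)$ of \refDef{def:FR} is by construction a subset of $\Rules$. All the content therefore lies in the second condition, and the plan is to obtain it essentially for free from \refLemma{lem:FRA}.

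Concretely, I would fix an $\ell$-formative reduction $s \arrr{\Rules} \ell\gamma$ and consider an arbitrary rewrite step occurring in it, performed with some rule $\ell' \arrz r' \in \Rules$. The proof of \refLemma{lem:FRA} establishes the stronger invariant that every such step uses a rule lying in $\FRA(\ell,\RulesEta)$; so I would invoke that invariant to conclude $\ell' \arrz r' \in \FRA(\ell,\RulesEta)$. It then remains to translate this into membership in $\FR(\ell,\Rules)$: since $\ell' \arrz r'$ is its own trivial $\eta$-extension (the case $i = 0$ in the construction of $\RulesEta$ from \refDef{def:ddp}), it belongs to $\{\ell' \arrz r'\}^{\mathtt{ext}}$, so $\ell' \arrz r'$ witnesses $\FRA(\ell,\RulesEta) \cap \{\ell' \arrz r'\}^{\mathtt{ext}} \neq \emptyset$. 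By the defining clause of $\FR$ in \refDef{def:FR} this gives $\ell' \arrz r' \in \FR(\ell,\Rules)$, and as the step was arbitrary, the whole reduction uses only rules from $\FR(\ell,\Rules)$.

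The main obstacle is not any genuinely new argument --- the inductive heart of the matter is already discharged by \refLemma{lem:FRA} --- but rather the careful bookkeeping between the three sets $\Rules$, $\RulesEta$, and the $\eta$-extension operator $()^{\mathtt{ext}}$. The one point that needs attention is that a step of a $\Rules$-reduction is witnessed by $\FR$ through the trivial ($i=0$) extension of its rule, so that membership of the base rule in $\FRA(\ell,\RulesEta)$ suffices; once this identification is made, no further reasoning is required, and the corollary follows directly from the lemma and the definition.
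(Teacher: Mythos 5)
Your overall route is the same as the paper's: the corollary is read off from \refLemma{lem:FRA} (more precisely, from the strengthened invariant proved there) together with unfolding \refDef{def:FR}, and your observation that the subset condition $\FR(\ell,\Rules) \subseteq \Rules$ is immediate from the definition is correct. However, the one point you single out as "needing attention" is exactly where your bookkeeping goes wrong. The invariant of \refLemma{lem:FRA} concerns the rules of $\RulesEta$ as they appear at the root positions of the formative decomposition: when a step of the reduction applies $\ell' \arrz r' \in \Rules$ below $n$ trailing applications, what the lemma places into $\FRA(\ell,\RulesEta)$ is the $\eta$-extension $\apps{\ell'}{Z_1}{Z_n} \arrz \apps{r'}{Z_1}{Z_n}$, not (in general) the base rule $\ell' \arrz r'$ itself. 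So you may not conclude $\ell' \arrz r' \in \FRA(\ell,\RulesEta)$, and the witness of $\FRA(\ell,\RulesEta) \cap \{\ell' \arrz r'\}^{\mathtt{ext}} \neq \emptyset$ is in general a nontrivial extension rather than the $i=0$ one. Indeed, if the trivial-extension reading were right, \refDef{def:FR} could simply have set $\FR(\ell,\Rules) = \Rules \cap \FRA(\ell,\RulesEta)$; the intersection with the whole set $\{\ell' \arrz r'\}^{\mathtt{ext}}$ exists precisely to catch the extended forms.

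The repair is a one-liner and needs no new idea: whichever $\eta$-extension the invariant supplies lies in $\{\ell' \arrz r'\}^{\mathtt{ext}}$, so the intersection is nonempty and $\ell' \arrz r' \in \FR(\ell,\Rules)$ follows; and since a root step with an extended rule is the same rewrite step as one with the base rule applied one level deeper, the given reduction is indeed carried out entirely by rules of $\FR(\ell,\Rules)$. With that substitution your argument coincides with the paper's (implicit) proof; as literally written, though, the justification of the key membership claim is not what the lemma gives you.
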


We presented \refLemma{lem:FRA} separately because it might be useful
in particular for reduction triples to use a variation of a formative
rules approximation that does not need to be a subset of $\Rules$.

\medskip
We now turn our attention to usable rules:


\oldcounter{\procUsable}{\procUsableSec}
\usableTheProc
\startappendixcounters

\begin{proof}
Suppose there is an infinite minimal $(\P,\Rules,m,f)$-chain
$[(\rho_i,
s_i,t_i) \mid i \in \N]$ (minimality can be assumed by
\refLemma{lem:computableminimal}) and $\P$ is non-collapsing.  Let
$\UR(\P,\Rules)$ be a usable rules approximation, and $\varphi$ the
corresponding function.  Define $\varphi'(\apps{\afun}{s_1}{s_n}) =
\apps{\afun}{\varphi(s_1)}{\varphi(s_n)}$.

As $\P$ is non-collapsing, we can identify $\gamma_i$ for $i \in \N$
such that always $\rho_i = \ell_i \arrdp p_i\ (A_i)$ with $s_i =
\ell_i \gamma$ and $t_i = r_i\gamma_i$; moreover, both $s_i$ and $t_i$
have a ``functional'' shape $\apps{\afun}{u_1}{u_n}$ with each $u_j$
terminating by minimality, so $s_i' := \varphi'(s_i)$ and $t_i' :=
\varphi'(t_i)$ are well-defined.  By definition of usable rules
approximation, $s_i' = \ell_i\gamma^\varphi$ and $t_i' =
p_i\gamma^\varphi$,
and $\varphi(u_j) \arrr{\UR(\P,\Rules)} \varphi(v_j)$ whenever
$u_j \arrr{\Rules} v_j$.
%
Thus, $[(\rho_i,s_i',t_i') \mid i \in \N]$ is a
$(\P,\UR(\P,\Rules))$-chain.
\end{proof}

\subsection{Subterm criterion processors}

Next, we move on to the subterm processors.  We first present the
basic one -- which differs little from its first-order counterpart,
but is provided for context.

\oldcounter{\procSubtermCriterion}{\procSubtermCriterionSec}
\subtermCriterionTheProc
\startappendixcounters

\begin{proof}
Completeness follows by \refLemma{lem:subsetcomplete}.  Soundness
follows because an infinite $(\P,\Rules,m,f)$-chain with the
properties above induces an infinite sequence $\project(s_1)
\suptermeq \project(t_1) \arrr{\Rules} \project(s_2) \suptermeq
\project(t_2) \arrr{\Rules} \dots$.
Since the chain is minimal (either because $m = \minimal$, or by
\refLemma{lem:computableminimal} if $m = \static_S$),
$\project(p_1)$ is terminating, and therefore it is terminating under
$\arr{\Rules} \mathop{\cup} \supterm$.
Thus, there is some index $n$ such
that for all $i \geq n$: $\project(s_i) = \project(t_i) =
\project(s_{i+1})$.  But this can only be the case if
$\project(\ell_i) = \project(p_i)$.
But then the tail of the chain starting at
position $n$ does not use any pair in $\P_1$, and is therefore an
infinite $(\P_2,\Rules,m,f)$-chain.
\end{proof}

\ext{We now turn} to the proof of the static subterm
criterion processor.  This proof is very similar to the one for the
normal \ext{subterm criterion}, but it fundamentally uses the definition of
a computable chain.

\oldcounter{\procStaticSubtermCriterion}{\procStaticSubtermCriterionSec}
\staticSubtermCriterionTheProc
\startappendixcounters

\begin{proof}
Completeness follows by \refLemma{lem:subsetcomplete}.  Soundness
follows because, for $C := C_\AlterRules$ the computability predicate
corresponding to $\arr{S}$, an infinite $(\P,\Rules,\static_S,f)$-chain
induces an infinite $\accreduce{C} \cup \arr{\Rules}$ sequence starting
in the $C$-computable term $\project(s_1)$, with always $s_i
(\accreduce{C} \cup \arr{\Rules})^* t_i$ if $\rho_i \in \P_1$ and
$\project(s_i) = \project(t_i)$ if $\rho_i \in \P_2$; like in the proof
of the subterm criterion, this proves that the chain has a tail that is
a $(\P_2,\Rules,\static_S,f)$-chain because, by definition of $C$,
$\arr{\Rules} \cup \accreduce{C}$ is terminating on $C$-computable
terms.

It remains to be seen that we indeed have $\project(s_i)\ (\accreduce{C}
\cup\ \ext{\arr{\Rules})^+}\linebreak \project(t_i)$ whenever $\rho_i \in \P_1$.  So
suppose that $\rho$ is a dependency pair $\ell \arrdp p\ (A) \in \P_1$
such that $\project(\ell) \sqsupset \project(p)$; we must see that
$\project(\ell\gamma)\ (\accreduce{C} \cup \arr{\beta})^+
\project(p\gamma)$ for \ext{any} substitution $\gamma$ on domain
$\FMV(\ell) \cup \FMV(r)$ such that $v\gamma$ is $C$-computable for
all $v,B$ such that $r \bsuptermeq{B} v$ and $\gamma$ respects $B$.

Write $\ell = \apps{\afun}{\ell_1}{\ell_\maa}$ and $p = \apps{\bfun}{
p_1}{p_{\maa'}}$; then $\project(\ell\gamma) = \ell_{\nu(\afun)}
\gamma$ and $\project(p\gamma) = p_{\nu(\bfun)}\gamma$.
Since, by definition of a dependency pair, $\ell$ is closed, we also
have $\FV(\ell_{\nu(\afun)}) = \emptyset$.
Consider the two possible reasons why $\ell_{\nu(\afun)} \sqsupset
p_{\nu(\bfun)}$.

\begin{itemize}
\item $\ell_{\nu(\afun)} \gracc p_{\nu(\bfun)}$:
  since both sides have base type by assumption and $\ell_{\nu(\afun)}$
  is closed, 
  by \ext{\refLemma{lem:preservecompacchelper} also
  $\ell_{\nu(\afun)}\gamma\ (\accreduce{C} \cup \arr{\beta})^*
  p_{\nu(\bfun)}\gamma$.}
\item $\ell_{\nu(\afun)} \gracc \meta{Z}{x_1,\dots,x_\mia}$ and
  $p_{\nu(\bfun)} = \apps{\meta{Z}{u_1,\dots,u_\mac}}{v_1}{v_n}$:
  denote $\gamma(Z) = \abs{x_1 \dots x_\mia}{q}$ and also $\gamma(Z)
  \approxp \abs{x_1 \dots x_\mac}{q'}$.  Then we can write $q =
  \abs{x_{\mia+1} \dots x_i}{q''}$ as well as $q' = \apps{q''}{x_{i+1}}{
  x_\mac}$ for some $\mia \leq i \leq \mac$.
  Moreover:
  \[
  p_{\nu(\bfun)}\gamma = \apps{q'[x_1:=u_1\gamma,\dots,
  x_\mac:=u_\mac\gamma]}{v_1\gamma}{v_n\gamma}
  \]

  By definition of an $S$-computable chain, $v_j\gamma$ is computable
  for each $1 \leq j \leq n$, and $u_j\gamma$ is computable for each
  $1 \leq j \leq \mac$ such that $x_j \in \FV(q')$.  Write $v_j' :=
  v_j\gamma$ and let $u_j' := u_j\gamma$ if $x_j \in \FV(q')$,
  otherwise $u_j' :=$ a fresh variable; then all $u_j'$ and $v_j'$
  are computable, and still:
  \[
  \begin{array}{cl}
    & p_{\nu(\bfun)}\gamma\\
  = & \apps{q'[x_1:= u_1',\dots,x_\mac:=u_\mac']}{v_1'}{v_n'}\\
  = & \apps{\apps{q''[x_1:=u_1',
      \dots,x_i:=u_i']}{u_{i+1}'}{u_{\mac'}'}}{v_1'}{v_n'}
  \end{array}
  \]
  \ext{On the other hand, by \refLemma{lem:preservecompacchelper} and the
  observation that $\FV(\ell_{\nu(\afun)}) = \emptyset$, we have
  $\ell_{\nu(\afun)
  }\gamma\ (\accreduce{C} \cup \arr{\beta})^+\
  \apps{\apps{q[x_1:=u_1',
  \dots,
  \linebreak
  x_\mia:=u_\mia']}{u_{\mia+1}'}{u_\mac}'}{v_1'}{v_n'}$,
  and as
  $q = \abs{x_{\mia+1} \dots x_i}{q''}$ this term $\beta$-reduces to
  $\apps{\apps{q''[x_1:=u_1',\dots,x_i:=u_i']}{u_{i+1}'}{\linebreak
  u_{\mac'}'}}{v_1'}{v_n'} = p_{\nu(\bfun)}\gamma$.}
\qedhere
\end{itemize}
\end{proof}

\subsection{Non-termination}

Soundness and completeness of the non-termination processor
in \refThm{def:nontermproc} are both direct consequences of
\refDef{def:dpproblem} and \refDef{def:proc}.
%
%

%
%

\end{document}